\setheadfoot{\onelineskip}{2\onelineskip} 
\newlength\dlf@normtxtw 
\newsavebox{\feline@chapter} 
\newcommand\feline@chapter@marker[1][4cm]{%
	\sbox\feline@chapter{%
		\resizebox{!}{#1}{\fboxsep=1pt%
			\colorbox{gray}{\color{white}\thechapter}%
		}}%
		\rotatebox{90}{%
			\resizebox{%
				\heightof{\usebox{\feline@chapter}}+\depthof{\usebox{\feline@chapter}}}%
			{!}{\scshape\so\@chapapp}}\quad%
		\raisebox{\depthof{\usebox{\feline@chapter}}}{\usebox{\feline@chapter}}%
} 
\newcommand\feline@chm[1][4cm]{%
	\sbox\feline@chapter{\feline@chapter@marker[#1]}%
	\makebox[0pt][c]{
		\makebox[1cm][r]{\usebox\feline@chapter}%
	}}
	\renewcommand\printchapternum{\null\hfill\feline@chm[2.5cm]\par}
\newcommand{\pgftextcircled}[1]{                                                                    
    \setbox0=\hbox{#1}%
    \dimen0\wd0%
    \divide\dimen0 by 2%
    \begin{tikzpicture}[baseline=(a.base)]%
        \useasboundingbox (-\the\dimen0,0pt) rectangle (\the\dimen0,1pt);
        \node[circle,draw,outer sep=0pt,inner sep=0.1ex] (a) {#1};
    \end{tikzpicture}
}
\newcommand{\blackged}{\hfill$\blacksquare$}
\newcommand{\whiteged}{\hfill$\square$}
\newcounter{proofcount}
\renewenvironment{proof}[1][\proofname.]{\par
 \ifnum \theproofcount>0 \pushQED{\whiteged} \else \pushQED{\blackged} \fi%
 \refstepcounter{proofcount}
 \normalfont 
 \trivlist
 \item[\hskip\labelsep
       \itshape
   {\bf\em #1}]\ignorespaces
}{%
 \addtocounter{proofcount}{-1}
 \popQED\endtrivlist
}
\let\oldsqrt\sqrt
\def\sqrt{\mathpalette\DHLhksqrt}
\def\DHLhksqrt#1#2{%
\setbox0=\hbox{$#1\oldsqrt{#2\,}$}\dimen0=\ht0
\advance\dimen0-0.2\ht0
\setbox2=\hbox{\vrule height\ht0 depth -\dimen0}%
{\box0\lower0.4pt\box2}}
\newcommand{\mycaption}[2][\@empty]{
	\captionnamefont{\scshape}
	\changecaptionwidth
	\captionwidth{0.9\linewidth}
	\captiondelim{.\:} 
	\indentcaption{0.75cm}
	\captionstyle[\centering]{}
	\setlength{\belowcaptionskip}{10pt}
	\tiny\ifx \@empty#1 \caption{#2}\else \caption[#1]{#2}
}
\newcommand{\initial}[1]{%
	\lettrine[lines=3,lhang=0.33,nindent=0em]{
		\color{gray}
     		{\textsc{#1}}}{}}
\theoremstyle{plain}
\newtheorem{theo}{Theorem}[chapter]
\theoremstyle{plain}
\theoremstyle{plain}
\newtheorem{definition}{Definition}[chapter]
\theoremstyle{plain}
\theoremstyle{plain}
\theoremstyle{plain}
\newcommand{\Acal}{\mathcal{A}}
\newcommand{\Bcal}{\mathcal{B}}
\newcommand{\Dcal}{\mathcal{D}}
\newcommand{\Ecal}{\mathcal{E}}
\newcommand{\Fcal}{\mathcal{F}}
\newcommand{\Gcal}{\mathcal{G}}
\newcommand{\Hcal}{\mathcal{H}}
\newcommand{\Lcal}{\mathcal{L}}
\newcommand{\Ncal}{\mathcal{N}}
\newcommand{\Ocal}{\mathcal{O}}
\newcommand{\Pcal}{\mathcal{P}}
\newcommand{\Zcal}{\mathcal{Z}}
\DeclareMathAlphabet{\pazocal}{OMS}{zplm}{m}{n}
\newcommand{\Acalb}{\pazocal{A}}
\newcommand{\Dcalb}{\pazocal{D}}
\newcommand{\Ecalb}{\pazocal{E}}
\newcommand{\Fcalb}{\pazocal{F}}
\newcommand{\Gcalb}{\pazocal{G}}
\newcommand{\Hcalb}{\pazocal{H}}
\newcommand{\Icalb}{\pazocal{I}}
\newcommand{\Lcalb}{\pazocal{L}}
\newcommand{\Mcalb}{\pazocal{M}}
\newcommand{\Ocalb}{\pazocal{O}}
\newcommand{\Pcalb}{\pazocal{P}}
\newcommand{\Rcalb}{\pazocal{R}}
\newcommand{\Scalb}{\pazocal{S}}
\newcommand{\Tcalb}{\pazocal{T}}
\newcommand{\Ucalb}{\pazocal{U}}
\newcommand{\Vcalb}{\pazocal{V}}
\newcommand{\Zcalb}{\pazocal{Z}}
\newcommand{\1}{\mathbbm{1}}
\newcommand{\Lmath}{\mathbbm{L}}
\newcommand{\Cmath}{\mathbbm{C}}
\newcommand{\Rmath}{\mathbbm{R}}
\newcommand{\Zmath}{\mathbbm{Z}}
\newcommand{\dket}[1]{| #1 \rangle\rangle}
\newcommand{\dbra}[1]{\langle\langle #1 |}
\newcommand{\ket}[1]{| #1 \rangle}
\newcommand{\dinterpro}[2]{\langle \langle #1 | #2 \rangle \rangle}
\newcommand{\interpro}[2]{\langle #1 | #2 \rangle}
\newcommand{\bra}[1]{\langle #1 |}
\newcommand{\tr}[1]{ \text{Tr}\left[ #1 \right]}
\newcommand{\trs}[1]{ \text{Tr}[ #1 ]}
\newcommand{\trnone}[1]{ \text{Tr} #1 }
\newcommand{\dbar}{d\hspace*{-0.16em}\bar{}\hspace*{0.16em}}
\newcommand{\ketus}[0]{|\!\! \uparrow \rangle}
\newcommand{\ketds}[0]{|\!\! \downarrow \rangle}
\newcommand{\braus}[0]{\langle \uparrow \!\! |}
\newcommand{\brads}[0]{\langle \downarrow \!\! |}
\begin{document}
%
%
%
%
%
\frontmatter
\pagenumbering{roman}
%

%
%
%
%
%
\begin{titlingpage}
	\begin{SingleSpace}
		\calccentering{\unitlength} 
		\begin{adjustwidth*}{\unitlength}{-\unitlength}
			\begin{center}
				\includegraphics[scale=0.21]{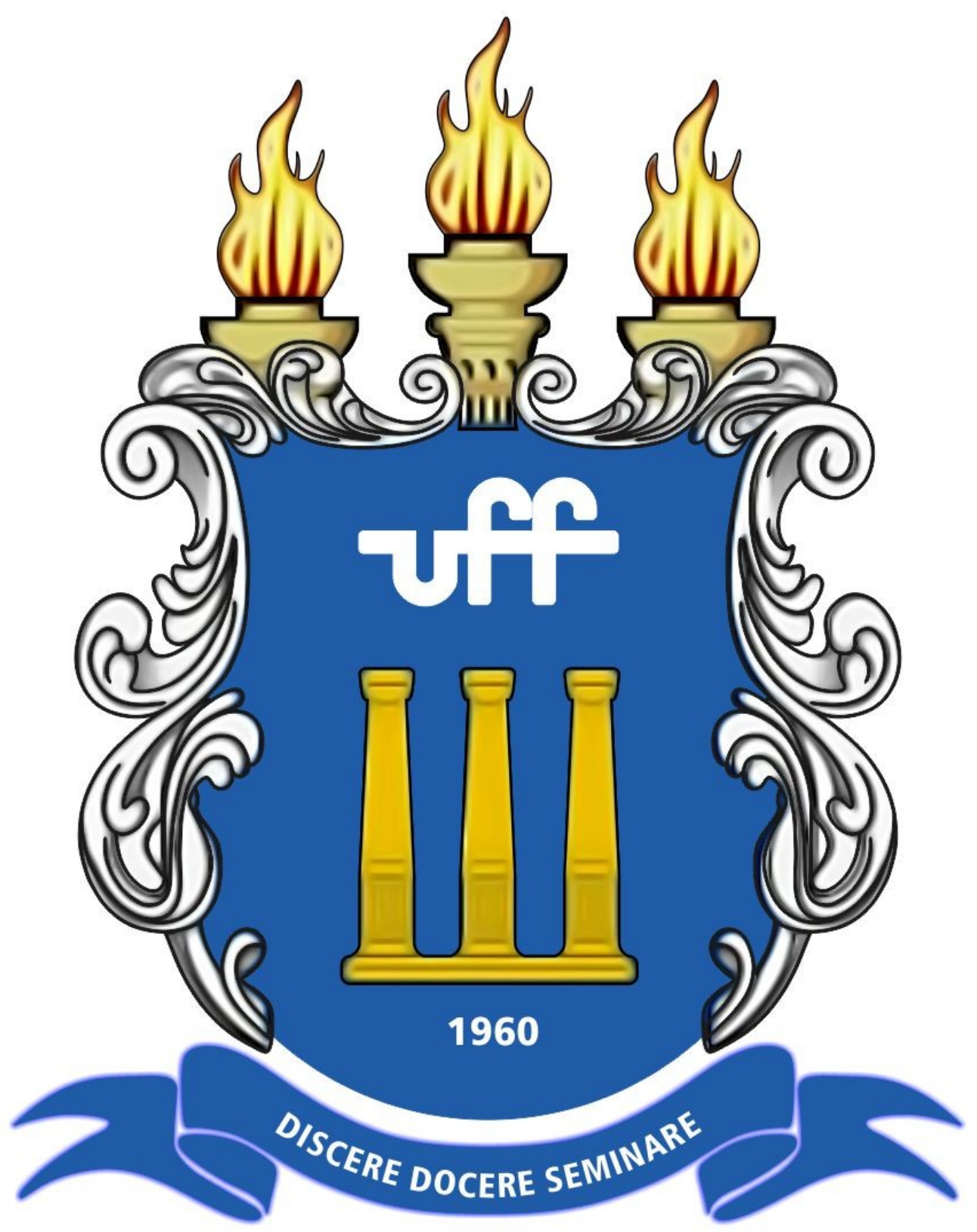}\\
				\rule[0.5ex]{\linewidth}{2pt}\vspace*{-\baselineskip}\vspace*{3.2pt}
				\rule[0.5ex]{\linewidth}{1pt}\\[\baselineskip]
				{\huge \textsc{Adiabatic Dynamics and}}\\[1.5mm]
				{\huge \textsc{Shortcuts to Adiabaticity}}\\[2mm]
				{\Large \textsc{Fundamentals and Applications}}\\[4mm]
				\rule[0.5ex]{\linewidth}{1pt}\vspace*{-\baselineskip}\vspace{3.2pt}
				\rule[0.5ex]{\linewidth}{2pt}\\
				\vspace{6.5mm}
				{\large Author}\\
				\vspace{3.5mm}
				{\Large\textsc{Alan Costa dos Santos}}\\
				\vspace{11mm}
				\includegraphics[scale=0.29]{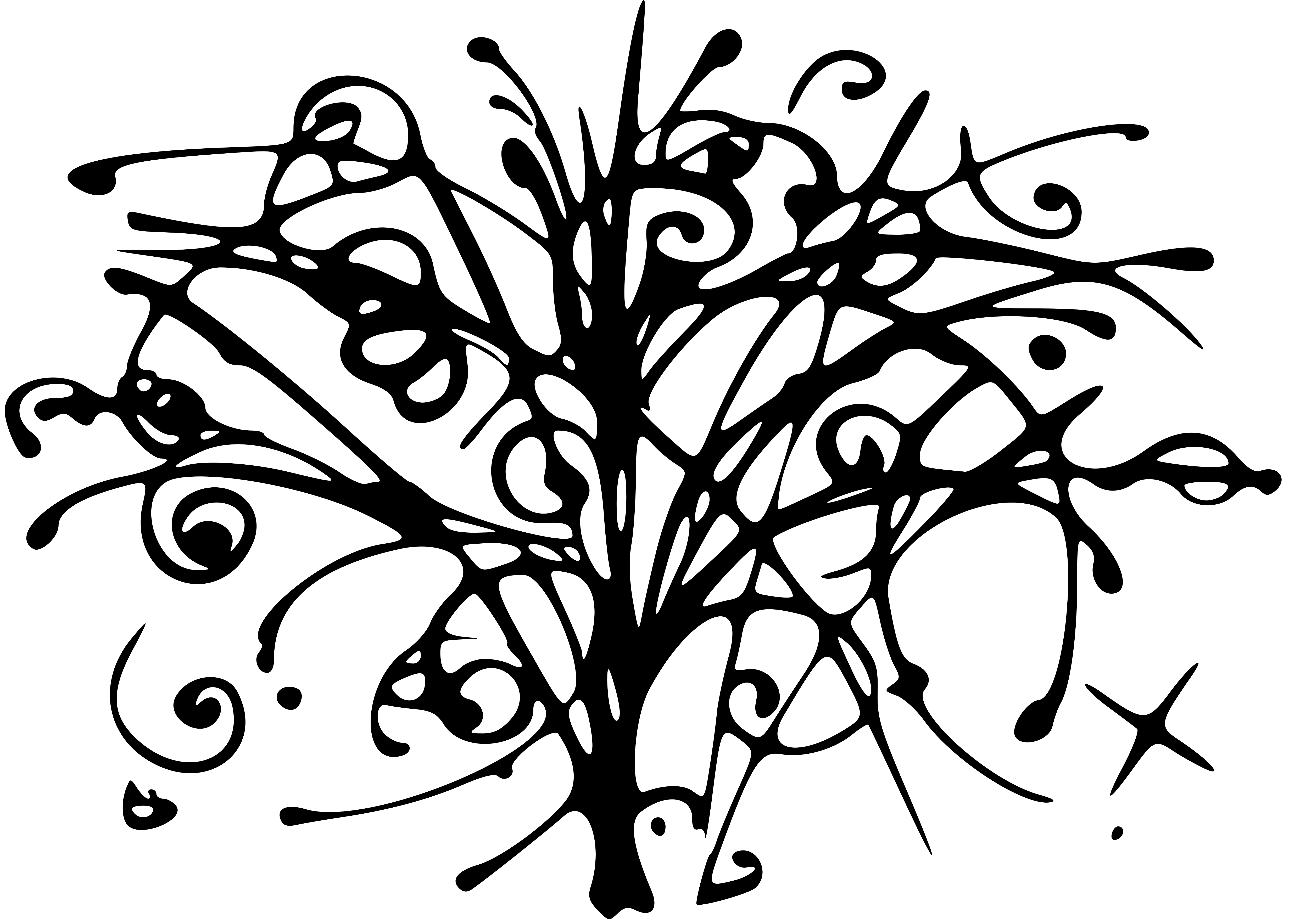}\\
				\vspace{3mm}
				{\large \textsc{Instituto de Física}\\
					\textsc{Universidade Federal Fluminense - UFF}}\\
				\vspace{11mm}
				\begin{minipage}{10cm}
					PhD thesis presented to Fluminense Federal University as a fundamental requirement to obtain the degree of \textsc{Doctor in Physics} by the Institute of Physics.
				\end{minipage}\\
				\vspace{9mm}
				{\large\textsc{February 2020}}
				\vspace{12mm}
			\end{center}
			\begin{flushright}
				{\large Thesis supervisor:}\\
				\vspace{3.5mm}
				{\large\textsc{Prof. Dr. Marcelo S. Sarandy}}
			\end{flushright}
		\end{adjustwidth*}
	\end{SingleSpace}
\end{titlingpage}

\begin{center}
	{\Large\textsc{Alan Costa dos Santos}}\\
	\vspace{50mm}
	{\Large \textsc{Adiabatic Dynamics and Shortcuts to Adiabaticity}}\\[2mm]
	{\large \textsc{Fundamentals and Applications}}\\[4mm]
	\vspace{6.5mm}
\end{center}

\vspace{60mm}
\begin{flushright}
	\begin{minipage}{8cm}
		PhD thesis presented to Fluminense Federal University as a fundamental requirement to obtain the degree of \textsc{Doctor in Physics} by the Institute of Physics.
	\end{minipage}\\
	\vspace{10mm}
	Supervisor: Dr. Marcelo Silva Sarandy
\end{flushright}
\vspace{30mm}
\begin{center}
	{\large Niterói, Rio de Janeiro -- Brazil, February 2020}
\end{center}

\newpage

\includepdf[scale=1]{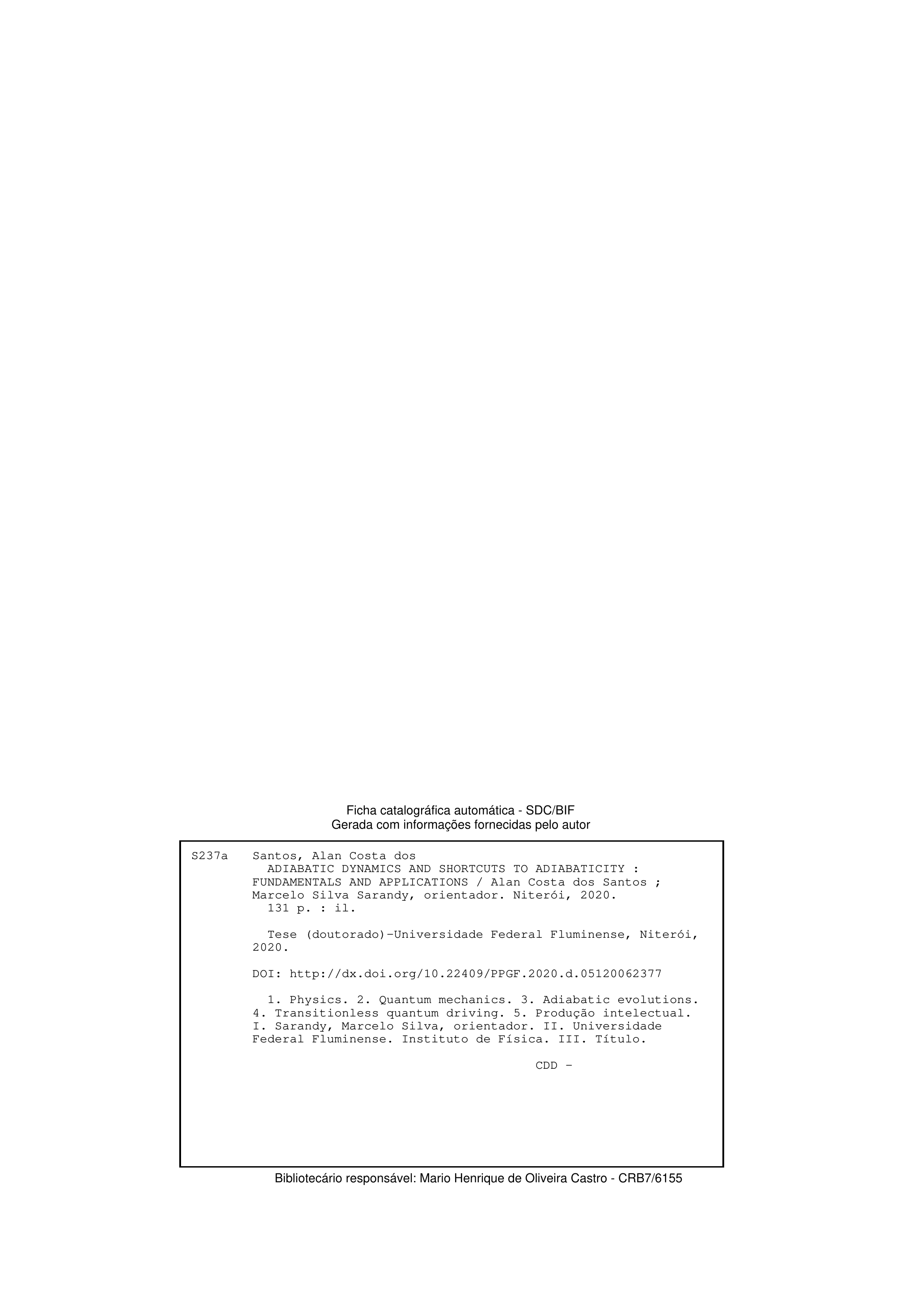}

%
%
%
%

\chapter*{Abstract}
\begin{SingleSpace}
	\initial{I}n this thesis, it is presented a set of results in adiabatic dynamics (closed and open system) and transitionless quantum driving that promote some advances in our understanding on quantum control and Hamiltonian inverse engineering. In the context of adiabatic dynamics in closed systems, it is introduced a validation mechanism for the adiabaticity conditions by studing the system dynamics from a non-inertial reference frame. It is shown how the evaluation of relevant adiabatic approximation conditions in different reference frames (inertial and non-inertial) allows to correctly predict the adiabatic behavior of quantum system driven by oscillatory and rotating fields. Such mechanism is applicable to general multi-particle quantum systems, establishing the conditions for the equivalence of the adiabatic behavior as described in inertial or non-inertial frames. As a relevant application to modern quantum devices, the study of adiabaticity yields an adiabatic protocol to ensure a stable charging process of quantum batteries, which allows one to avoid the spontaneous
	discharging regime. 
	By considering a decohering scenario, validity conditions of the adiabatic approximation are also studied. As a fresh general result with potential applications, it is shown that under decoherence the adiabaticity may still occur
	in the infinite time limit, as it happens for closed systems, for
	a class of initial quantum states. From the viewpoint of basic studies on adiabaticity in open systems, this general approach for adiabatic non-unitary evolution leads to thermodynamics definitions of heat and work in terms of the underlying Liouville superoperator governing the quantum dynamics. Hence, one derives conditions under which an adiabatic open-system quantum dynamics implies in the absence of heat exchange, providing a connection between quantum and thermal adiabaticity.
	To end, the original contributions of this thesis to the theory of shortcuts to adiabaticity refers to a generalized approach of transitionless quantum driving, where one explores the gauge freedom of the quantal phase factors accompanying adiabatic trajectories. It is shown that this generalized transitionless evolutions can be more robust against decoherence
	than their standard and adiabatic counterparts, even by driving the system through fields that provide a minimal energy demanding
	scheme. The number of theoretical prediction presented in this thesis is experimentally verified through two different experimental setups, namely a qubit encoded in the energy hyperfine structure of a Ytterbium trapped ion, and in nuclear magnetic resonance with a nuclear spin qubit.
	
	\vspace{0.4cm}
	
	\noindent \textbf{Keywords:} Adiabatic dynamics $\cdot$ Transitionless quantum driving $\cdot$ Quantum thermodynamics $\cdot$ Trapped ions $\cdot$ Nuclear magnetic resonance.
	
\end{SingleSpace}

%
%
%

\chapter*{Resumo}
\begin{SingleSpace}
	\initial{N}esta tese são apresentados resultados sobre dinâmica adiabática (sistemas fechado e aberto) e evoluções não transicionais que promovem alguns avanços em nosso entendimento sobre controle quântico e engenharia inversa de Hamiltoniano. No contexto de evoluções adiabáticas em sistemas fechados é introduzido um mecanismo de validação para as condições de adiabaticidade estudando a dinâmica do sistema a partir de um referencial não-inercial. É mostrado como o cálculo de relevantes condições de aproximação adiabática em diferentes referenciais (inerciais e não-inerciais) permite prever corretamente o comportamento adiabático de sistemas quânticos conduzidos por campos oscilatórios e rotativos. Este mecanismo é aplicável a sistemas quânticos de muitas partículas em geral, estabelecendo-se as condições de equivalência do comportamento adiabático conforme descrito por referênciais inerciais ou não-inerciais. Como aplicação relevante para dispositivos quânticos modernos, o estudo da adiabaticidade em diferentes referenciais leva a um protocolo adiabático que garante processo de carregamento estável de baterias quânticas, o que permite evitar o regime de descarga espontânea. Considerando um cenário com decoerência, também são estudadas as condições de validade da aproximação adiabática. Como um novo resultado geral com possíveis aplicações é mostrado que sob decoerência a adiabaticidade ainda pode ocorrer no limite de tempo infinito, como ocorre em sistemas fechados, para uma classe de estados iniciais. De um ponto de vista de fundamentos de adiabaticidade em sistemas abertos, essa abordagem geral para evoluções adiabáticas  não-unitárias levam à definições termodinâmicas de calor e trabalho em termos do superoperador de Liouville, que rege a dinâmica quântica. Portanto, deriva-se condições sob as quais uma dinâmica quântica adiabática de sistema aberto implica na ausência de troca de calor, fornecendo uma conexão entre adiabaticidade quântica e térmica. Por fim, as contribuições originais desta tese para a teoria dos atalhos para a adiabaticidade referem-se às abordagens generalizadas de evoluções não transicionais, onde se explora a liberdade de \textit{gauge} das fases quânticas que acompanham as trajetórias adiabáticas. É mostrado que essas evoluções não transicionais generalizadas podem ser mais robustas contra a decoerência do que as versões adiabática e tradicional, mesmo conduzindo o sistema através de campos que fornecem um esquema mínimo de demanda de energia. As previsões teóricas apresentadas nesta tese são verificadas experimentalmente através de dois sistemas experimentais diferentes, a saber, com um qubit codificado nos níveis de energia hiperfino de um íon de Itérbio aprisionado, e em ressonância magnética nuclear com um qubit codificado em um spin nuclear.
	
	\vspace{0.2cm}
	
	\noindent \textbf{Keywords:} Evolução adiabática $\cdot$ Evolução não transicional $\cdot$ Termodinâmica quântica $\cdot$ Íons armadilhados $\cdot$ Ressonância magnética nuclear.

\end{SingleSpace}

%
%
%
\newpage

\vspace*{\stretch{4}}

\noindent To my parents Zé Almir and Toinha.\\
To my brothers Ramon, Saymon and Herisson.\\
To my future wife, Dâmaris (my Penny). \\
To my little crazy dog, Nick. \\
\textit{In memoriam}, Francisco José (Chicão).

\chapter*{Acknowledgements}

Four years ago I started a very hard mission. Along these years a number of people have made such mission smoother. Firstly, I would like to thank those people who have held my hands through the most drastic moments of this long path. To my parents, Zé Almir and Toinha, thank you so much for being here all of the moments, for not giving up on me when the things were very hard and for supporting me all the times I needed your love and attention. To my brothers, Ramon, Saymon and Herisson, for being so important in my live and for all very nice moments next to you. You guys are awesome! To end, to Dâmaris for being my best friend, my guide, my boss, my confidant and for being the love of my life. I'm sure you do not know how importantly you are to me, how lovely you are. From now on, I hope to always be by your side to say how much beautiful you are, how much lovely your eyes are, and how much your smile beautifies my life. I see my future in your eyes, your so beautiful eyes. To my parents, brothers and girlfriend, all of you are the reason of everything.

It is very important to highlight here the role of a number of researchers who contributed a lot for my academic life and, some of them, for my personality as researcher. First, for chronological order, I would like to thank to the experimental groups that helped me along my PhD. I would like to thank to the first person who believed in my ideas, Dr. Chang-Kang Hu (University of Science and Technology of China -- USTC). I'm sure that his experimental implementations of my results were very important to this thesis. My friend, many thanks for all your support, patience, attention during our collaborations and for teaching me a bit about the very nice Chinese culture. Moreover, I am grateful to his lab staff for being part of this collaborations, Dr. Yun-Feng Huang (his PhD supervisor), Dr. Jin-Ming Cui, Dr. Chuan-Feng Li, and Dr. Guang-Can Guo. Thank you so much for all your work and support. A second experimental group that had a very important role to this thesis was the group of the Dr. Roberto Sarthour (Brazilian Center for Research in Physics -- CBPF). To Dr. Roberto, Dr. Alexandre Souza and to Dr. Ivan Oliveira, thank you all so much for being so solicitous when requested and for their hospitality in receiving me at CBPF. Thanks to Amanda Nicotina for performing part of the experiment and for your support. There is no words to describe the relevance of the support of these two experimental groups to my thesis. Thank you so much. I also would like to thank to Dr. José Huguenin and to Marcello Passos (Exact Sciences Institute, Federal Fluminense University -- ICEx-UFF). Although our collaborative work does not make part of this thesis, it has great relevance to me. Moreover, thank you so much for your hospitality and support when I was in Volta Redonda. To Marcello Passos, thanks for the coffees, beers and for all our chats.

Again, in a chronological way, I would like to highlight here the people that believed in my theoretical studies on quantum mechanics and for helping me to enhance them. I want to thank to Dr. Frederico Brito and \xcancel{Prof.} Dr. Diogo Soares-Pinto (São Carlos Institute of Physics, University of São Paulo -- IFSC-USP), for their collaborations in two works presented in this thesis. There is no doubt that their collaborations were strongly relevant to achieve good results and discussions. In addition, I have no words to thank all their support and hospitality when I was in São Carlos for 2 weeks. I wish to thank to Dr. Bar\i\c{s} \c{C}akmak (Bah\c{c}e\c{s}ehir University, Istanbul, Turkey), to Dr. Steve Campbell (Trinity College Dublin, Dublin 2, Ireland) and to Dr. Nicolaj Zinner (Aarhus University, DK-8000 Aarhus C, Denmark) for believing in my ideas and for helping me to enhance them. Our work on stability of three-level quantum batteries through adiabatic dynamics is my first independent project, so your collaboration has both academic and personal important meaning to me. Thank you so much. To end, I am grateful to Dr. Andreia Saguia (Federal Fluminense University -- UFF), for your interest in my research and for helping me to provide a generalized study of adiabatic quantum batteries. Thanks for your time, patience and personal/professional advice. 

To end, the man who made this thesis possible, my supervisor Dr. Marcelo Sarandy. Today I can see that I did not have a supervisor, I had a friend. Along these six years (since my master degree), I learned a lot with you. First, I learned how to be student, then how to be a researcher and, the most important, how to have professional ethics. I'm very grateful for all your optimism, even when I was not so excited with my own results, for your patience and wisdom in leading the whole Ph.D. project. I have no words to thank you for all of chats, personal  advice and professional guidance for the last six years.

This thesis was mainly supported by Conselho Nacional de Desenvolvimento Cient\'{\i}fico e Tecnol\'ogico (CNPq-Brazil). The author also acknowledges financial support in part by the Coordena\c{c}\~ao de Aperfei\c{c}oamento de Pessoal de N\'{\i}vel Superior - Brasil (CAPES) (Finance Code 001) and by the Brazilian National Institute for Science and Technology of Quantum Information [CNPq INCT-IQ (465469/2014-0)].

\newpage

\vspace*{\stretch{4}}






\begin{center}
	\noindent \textbf{Nordeste Independente} \\ \textit{Braulio Tavares} \& \textit{Ivanildo Vilanova}
\end{center}

\begin{multicols}{2}
	
	\noindent Já que existe no sul esse conceito\\
	Que o nordeste é ruim, seco e ingrato\\
	Já que existe a separação de fato\\
	É preciso torná-la de direito\\
	Quando um dia qualquer isso for feito\\
	Todos dois vão lucrar imensamente\\
	Começando uma vida diferente\\
	De que a gente até hoje tem vivido\\
	Imagina o Brasil ser dividido\\
	E o nordeste ficar independente\\
	
	\noindent Dividindo a partir de Salvador\\
	O nordeste seria outro país\\
	Vigoroso, leal, rico e feliz\\
	Sem dever a ninguém no exterior\\
	Jangadeiro seria o senador\\
	O cassaco de roça era o suplente\\
	Cantador de viola, o presidente\\
	O vaqueiro era o líder do partido\\
	Imagina o Brasil ser dividido\\
	E o nordeste ficar independente\\
	
	\noindent Em Recife, o distrito industrial\\
	O idioma ia ser nordestinense\\
	A bandeira de renda cearense\\
	"Asa Branca" era o hino nacional\\
	O folheto era o símbolo oficial\\
	A moeda, o tostão de antigamente\\
	Conselheiro seria o inconfidente\\
	Lampião, o herói inesquecido\\
	Imagina o Brasil ser dividido\\
	E o nordeste ficar independente
	
	\noindent O Brasil ia ter de importar\\
	Do nordeste algodão, cana, caju\\
	Carnaúba, laranja, babaçu\\
	Abacaxi e o sal de cozinhar\\
	O arroz, o agave do lugar\\
	O petróleo, a cebola, o aguardente\\
	O nordeste é auto-suficiente\\
	O seu lucro seria garantido\\
	Imagina o Brasil ser dividido\\
	E o nordeste ficar independente\\
	
	\noindent Se isso aí se tornar realidade\\
	E alguém do Brasil nos visitar\\
	Nesse nosso país vai encontrar\\
	Confiança, respeito e amizade\\
	Tem o pão repartido na metade\\
	Temo prato na mesa, a cama quente\\
	Brasileiro será irmão da gente\\
	Vai pra lá que será bem recebido\\
	Imagina o Brasil ser dividido\\
	E o nordeste ficar independente\\
	
	\noindent Eu não quero, com isso, que vocês\\
	Imaginem que eu tento ser grosseiro\\
	Pois se lembrem que o povo brasileiro\\
	É amigo do povo português\\
	Se um dia a separação se fez\\
	Todos os dois se respeitam no presente\\
	Se isso aí já deu certo antigamente\\
	Nesse exemplo concreto e conhecido\\
	Imagina o Brasil ser dividido\\
	E o nordeste ficar independente
	
\end{multicols}

\newpage
\renewcommand{\contentsname}{Table of Contents}
\maxtocdepth{subsection}
\tableofcontents*
\addtocontents{toc}{\par\nobreak \mbox{}\hfill{\bf Page}\par\nobreak}
\newpage
\listoftables
\addtocontents{lot}{\par\nobreak\textbf{{\scshape Table} \hfill Page}\par\nobreak}
\newpage
\listoffigures
\addtocontents{lof}{\par\nobreak\textbf{{\scshape Figure} \hfill Page}\par\nobreak}
%
%
\mainmatter

\chapter{Introduction}
\label{chap:intro}

\initial{A}diabatic dynamics~\cite{Kato:50,Born:28,Messiah:Book} provides a powerful technique widely explored for a number of applications in quantum mechanics that demand our ability of perfectly controlling a quantum system. {By definition, the adiabatic dynamics (also called, adiabatic approximation) of a system driven by a time-dependent Hamiltonian $H_{0}(t)$ is achieved when the system dynamics (approximately) undergoes a trajectory in Hilbert space in which the instantaneous eigenstates of the Hamiltonian do not evolve coupled to each other~\cite{Kato:50,Born:28}. That means, if the system starts with some superposition of stationary states of $H_{0}(t=0)$, then the system is said to undergo an adiabatic dynamics if such superposition is kept (up to phases) for the corresponding stationary states of $H_{0}(t)$ for all later times $t>0$. Such kind of dynamics has been applied in the context of quantum computation~\cite{Farhi:01,Sarandy:05-2,Aharonov:04,Steffen:03}, digitized quantum computing~\cite{Barends:16,Hen:15,Santos:15}, state engineering~\cite{Amniat:12,Masuda:15,Vitanov:99,Unanyan:98,Vitanov:17}, among others~\cite{Tameem:18}. Adiabatic quantum computing has been investigated as a possible candidate to build high technology quantum processors with manufactured artificial spins, such as the D-Wave quantum computer~\cite{Johnson:11,Boixo:14}. In such system the input state is the lowest-energy state of a \textit{tunneling Hamiltonian}, where all qubits of the system are in a superposition of $\ket{0}$ and $\ket{1}$. Then, the system is driven by a time dependent Ising Hamiltonian with a transverse field, where the final state of the system is a classical state associated with the lowest-energy state of a \textit{problem Hamiltonian}. The problem Hamiltonian allows us to adjust the local fields (along the quantization axis) acting on each qubit, as well as the coupling between them. In order to provide a quantum annealing, the driven Hamiltonian interpolate the tunneling and problem Hamiltonians, where the problem Hamiltonian becomes effective meanwhile the influence of the tunneling one is reduced. However, the performance of adiabatic dynamics in achieving some tasks with high fidelity is strongly affected by the competition between the total evolution time (as imposed by the adiabaticity conditions~\cite{Amin:09,Jansen:07,Sarandy:04}) and the decohering time-scale of real physical systems (relaxation, dephasing, etc). When the inevitable environment-system coupling is taken into account, the performance of the adiabatic dynamics depends on the basis in which the decoherence acts, with some enhancement of the robustness of the protocol usually achievable when the decohering effects take place in the instantaneous energy eigenbasis~\cite{Albash:15}. More precisely, in this context the notion of adiabatic dynamics differs from the closed system case due to fundamental issues. Indeed, given the interaction with some environment, the system dynamics is governed by master equations that take into account the non-unitary effects due to such interaction~\cite{Petruccione:Book,Lindblad:76}. Although the adiabatic approximation in open systems takes into account contributions of the Hamiltonian $H_{0}(t)$, it takes into account system-environment interaction as well~\cite{Sarandy:05-1}. Hence, undesired effects will drastically change the system state when the decohering effects do not take place in a preferred basis~\cite{Albash:15}. However, independently of the existence of a preferable decohering basis, we need to consider some alternative approach to deal with these undesired effects. A widely used approach to bypass this problem is the shortcut to adiabaticity based on \textit{transitionless quantum driving} (TQD), as conceived by M. Demirplak and S. Rice~\cite{Demirplak:03,Demirplak:05}. 
	
	Shortcuts to adiabaticity (STA) consist of a set of methods able to significantly speed up an dynamics so that the initial and final state are the same as obtained by an adiabatic dynamics. Such approaches have been proposed in context of inverse Hamiltonian engineering, where we add fields in order to achieve initial and target states of the system, with applications to state engineering~\cite{Huang:17}, tracking of many-body systems across quantum 
	phase transitions~\cite{Saberi:14,Hatomura:17}, quantum computing~\cite{Santos:18-a,Santos:15,Santos:16,Coulamy:16}, quantum thermodynamics~\cite{Deng:18}, among others~\cite{Ibanez:13,Ruschhaupt:12,Mukherjee:16,Chen:16-2,Torrontegui:13,Odelin:19}. TQD is a specific STA, where the path followed by the system from initial to final state is an adiabatic trajectory of some reference time-dependent Hamiltonian $H_{0}(t)$. While a general STA can be obtained for any faster orbit than the adiabatic one, the TQD method drives the system along the exact adiabatic trajectory, for this reason we use the term ``transitionless'', denoting that the system does not make transitions between different instantaneous eigenstates of $H_{0}(t)$, but it is driven by a different Hamiltonian called transitionless Hamiltonian. TQD has been applied to a number of quantum tasks, such as state engineering~\cite{Chen:10,Xia:16,Hu:18}, quantum computing~\cite{Oh:14,Santos:15,Santos:16,Coulamy:16}, thermal engines~\cite{Zheng:16,Lutz:18,Abah:17,Funo:17}, relativistic quantum mechanical dynamics~\cite{Deffner:16}, among others~\cite{Torrontegui:13,Odelin:19}. As proposed by Demirplak and Rice  (also investigated by M. V. Berry~\cite{Berry:09}), the TQD method consists of an dynamics where the system is driven by a sum of fields, namely the adiabatic fields used to implement $H_{0}(t)$ plus auxiliary fields used to implement the \textit{counter-diabatic} Hamiltonian $H_{\text{cd}}(t)$. As its nomenclature suggests, $H_{\text{cd}}(t)$ inhibits \textit{diabatic} transitions that happens when $H_{0}(t)$ drives the system without any constraints imposed by the validity conditions of the adiabatic theorem. Due to the additional fields, the energy cost for implementing TQD are increased and it depends on the how faster the system is driven~\cite{Santos:15}, so that we can establish a trade-off between speed and energy cost of implementing such shortcut~\cite{Campbell-Deffner:17}. Then, any advantage of the usage of the TQD method comes with some additional energy expenditure to implement it.~\cite{Santos:15,Campbell-Deffner:17}. In return, it is important to highlight that speeding up the dynamics implies in a more robust dynamics against a number of decohering environment and systematic errors~. In summary, it would be appreciated if we get a new TQD approach that allows us for achieving the same robustness as the \textit{standard}\footnote{From now on we will use the terminology ``standard'' to refer to the TQD method proposed by Demirplak and Rice.} TQD, but with low-cost energy fields.
	
	In this thesis we focus on advances in adiabatic quantum mechanics theory (closed and open systems) and TQD theory, by exploring a generalized approach for the TQD. The motivation of the original studies and applications presented in this thesis are structured as follows.
	
	In context of adiabatic dynamics in closed systems, besides the significant number of applications of adiabatic dynamics previously mentioned, it remains an open problem the conditions that allow us to guarantee the adiabatic approximation. As we shall see in more details, there are a large number of works in the literature trying to solve this problem by proposing different validity \textit{conditions to adiabaticity} (ACs). In theory, when ACs are satisfied then the adiabatic approximation should be achieved with high fidelity. However, it has been proved that some widely used ACs are neither \textit{necessary} nor \textit{sufficient} in guaranteeing any prediction of the adiabatic behavior~\cite{Suter:08}. By \textit{necessary}, we mean that if the desired behavior is obtained then the condition at hands is \textit{necessarily satisfied}; by \textit{sufficient} we mean that if such condition is satisfied then it should be sufficient to say that the adiabatic approximation is achieved with high fidelity. The search for necessary and sufficient ACs has led to a number of discussions about the existence of necessary and sufficient conditions to adiabaticity~\cite{Teufel:03,Ambainis:04,Tong:05,Jansen:07,Amin:09,Tong:10,Cao:13} as well as new bounds for adiabaticity~\cite{Tong:07,Yu:14,Wang:15}. Then in this thesis we introduce a validation mechanism for the adiabatic approximation by driving
	the quantum system to a non-inertial reference frame~\cite{Hu:19-b}. To this end, first we consider several relevant adiabatic approximation conditions previously derived and show that all of them fail by introducing a suitable oscillating Hamiltonian for a single quantum bit (qubit). Then, by evaluating the adiabatic condition in a rotated non-inertial frame, we show that all of these conditions, including the standard adiabatic condition, can correctly describe the actual dynamics in the original frame, either far from resonance or at a resonant point. We show how such mechanism is applicable to general multi-particle quantum systems, establishing the conditions for the equivalence of the adiabatic behavior as described in inertial or non-inertial frames. 
	As a practical application of the study of adiabaticity, we apply our results to charging process of a quantum battery (QB). Among fundamental challenges for useful QBs are both the control of the energy transfer and the stability of the discharge process to an available consumption hub. Indeed, a fully operational loss-free QB requires an inherent control over the energy transfer process, with the ability of keeping the energy retained with no leakage. In this sense, we exploit an adiabatic protocol to ensure a stable charged state of QBs which allows one to avoid the spontaneous discharging regime. Our results are firstly applied to a three-level quantum system in a ladder-type configuration~\cite{Santos:19-a}, then we propose a general design for a QB based on a power observable quantifying the energy transfer rate to a consumption hub~\cite{Santos:20c}. As main applications of the power observable, we introduce a \textit{trapping energy mechanism} for QBs and an asymptotically \textit{stable discharge mechanism}, which is achieved through an adiabatic evolution eventually yielding vanishing power. 
	
	By considering decohering effects on the system dynamics, we study the validity conditions of the adiabatic approximation. First we review some elements introduced in Ref.~\cite{Sarandy:05-1}, then as a general result we show that the adiabatic approximation in open system may still occur in the infinite time limit, as it happens for closed systems, for a class of initial quantum states. Our results can be viewed as a complementary discussion to the results presented in Refs.~\cite{Sarandy:05-1,Sarandy:05-2}, where the authors argued that open system adiabaticity is typically expected to occur at finite time. As a practical application, we implement the adiabatic quantum algorithm for the Deutsch problem, where a distinction is established between the open system adiabatic density operator and the target pure state expected in the computation process. Preferred time windows for obtaining the desired outcomes are then analyzed. On the other hand, from the viewpoint of basic studies on adiabaticity in open systems, 
	here we theoretically study thermodynamic adiabatic processes in open quantum systems, which evolve nonunitarily under decoherence due to its interaction with its surrounding environment. From the general approach for adiabatic non-unitary evolution, we establish heat and work in terms of the underlying Liouville superoperator governing the quantum dynamics. 
	In this scenario, thermodynamic processes with no heat exchange, namely, adiabatic transformations, can be implemented through quantum evolutions in closed systems, even though the notion of a closed system is always an idealization and approximation. As a consequence, we establish a detailed study on adiabatic evolutions (dynamics) and adiabatic processes (thermodynamics), where we derive the conditions such that an adiabatic open-system quantum dynamics implies in the absence of heat exchange, providing a connection between quantum and thermal adiabaticity.
	
	As a last set of results, we present a generalized approach of TQD in closed systems by exploring the gauge freedom of the quantal phases that accompany an adiabatic trajectory. In fact, there is a number of situations for which we need not exactly mimic an adiabatic 
	process, but only assure that the system is kept in an instantaneous eigenstate (independently 
	of its associated quantum phase)~\cite{Santos:15,Santos:16,Coulamy:16,Adolfo:16,Stefanatos:14,Lu:14,Deffner:16,Liang:16,An:16,Xia:16,Zhang:16,Marcela:14,Chen:10}. Therefore, we can suitably choose some dynamics in which the adiabatic phase is replaced for generalized quantal phases $\theta_n(t)$. This provides us a generalized TQD Hamiltonian that depends on a number of free parameters and such parameters can be chosen in order to optimize some physical quantity or process. For example, in this thesis we show situations in which the generalized TQD Hamiltonian is both time-independent and optimal energy cost demanding concerning the adiabatic and standard TQD Hamiltonian. As an application of the results, we realize feasible time-independent TQD 
	Hamiltonians~\cite{Santos:18-b}, which provide the optimal dynamics concerning both their experimental viability 
	and energy cost, as measured by the strength of the counter-diabatic fields required by the implementation~\cite{Hu:18} and by the optimal pulse sequence required to implement some quantum task~\cite{Santos:20b}. The energy-optimal protocol presented here is potentially useful for speeding up digitized adiabatic quantum computing~\cite{Hen:14,Barends:16}. Since digitized adiabatic quantum computing requires the Trotterization of the adiabatic dynamics, our protocol could be useful in reducing the number of steps used in digital adiabatic quantum processes. In addition, such process is independent of the experimental approach used to digitize the adiabatic quantum evolution.
	
	A number of theoretical predictions presented in this thesis are experimentally implemented in two different experimental setups. By using a two-level system encoded in the hyperfine energy structure of a single Ytterbium trapped ion, driven by external microwave fields, we experimentally explore the validation mechanism of the adiabatic theorem in both closed and open systems and the performance of optimal TQD Hamiltonians as provided by the generalized TQD theory. From the experimental novelty side, we report two distinct realizations. In the context of generalized TQD, the experimental realization allowed us to check the fields used to implement an optimal TQD is in fact less intense than the fields used to implement the adiabatic and standard TQD version of the dynamics. In context of the realization of an adiabatic open-system evolution, we use the hyperfine qubit of the Ytterbium trapped ion as our work substance for our application to thermodynamics. Our implementation exhibits high controllability, opening perspectives for analyzing thermal machines (or refrigerators) in open quantum systems under adiabatic evolutions. Due to the requirements of the usual definitions of heat and work, the investigation of thermodynamic quantities in adiabatic dynamics is achieved with time-dependent decoherence effects. To this end, it is introduced an efficient control to a Gaussian noise with time-dependent amplitude, which is then used to simulate a dephasing channel with a time-dependent decoherece rate $\gamma(t)$. We stress here that we focus on superficial details of the experimental setup used in our work, but a detailed description of the system can be found in Chang-kang Hu's PhD thesis~\cite{Hu:Thesis}. The second experimental setup used here is a nuclear spin system driven by nuclear magnetic resonance (NMR) experimental setup. Our experimental investigation shows the performance of a generalized approach for TQD to implement shortcuts to adiabaticity in NMR. We show that the generalized TQD theory requires less energy resources, with the strength of the magnetic field less than that required by standard TQD. As a second discussion, we analyze the experimental implementation of shortcuts to single-qubit adiabatic gates. By adopting generalized TQD, we can provide feasible time-independent driving Hamiltonians, which are optimized in terms of the number of pulses used to implement the quantum dynamics. The robustness of adiabatic and generalized TQD evolutions against typical decoherence processes in NMR is also analyzed.
	
	\newpage
	
\section{Mathematical framework, notation and conventions}
\label{sec:Notation}

In this section we present some important notations and conventions we use throughout this thesis.

\subsection{Matrix operations: Exponential and Logarithm of a matrix}

When we work with numbers (real $\Rmath$ or complex $\Cmath$) and how a number can be connected to another one, the notion of functions is a ``natural" consequence. Thus, we can define a function $f$ as an application $f$ that takes an element $x$ defined on a space $V_x$ and it associates to this element a second one $f(x)$ defined on a space $V_f$. Then, we write $f:x\in V_x \rightsquigarrow f(x) \in V_f$. More specifically, by choosing the domain of each element it is possible to define scalar, vector and matrix functions. Therefore, let $\Mcalb(m)$ be a matrix space whose elements $A$ are diagonalizable $(m\times m)$-matrices with eigenvalues $a_{n} \in \Cmath$ and eigenvectors $\ket{a_{n}} \in \Cmath^m$. Therefore, a matrix function $f$ is defined as $f:A \in \Mcalb(m) \rightsquigarrow f(A) \in \Mcalb(m)$. It is important to highlight that we can express a matrix function through its power series representation
\begin{equation}
	f(A) = \lim_{\lambda \rightarrow 1} f(\lambda A) = \lim_{\lambda \rightarrow 1} \sum_{n=0}^{\infty} \frac{(\lambda A)^n}{n!} \left(\left. \frac{\partial^{n}f(\lambda A)}{\partial \lambda^{n}} \right\vert_{\lambda = 0}\right) \label{EqExpanFunctionF} \text{ , }
\end{equation}
where we assume the existence of the derivatives in the above equation. Now, it is possible to define a number of interesting and useful matrix equations. First, again let us consider a matrix $A$ with eigenvalues $a_{n}$ associated with eigenvectors $\ket{a_{n}}$. Thus, we define the \textit{exponential} of $A$, which is denoted by $e^{A}$ or $\exp(A)$. Then, by using the Eq.~\eqref{EqExpanFunctionF} we can write
\begin{equation}
	e^{A} = \lim_{\lambda \rightarrow 1} e^{\lambda A} = \lim_{\lambda \rightarrow 1} \left[ \sum_{n=1}^{\infty} \frac{(\lambda A)^n}{n!} \right] = \sum_{n=1}^{\infty} \lim_{\lambda \rightarrow 1} \left[\frac{(\lambda A)^n}{n!} \right] \text{ , }
\end{equation}
and therefore we conclude that
\begin{equation}
	e^{A} = \sum_{n=1}^{\infty} \frac{A^n}{n!} \text{ . }
\end{equation}

The notion of exponential of a matrix is a highly useful mathematical tool for many tasks in quantum physics, since the unitary evolution of quantum systems is described by such mathematical elements~\cite{Zettili:Book}. In particular, it is convenient to write the matrix form of $e^{A}$ in the eigenbasis $\{\ket{a_{n}}\}$ of $A$ as
\begin{equation}
	e^{A} = \sum_{n=1}^{N}\ket{a_{n}}\bra{a_{n}} e^{A} \sum_{k=1}^{N}\ket{a_{k}}\bra{a_{k}} = \sum_{n=1}^{N}e^{a_{n}} \ket{a_{n}}\bra{a_{n}} \text{ , }
\end{equation}
where we already used that $\bra{a_{n}} e^{A} \ket{a_{k}} = e^{a_{k}} \delta_{kn}$. On the other hand, we can define the natural logarithm function of a matrix as
\begin{equation}
	e^{\ln A} = A  \text{ , } \label{EqexpLnA}
\end{equation}
which should be valid for any matrix $A$. However, here we can write $\ln A$ in basis $\{\ket{a_{n}}\}$
\begin{equation}
	\ln A = \sum_{n=1}^{N} \ln a_{n} \ket{a_{n}}\bra{a_{n}} \text{ . } \label{EqLogMatrix}
\end{equation}
However, a first remark here is that the above equation is valid up to a condition on the spectrum of $A$. In particular, $\ln 0$ is not well defined, because $\lim_{\lambda \rightarrow 0} \ln \lambda \rightarrow - \infty$. This means that Eqs.~\eqref{EqexpLnA} and~\eqref{EqexpLnA} hold for \textit{nonsingular} matrices $A$

\subsection{The density matrix}

A $d$-dimensional quantum system is fully described through a matrix $\rho$ called \textit{density matrix}. As any matrix, $\rho$ may be written from a predefined matrix basis $\{\rho_{n}\}$ as
\begin{equation}
	\rho = \sum_{n=0}^{D^2-1} c_{n} \rho_{n} \text{ , }
\end{equation}
where $c_{n}$ are parameters to be determined. Thus, assuming that $\rho$ describes a quantum system, it should satisfy the normalization condition $\trnone{\rho}=1$, so that it leads us to a condition on the coefficients $c_{n}$ as
\begin{equation}
	\sum_{n=1}^{D^2} c_{n} \trnone{\rho_{n}} = 1 \text{ . } \label{EqNormCond1}
\end{equation}

In particular, there is a convenient choice of the parameters $c_{n}$ and matrices $\rho_{n}$ so that the above condition is satisfied. Throughout this thesis we will use the basis $\{\rho_{n}\} = \{\1, \{\sigma_{n}\} \}$, where $\{\sigma_{n}\}$ is a set of $D^2-1$ ($D\times D$)-dimensional \textit{traceless} matrices\footnote{We say that a matrix $A$ is said to be traceless matrix when $\trs{A} = 0$.}. Thus, from Eq.~\eqref{EqNormCond1} we can identify $c_{0} = (1/D)$ so that $\rho$ could be rewritten as
\begin{equation}
	\rho = \frac{1}{D} \1 + \sum_{n=1}^{D^2-1} c_{n} \sigma_{n} \text{ . }
\end{equation}

Let us consider a second condition on the matrices $\rho_{n}$ associated with orthogonality the between the elements of $\{\sigma_{n}\}$, namely, let us consider a basis in which
\begin{equation}
	\frac{1}{D}\tr{\sigma_{n}\sigma^{\dagger}_{m}} = \delta_{nm} \text{ , } 
\end{equation}
where $\sigma^{\dagger}_{m}$ is the Hermitian conjugate of $\sigma_{m}$ and $\delta_{nm}$ is the Kronecker's delta\footnote{$\delta_{nm}=1$ when $n=m$ and $\delta_{nm}=0$ otherwise.}. From this consideration, we easily can find each coefficient $c_{n}$ of $\rho$ and $\sigma_{n}$ by computing
\begin{equation}
	\tr{\rho \sigma^{\dagger}_{n} } = \tr{\left(\frac{1}{D} \1 + \sum_{n=1}^{D^2-1} c_{n} \rho_{n}\right) \sigma^{\dagger}_{n}} 
	= \frac{1}{D}\trnone{ \sigma^{\dagger}_{n}}  + \sum_{k=1}^{D^2-1} c_{k} \tr{ \sigma_{k}\sigma^{\dagger}_{n} }
\end{equation}
where we can use that $\trnone{ \sigma^{\dagger}_{n}} = 0$ and $\tr{\sigma_{k}\sigma^{\dagger}_{n}} = D\delta_{kn}$ to get
\begin{equation}
	\tr{\rho \sigma^{\dagger}_{n} } = \varrho_{n} = D c_{n} \text{ . }
\end{equation}

Finally, if we adopt a \textit{traceless} matrix basis $\{\sigma_{1}, \sigma_{2}, \cdots , \sigma_{D^2-1}\}$ that satisfies $\trs{\sigma_{n}\sigma^{\dagger}_{m}} = D\delta_{nm}$, the density matrix can be written as
\begin{equation}
	\rho = \frac{1}{D} \left[ \1 + \sum_{n=1}^{D^2-1} \varrho_{n} \sigma_{n} \right]\text{ . } \label{EqEqRhoCoherence}
\end{equation}
with $\varrho_{n} = \trs{\rho \sigma^{\dagger}_{n} }$. As an example, let us consider the set of the Pauli matrices given by
\begin{equation}
	\sigma_{x} = \begin{bmatrix} 0 & 1 \\ 1 & 0 \end{bmatrix} 
	\text{ \  , \ \ \ \ }
	\sigma_{y} = \begin{bmatrix} 0 & -i \\ i & 0 \end{bmatrix} 
	\text{ \ \ \ \ and \ \ \ \ }
	\sigma_{z} = \begin{bmatrix} 1 & 0 \\ 0 & -1 \end{bmatrix}  \text{ . }
\end{equation}

Evidently we can note that the above matrices constitute a set of Hermitian traceless matrices. In addition, by using the relation ($a,b,c\in\{x,y,z\}$)
\begin{equation}
	\sigma_{a} \sigma_{b} = \delta_{ab}\1 + i \varepsilon_{abc} \sigma_{c} \text{ , }
\end{equation}
where $\varepsilon_{abc}$ is the Levi-Civita's symbol defined as
\begin{equation}
	\varepsilon_{abc} = \left\{\begin{matrix}
		+1 & \text{ if \ } (a,b,c) \in \{ (1,2,3),(3,1,2),(2,3,1) \} \\
		-1 & \text{ if \ } (a,b,c) \in \{ (1,3,2),(2,1,3),(3,2,1) \} \\
		0 & \text{ if $a = b$ or $b = c$ or $c=a$ }
	\end{matrix} \right. \text{ , }
\end{equation}

one finds that
\begin{equation}
	\tr{\sigma_{a} \sigma_{b}} = 2 \delta_{ab} \text{ . }
\end{equation}

Therefore, we conclude that the set of Pauli matrices can be used as a matrix basis for a two-level system and we write
\begin{equation}
	\rho = \frac{1}{2} \left[ \1 + \sigma_{x} \varrho_{x}(t) + \sigma_{y} \varrho_{y}(t) + \sigma_{z} \varrho_{z}(t) \right] = \frac{\1 +  \vec{\varrho} \cdot \vec{\sigma}}{2} \text{ , } \label{EqDensiMatrixDecompGen}
\end{equation}
where we define the matrix Pauli vector $\vec{\sigma} = \sigma_{x} \hat{\text{i}} + \sigma_{y} \hat{\text{j}} + \sigma_{z} \hat{\text{k}}$,
with $\{\hat{\text{i}},\hat{\text{j}},\hat{\text{k}}\}$ being the canonical basis of the $\Rmath^3$ Euclidean space, and the \textit{coherence vector} $\vec{\varrho} = \varrho_{x} \hat{\text{i}} + \varrho_{y} \hat{\text{j}} + \varrho_{z} \hat{\text{k}}$ for a two-level system.

\section{Quantum mechanics in superoperator formalism}\label{SecSuperOpForm}

In the theory of open systems, the dynamics of a quantum system $S$ coupled to an environment is generically described by a master equation that takes into account the system-environment interaction. In particular, here we consider the class of evolution where the system dynamics is described by the time-local master equation~\cite{Alicki:Book07,Petruccione:Book}
\begin{equation}
	\dot{\rho}(t) = \Lcalb_{t}[\rho(t)] \text{ , } \label{EqEqLind}
\end{equation}
where $\Lcalb_{t}[\bullet]$ is the generator of the dynamics and the subscript ``$t$'' makes explicit the possibility of time-dependency of the parameters associated with the environment. As the element $\Lcalb_{t}[\bullet]$ is an operation that leads an operator $A\in \Hcalb_{\text{S}}$ to another $A^{\prime} \in \Hcalb_{\text{S}}$, where $\Hcalb_{\text{S}}$ is the Hilbert space of the system $S$, we call $\Lcalb_{t}[\bullet]$ a \textit{superoperator}. We can deal with the above dynamics in different ways, but here we will consider the case in which we define an extended space where the superoperator becomes an $(D^2 \times D^2)$-dimensional operator $\Lmath(t)$ and the operators (density matrix and observables, for example) become (super-)vectors $\dket{\rho(t)}$ in this new space.

To see how it can be done, we define a set of $D_{-}=D^2 -1$ operators $\Ocal = \{\sigma_{n}\} \in \Hcalb_{\text{S}}$, so that $\trs{\sigma_{n}} = 0$ and $\trs{\sigma_{n}\sigma_{m}^{\dagger}} = D\delta_{nm}$. As already discussed, $\rho(t)$ can be written in form given in Eq.~\eqref{EqEqRhoCoherence}, where the identity is introduced here as a $D$-dimensional operator of this basis in order to satisfy the property $\trs{\rho(t)} = 1$, so that we define $\sigma_{0} = \1$. As an example we consider a two-level system, in which a complete basis $\Ocal$ for this system would be $\Ocal_{\text{tls}} = \{\sigma_{0},\sigma_{n}\} = \{\1,\sigma_{x},\sigma_{y},\sigma_{z}\}$, so that the coherence vector reads as in Eq.~\eqref{EqDensiMatrixDecompGen}. Obviously we could consider other choices for $\Ocal$, depending on the convenience of our choice. Thus, if we substitute $\rho(t)$ in Eq.~\eqref{EqEqLind} by using the expanded form of the Eq.~\eqref{EqEqRhoCoherence}, we find the system of differential equations
\begin{equation}
	\dot{\varrho}_{k} (t) = \frac{1}{D} \sum_{n=0}^{D_{-}} \varrho_{i}(t) \trs{\sigma_{k}^{\dagger} \Lcalb [ \sigma_{i} ]} \text{ , } \label{Eqv1}
\end{equation}
where we assume that $\Lcalb [\bullet]$ is a linear superoperator. Note that if we identify the coefficient $\trs{\sigma_{k}^{\dagger} \Lcalb [ \sigma_{i} ]}$ in the above equation as an element at $k$-th row and $i$-th column of a $(D^2 \times D^2)$-dimensional matrix $\Lmath(t)$, one can write
\begin{equation}
	\dket{\dot{\rho}(t)} = \Lmath(t) \dket{\rho(t)} \text{ , } \label{EqEqSuperLindEq}
\end{equation}
where $\dket{\rho(t)}$ is a $D^2$-dimensional vector with components $\varrho_{n}(t) = \trs{\rho(t)\sigma_{n}^{\dagger}}$, $n=0,1,\cdots D_{-}$. In an analogous way to the coherence vector $\vec{\varrho}(t)$ of the two-level system, we will call $\dket{\rho(t)}$ the \textit{coherence vector}. In addition, in this formalism, the inner product between two density matrices $\xi_{1}$ and $\xi_{2}$ from their coherence vectors $\dket{\xi_{1}}$ and $\dket{\xi_{2}}$ is defined as $\dinterpro{\xi_{1}}{\xi_{2}} = (1/D)\trs{\xi^{\dagger}_{1}\xi_{2}}$, where the conjugate coherence vector $\dbra{\xi_{1}}$ has components given by $\trs{\xi^{\dagger}_{1}\sigma_{n}}$.

Since we have a set of new coherence vectors $\dket{v}$, it is convenient to provide a basis for such vector space. Given the basis of operators $\{\sigma_{n}\}$, one defines the basis of coherence vectors as 
\begin{equation}
	\dket{\sigma_{k}} = \frac{1}{D}\begin{bmatrix} \trs{\sigma^{\dagger}_{1}\sigma_{k}} & \trs{\sigma^{\dagger}_{2}\sigma_{k}} & \cdots & \trs{\sigma^{\dagger}_{D^2-1}\sigma_{k}}
	\end{bmatrix}^{\text{t}} \text{ , }
\end{equation}
so that we can use $\trs{\sigma_{n}\sigma_{m}^{\dagger}} = D\delta_{nm}$ to write
\begin{equation}
	\dket{\sigma_{k}} = \begin{bmatrix} \delta_{1k} & \delta_{2k} & \cdots & \delta_{(D^2-1)k}
	\end{bmatrix}^{\text{t}} \text{ , } \label{EqSuperOpBasis}
\end{equation}
which constitutes the set of $D^2$ linearly independent vectors on the vector space, so that any coherence vector $\dket{v}$ can be written as a linear combination of the vectors in the set $\{\dket{\sigma_{k}}\}$.

An important remark is that, due to the non-Hermiticity of the operator $\Lcalb[\bullet]$, the superoperator $\Lmath(t)$ may not be diagonalizable. Nevertheless, non-Hermitian operators can always be written in the Jordan canonical form, where $\Lmath(t)$ displays a block-diagonal structure $\Lmath_{\text{J}}(t)$ composed by Jordan blocks $J_{n}(t)$ associated with different time-dependent eigenvalues $\lambda_{\alpha}(t)$ of $\Lmath(t)$~\cite{Horn:Book}. Then, firstly we define the following.

\begin{definition}[Jordan block form]\label{DefJordanForm}
	Given a $K\times K$ matrix $L$, the Jordan block form of $L$ reads
	\begin{align}
		L_{J} = \begin{bmatrix}
			J_{k_1}[\lambda_{k_1}] & 0                      & 0        & \cdots & 0 \\
			0                      & J_{k_2}[\lambda_{k_2}] & 0        & \cdots & 0 \\
			\vdots & \ddots & \ddots & \ddots & \vdots \\
			0 & \cdots & 0 & \ddots & 0  \\
			0 & \cdots & \cdots & 0 & J_{k_{N}}[\lambda_{N}] 
		\end{bmatrix}_{K \times K} \text{ , } \label{EqJordanFormMatrixJ}
	\end{align}
	where each block $J_{k_1}[\lambda_{k_1}]$ is given by an upper triangular matrix of the form
	\begin{align}
		J_{k}[\lambda] = \begin{bmatrix}
			\lambda & 1   & 0        & \cdots & 0 \\
			0 &\lambda & 1 & \cdots & 0 \\
			\vdots & \ddots & \ddots & \ddots & \vdots \\
			0 & \cdots & 0 & \lambda & 1  \\
			0 & \cdots & \cdots & 0 & \lambda  
		\end{bmatrix}_{k\times k} \text{ , } \label{EqJordanFormMatrix}
	\end{align}
	with $\lambda$ denoting the eigenvalues of $L$. Alternatively, a $K\times K$ Jordan matrix $L_{J}$ can be denoted as
	\begin{align}
		L_{J} = J_{k_{1}}[\lambda_{1}] \oplus J_{k_2}[\lambda_2] \oplus \cdots \oplus J_{k_N}[\lambda_N] = \bigoplus_{\alpha=1}^{N} J_{k_{\alpha}}[\lambda_{\alpha}] \text{ , \ \ \ } k_1+k_2+\cdots+k_N = K \text{ , }
	\end{align}
	where $N\leq K$ is the number of Jordan block required to write $L_{J}$ in a Jordan block form.
\end{definition}

In some cases, the coefficients $\lambda_n$ may depend on other parameters (time, for example). Then, by assuming that the coefficients $\lambda_n$ depend on a complete set of parameter $\xi = \{\xi_{1},\cdots \xi_{M}\}$, we denote the Jordan matrix as
\begin{align}
	L_{J}(\xi) = J_{k_1}[\lambda_1(\xi)] \oplus J_{k_2}[\lambda_2(\xi)] \oplus \cdots \oplus J_{k_N}[\lambda_N(\xi)]  \text{ . }
\end{align}

The notion of Jordan form is important here because, different from the generator of a unitary dynamics (the Hamiltonian), the generator of the dynamics in an open system $\Lmath(t)$ does not admit a diagonal form in general. However, every square matrix $A$ can be diagonalized by blocks from the \textit{Jordan canonical form theorem}~\cite{Horn:Book}.

\begin{theo}[Jordan canonical form theorem]
	Let be $A \in M_{K}$, where $M_{K}$ is a set of $K \times K$ square matrix. Then, there is a non-singular matrix $S \in M_{K}$, positive integers $N\leq K$ and $k_{1}, \cdots , k_{N}$ with $k_{1} + k_{2} + \cdots k_{N} = K$, and scalars $\lambda_1, \cdots , \lambda_N \in \Cmath$ so that
	\begin{align}
		A(\xi) = S(\xi) J_{A}(\xi) S^{-1}(\xi) \text{ , }
	\end{align}
	where $J_{A}(\xi) = J_{k_1}[\lambda_1(\xi)] \oplus J_{k_2}[\lambda_2(\xi)] \oplus \cdots \oplus J_{k_N}[\lambda_N(\xi)]$ is a Jordan matrix associated with $A(\xi)$.
\end{theo}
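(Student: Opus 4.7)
The plan is to prove the Jordan canonical form theorem in two main steps: first, reduce to the case of a single eigenvalue by decomposing $\Cmath^K$ into generalized eigenspaces, and second, on each generalized eigenspace, construct a basis in which the restriction of $A-\lambda\1$ is a direct sum of nilpotent Jordan blocks. The parameter dependence on $\xi$ plays no role in the existence statement, so I would suppress it and treat $A$ as a fixed matrix in $M_K$; the $\xi$-dependence is inherited automatically from the construction applied to $A(\xi)$ at each value of $\xi$.

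For the first step, since $A$ acts on a vector space over $\Cmath$, its characteristic polynomial $p_A(x) = \prod_{i=1}^{r} (x - \mu_i)^{m_i}$ splits into linear factors, where $\mu_1, \ldots, \mu_r$ are the distinct eigenvalues with algebraic multiplicities $m_i$. I would define the generalized eigenspaces $V_{\mu_i} = \ker(A - \mu_i \1)^{m_i}$ and invoke the primary decomposition theorem (a consequence of the Chinese remainder theorem applied to $\Cmath[x]/(p_A)$, using that the factors $(x-\mu_i)^{m_i}$ are pairwise coprime) to obtain $\Cmath^K = \bigoplus_{i=1}^{r} V_{\mu_i}$, with each $V_{\mu_i}$ being $A$-invariant of dimension $m_i$. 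On $V_{\mu_i}$ the operator $N_i := (A - \mu_i \1)|_{V_{\mu_i}}$ is nilpotent, so the problem reduces to establishing the Jordan form for nilpotent operators.

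For the second step, let $N$ be a nilpotent operator on a space $W$ with nilpotency index $p$ (smallest integer with $N^p = 0$). I would work with the strictly increasing filtration $\{0\} \subsetneq \ker(N) \subsetneq \ker(N^2) \subsetneq \cdots \subsetneq \ker(N^p) = W$ and build a Jordan basis from the top down. Choose a complement $U_p$ of $\ker(N^{p-1})$ in $W$ together with a basis $\{v_{p,1}, \ldots, v_{p,d_p}\}$ of $U_p$. The images $\{N v_{p,j}\}$ lie in $\ker(N^{p-1})$ and remain linearly independent modulo $\ker(N^{p-2})$; extend them to a basis of a complement of $\ker(N^{p-2})$ in $\ker(N^{p-1})$ by adjoining new vectors $\{v_{p-1,j}\}$. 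Iterating this construction downward produces a basis of $W$ consisting of chains $\{v, Nv, \ldots, N^{k-1} v\}$, each spanning an $N$-invariant subspace on which $N$ acts as a nilpotent Jordan block of size $k$, i.e.\ as $J_k[0]$ in the notation of Eq.~\eqref{EqJordanFormMatrix}.

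Assembling the bases from each $V_{\mu_i}$ into a single basis of $\Cmath^K$ yields the non-singular matrix $S$, and in this basis $S^{-1} A S$ is block-diagonal with blocks of the form $\mu_i \1 + J_k[0] = J_k[\mu_i]$, which is exactly the required Jordan form $J_A$. The main technical obstacle lies in the inductive step for the nilpotent case: at each level of the filtration one must verify that the freshly lifted vectors remain linearly independent modulo the next kernel. This hinges on showing that $N$ restricted to a complement of $\ker(N^{k-1})$ in $\ker(N^k)$ is injective into a complement of $\ker(N^{k-2})$ in $\ker(N^{k-1})$, so that the dimensions $d_k = \dim \ker(N^k) - \dim \ker(N^{k-1})$ satisfy $d_k \leq d_{k-1}$ and control both the number and sizes of Jordan blocks.
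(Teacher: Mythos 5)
Your outline is a correct and essentially complete sketch of the standard proof; the paper itself offers no proof of this theorem, citing it from Horn and Johnson's \emph{Matrix Analysis}, so there is no in-paper argument to compare against. Both of your steps are sound: the primary decomposition $\Cmath^{K}=\bigoplus_{i}\ker(A-\mu_{i}\1)^{m_{i}}$ follows from coprimality of the factors of the characteristic polynomial, and your key lemma for the nilpotent case is exactly right --- if $v_{1},\dots,v_{d}\in\ker(N^{k})$ are independent modulo $\ker(N^{k-1})$ and $\sum_{j}c_{j}Nv_{j}\in\ker(N^{k-2})$, then $N^{k-1}\bigl(\sum_{j}c_{j}v_{j}\bigr)=0$ forces all $c_{j}=0$, which gives $d_{k}\leq d_{k-1}$ and makes the top-down chain construction terminate in a genuine basis. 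Two small points worth recording. First, with the paper's convention of $1$'s on the superdiagonal in Eq.~\eqref{EqJordanFormMatrix}, each chain must be ordered as $N^{k-1}v,\dots,Nv,v$ (bottom of the chain first) for the block to come out as $J_{k}[\mu_{i}]$ rather than its transpose. Second, your remark that the $\xi$-dependence is ``inherited automatically'' is correct only in the pointwise sense the theorem actually asserts; the similarity matrix $S(\xi)$ and even the block structure generally cannot be chosen continuously in $\xi$, which is why the paper's later adiabatic analysis must separately assume the absence of eigenvalue crossings.
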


Now, by using the above discussion to the superoperator $\Lmath(t)$, we can write its Jordan form through the matrix $S(t)$ which allows us to write
\begin{equation}
	\Lmath_{\text{J}}(t) = S^{-1}(t) \Lmath(t) S(t) = \bigoplus_{\alpha=1}^{N} L_{N_\alpha}[\lambda_{\alpha}(t)] \text{ , } \label{EqEqLindJ}
\end{equation}
where $N$ is the number of distinct eigenvalues $\lambda_{\alpha}(t)$ of $\Lmath(t)$ and each block $L_{N_\alpha}[\lambda_{\alpha}(t)]$ is a $(N_\alpha \times N_\alpha)$-dimensional matrix given as in Eq.~\eqref{EqJordanFormMatrix}. Since the Hilbert space of the system has dimension $D$, one finds $N_1 + N_2 + \cdots + N_N = D^2$. In addition, as an immediate consequence of the structure of $\Lmath_{\text{J}}(t)$, one see that $\Lmath(t)$ does not admit the existence of eigenvectors. Instead eigenvectors, we define \textit{right} $\dket{\Dcalb_{\alpha}^{n_{\alpha}}(t)}$ and \textit{left} quasi-eigenvectors $\dbra{\Ecalb_{\alpha}^{n_{\alpha}}(t)}$ of $\Lmath(t)$ associated with the eigenvalue $\lambda_{\alpha}(t)$, satisfying
\begin{subequations}\label{EqEqEigenStateL}
	\begin{align}
		\Lmath(t)\dket{\Dcalb_{\alpha}^{n_{\alpha}}(t)} &= \dket{\Dcalb_{\alpha}^{(n_{\alpha}-1)}(t)} + \lambda_{\alpha}(t)\dket{\Dcalb_{\alpha}^{n_{\alpha}}(t)} \text{ , } \\
		\dbra{\Ecalb_{\alpha}^{n_{\alpha}}(t)}\Lmath(t) &= \dbra{\Ecalb_{\alpha}^{(n_{\alpha}+1)}(t)} + \dbra{\Ecalb_{\alpha}^{n_{\alpha}}(t)}\lambda_{\alpha}(t) \text{ . }
	\end{align}
\end{subequations}

The set $\{\dket{\Dcalb_{\alpha}^{n_{\alpha}}(t)}\}$ combined with $\{\dbra{\Ecalb_{\alpha}^{n_{\alpha}}(t)}\}$ constitutes a basis for the space associated with the operator $\Lmath(t)$ and satisfies the normalization condition $\dinterpro{\Ecalb_{\beta}^{m_{\beta}}(t)}{\Dcalb_{\alpha}^{n_\alpha}(t)} = \delta_{\beta\alpha} \delta_{m_{\beta}n_{\alpha}}$ and completeness relation
\begin{equation}
	\sum_{\alpha=1}^{N} \sum _{n_{\alpha} = 1}^{N_{\alpha}} \dket{\Dcalb_{\alpha}^{n_{\alpha}}(t)}\dbra{\Ecalb_{\alpha}^{n_{\alpha}}(t)} = \1_{D^2\times D^2} \text{ , }
\end{equation}
where $N$ is the number of Jordan blocks in Eq.~\eqref{EqEqLindJ} and $N_{\alpha}$ is the dimension of the $\alpha$-th Jordan block.

\let\cleardoublepage\clearpage


\chapter{Adiabatic dynamics in closed quantum systems}\label{ChapterAdiabaticClosedSystem}

\initial{I}n this chapter we will discuss the adiabatic dynamics in closed quantum systems and the validity of the conditions of the adiabatic theorem. We start this chapter by defining the notion of adiabaticity in closed systems and providing a review on some known adiabaticity conditions (AC's) and their validity in guaranteeing the adiabatic behavior of a quantum system with non-degenerate time-dependent Hamiltonian $H(t)$. It is important to mention that \textit{necessary} and \textit{sufficient} conditions to adiabatic dynamics are not a closed subject. In recent years, it has been proposed theoretical and experimental studies of how to deal with the adiabatic behavior for a particular class of time-dependent Hamiltonian, in which the widely used proposal of AC's do not work. After that, we introduce a new strategy that can be used to recover the ability of AC's in predicting the adiabatic behavior of a number of Hamiltonians. For this reason, we call it a validation mechanism for AC's. As a useful application, we discuss how the study of adiabatic dynamics in the rotating frame allows us for providing a stable charging/discharging process of quantum batteries. Quantum batteries are quantum devices used to store energy to be transferred to some consumption hub whenever required, where their charging/discharging performance (power) explores quantum resources, like coherence and entanglement. The contributions of this thesis presented in this chapter refer to the following works:

$\bullet$ C.-K. Hu, J.-M. Cui, A. C. Santos, Y.-F. Huang, C.-F. Li, G.-C. Guo, F. Brito, and M. S. Sarandy, ``Validation of quantum adiabaticity through non-inertial frames and its trapped-ion realization'', Sci. Rep. \textbf{9}, 10449 (2019).

$\bullet$ A. C. Santos, B. Çakmak, S. Campbell, and N. T. Zinner, ``Stable adiabatic quantum batteries'', Phys. Rev. E \textbf{100}, 032107 (2019).

$\bullet$ A. C. Santos, A. Saguia, and M. S. Sarandy, ``Stable and charge-switchable quantum batteries'', Phys. Rev. E \textbf{101}, 062114 (2020).

\newpage

\section{Review on validity conditions of the adiabatic theorem} \label{SecRevAd}

Let us consider a quantum system $\Scalb$ driven by a time-dependent (non-degenerate) Hamiltonian $H(t)$, its dynamics is dictated by Schrödinger equation
\begin{equation}
	i\hbar\ket{\dot{\psi} (t)} = H(t)\ket{\psi (t)} \text{ , }
\end{equation}
where $\hbar$ is the reduced Planck's constant ($\hbar = h/2\pi$, where $h$ is the Planck's constant). The set of eigenstates and energies of $H(t)$ are denoted as $\ket{E_{n}(t)}$ and $E_{n}(t)$, respectively. In closed quantum systems, the notion of an adiabatic dynamics is well-defined and widely used in several situations and studies, therefore, we shall use it here. An initial definition of adiabaticity is associated with an \textit{independent (uncoupled) dynamics of the eigenstates of the Hamiltonian with energy $E_{n}(t)$}. Thus, if we start the system in a particular eigenstate $\ket{E_{n}(t_{0})}$ with energy $E_{n}(t_{0})$, the definition of adiabaticity establishes that at a later instant $t>t_{0}$ the state of the system will be the updated eigenstate $\ket{E_{n}(t)}$ with energy $E_{n}(t)$. From a more rigorous way, we can define adiabaticity from the viewpoint of the Hilbert space of the system as follows~\cite{Sarandy:04}.

\begin{definition}[Adiabatic dynamics]
	Given a quantum system $\Scalb$ evolving under a time-dependent $H(t)$, the system dynamics is said to be adiabatic if the dynamics of its Hilbert space $\Hcal_{\Scalb}$ can be decomposed into uncoupled Schrödinger eigenspaces with distinct, and non-crossing instantaneous eigenvalues of $H(t)$.
\end{definition}

By \textit{Schrödinger eigenspaces} we mean an $N_{\alpha}$-dimensional complex vector space, whose basis is composed by the $N_{\alpha}$ (orthonormal) eigenvectors of $H(t)$ with the same eigenvalue. Of course, the number of eigenspaces is constrained to the dimension of the Hilbert space $\Hcal_{\Scalb}$. In the case where the Hamiltonian is non-degenerated, Schrödinger eigenspaces become one-dimensional. The adiabatic dynamics is not a general property of quantum systems and it cannot be achieved in any situation. Thus, the main question that arises is: \textit{what is (are) the condition(s) to achieve the adiabatic dynamics in quantum systems?} The answer to this question lead us to the so-called \textit{adiabaticity conditions} (AC's). AC's are a set of conditions (or a single condition) that should be satisfied for any system which follows an adiabatic dynamics.

A first validity condition to adiabatic theorem was initially derived by P. Ehrenfest~\cite{Ehrenfest:916}, after by M. Born and V. Fock~\cite{Born:28} and then by T. Kato~\cite{Kato:50} many years later. These works introduced the \textit{traditional} AC given by adiabatic parameter
\begin{equation}
	C_{\text{trad}} = \max_{t \in [ 0,\tau ]} \left[\max_{m \neq n}\left|\hbar \frac{\bra{E_{m}(t)} \dot{H}(t)\ket{E_{n}(t)}}{\left[ E_{m}(t) - E_{n}(t) \right]^2}\right|\right] \text{ ,} \label{EqAdTradCond}
\end{equation}
where the adiabatic dynamics would be achieved whenever $C_{\text{trad}} \ll 1$~\cite{Ehrenfest:916,Born:28,Kato:50,Sarandy:04}. 

In 2004, Marzlin and Sanders provided a counter-example in which the above condition fails in guaranteeing the adiabatic behavior~\cite{Marzlin:04}. To this end, Marzlin and Sanders investigated the dynamics of a Hamiltonian identical to that of a $\frac{1}{2}$-spin particle in a rotating magnetic field in $xy$-plane. In fact, let us consider the Hamiltonian $H = -\gamma \vec{S}\cdot\vec{B}(t)$ which describes the coupling between the $\frac{1}{2}$-spin particle to a time-dependent magnetic field $\vec{B}(t)$, where $\mu$ is the magnetic dipole momentum and $\vec{S} = S_{x}\hat{\text{i}}+S_{y}\hat{\text{j}}+S_{z}\hat{\text{k}}$ is the $\frac{1}{2}$-spin ``vector'' operator, with $S_{n} = \hbar \sigma_{n}/2$. Therefore, in case where $\vec{B}(t) = B_{0} \hat{\text{k}} + B_{1} \left[\cos(\omega t)\hat{\text{i}} + \sin(\omega t)\hat{\text{j}} \right]$, i.e., a rotating magnetic field in $xy$-plane, we get the Hamiltonian
\begin{equation}
	H(t) = \frac{\hbar \omega_{0}}{2} \sigma_{z} + \frac{\hbar \omega_{1}}{2} \left[ \cos(\omega t)\sigma_{x} + \sin(\omega t)\sigma_{y}\right] \text{ . } \label{EqNMRHamil}
\end{equation}

In case where the transverse field intensity $|B_{1}|$ is much less intense than the longitudinal field $|B_{0}|$, we write $|\omega_{1}| \ll |\omega_{0}|$. The fundamental and excited states of $H(t)$ are, respectively, given by
\begin{subequations}
	\label{EqNMREigenVectors}
	\begin{align}
		\ket{E_{-}(t)} &= -e^{ -i\omega t } \sin \left( \theta / 2 \right) \ket{1} + \cos \left( \theta / 2 \right) \ket{0} \text{ , } \\
		\ket{E_{+}(t)} &=  e^{ -i\omega t } \cos \left( \theta / 2 \right) \ket{1} + \sin \left( \theta / 2 \right) \ket{0} \text{ , }
	\end{align}
\end{subequations}
with energies $E_{\pm}= \pm \hbar \omega_{0}\sec(\theta)/2$, where $\tan(\theta) = \omega_{1}/\omega_{0}$. Since we are considering $|\omega_{1}| \ll |\omega_{0}|$, $\theta$ is a small parameter and, therefore, it means the fundamental state of the system can be written as $\ket{E_{-}(t)} \approx \ket{0}$. Thus, by considering the system is initialized in the fundamental state of $H(0)$, we can write $\ket{\psi(0)} \approx \ket{0}$, then we write $\rho(0) = \ket{\psi(0)}\bra{\psi(0)}= \ket{0}\bra{0}$. The dynamics of the system driven by above Hamiltonian in Eq.~\eqref{EqNMRHamil} is given by (see Appendix~\ref{ApFrameChangeQM-NMR}).
\begin{equation}
	\rho (t) = e^{i\frac{\omega }{2}t\sigma_{z}}e^{-\frac{i}{\hbar} H_{\text{R}}t } \rho(0) e^{\frac{i}{\hbar} H_{\text{R}}t }e^{-i\frac{\omega }{2}t\sigma_{z}} \text{ , }
\end{equation}
where $H_{\text{R}}$ is the Hamiltonian in the rotating frame provided by transformation $R(t) = e^{i\frac{1}{2}\omega t \sigma_{z}}$ and it is given by
\begin{equation}
	H_{\text{R}}(t) = H_{\text{R}} = \hbar \frac{\omega_{0}-\omega}{2} \sigma_{z} + \hbar \frac{\omega_{1}}{2} \sigma_{x} \text{ . } \label{EqNMRHamilR}
\end{equation}

Thus, by defining the projector $\Pcal_{0} = \ket{0}\bra{0}$, the probability of the system to be found at state $\ket{0}$ is given by (here $\tan\theta = \omega_{1}/\omega_{0}$)
\begin{equation}
	p_{0}(t) = \tr{\rho(t)\Pcal_{0}} = \frac{2(1-r)^2 + \left[ 1+ \cos \left( \omega_0 t \sqrt{(1-r)^2 + \tan^2\theta} \right) \right] \tan^2\theta }{2 \left[ (1-r)^2 + \tan^2\theta \right]}  \text{ , } \label{EqProbGroundNMR}
\end{equation}
where we define $r = \omega/\omega_{0}$ as a resonant approximation ratio, that is, when $r \rightarrow 1$ we are close to resonance situation, otherwise we are far-from resonance (or in an intermediate stage). From Eq.~\eqref{EqProbGroundNMR}, it is worth highlighting the maximum and minimum fidelity times as
\begin{equation}
	\tau_{\text{max}} = \frac{2n\pi}{\omega_0 \sqrt{(1-r)^2 + \tan^2\theta}} \quad \text{and} \quad
	\tau_{\text{min}} = \frac{(2n+1)\pi}{\omega_0 \sqrt{(1-r)^2 + \tan^2\theta}} \label{EqtauMaxMinNMR} \text{ , }
\end{equation}
where
\begin{equation}
	p_{0}(\tau_{\text{max}}) = 1 \quad \text{and} \quad  p_{0}(\tau_{\text{min}}) = \frac{(1-r)^2 }{(1-r)^2 + \tan^2\theta} \text{ . } 
\end{equation}

As result, one can see that the minimum fidelity depends on the rf-field configuration. Therefore, now we study three different dynamics associated with near and far from resonance configuration of the rf-field.

\emph{Situation 1: Spin dynamics at resonance ($r \rightarrow 1$)} -- In this limit the above equation reads
\begin{equation}
	p_{0}^{\text{ress}}(t) = \lim_{r \rightarrow 1} p_{0}(t) = \cos^2 \left( \frac{\omega_1 t}{2} \right)  \text{ , } \label{EqProbGroundNMRRess}
\end{equation}
so that for each instant $t = \tau_{n} = (2n+1)\pi/\omega_{1}$, for $n\in \Zmath^{+}$ the probability of approximately finding the system at  the fundamental state (state $\ket{0}$) is zero. Now, note that there is no adiabaticity and it just depends on the limit $r \rightarrow 1$.

\emph{Situation 2: Spin dynamics for very-slowly oscillating rf-field ($r \rightarrow 0$)} -- As a first far-from situation we can consider the case where $r \rightarrow 0$, in which the Eq.~\eqref{EqProbGroundNMR} becomes (the superscript ``ffr'' means far-from resonance)
\begin{equation}
	p_{0}^{\text{ffr}(1)}(t) = \lim_{r \rightarrow 0} p_{0}(t) = \cos^2 \theta + \cos^2 \left( \frac{1}{2} \omega_{1} t \csc \theta\right) \sin^2 \theta  \text{ , } \label{EqProbGroundNMRffr1}
\end{equation}
where the minimum and maximum value of $p_{0}^{\text{ffr}(1)}(t)$ are $\cos^2 \theta$ and $1$, respectively. Therefore, since we are considering that $\omega_{0} \gg \omega_{1}$, so the parameter $\theta \ll 1$ and hence $\cos^2 \theta \approx 1$. It means that for a very-slowly oscillating rf-field we are close to the fundamental state and adiabaticity is approximately obtained.

\emph{Situation 3: Spin dynamics for highly oscillating rf-field ($r \gg 1$)} -- Now, in the limit of high oscillating rf-field, we can write
\begin{equation}
	p_{0}^{\text{ffr}(2)}(t) = \lim_{r \gg 1} p_{0}(t) \rightarrow 1 - \Ocal\left( r^{-2} \right)
\end{equation}
in which we are neglecting terms smaller than $r^{-2}$. It is a bit counter-intuitive because we have a highly oscillating field driven the system through an adiabatic path (approximately).

Now, given that we know the exact solution for the dynamics of the system (consequently the transition probabilities), let us compute each AC in order to verify their validation. First, the traditional adiabaticity condition in Eq.~\eqref{EqAdTradCond} to the above dynamics reads
\begin{equation}
	C_{\text{trad}}^{\text{nmr}} = \max_{t \in [ 0,\tau ]} \left| \hbar \frac{|\bra{E_{-}(t)} \dot{H}(t)\ket{E_{+}(t)}|}{\left[ E_{-}(t) - E_{+}(t) \right]^2}\right| = \frac{1}{2} |r\sin(\theta) \cos(\theta)| \text{ . } \label{EqTACnmr}
\end{equation}

The above equation presents a strictly increasing linear behavior of the quantity $C_{\text{trad}}^{\text{nmr}}$ as function of $r$. Regarding the Situations 1 and 3 above, the prediction of the adiabaticity as provided by $C_{\text{trad}}^{\text{nmr}}$ is not valid. In particular, in this case, the traditional condition is neither \textit{sufficient} and nor \textit{necessary} condition to predict adiabaticity. In fact, the Situation 1 shows that we do not have adiabaticity even when the parameter $C_{\text{trad}}^{\text{nmr}}$ is small (concerning its value for other choices of $r$). Therefore, it means the traditional condition is not \textit{sufficient} to guarantee adiabaticity. On the other hand, Situation 3 shows that we have adiabaticity when $C_{\text{trad}}^{\text{nmr}}$ is not small (since $r \gg 1$, then it is possible to find $r$ so that $C_{\text{trad}}^{\text{nmr}} \gg 1$). Here, one concludes that the traditional condition is not \textit{necessary} to get adiabaticity.

After Marzlin and Sanders results, other works discussed about the viability of using quantitative conditions to guarantee the validity of the adiabatic theorem in quantum systems. D. M. Tong \textit{et al}~\cite{Tong:05} showed that quantitative conditions do not guarantee the validity of the adiabatic theorem, where they provide a revision of the results presented by Marzlin and Sanders. In order to solve that problem, D. M. Tong \textit{et al}~\cite{Tong:07} provided a new condition (actually, a set of conditions) which should guarantee the adiabatic approximation when satisfied. Without loss of generality we can write the set of conditions as
\begin{subequations}
	\label{EqAdTongCondAll}
	\begin{align}
		C_{\text{Tong}}^{(\text{a})} &= \max_{t \in [ 0,\tau ]} \left[\max_{m \neq n}\left|\hbar \frac{|\bra{E_{m}(t)} \dot{H}(t)\ket{E_{n}(t)}|}{\left[ E_{m}(t) - E_{n}(t) \right]^2}\right|\right] \text{ , } \label{EqAdTongCondA}\\
		C_{\text{Tong}}^{(\text{b})} &= \max_{m \neq n} \left|\hbar
		\frac{d}{dt} 
		\left[\frac{\bra{E_{m}(t)} \dot{H}(t)\ket{E_{n}(t)}}{\left[ E_{m}(t) - E_{n}(t) \right]^2}\right]
		\right|_{M} \tau \text{ , } \label{EqAdTongCondB}\\
		C_{\text{Tong}}^{(\text{c})} &= \max_{m \neq n,l \neq m} \left|\hbar
		\frac{\bra{E_{m}(t)} \dot{H}(t)\ket{E_{n}(t)}}{\left[ E_{m}(t) - E_{n}(t) \right]^2}
		\right|_{M} \left|\interpro{E_{m}(t)}{\dot{E}_{l}(t)}\right|_{M}\tau \text{ , } \label{EqAdTongCondC}
	\end{align}
\end{subequations}
where~\eqref{EqAdTongCondA} is exactly the traditional condition~\eqref{EqAdTradCond}, but $C_{\text{Tong}}^{(\text{b})}$ and $C_{\text{Tong}}^{(\text{c})}$ are two additional conditions introduced in Ref.~\cite{Tong:07}. Here the index ``$M$'' in $|f(t)|_{M}$ denotes the mean value of $f(t)$ in the interval $t\in [0,\tau]$. Here it is worth mentioning that some years after Ref.~\cite{Tong:07}, D. M. Tong authored a paper~\cite{Tong:10} where he argued how quantitative conditions could be necessary to guarantee the validity of the adiabatic approximation, but no discussion about the sufficiency is considered in. Since adiabaticity is achieved whenever all the above conditions satisfy $C_{\text{Tong}}^{(\text{x})}\ll 1$, then here we can define the Tong's coefficient from set of coefficient in Eq.~\eqref{EqAdTongCondAll} as
\begin{equation}
	C_{\text{Tong}} = \max \left\{ C_{\text{Tong}}^{(\text{a})}, C_{\text{Tong}}^{(\text{b})}, C_{\text{Tong}}^{(\text{c})} \right\} \text{ , } \label{EqAdTongCond}
\end{equation}
so that $C_{\text{Tong}}\ll 1$ would be enough in guaranteeing adiabaticity. Without more calculations, we can conclude that the Tong's conditions are not satisfied to the Marzlin-Sanders Hamiltonian in Eq.~\eqref{EqNMRHamil}, since the condition $C_{\text{Tong}}^{(\text{a})}$ will provide the same result as in Eq.~\eqref{EqTACnmr}.

A third interesting AC to be considered here is the AC as provided by Jian-da Wu \textit{et al}~\cite{Jianda-Wu:08}, where the authors introduced the notion of \textit{quantum geometric potential} associated with an adiabatic path, also called \textit{adiabatic orbit} in Ref.~\cite{Jianda-Wu:08}. The Wu's condition establishes that an $N$-level quantum system can follow an adiabatic orbit with fidelity (probability) $\Fcalb(\delta) = (1-\delta)^2$ if
\begin{equation}
	\max_{m,k,n \neq m}\frac{|\gamma_{km}(t)|}{|E_{n}(t) - E_{m}(t) + \Delta_{mn}(t)|} \leq \frac{\delta}{\sqrt{N-1}} \text{ , }
\end{equation}
with $\gamma_{nm}(t) = i\interpro{E_{n}(t)}{\dot{E}_{m}(t)}$ being the geometrical Berry's phase~\cite{Berry:84} and $\Delta_{mn}(t)$ the the previously mentioned \textit{quantum geometric potential}~\cite{Jianda-Wu:08} given by
\begin{equation}
	\Delta_{mn}(t) = \gamma_{mm}(t) - \gamma_{nn}(t) + \frac{d}{dt} \arg \left[ \gamma_{nm}(t) \right] \text{ . }
\end{equation}

Since the above equation allows us to determinate the probability of finding the system following an adiabatic orbit as function of $\delta$, it is reasonable to adopt $\delta \ll 1$ for guaranteeing that we will get an adiabatic dynamics with high probability. By assuming that, here we consider the Wu's AC as
\begin{equation}
	C_{\text{Wu}} = \max_{t \in [ 0,\tau ]} \left[\max_{m,k,n \neq m}\frac{\sqrt{N-1}|\gamma_{km}(t)|}{|E_{n}(t) - E_{m}(t) + \Delta_{mn}(t)|}\right] \text{ , } \label{EqAdCondWu}
\end{equation}
which should satisfy $C_{\text{Wu}}\ll 1$ for an adiabatic behavior. Basically, the Wu's condition requires we need to take the traditional AC and replace the term $E_{n}(t) - E_{m}(t)$ for $E_{n}(t) - E_{m}(t) + \Delta_{mn}(t)$ instead. Therefore, due the new term $\Delta_{mn}(t)$, it is convenient to study whether $C_{\text{Wu}}$ allows us to describe the Marzlin-Sanders Hamiltonian dynamics. Thus, we can write
\begin{equation}
	C_{\text{Wu}}^{\text{nmr}} = \max_{t \in [ 0,\tau ]} \frac{|\gamma_{10}(t)|}{|E_{1}(t) - E_{0}(t) + \Delta_{10}(t)|} = \frac{|r\sin(\theta) \cos(\theta)|}{\sqrt{ 4 + r^2 + r^2 \cos(2\theta) \left[ 2 + \cos(2\theta)\right] }}\text{ , }
\end{equation}
where we already used the Eqs~\eqref{EqNMREigenVectors}. Due the non-trivial behavior of $C_{\text{Wu}}^{\text{nmr}}$ concerning the resonance ratio $r$, we will analyze its validity to describe the adiabatic dynamics after we present the last AC to be considered in this thesis.

To end, we discuss now the AC derived by Andris Ambainis and Oded Regev~\cite{Ambainis:04}. An important element in Ambainis-Regev AC is associated with a parameter $\lambda(t)$ which quantifies the spacing between instantaneous energy $E_{n}(t)$, associated with the target eigenstate of $H(t)$, and the its adjacent energy levels $E_{n+1}(t)$ and $E_{n-1}(t)$. For example, a possible choice of the parameter $\lambda(t)$ could be $\lambda(t) = \left[E_{n+1}(t) +E_{n-1}(t)\right]/2$. Therefore, given our choice on $\lambda(t)$, the Ambainis-Regev condition establishes that, given the quantity if we get
\begin{equation}
	\frac{10^5}{\delta^2 \tau} \max \left\{ \frac{||\tau\dot{H}(t)||^3}{\lambda^4} , \frac{||\tau\dot{H}(t)||\cdot ||\tau^2\ddot{H}(t)||}{\lambda^3} \right\}  \leq 1 \text{ , }
\end{equation}
so the instantaneous evolved state $\ket{\psi(t)}$ will be close to eigenstate $\ket{E_{n}(t)}$, where the distance between each other is $\delta$. In order to write the Ambainis-Regev AC in terms of an adiabaticity coefficient $C_{\text{AR}}$, we consider
\begin{equation}
	C_{\text{AR}} = \max_{t\in [0,\tau]} \left[\max \left\{ \frac{\hbar||\dot{H}(t)||^3}{\lambda^4} , \frac{||\dot{H}(t)||\cdot ||\ddot{H}(t)||}{\lambda^3} \right\}\right] \tau^2\text{ , } \label{EqAdCondAR}
\end{equation}
where we neglected the term $10^5/\delta^2$ because we are interested here in qualitative characteristic of the AC's. That is, we expect that near to resonance the AC's provide bigger values concerning that one obtained far-from resonance. For this reason, the term $10^5/\delta^2$ does not develop a important role in our study. Thus, by applying the above condition to the Marzlin-Sanders Hamiltonian, we obtain
\begin{equation}
	C_{\text{AR}}^{\text{nmr}} = \max \left\{ \frac{r^3 \cos\theta \sin^{3}\theta}{2 \sqrt{2}} , \frac{r^3 \cos\theta \sin^{2}\theta}{2} \right\}\omega_0^{2}\tau^2 \text{ , }
\end{equation}
since we have a two-level system we already wrote $\lambda(t) = E_{1}(t) - E_{0}(t)$.

\begin{figure}
	\centering
	\includegraphics[scale=0.55]{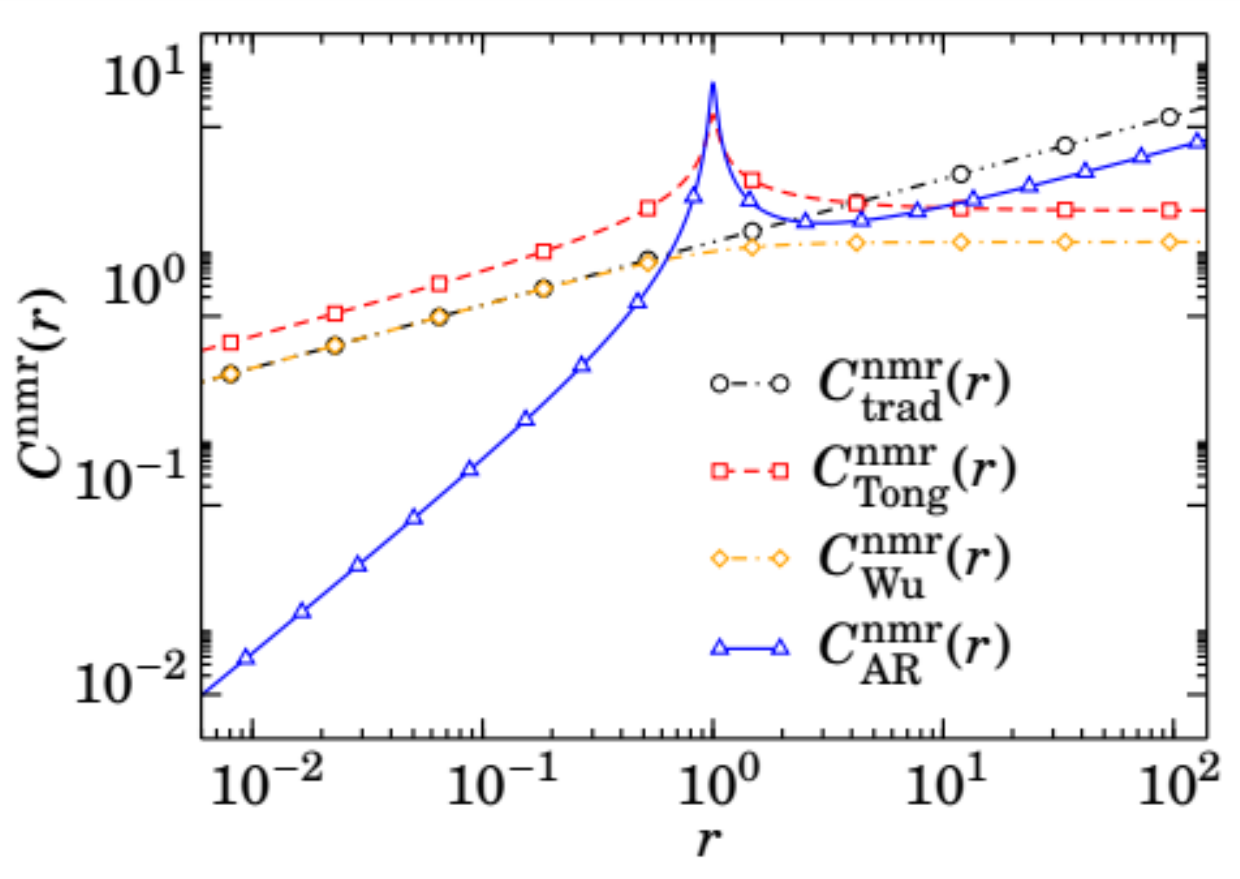}
	\caption{Adiabatic parameters $C_{n}$ as function of the resonance ratio $r$ for the Hamiltonian in Eq.~\eqref{EqNMRHamil}. Here we consider the total evolution time $\tau = \tau_{\text{min}}$ given in Eq.~\eqref{EqtauMaxMinNMR}, with $n=0$ and $\theta = 0.03$, so that $\omega_{1} \approx 3 \cdot 10^{-2} \omega_{0}$.}
	\label{FigAdCondSutter}
\end{figure}

In order to make a comparison between the AC considered here, in Fig.~\ref{FigAdCondSutter} we present each quantity computed to the Hamiltonian in Eq.~\eqref{EqNMRHamil}. From Eq.~\eqref{EqProbGroundNMR}, the adiabatic behavior is obtained when the parameter $r$ is large enough so that the oscillatory contribution of the co-sine function Eq.~\eqref{EqProbGroundNMR} can be neglected, otherwise the oscillations will provide small values of the probability and it means we do not have adiabaticity. Now, we can study this phenomenon under viewpoint of the ACs shown in Fig.~\ref{FigAdCondSutter}, where the same parameters are also considered in Ref.\cite{Suter:08}. 

Firstly, let us focus on the situation in which we have large values for $r$. For this range of values for $r$, adiabaticity is observed and the AC should describe such phenomena. However, the graphs in Fig.~\ref{FigAdCondSutter} shows that the conditions $C_{\text{AR}}^{\text{nmr}}$ and $C_{\text{trad}}^{\text{nmr}}$ suggest we should not get the adiabatic behavior, while the conditions $C_{\text{Tong}}^{\text{nmr}}$ and $C_{\text{Wu}}^{\text{nmr}}$ presents a consistent (expected) behavior. This analysis suggests that the conditions $C_{\text{AR}}^{\text{nmr}}$ and $C_{\text{trad}}^{\text{nmr}}$ \textit{are not necessary conditions} to guarantee adiabaticity, since the system evolves through an adiabatic  path even when such conditions do not indicate such phenomenon. On the other side, in case where we have small values for $r$, the conditions $C_{\text{Tong}}^{\text{nmr}}$ and $C_{\text{Wu}}^{\text{nmr}}$ present inconsistency with the Eq.~\eqref{EqProbGroundNMR}, at the same time that the conditions $C_{\text{AR}}^{\text{nmr}}$ and $C_{\text{trad}}^{\text{nmr}}$ shows an expected behavior. As conclusion, it is possible to see that the conditions $C_{\text{trad}}^{\text{nmr}}$ and $C_{\text{Wu}}^{\text{nmr}}$ \textit{are not sufficient} in guaranteeing adiabaticity. 


While the traditional adiabatic parameter is neither necessary nor sufficient to guarantee adiabaticity, \textit{apparently} the Tong's conditions seems to be both necessary and sufficient to predict adiabaticity. Here we are using the term ``apparently'' because we are not able to confirm such result in general. By the way, we shall see that all of the above conditions are neither necessary nor sufficient to predict adiabaticity. 

\section{Frame-dependence of the adiabatic parameters} \label{SecAdFrameDependence}

In general, the quantities presented in Eqs.~\eqref{EqAdTradCond},~\eqref{EqAdTongCond},~\eqref{EqAdCondWu} and~\eqref{EqAdCondAR} are computed taking into account that no frame transformation into Schrödinger's equation is computed. We mean, either the system Hamiltonian is written taking into account the laboratory frame or no frame is explicitly considered. We just compute them. For example, when one writes the NMR Hamiltonian given in Eq.~\eqref{EqNMRHamil}, so it is intrinsic that we are computing the system energy for the Laboratory frame. However, when we consider the time-dependent transformation $R(t) = e^{i\omega t \sigma_{z}/2}$ as done to obtain the new time-independent Hamiltonian $H_{\text{R}}$ in Eq.~\eqref{EqNMRHamilR}, in an inconsistently way we are implementing a change of frame in Schrödinger's equation. Thus, it makes sense to consider the notion of reference frame changes in quantum mechanics and it was done in 1997 by W. H. Klink~\cite{Klink:97}. Here, we will present how the adiabatic parameters discussed in previous sections depend on the frame representation used to describe adiabaticity.

To study the description of the quantum mechanics in different reference frame, let us rewrite the Schrödinger's equation from the Von-Neumann form as
\begin{equation}
	\dot{\rho}(t) = \frac{1}{i\hbar} [H(t),\rho(t)] \text{ , } \label{EqVonNeumannLab}
\end{equation}
where $\rho(t)$ is the density matrix and $H(t)$ is the time-dependent Hamiltonian which drives the system under laboratory frame. Now, let us consider a second observer, say Bob, that describes the system dynamics from a different frame. For such observer, the system state reads
\begin{equation}
	\rho_{\text{Bob}}(t) = B(t) \rho(t) B^{\dagger}(t) \text{ , }
\end{equation}
for some unitary operator $B(t)$. Particularly, $B(t)$ needs to be unitary because $\rho_{\text{Bob}}(t)$ should be a density matrix, so that it has the same properties as $\rho(t)$. Therefore, by putting $\rho_{\text{Bob}}(t)$ into Eq.~\eqref{EqVonNeumannLab} we get the dynamical equation in Bob's frame as
\begin{equation}
	\dot{\rho}_{\text{Bob}}(t) = \frac{1}{i\hbar} [H_{\text{Bob}}(t),\rho_{\text{Bob}}(t)] \text{ , } \label{EqVonNeumannBob}
\end{equation}
where $H_{\text{Bob}}(t)$ is the Hamiltonian in this new frame and reads
\begin{equation}
	H_{\text{Bob}}(t) = B(t) H(t) B^{\dagger}(t) + i \hbar \dot{B}(t)B^{\dagger}(t) \text{ . }
\end{equation}

This means that, under viewpoint of Bob, the system is driven by different fields. In general, notice that the term $i \hbar \dot{B}(t)B^{\dagger}(t)$ arises when we consider a time-dependent transformation on the original equation. W. H. Klink identified the quantity $i \hbar \dot{B}(t)B^{\dagger}(t)$ as a ``fictitious potential'' which arises due to the frame change as provided by $B(t)$~\cite{Klink:97}. The term ``fictitious potential'' is used here because the above equation has an analog in classical mechanics, where we rewrite the Newton's law of motion in different reference frame.

\subsection{Adiabatic dynamics of a two-level system in an oscillating field}

In particular, the results previously discussed are valid for the case considered in Ref.~\cite{Suter:08}; namely, a spin-$\frac{1}{2}$ in a rotating magnetic field. No discussion of how this result can be applied to other systems (different Hamiltonians) is discussed. Now, we discuss how these results are not valid for a system of a spin-$\frac{1}{2}$ in the presence of an \textit{oscillating} magnetic field. The following results will be used to motivate the study of the adiabatic parameters from different reference frames, where a mechanism to recover the validity of the AC's arise as an alternative approach~\cite{Hu:19-b}.

Thus, consider the following Hamiltonian
\begin{equation}
	H(t) = \frac{\hbar}{2}\left[ \omega_{0} \sigma_{z} + \omega_{1} \sin(\omega t)\sigma_{x} \right] \text{ , } \label{EqOscHamil}
\end{equation}
that describes the dynamics of a two-level system interacting with an external \textit{oscillating} field. Here we are interested in a situation where $|\omega_{0}|\gg |\omega_{1}|$, since our main interest here is studying the resonance effects on the adiabaticity behavior of the system. The Hamiltonian in Eq.~\eqref{EqOscHamil} exhibits a resonance situation when we set the external field in a configuration so that $\omega \approx \omega_{0}$ (see Appendix~\ref{ApFrameChangeQM-TLSOscila}). In terms of the dimensionless parameter $r$ ($\omega = r\omega_{0}$) the resonance condition becomes $r \approx 1$.

\subsection{Experimental verification in a trapped ion system and adiabaticity conditions} \label{SecExpAdCondCS}

In order to study the resonance phenomena of a two level system driven by the oscillating field which characterizes the Hamiltonian in Eq.~\eqref{EqOscHamil}, it was used a trapped ion setup where the hyperfine energy level of the Ytterbium ($^{171}$Yb$^+$) trapped ion in a six-needles Paul trap was used as a two-level system~\cite{Olmschenk:07}. Since the focus of this thesis is not to provide an experimental description of the system used in our experiment, the reader who wishes to get more details on the experimental setup could read the Appendix~\ref{ApTrappeIon} and/or Ref.~\cite{Olmschenk:07}. 

As discussed in Appendix~\ref{ApTrappeIonQubit}, our qubit is encoded in a hyperfine energy levels of the ground state $^{2}S_{1/2}$ as shown in Fig.~\ref{FigAdiabExpTrappedIon}{\color{blue}a}. The ground state $^{2}S_{1/2}$ is constituted of two orbitals with different values of the total momentum angular $F$ denoted by $^{2}S_{1/2}\ket{F=0}$ and $^{2}S_{1/2}\ket{F=1}$. In particular, the state $^{2}S_{1/2}\ket{F=1}$ can be decomposed in three degenerate states (at zero external field) $^{2}S_{1/2}\ket{F=1,m_{F} = -1}$, $^{2}S_{1/2}\ket{F=1,m_{F} = 0}$ and $^{2}S_{1/2}\ket{F=1,m_{F} = 1}$. When we introduce an external magnetic field, due the Zeeman effect, these three states become non-degenerate (degeneracy break) with a frequency shift $\delta_{\text{Z}}$ (first order Zeeman shift), as shown in Fig.~\ref{FigAdiabExpTrappedIon}{\color{blue}b}. Thus, the qubit can be encoded on these states as $\ket{0} \equiv \text{} ^{2}S_{1/2}\ket{F=0,m_{F} = 0}$ and $\ket{1} \equiv \text{} ^{2}S_{1/2}\ket{F=1,m_{F} = 0}$, whose transition frequency is given by $\omega_{\text{s}}$. Transitions between these states is achieved by introducing a high controllable resonant microwave with adjustable Rabi frequency $\Omega_{\text{r}}(t)$ and oscillation frequency $\omega_{\text{f}}$.

\begin{figure}[t!]
	\centering
	\includegraphics[scale=0.7]{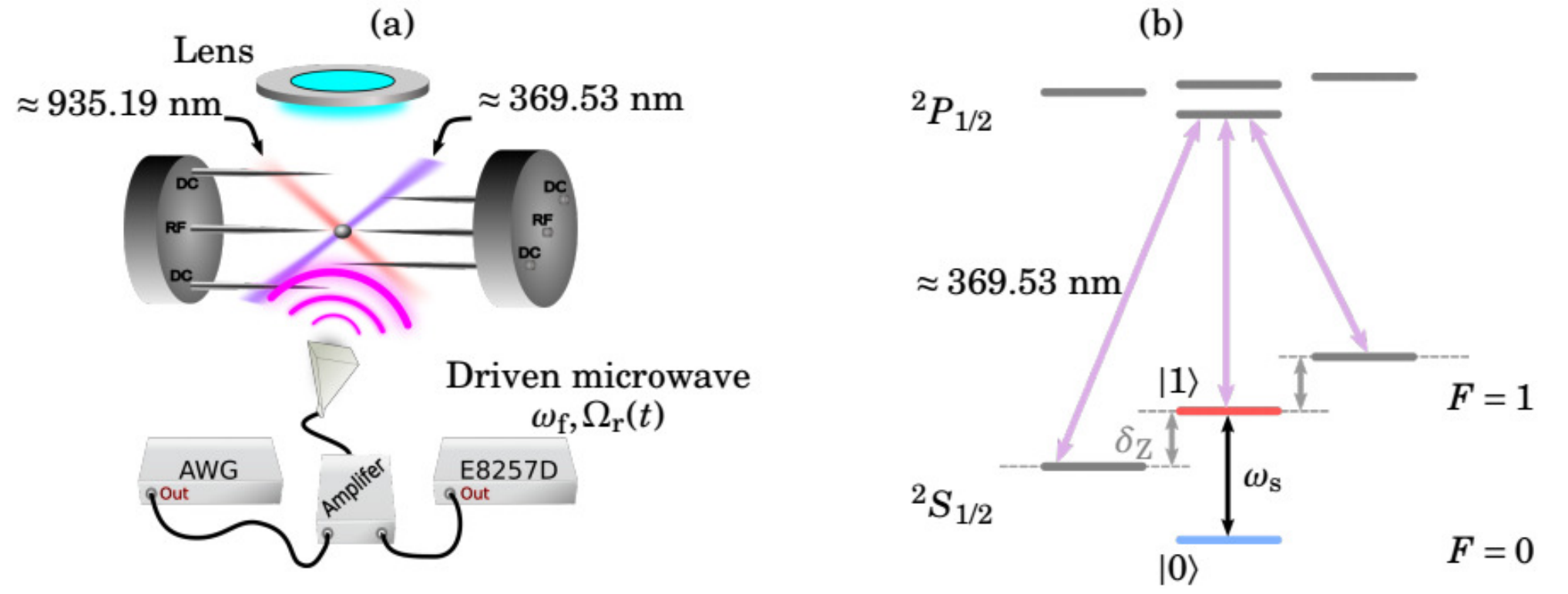}
	\caption{({\color{blue}a}) Schematic representation of the experimental setup used to manipulate the trapped ion Ytterbium qubit.  A $369.53$~nm readout laser is used to state tomography, meanwhile the qubit is encoded in the hyperfne energy levels and	coherently driven by a microwave controlled with a programmable arbitrary waveform generator (AWG). The light at $935.2$~nm is used to initialization and cooling~(see more detail in Appendix~\ref{ApTrappeIon}). ({\color{blue}a}) Relevant energy levels of the Ytterbium $^{171}$Yb$^+$ in the presence of a static magnetic field $\vec{B}$ to promote some Zeeman shift $\delta_{\text{Z}} = 1.4$~MHz/G in degenerate states with $F=1$. The states used as our two-level system are highlighted above.}
	\label{FigAdiabExpTrappedIon}
\end{figure}

As shown in Appendix~\ref{ApManipulationQubit}, the parameters of the Hamiltonian in Eq.~\eqref{EqOscHamil} can be efficiently implemented using a single microwave with suitable choices for oscillation and Rabi frequencies. In particular, the component $\sigma_{z}$ of the Hamiltonian can be implemented by a detuning given by $\omega_{0} = \omega_{\text{s}} - \omega_{\text{f}}$, while the second component $\sigma_{x}$ is controlled by a time-dependent Rabi frequency given by $\Omega_{\text{r}}(t) = \omega_{1} \sin(\omega t)$, which can be suitably adjusted from a field with time-dependent intensity. By starting from the initial state $\ket{\psi(0)} = \ket{0}$, in Fig.~\ref{FigAdFidelityOscHamil} we show the fidelity for the ground state for different choices of the parameter $\omega$. The fidelity here is defined as
\begin{equation}
	\Fcalb\left(\rho,\rho_{\text{ad}}\right) = \tr{ \sqrt{\sqrt{\rho_{\text{ad}}(t)} \rho(t) \sqrt{\rho_{\text{ad}}(t)} }}  \text{ , }
\end{equation}
where $\rho(t)$ is the solution of the Schrödinger equation and $\rho_{\text{ad}}(t)$ is the adiabatic solution. To compute the experimental values for fidelity we do state tomography as discussed in Appendix~\ref{ApIonTomography}. As theoretically predict in Appendix~\ref{ApFrameChangeQM-TLSOscila}, when $r \approx 1$ we can see transitions between the energy levels of the system, where the minimum fidelity becomes $\Fcalb\left(\rho,\rho_{\text{ad}}\right) \approx 0$ for $\tau_{\text{min}} \approx 50~\mu$s, then adiabaticity is broken when $r \approx 1$. Otherwise, for sufficiently small or large values of $r$, adiabaticity is observed for all of evolution time.

Now, to compute the AC's for the Hamiltonian in Eq.~\eqref{EqOscHamil}, we write the set of eigenvectors of $H(t)$ is given by
\begin{equation}
	\ket{E_{\pm}(t)} = \frac{1}{\Ncal_{\pm}(t)}\left[\alpha_{\pm}(t)\ket{1} + \ket{0}\right] \label{EqStatesHOscInert}
\end{equation}
where $\Ncal_{\pm}^{2}(t) = 1+ \alpha_{\pm}^2(t)$, with
$\alpha_{\pm}(t) = \csc \theta \csc (\omega t)\left[ 2 \cos\theta \pm  \Sigma  \right] /2$ and
$\Sigma^2 = 3+\cos(2\theta)-2\cos(2\omega t)\sin^2 (\theta)$. The energies are
\begin{equation}
	E_{\pm}(t) = \pm \frac{1}{4} \hbar \omega_{0} \sec (\theta)\Sigma \text{ . }
\end{equation}

Thus, we compute the AC's as provided by Eqs.~\eqref{EqAdTradCond},~\eqref{EqAdTongCond},~\eqref{EqAdCondWu} and~\eqref{EqAdCondAR} and the result is presented in Fig.~\ref{FigAdCondOscHamilInertial}. From all of curves in Fig.~\ref{FigAdCondOscHamilInertial}, we set the value of the total evolution time given by experimental one shown in Fig.~\ref{FigAdFidelityOscHamil} and the maximization in each equation is obtained in interval $t \in [0,\tau]$, so it does not matter the multiplicative value of $\tau$ for the conditions $C_{\text{Tong}}$ and $C_{\text{AR}}$ because we are computing the maximum value for all $t \in [0,\tau]$. In fact, any value of $C_{\text{Tong}}$ and $C_{\text{AR}}$ in some sub-interval of $t \in [0,\tau]$ will provides a same qualitative behavior shown in Fig.~\ref{FigAdCondOscHamilInertial}. As result, the AC's tell us that adiabaticity is observed for values of $r \ll 1$, while such adiabatic behavior is less probable as $r$ becomes sufficiently big. In particular, it is important to note the case $r \approx 1$ in comparative with the case $r \gg 1$. In conclusion, based on Figs.~\ref{FigAdCondOscHamilInertial} and~\ref{FigAdFidelityOscHamil}, as we can see, all of the conditions considered here do not allow us to get a correct description of the adiabatic behavior of the system. In summary, we can say that it allows us to conclude that all of AC's are \textit{neither sufficient}, because the validity is violated when we compare $r \approx 1$ with the results for $r \gg 1$, \textit{nor necessary} when we consider the case $r \ll 1$ with that one for $r \gg 1$. 

\begin{figure}
	\centering
	\includegraphics[scale=0.55]{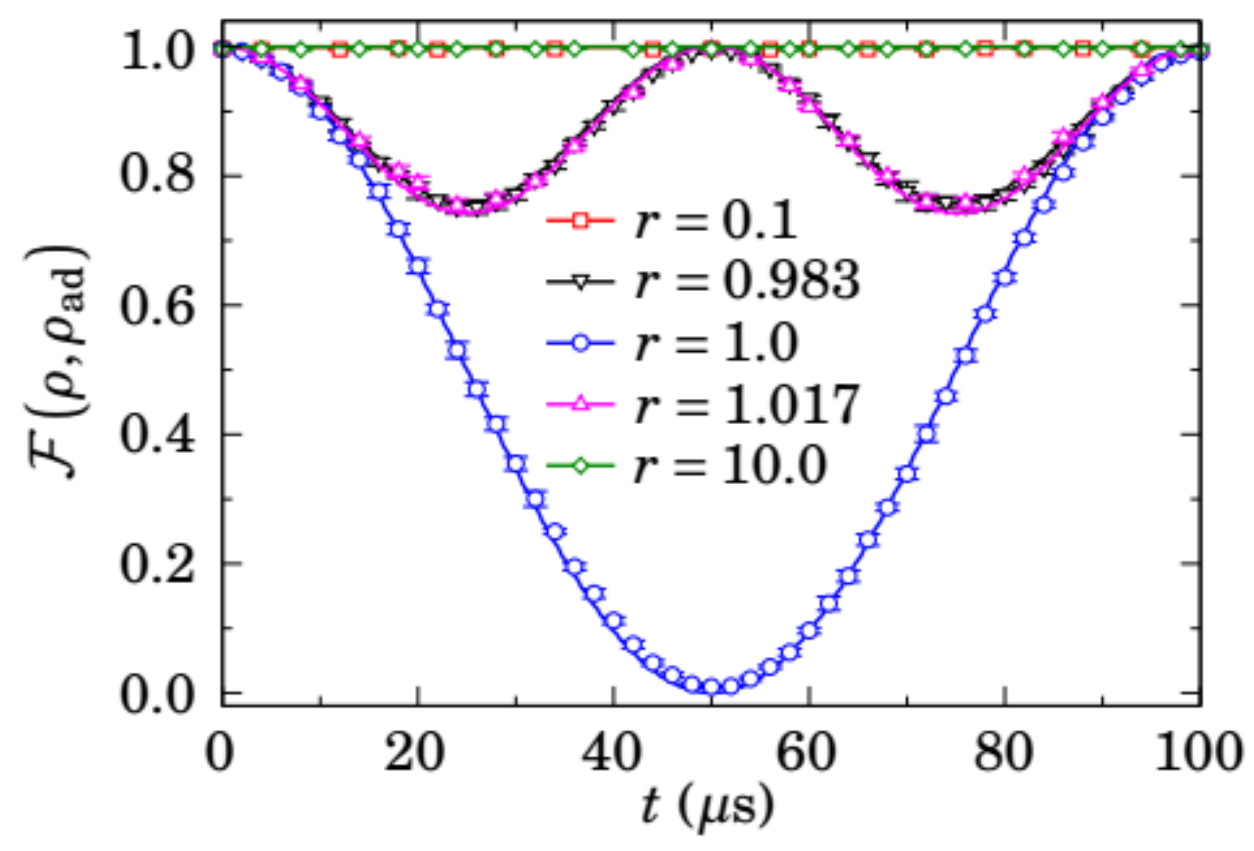}
	\caption{Theoretical and experimental fidelities for the quantum dynamics for different values of the resonance parameter $r$. The symbols and lines represent experimental data and theoretical results, respectively. The error bars are obtained from 60,000 binary-valued measurements for each data point and are not larger than $1.6\%$. We set $\omega_{0} = 2\pi \times 1.0$~MHz, $\omega_{\text{T}} = 2\pi \times 20.0$~KHz.}
	\label{FigAdFidelityOscHamil}
\end{figure}


\subsection{Adiabatic parameters in non-inertial frames}

A widely used procedure to deal with inconsistencies in the adiabatic parameters is providing a new parameter able to solve the problem previously obtained. We discussed a bit about that in Sec.~\ref{SecRevAd}, where we showed four different adiabatic parameters. However, here we will follow a different approach to deal with the result shown in previous section.

It is evident the role of resonance in adiabaticity breaking of quantum system. By taking into account that the resonance is a non-linear phenomena (because small fields can drastically affect the system dynamics) and that it is convenient to consider such phenomena through a non-inertial frame change, here we will consider the adiabaticity conditions as provided from the non-inertial viewpoint. In other words, given the dynamics in the non-inertial frame we want to compute the adiabatic parameters from this new frame. To introduce the non-inertial frame, we use the unitary time-dependent operator $\Ocalb(t) = e^{i \omega t \sigma_{z}/2}$ and from Eq.~\eqref{EqVonNeumannBob} the dynamics in the non-inertial frame is given by
\begin{equation}
	\dot{\rho}_{\Ocalb}(t) = \frac{1}{i\hbar} [H_{\Ocalb}(t) ,\rho_{\Ocalb}(t)] \text{ , } \label{EqDynRotFrame}
\end{equation}
where $H_{\Ocalb}(t) = \Ocalb(t)H(t)\Ocalb^{\dagger}(t)+ i\hbar \dot{\Ocalb}(t)\Ocalb^{\dagger}(t)$ and $\rho_{\Ocalb}(t) = \Ocalb(t)\rho(t)\Ocalb^{\dagger}(t)$. After some calculation, it is possible to show that
\begin{equation}
	H_{\Ocalb}(t) = \hbar\frac{\omega_{0} - \omega}{2} \sigma_{z} + \hbar \frac{\sin ( \omega t) \tan \theta}{2} \vec{\omega}_{xy}(t)\cdot\vec{\sigma}_{xy} \text{ , } \label{EqOscHamilNonInertial}
\end{equation}
with $\vec{\omega}_{xy}(t) = \omega_{0}[\cos(\omega t) \hat{x} - \sin(\omega t) \hat{y}]$ and $\vec{\sigma}_{xy} = \sigma_{x} \hat{x} + \sigma_{y} \hat{y}$. The spectrum of the Hamiltonian in the rotating frame is therefore
\begin{equation}
	\ket{E_{\pm}^{\Ocalb}(t)} = \frac{1}{\Ncal^{\Ocalb}_{\pm}} \left[ e^{i \omega t} \alpha^{\Ocalb}_{\pm} (t) \ket{1} + \ket{0} \right] \text{ , } \label{EqStatesHOscNonInert}
\end{equation}
where $\Ncal^{\Ocalb}_{\pm} = \sqrt{ |\alpha^{\Ocalb}_{\pm} (t)|^2 + 1 }$, with
\begin{equation}
	\alpha^{\Ocalb}_{\pm} (t) = \frac{\csc (\omega t)}{\omega_{0}\sqrt{2}} \left[ \sqrt{2}(\omega_{0} - \omega) \cot \theta \pm \sqrt{-2\omega^2 + 4\omega \omega_{0} -\omega_{0}^2 - \omega_{0}^2\cos (2\omega t) + 2 (\omega - \omega_{0})^2 \csc^2\theta} \right] \text{ , }
\end{equation}
and the set of eigenvalues are
\begin{equation}
	E_{\pm}^{\Ocalb}(t) = \pm\frac{\hbar\sec\theta}{4} \sqrt{ 2\omega^2 - 4\omega\omega_{0} + 3\omega_{0}^2 + \left(  2\omega^2 - 4\omega\omega_{0} + \omega_{0}^2 \right)\cos (2\theta) - 2\omega_{0}^2 \cos(2\omega t) \sin^2\theta } \text{ . } \label{EqEnergyNonInertial}
\end{equation}

\begin{figure}[t!]
	\centering
	\includegraphics[scale=0.55]{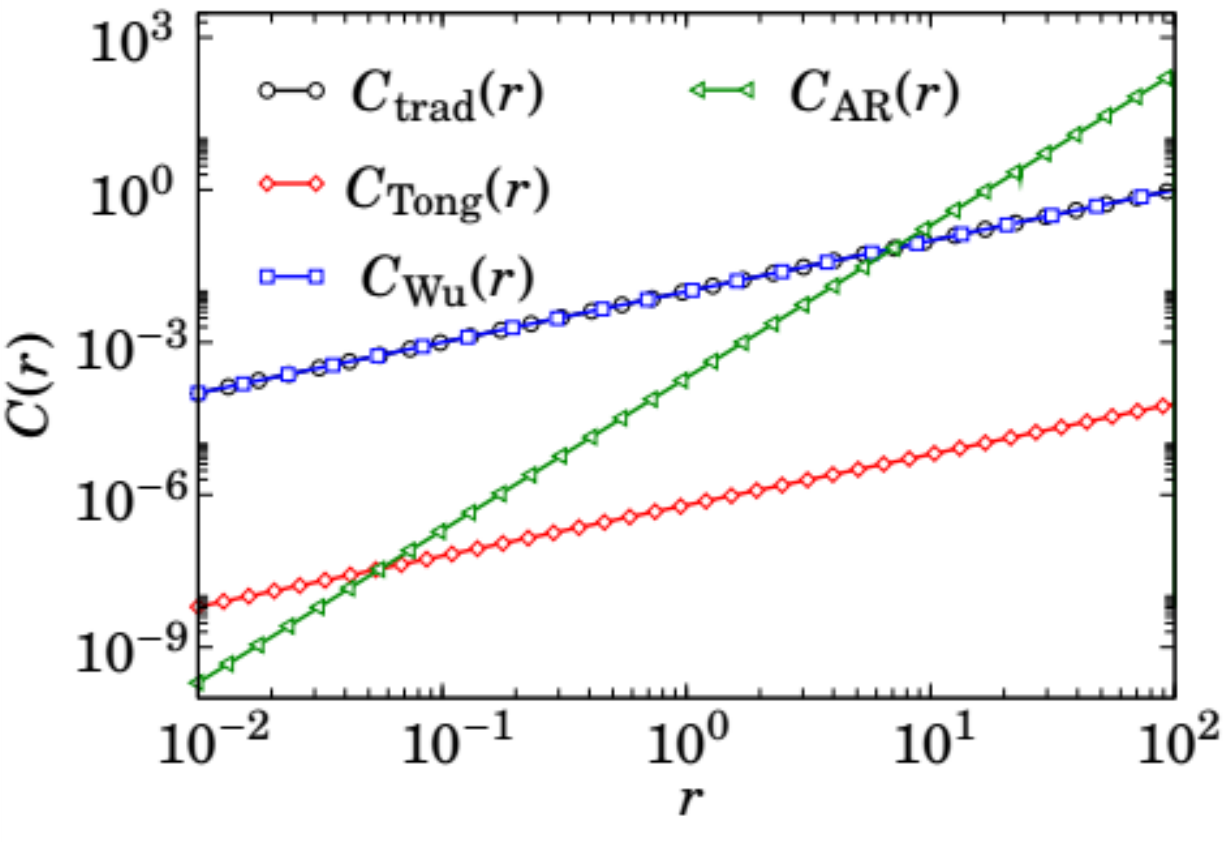}
	\caption{AC's computed for the Hamiltonian written in Eq.~\eqref{EqOscHamil}. The parameters used here are $\omega_{0} = 2\pi \times 1.0$~MHz, $\omega_{1} = 2\pi \times 20.0$~KHz, and $\tau=100~\mu$s.}
	\label{FigAdCondOscHamilInertial}
\end{figure}

A first point to be highlighted here is the influence of $\omega$ in the spectrum of the non-inertial Hamiltonian. Therefore, by computing the adiabatic parameters in this new frame the result is given as in Fig.~\ref{FigAdCondOscHamilNonInertial}.

Thus, it is possible to see that AC's when computed from the non-inertial frame of the system allows ut to get a description consistent with experimental the results in Fig.~\ref{FigAdFidelityOscHamil}. In order to give some physical meaning to this result, let us consider the energy gap between ground and excited state given by
\begin{equation}
	g^{\Ocalb}(t) = E_{+}^{\Ocalb}(t) - E_{-}^{\Ocalb}(t) \text{ , }
\end{equation}
so from Eq.~\eqref{EqEnergyNonInertial} we can write the minimum gap as
\begin{equation}
	g_{\text{min}}^{\Ocalb} = \min_{t\in[0,\tau]} g^{\Ocalb}(t) = \hbar \omega_{0} \left \vert 1-r \right \vert \text{ , }
\end{equation}
where we already used $\omega = r\omega_{0}$. Now, we can see that the peak in Fig.~\ref{FigAdCondOscHamilNonInertial} is due to the minimum gap going to zero at resonance $r \rightarrow 1$, at the same time that the gap becomes sufficiently large when $r \gg 1$ and it is of order of $\hbar \omega_{0}$ when $r \rightarrow 0$. 

\begin{figure}[t!]
	\centering
	\includegraphics[scale=0.55]{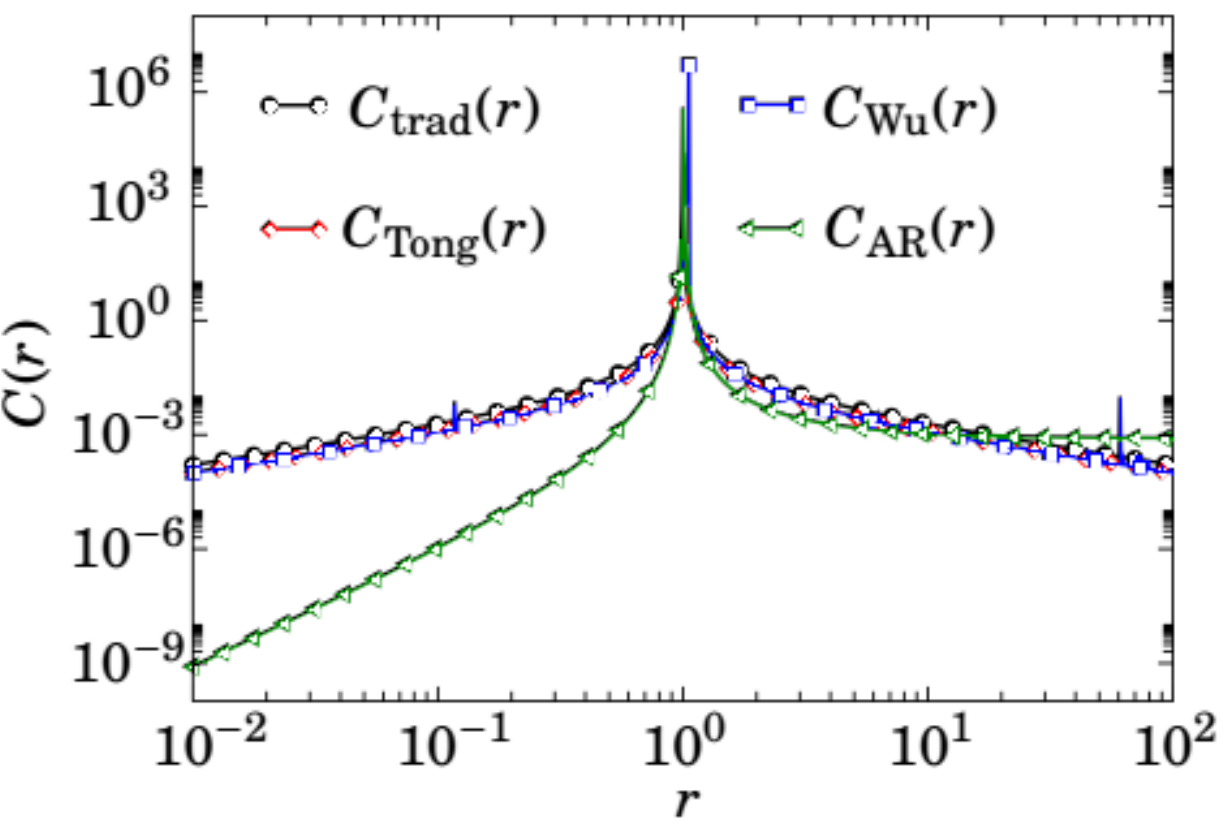}
	\caption{AC's computed for the Hamiltonian in Eq.~\eqref{EqOscHamilNonInertial}. The parameters are $\omega_{0} = 2\pi \times 1.0$~MHz, $\omega_{1} = 2\pi \times 20.0$~KHz, and $\tau=100~\mu$s.}
	\label{FigAdCondOscHamilNonInertial}
\end{figure}

As a result, it is possible to see that the AC's allow us to perfectly describe the adiabatic behavior (or its breaking) of the system considered here when computed in a different reference frame. This suggests a frame-dependence of the adiabatic validity conditions in guaranteeing the adiabatic approximation, as we pictorially describe in Fig.~\ref{FigSchemeFrameChange}. It is important to highlight that even the traditional adiabatic condition becomes sufficient and necessary in this new frame. Thus, the immediate question to be addressed in next section is: \textit{Is this frame change applicable to any quantum system?}

\section{Validation of quantum adiabaticity through non-inertial frames}

As previously mentioned, here we are interested in providing a different approach to deal with inconsistencies of the AC's, so we are not providing a new AC. The results present in the two previous sections are valid for the particular Hamiltonian in Eq.~\eqref{EqOscHamil}, so the question we have is: \textit{Is the frame change a universal strategy to validate the AC's?}

\begin{figure}[t]
	\centering
	\includegraphics[scale=0.7]{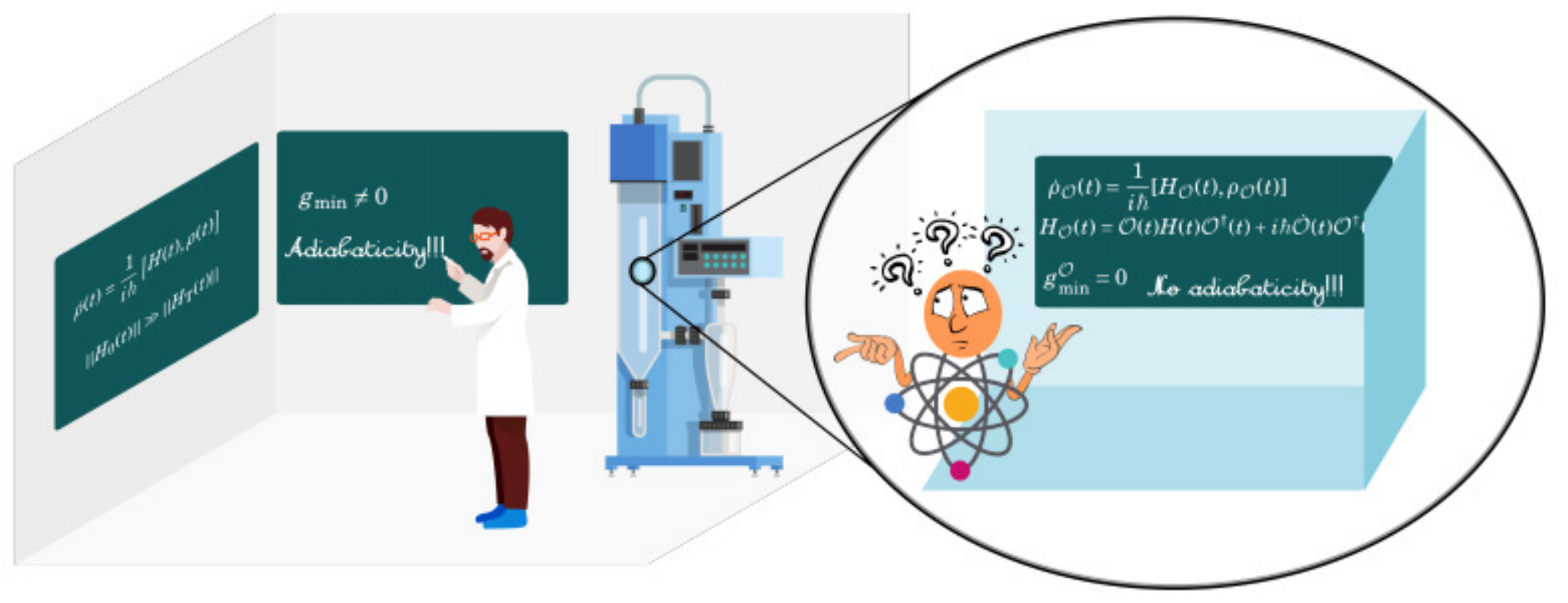}
	\caption{Pictorial representation of how the working substance deals with the adiabatic paths in its natural non-inertial frame. While the experimentalist in the figure computes adiabaticity in inertial (laboratory) frame, the working substance follows a non-adiabatic behavior. The nature does not care about what we are computing!}
	\label{FigSchemeFrameChange}
\end{figure}

In order to extend the previous study to a large (or infinite) class of Hamiltonians, let us consider the following Hamiltonian
\begin{equation}
	H(\omega,t) = \hbar \omega_{0} H_{0} + \hbar \omega_{\text{T}} H_{\text{T}}(\omega,t) \text{ , } \label{EqGenHAdOrig}
\end{equation}
where the contributions $\hbar \omega_{0} H_{0}$ and $\hbar \omega_{\text{T}} H_{\text{T}}(\omega,t)$ depend on the longitudinal and transverse fields $\vec{B}_{0}$ and $\vec{B}_{\text{T}}(t)$, respectively, this means we can write $[H_{\text{T}}(\omega,t),H_{0}] \neq 0$. Here we are denoting the fields by $\vec{B}$, but it does not necessarily denote a magnetic field. We are considering that $\omega_{0}$ is the natural frequency of the Hamiltonian $H_{0}$, so note that the frequency transition between energy levels of the system is a multiple of $\omega_{0}$ and it depends on the set of eigenvalues for $H_{0}$. Here we are interested in a non-linear phenomenon that arises from the resonance configuration of the system, so a second assumption is related with the Hamiltonian $\hbar \omega_{\text{T}} H_{\text{T}}(\omega,t)$ and it reads as $||\omega_{\text{T}} H_{\text{T}}(\omega,t)||\ll||\omega_{0} H_{0}||$, physically it means we have very small transverse fields acting on the system, the ideal scenario to study these non-linear phenomena. Under the number of above considerations, it is important to identify the dynamics given by
\begin{equation}
	\dot{\rho} (t) = \frac{1}{i\hbar} \left[H(\omega,t), \rho (t) \right] \text{ , }
\end{equation}
as a dynamics obtained from a inertial frame (for example, from laboratory frame). However, it is possible to define a non-inertial frame associated with the operator $\Ocalb(t) = \exp \left( i \omega H_{0} t \right)$. In this new frame, the dynamics is governed by the Hamiltonian
\begin{equation}
	H_{\Ocalb}(\omega,t) = \Ocalb(t)H(\omega,t)\Ocalb^{\dagger}(t)+ i\hbar \dot{\Ocalb}(t)\Ocalb^{\dagger}(t) \text{ , } 
\end{equation}
and the equation for the system dynamics is given by the Eq.~\eqref{EqDynRotFrame}. Note that in above Hamiltonian we have a contribution due to the frame change, in same way that fictitious forces arise in classical mechanics when we do a frame change from inertial to non-inertial ones. Thus, the above equation can be rewritten as
\begin{equation}
	H_{\Ocalb}(\omega,t) = \hbar \left( \omega_{0} - \omega \right) H_{0} + \hbar \omega_{\text{T}} H_{\Ocalb , \text{T}}(\omega,t) \text{ , } 
\end{equation}
where $H_{\Ocalb,\text{T}}(\omega,t) = \Ocalb(t)H_{\text{T}}(t)\Ocalb^{\dagger}(t)$. Therefore, due to the quantity $ \omega_{0} - \omega $ in the first term of $H_{\Ocalb}{(\omega,t)}$, which comes from contribution of the term $i\hbar \dot{\Ocalb}(t)\Ocalb^{\dagger}(t)$ to the term $\hbar \omega_{0} H_{0}$, in this new frame it becomes evident that the contribution of $H_{\text{T}}(\omega,t)$ cannot be ignored in this new frame. From this new frame is possible to see that the term $H_{\text{T}}(\omega,t)$ promotes transitions in energy levels of the system when $||H_{\Ocalb,\text{T}}(\omega,t)|| \sim || \hbar \left( \omega_{0} - \omega \right) H_{0} ||$, so the dynamics as provided by Eq.~\eqref{EqDynRotFrame} represents a dynamics under system reference. In this way, we can clearly see that the adiabatic behavior in inertial frame becomes impossible because we will get transitions between energy levels of the Hamiltonian. In fact, once we are in a regime where $||\omega_{\text{T}} H_{\text{T}}(\omega,t)||\ll||\omega_{0} H_{0}||$, the instantaneous \textit{time-dependent} eigenvectors of $H(\omega, t)$ are approximately equal to the eigenvectors of $\omega_{0} H_{0}$. Moreover, note that we did not anything about adiabaticity in non-inertial frame, so we could have some adiabaticity in this new frame. However, as previously shown, there are situations in which the ACs computed in non-inertial frame allow us to get some general result on adiabaticity in inertial frame. This suggests a connection between adiabaticity in different frames. To deal with this question, we have the following theorem (see proof in Appendix~\ref{AppProofTheoremsAdiab}).

\begin{theo}\label{TheoAdiab}
	Consider a Hamiltonian $H(t)$ and its non-inertial counterpart $H_{\Ocalb}(t)$,
	with $\Ocalb(t)$ an arbitrary unitary transformation. The eigenstates of $H(t)$ and $H_{\Ocalb}(t)$ are denoted by $\ket{E_{k}(t)}$ and $\ket{E^{\Ocalb}_{m}(t)}$,
	respectively. Then, if a quantum system $\Scalb$ is prepared at time $t=t_0$ in a particular eigenstate $\ket{E_{k}(t_{0})}$ of $H(t_0)$, then the adiabatic evolution
	of $\Scalb$ in the inertial frame, governed by $H(t)$, implies in the adiabatic evolution of $\Scalb$ in the non-inertial frame, governed by $H_{\Ocalb}(t)$,
	if and only if
	\begin{equation}
		|\bra{E^{\Ocalb}_{m}(t)}\Ocalb(t)\ket{E_{k}(t)}| = |\bra{E^{\Ocalb}_{m}(t_{0})}\Ocalb(t_{0})\ket{E_{k}(t_{0})}| \text{ , }\label{EqCondTheoAdiab}
	\end{equation}
	$\forall$ $t,m$, where $t \in [t_{0},\tau]$, with $\tau$ denoting the total time of evolution.
	Conversely, if the adiabatic dynamics in the non-inertial frame starts in $\ket{E^{\Ocalb}_{m}(t_{0})}$, then the dynamics in the
	inertial frame is also adiabatic if and only if Eq.~\eqref{EqCondTheoAdiab} is satisfied.
\end{theo}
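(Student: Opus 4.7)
The plan is to carry out the proof by writing the adiabatic solution in each frame explicitly, then using the unitary $\Ocalb(t)$ to translate one into the other and reading off the required matching condition.

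First, I would set up the notation. Let $\ket{\psi(t)}$ denote the state in the inertial frame and $\ket{\psi_{\Ocalb}(t)} = \Ocalb(t)\ket{\psi(t)}$ its non-inertial counterpart. Under adiabatic evolution in the inertial frame with $\ket{\psi(t_{0})} = \ket{E_{k}(t_{0})}$, the general theory gives $\ket{\psi(t)} = e^{i\phi_{k}(t)}\ket{E_{k}(t)}$ where $\phi_{k}(t)$ collects the dynamical and geometric (Berry) phases. Pushing this to the non-inertial frame and expanding in the eigenbasis $\{\ket{E^{\Ocalb}_{m}(t)}\}$ of $H_{\Ocalb}(t)$,
\begin{equation}
\ket{\psi_{\Ocalb}(t)} \;=\; e^{i\phi_{k}(t)}\sum_{m} c_{m}(t)\,\ket{E^{\Ocalb}_{m}(t)}, \qquad c_{m}(t) := \bra{E^{\Ocalb}_{m}(t)}\Ocalb(t)\ket{E_{k}(t)} \text{ . }
\end{equation}

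Second, I would invoke the definition of adiabaticity recalled earlier in the chapter: the Hilbert space decomposes into uncoupled Schrödinger eigenspaces of the driving Hamiltonian. Hence, for the non-inertial dynamics starting from the superposition $\ket{\psi_{\Ocalb}(t_{0})} = \sum_{m} c_{m}(t_{0})\ket{E^{\Ocalb}_{m}(t_{0})}$, adiabaticity in the non-inertial frame is equivalent to the form
\begin{equation}
\ket{\psi_{\Ocalb}^{\mathrm{ad}}(t)} \;=\; \sum_{m} c_{m}(t_{0})\,e^{i\phi^{\Ocalb}_{m}(t)}\,\ket{E^{\Ocalb}_{m}(t)} \text{ . }
\end{equation}
Matching this with the previous expression and projecting onto $\ket{E^{\Ocalb}_{m}(t)}$ yields $e^{i\phi_{k}(t)}\,c_{m}(t) = c_{m}(t_{0})\,e^{i\phi^{\Ocalb}_{m}(t)}$ for every $m$ and every $t$. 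Taking moduli removes the phases and collapses this family of equalities to $|c_{m}(t)| = |c_{m}(t_{0})|$, which is exactly Eq.~\eqref{EqCondTheoAdiab}. Conversely, if the moduli are preserved, one may write $c_{m}(t)=c_{m}(t_{0})\,e^{i\vartheta_{m}(t)}$ for suitable phases $\vartheta_{m}(t)$, and inserting this back shows that $\ket{\psi_{\Ocalb}(t)}$ lies at all times in the span of the instantaneous eigenvector $\ket{E^{\Ocalb}_{m}(t)}$ weighted as in $\ket{\psi_{\Ocalb}^{\mathrm{ad}}(t)}$, i.e., the evolution is adiabatic in the non-inertial frame.

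Third, I would handle the converse direction (from non-inertial to inertial). Because the overlap that appears in Eq.~\eqref{EqCondTheoAdiab} is, up to complex conjugation, the same as $\bra{E_{k}(t)}\Ocalb^{\dagger}(t)\ket{E^{\Ocalb}_{m}(t)}$, the roles of the two frames are interchangeable. Running exactly the same argument with $\Ocalb \leftrightarrow \Ocalb^{\dagger}$, $\ket{E_{k}} \leftrightarrow \ket{E^{\Ocalb}_{m}}$ gives the symmetric statement and thus closes the \textit{if and only if}.

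The main obstacle that I anticipate is a careful treatment of the geometric and dynamical phases: one has to make sure that the argument extracts only the \emph{modulus} condition on $c_{m}(t)$ and does not impose any spurious constraint on the phases $\phi_{k}(t)$ and $\phi^{\Ocalb}_{m}(t)$, which are independent quantities computed in two different frames. A secondary subtlety is the possibility of level crossings in the two spectra; however, the non-crossing hypothesis built into the definition of adiabatic dynamics used throughout the chapter guarantees that the eigenspaces $\mathrm{span}\{\ket{E_{k}(t)}\}$ and $\mathrm{span}\{\ket{E^{\Ocalb}_{m}(t)}\}$ are well-defined and smoothly labelled, so the coefficients $c_{m}(t)$ depend continuously on $t$ and the matching argument is unambiguous.
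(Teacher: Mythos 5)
Your proof is correct and follows essentially the same route as the paper's: write the adiabatic inertial solution as a phase times $\ket{E_{k}(t)}$, map it with $\Ocalb(t)$, and identify adiabaticity in the non-inertial frame with constancy of the populations $|\bra{E^{\Ocalb}_{m}(t)}\Ocalb(t)\ket{E_{k}(t)}|$, with the converse obtained by swapping the roles of the frames via $\Ocalb \leftrightarrow \Ocalb^{\dagger}$. The only cosmetic difference is that you introduce the non-inertial adiabatic phases $\phi^{\Ocalb}_{m}(t)$ explicitly before taking moduli, whereas the paper invokes the population-preservation criterion $|\interpro{E^{\Ocalb}_{m}(t)}{\psi_{\Ocalb}(t)}| = |\interpro{E^{\Ocalb}_{m}(t_{0})}{\psi_{\Ocalb}(t_{0})}|$ directly, so the phases never appear.
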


If the hypothesis of the above theorem is satisfied, then adiabaticity in inertial frame can be used as witness of an adiabatic behavior in non-inertial frame, so that no calculation of the AC's in the changed frame is required. Furthermore, we can use the Theorem~\ref{TheoAdiab} to predict the adiabaticity broken in inertial frame from the non-adiabatic behavior in non-inertial frame. In fact, from the Morgan's law, if $A$ and $B$ then $C$ ($A \wedge B \rightarrow C$), therefore if not $C$ then not $A$ or not $B$ ($\neg C \rightarrow \neg A \vee \neg B$)~\cite{DeMorgan:Collection,Hurley:Book}. In this case, the instance $A$ is the Theorem~\ref{TheoAdiab}, $B$ is adiabaticity in inertial frame and $C$ the adiabatic behavior in non-inertial frame. Thus, not $C$ means non-adiabatic behavior in non-inertial frame, since the instance $A$ remains valid (the hypothesis of the theorem is satisfied) then we get not $B$, that is, non-adiabatic dynamics in inertial frame. The opposite is also true, if Eq.~\eqref{EqCondTheoAdiab} is satisfied we can associated adiabaticity in non-inertial frame with the adiabatic dynamics in inertial frame and, by using the Morgan's law, non-adiabatic behavior in inertial frame implies non-adiabaticity in non-inertial frame.

From this general study, the dynamics presented in Fig.~\ref{FigAdFidelityOscHamil} can be correctly explained from curves in Fig.~\ref{FigAdCondOscHamilNonInertial} by using the Theorem~\ref{TheoAdiab}, because for situations where we are at resonance and far-from resonance the Theorem~\ref{TheoAdiab} is satisfied. For a detailed discussion, see Appendix~\ref{AppAppliTheorem1}.

\subsection{Revising the adiabatic dynamics of a spin}

Now let us revisit the problem of the adiabatic dynamics of a spin-$\frac{1}{2}$ in presence of a rotating magnetic field, as discussed in Sec.~\ref{SecRevAd}. As a direct consequence of the time-independent Hamiltonian obtained in non-inertial frame, as provided by the  Eq.~\eqref{EqNMRHamilR}, the dynamics is always (trivially) adiabatic. However, the Theorem~\ref{TheoAdiab} is not applicable to this system because the hypothesis of such theorem is not obeyed here, since near to resonance the initial state in this case is not an individual eigenstate of $H_{\text{R}}$. In fact, the initial state in non-inertial frame is prepared in an eigenstate of $\sigma_z$, but at resonance we have $H_{\text{R}} \propto \sigma_x$. For this reason, we need to derive a new strategy to deal with this case. Thus, by considering the situations in which the Hamiltonian in inertial frame is time-independent, we derive the following theorem (the proof is shown in Appendix~\ref{AppProofTheoremsAdiab}).

\begin{theo} \label{TheoAdiabTI}
	Consider a Hamiltonian $H(t)$ and its non-inertial counterpart $H_{\Ocalb}(t)=H_{\Ocalb}$,
	with $\Ocalb(t)$ an arbitrary unitary transformation and $H_{\Ocalb}$ a constant Hamiltonian.
	The eigenstates of $H(t)$ and $H_{\Ocalb}$ are denoted by $\ket{E_{k}(t)}$ and $\ket{E^{\Ocalb}_{m}}$,
	respectively. Then, if a quantum system $\Scalb$ is prepared at time $t=t_0$ in a particular eigenstate $\ket{E_{n}(t_{0})}$ of $H(t_0)$,
	then the adiabatic evolution of $\Scalb$ in the inertial frame, governed by $H(t)$, occurs if and only if
	\begin{equation}
		| \bra{E_{k}(t)}U_{\Ocalb}(t,t_{0})\ket{E_{n}(t_{0})} | = |\interpro{E_{k}(t_{0})}{E_{n}(t_{0})}|  \,\,\,\,\, \forall t,k  \,\,\, \text{ , } \label{ApU2}
	\end{equation}
	where $t \in [t_{0},\tau]$, with $\tau$ denoting the total time of evolution, and
	$U_{\Ocalb}(t,t_{0}) = \Ocalb^{\dagger}(t)e^{-\frac{i}{\hbar} H_{\Ocalb}(t-t_{0})}\Ocalb(t_{0})$.
\end{theo}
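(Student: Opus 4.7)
The plan is to exploit the fact that, because $H_{\Ocalb}$ is time-independent, the Schr\"odinger evolution in the inertial frame admits a closed-form solution via the frame change, so both directions of the ``if and only if'' reduce to unpacking the definition of adiabaticity.

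First I would derive the exact propagator in the inertial frame. Setting $\ket{\psi_{\Ocalb}(t)} = \Ocalb(t)\ket{\psi(t)}$, a direct computation analogous to the one leading to Eq.~\eqref{EqDynRotFrame} shows that the non-inertial state satisfies a Schr\"odinger equation with generator $H_{\Ocalb}$, so $\ket{\psi_{\Ocalb}(t)} = e^{-\frac{i}{\hbar}H_{\Ocalb}(t-t_{0})}\ket{\psi_{\Ocalb}(t_{0})}$. Undoing the transformation yields $\ket{\psi(t)} = U_{\Ocalb}(t,t_{0})\ket{\psi(t_{0})}$ with $U_{\Ocalb}(t,t_{0}) = \Ocalb^{\dagger}(t)e^{-\frac{i}{\hbar}H_{\Ocalb}(t-t_{0})}\Ocalb(t_{0})$, i.e.\ the object appearing in~\eqref{ApU2} is the exact inertial-frame evolution operator, with no adiabatic approximation involved.

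Next I would translate the definition of adiabaticity into an amplitude statement. Since $\ket{\psi(t_{0})} = \ket{E_{n}(t_{0})}$ sits in a (non-degenerate) eigenspace of $H(t_{0})$, adiabaticity in the inertial frame means the evolved state remains in the instantaneous $n$-th eigenspace for every $t \in [t_{0},\tau]$, equivalently $\ket{\psi(t)} = e^{i\varphi_{n}(t)}\ket{E_{n}(t)}$ for some real phase $\varphi_{n}(t)$. Projecting on the complete instantaneous eigenbasis $\{\ket{E_{k}(t)}\}$, this is the same as requiring $|\bra{E_{k}(t)}\psi(t)\rangle| = \delta_{kn}$ for every $k$ and every $t$. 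Combining this with the propagator from Step~1 and with $|\interpro{E_{k}(t_{0})}{E_{n}(t_{0})}| = \delta_{kn}$ from orthonormality of the initial eigenbasis recovers condition~\eqref{ApU2}. The converse direction runs the same implication in reverse: if~\eqref{ApU2} holds for all $t,k$, the overlaps $|\bra{E_{k}(t)}U_{\Ocalb}(t,t_{0})\ket{E_{n}(t_{0})}|$ equal $\delta_{kn}$ at every time, and completeness of $\{\ket{E_{k}(t)}\}$ at fixed $t$ forces $\ket{\psi(t)}$ to coincide with $\ket{E_{n}(t)}$ up to a global phase, which is precisely the adiabatic condition.

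The only genuine care point I would flag is the role of the absolute values in~\eqref{ApU2}: an adiabatic trajectory is only defined up to a dynamical plus Berry phase, so taking moduli of the overlaps (rather than the overlaps themselves) is essential, and any attempt to strip the $|\cdot|$ would overconstrain the statement. The non-degeneracy hypothesis implicit in the definition of adiabaticity is used to ensure that the instantaneous $n$-th eigenspace remains one-dimensional throughout $[t_{0},\tau]$, so that ``remaining in the eigenspace'' is equivalent to ``equal up to a phase''. No slow-driving estimate or spectral-gap bound enters the argument, since the constancy of $H_{\Ocalb}$ allows the dynamics to be integrated exactly; this is what makes the two-way equivalence in Theorem~\ref{TheoAdiabTI} cleaner than its time-dependent counterpart in Theorem~\ref{TheoAdiab}.
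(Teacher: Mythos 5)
Your proposal is correct and follows essentially the same route as the paper's own proof in Appendix~\ref{AppProofTheoremsAdiab}: integrate the constant non-inertial Hamiltonian exactly to obtain $U_{\Ocalb}(t,t_{0})=\Ocalb^{\dagger}(t)e^{-\frac{i}{\hbar}H_{\Ocalb}(t-t_{0})}\Ocalb(t_{0})$ as the exact inertial-frame propagator, then identify adiabaticity with the preservation of the moduli $|\interpro{E_{k}(t)}{\psi(t)}|=|\interpro{E_{k}(t_{0})}{E_{n}(t_{0})}|$ for all $k,t$. Your explicit treatment of the converse via completeness of the instantaneous eigenbasis, and your remarks on the role of the moduli and non-degeneracy, merely make precise what the paper leaves implicit.
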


The experimental results in Ref.~\cite{Suter:08} can be validated by Theorem ~\ref{TheoAdiabTI}, since
the Hamiltonian in Eqs.~\eqref{EqNMRHamil} and~\eqref{EqNMRHamilR} satisfy the Eq.~(\ref{ApU2}) in a far-from resonance situation and
violates it at resonance. In fact, the initial state $|\psi(0)\rangle$ can be approximately written as $|\psi(0)\rangle = \ket{E_{n}(0)} \approx |n\rangle$ [with $\sigma_z |n\rangle =(-1)^{(n+1)} |n\rangle$]. Thus, Eq.~\eqref{ApU2} provides the condition $| \bra{k}e^{-\frac{i}{\hbar} H_{\Ocalb}t}\ket{n} | = |\interpro{k}{n}| = \delta_{kn}$, $\forall k$ and $\forall t\in [0,\tau]$.
In a far-from-resonance situation, we have
$H_{\Ocalb}^{\text{nmr}} \approx (\omega _{0}-\omega )\sigma_{z}/2$, and we conclude that $| \bra{k}e^{-iH_{\Ocalb}t/\hbar}\ket{n} | \approx \delta_{kn}$. This shows that the dynamics in the inertial frame is (approximately) adiabatic far from resonance. On the other hand, near to resonance, we get $H_{\Ocalb}^{\text{nmr}} \approx (\omega _{\text{rf}}/2)\sigma_{x}$, where we can immediately conclude that
$| \bra{k}e^{-iH_{\Ocalb}t/\hbar}\ket{n} | \not\approx \delta_{kn}$ is not valid for any $t\in [0,\tau]$.

\section{Stable quantum batteries through adiabatic dynamics}

Batteries are devices able to store energy to be transferred to a consumption hub at later times. In this sense, if we use quantum mechanical effects to provide more efficient batteries concerning the power during charging/discharging processes and available/extractable energy, it emerges the possibility to design quantum devices known as \textit{quantum batteries} (QBs)~\cite{Alicki:13,PRL2013Huber,PRL2017Binder,Ferraro:18,PRL_Andolina}. QBs are potentially relevant as fuel for other quantum devices and, more generally, for boosting a proper development of quantum networks. In particular, they have been proposed in a number of distinct experimental architectures, such as spin systems~\cite{Le:18}, quantum cavities~\cite{Binder:15,Fusco:16,Zhang:18,Ferraro:18}, superconducting transmon qubits~\cite{Santos:19-a}, and quantum oscillators~\cite{Andolina:18,Andolina:19}. Among fundamental challenges for useful QBs are both the control of the energy transfer and the stability of the discharge process to an available consumption hub (CH) (see, e.g., Refs.~\cite{Santos:19-a,Gherardini:19,Kamin:20-1} for recent discussion about the stability topic). In this section we will consider the application of adiabatic dynamics to design stable quantum batteries~\cite{Santos:20c}. In addition, it is presented a power switchable scheme of quantum batteries based on a \textit{power operator} that describes the instantaneous energy transfer rate from battery to a consumption hub.

\subsection{Stable charging process in three-level quantum batteries} \label{SubSecStableThreeLevelAdQB}

Given a quantum system and its internal Hamiltonian $H_{0}$, which sets the system energy scale, we define the energy of the system as $E = \tr{H_{0}\rho}$, where $\rho$ is the matrix density of the system. Here we define the energy system in terms of the population of $\rho$ concern the eigenstates of $H_{0}$, where any control Hamiltonian $H_{\text{c}}(t)$ does not contribute to the quantity $E$. Then, we can define the system energy when we set the reference Hamiltonian $H_{0}$. Thus, if we drive the system using the Hamiltonian $H(t)$, the energy of the system changes in according with
\begin{equation}
	E(t) = \tr{H_{0}\rho(t)} = \tr{H_{0}U(t)\rho(0)U^{\dagger}(t)} \text{ , }
\end{equation}
where $U(t)$ is the evolution operator. Despite the energy to be computed from above equation, in context of quantum batteries it does not mean the available energy in system is given by $E(t)$; The available energy is obtained from notion of \textit{ergotropy} $\Ecalb$~\cite{Allahverdyan:04}. The ergotropy is the maximum amount of work that can be extracted from a quantum battery and reads as
\begin{equation}
	\Ecalb_{\rho} = \tr{H_{0}\rho} - \tr{H_{0}\sigma_{\rho}} \text{ , }
\end{equation}
where $\sigma_{\rho}$ is the \textit{passive state}, defined as a state in which no energy can be extracted from the system by unitary transformations $V$ defined on the set $\Vcalb$ of all unitary operations on the Hilbert space $\Hcalb$ of the system. Mathematically we write
\begin{equation}
	\tr{H_{0}\sigma_{\rho}} - \tr{H_{0}V\sigma_{\rho}V^{\dagger}} \leq 0 \text{ , for any } V \in \Vcalb \text{ . }
\end{equation}

In this thesis, since we are interested in unitary dynamics that leads the system from empty charge battery state to the full charge state, the passive state is identified as the ground state $\ket{E_{0}}$ of $H_{0}$. Then, to define our three-level quantum battery, let us consider a non-degenerate quantum battery as a $d$-level quantum system described by the Hamiltonian
\begin{equation}
	H_{0} = \sum_{n}^d \varepsilon_{n} \ket{\varepsilon_{n}}\bra{\varepsilon_{n}} \text{ , } \label{H0}
\end{equation}
with $\varepsilon_{1} < \varepsilon_{2} < \dots < \varepsilon_{d}$. The battery is said to be in a {\it passive} state if no energy can be extracted from it through some cyclic process, e.g. the turning on of some potential $V(t)$, for a time $\tau$, that satisfies the boundary conditions $V(0) = V(T) = 0$. Interestingly, an array of such batteries may not be passive with some work being available, but requiring some collective processing of the batteries~\cite{Binder:15, PRL2017Binder,Le:18,Ferraro:18}. A notable exception to this is the thermal states which are said to be {\it completely passive} as even with such collective processing one cannot extract any work~\cite{Pusz:78}. {\it Active} states are those that allowed for work to be extracted through some cyclic process, with maximal amount of extractable work is called the ergotropy~\cite{Allahverdyan:04}. We remark that $d = 2$ is a special case where all passive states are completely passive as any diagonal state in the energy eigenbasis is necessarily a thermal state at some temperature. In what follows we will be concerned with stable charging of a single three-level quantum battery, $d=3$.

We will be interested in examining the energy stored in a quantum battery through some, possibly non-unitary, process. The system energy at time $t$ is simply $E(t) = \trs{H_{0} \rho(t)}$ and if we assume our battery begins in the ground state, $\ket{\varepsilon_1}$, the ergotropy is the difference in energy between the final and initial battery states after the charging process is over,
\begin{equation}
	\Ecalb (t) = E(t) - E_{\text{gs}} = \trs{H_{0} \rho(t)} - \varepsilon_{1} \text{ , }
\end{equation}
with the maximum stored energy $\Ecalb_{\text{max}} = \hbar (\omega_{3}-\omega_{1})$ achievable via a process which transfers all the population from the initial ground state to the maximally excited state. In this particular study, we consider a charging process done through Stimulated Raman Adiabatic Passage (STIRAP)~\cite{Vitanov:17}, where we can ensure a stable adiabatic quantum battery (as we shall see). It is possible to generalize the proposed setting to systems with $d > 3$ using chain STIRAP processes that connect the lowest lying energy eigenstate to the highest one~\cite{Vitanov:17}. Therefore, we assume that system to be driven by the time-dependent Hamiltonian~\cite{Vitanov:17,Marangos:98,Fleischhauer:05,Bergmann:98}
\begin{equation}
	H_{\text{c}}(t) = \hbar\Omega_{12}(t)e^{-i\omega_{12}t} \ket{\varepsilon_{1}}\bra{\varepsilon_{2}} + \hbar\Omega_{23}(t)e^{-i\omega_{23}t} \ket{\varepsilon_{2}}\bra{\varepsilon_{3}} + \text{h.c}  \text{ . } \label{Ht}
\end{equation}
In this case, the complete Hamiltonian which describes the dynamics of the system can be written as $H(t) = H_{0} + H_{\text{c}}(t)$. The Hamiltonian $H_{\text{c}}(t)$ develops the role of a \textit{quantum charger} for our three-level quantum battery as we need to couple our system to the external fields described by $H_{\text{c}}(t)$ in order to charge the battery. While the bare Hamiltonian is important for dictating the amount of energy stored in the battery, the dynamics is driven by the interaction Hamiltonian $H_{\text{c}}(t)$. In fact, by considering the dynamics of the system in a general time-dependent interaction picture, the new Hamiltonian can be written as~\cite{Whaley:84}
\begin{equation}
	\dot{\rho}_{\text{int}}(t) = \frac{1}{i\hbar} [H_{\text{int}}(t),\rho_{\text{int}}(t)] \text{ , }
\end{equation}
where $\rho_{\text{int}}(t) = e^{iH_{0}t} \rho(t) e^{-iH_{0}t}$ and
\begin{equation}
	H_{\text{int}}(t) = \hbar \Omega_{12}(t) \ket{\varepsilon_{1}}\bra{\varepsilon_{2}} + \hbar \Omega_{23}(t) \ket{\varepsilon_{2}}\bra{\varepsilon_{3}} + \text{h.c} \text{ , }
\end{equation}
where we already assumed that both fields in Eq.~\eqref{Ht} are on resonance with the energy levels of the battery. Thus, it is possible to show that, in this new representation, we can get the population in each energy level from $P_{n} = \trs{\hat{P}_{n}\rho_{\text{int}}(t)} = \trs{\hat{P}_{n}\rho(t)}$, where $\hat{P}_{n}$ is the projector $\hat{P}_{n} = \ket{\varepsilon_{n}}\bra{\varepsilon_{n}}$. In addition, $\trs{H_{0}\rho_{\text{int}}(t)}  =  \trs{H_{0}\rho(t)}$, so that $\Ecalb(t) = \trs{H_{0}\rho_{\text{int}}(t)} -E_{\text{gs}}$ can be obtained from the dynamics in the rotating frame. As already discussed in Sec.~\ref{SecAdFrameDependence}, it is important to highlight that the adiabatic behavior in this frame does not mean adiabaticity in the inertial frame, for this reason we can deal with an adiabatic charging process of quantum batteries. Therefore, the system dynamics as in the above equation can be considered in our study without loss of generality, so that the study of the charging procedure of our battery through an adiabatic dynamics will be done in this new frame. By computing the set of eigenvectors of the new Hamiltonian $H_{\text{int}}(t)$ we find
\begin{subequations}
	\label{EigenStates}
	\begin{align}
		\ket{E_{-}(t)} &= \frac{1}{\sqrt{2}} \left[
		\frac{\Omega_{12}(t)}{\Delta(t)} \ket{\varepsilon_{1}} -
		\ket{\varepsilon_{2}} + 
		\frac{\Omega_{23}(t)}{\Delta(t)} \ket{\varepsilon_{3}}
		\right] \\
		\ket{E_{0}(t)} &= 
		\frac{\Omega_{23}(t)}{\Delta(t)} \ket{\varepsilon_{1}} -
		\frac{\Omega_{12}(t)}{\Delta(t)} \ket{\varepsilon_{3}} \label{E0}\\
		\ket{E_{+}(t)} &= \frac{1}{\sqrt{2}} \left[
		\frac{\Omega_{12}(t)}{\Delta(t)} \ket{\varepsilon_{1}} +
		\ket{\varepsilon_{2}} + 
		\frac{\Omega_{23}(t)}{\Delta(t)} \ket{\varepsilon_{3}} 
		\right]\text{ , }
	\end{align}
\end{subequations}
associated with eigenvalues $E_{\pm}(t) = \pm \hbar \Delta(t)$ and $E_{0}(t) =0$, where $\Delta^2(t) = \Omega_{12}^2(t)+\Omega_{23}^2(t)$. Then, as mentioned, we will assume the process starts with the battery state $\ket{\psi(0)} = \ket{\varepsilon_{1}}$. This state can be written as a combination of different elements of Eqs.~\eqref{EigenStates}, depending on the initial values of the parameters $\Omega_{12}(0)$ and $\Omega_{23}(0)$. Therefore we can choose different charging protocols associated with distinct choices of the parameters $\Omega_{12}(t)$ and $\Omega_{23}(t)$, by adjusting how the external fields act on the system at the start of the evolution. We will show that while some protocols will lead to an unstable charged state, and therefore would require a carefully timed decoupling of the battery from the charging fields, by exploiting the STIRAP technique, we can achieve a stable and robust charged state.

\emph{Unstable charging --} From Eqs.~\eqref{EigenStates}, it is possible to show that the initial state $\ket{\psi(0)}$ can be written as a combination of the states $\ket{E_{-}(0)}$ and $\ket{E_{+}(0)}$ if we set $\Omega_{12}(0) \neq 0$ and $\Omega_{23}(0) = 0$. In fact, by considering this initial values for $\Omega_{12}(t)$ and $\Omega_{23}(t)$, one can show that
\begin{equation}
	\ket{\psi(0)} = \frac{\ket{E_{+}(0)} + \ket{E_{-}(0)}}{\sqrt{2}} = \ket{\varepsilon_{1}} \text{ . }
\end{equation}

In case where we admit that the system undergoes an adiabatic dynamics, we find the evolved state~\cite{Sarandy:04,Kato:50,Messiah:Book,Amin:09}
\begin{equation}
	\ket{\psi^{\text{ad}}(t)} = \frac{1}{\sqrt{2}}\left[ e^{-\frac{i}{\hbar}\int_{0}^{t} E_{+}(t^{\prime})dt^{\prime}}\ket{E_{+}(t)} + e^{-\frac{i}{\hbar}\int_{0}^{t} E_{-}(t^{\prime})dt^{\prime}}\ket{E_{-}(t)} \right] \text{ , } \label{AdCuns}
\end{equation}
where we already used the parallel transport condition $\interpro{E_{n}(t)}{\dot{E}_{n}(t)}=0$, for all $n$. Thus, we write
\begin{equation}
	\ket{\psi^{\text{ad}}(t)} = \frac{\cos\Phi(t)}{\Delta(t)}\Big(\Omega_{12}(t) \ket{\varepsilon_{1}} + \Omega_{23}(t) \ket{\varepsilon_{3}} \Big) - i \sin\Phi(t) \ket{\varepsilon_{2}},
\end{equation}
where $\Phi(t)= \int_{0}^{t} \Delta(t)d\xi$. Therefore, one finds the ergotropy $\Ecalb(t)  =  \bra{\psi^{\text{ad}}(t)}H_{0}\ket{\psi^{\text{ad}}(t)} - \bra{\varepsilon_{1}}H_{0}\ket{\varepsilon_{1}}$ as
\begin{equation}
	\Ecalb(t) = \hbar \frac{\cos^2\Phi(t)}{\Delta^2(t)}\left[ \omega_{1} \Omega^2_{12}(t) + \omega_{3}\Omega^2_{23}(t)\right] + \hbar \omega_{2} \sin^2\Phi(t) - \hbar\omega_{1} \label{Cuns} \text{ . }
\end{equation}

From the above equation, one makes evident that to achieve maximal ergotropy firstly we must fix the final values for the parameters $\Omega_{12}(t)$ and $\Omega_{32}(t)$ at some cutoff time (or charging time) $\tau_c$ in order to get $\Omega_{12}(\tau_{\text{c}})=0$ and $\Omega_{32}(\tau_{\text{c}})\neq0$. This involves particular initial and final conditions on the parameters $\Omega_{12}(t)$ and $\Omega_{32}(t)$ to fully charge the battery. Secondly, the instant in which the system achieves the full charge is when $\cos\Phi(\tau_{\text{c}}) = 1$. Under these constraints, we achieve maximum ergotropy, $\Ecalb(\tau_{\text{c}})= \Ecalb_{\text{max}}$. However, from Eq.~\eqref{Cuns} one can see that for $t  >  \tau_{\text{c}}$ the battery charge cannot be kept at its maximum value, and rather it will continue to oscillate between fully charged and fully dissipated states due to the action of the fields. We describe a protocol which leads to this situation as an unstable battery charging process. In addition, the function $\Phi(t)$ depends on the integration from $0$ to some instant $t > \tau_{\text{c}}$, the sine and cosine functions could become highly oscillating, such that that after $t > \tau_{\text{c}}$ we can have many maximum and minimum values for the ergotropy. We understand this as follows: in an adiabatic regime of the charging process, there is an intrinsic discharging process due to the relative quantal phases in Eq.~\eqref{AdCuns}. The adiabatic phase associated with different adiabatic paths (eigenstates), promote destructive and constructive superpositions of the components $\ket{\varepsilon_{3}}$ of the states $\ket{E_{+}(t)}$ and $\ket{E_{-}(t)}$. Consequently, we observe the natural discharging as a process due to destructive interference from $\ket{\varepsilon_{3}}$. Thus, a charging strategy that begins the adiabatic evolution with $\Omega_{12}(0) \neq  0$ and $\Omega_{32}(0) = 0$ does not lead to a stable and robust quantum battery. We remark that the above result is not a particular feature of adiabatic charging process. Actually, this \textit{spontaneous discharging} is an intrinsic characteristic of different systems where the oscillatory behavior of the quantal phases promotes some (partial) destructive interference as obtained in Eq.~\eqref{Cuns}.

\emph{Stable charging via STIRAP --} An alternative strategy for our quantum battery is through the eigenstate $\ket{E_{0}(t)}$, the so-called \textit{dark state}~\cite{Fleischhauer:05}. Again, the system initial state needs to be $\ket{\varepsilon_{1}}$, then from Eqs.~\eqref{EigenStates} we can see that to follow the dark state path we need to set the initial values of the parameters $\Omega_{12}(0) = 0$ and $\Omega_{32}(0) \neq  0$. Thus,
\begin{equation}
	\ket{\psi(0)} = \ket{E_{0}(0)} = \ket{\varepsilon_{1}} \text{ , } \label{AdCstb}
\end{equation}
By letting the system undergo adiabatic dynamics, the evolved state becomes
\begin{equation}
	\ket{\psi^{\text{ad}}(t)} = \ket{E_{0}(t)} = \frac{1}{\sqrt{2}} \left[
	\frac{\Omega_{23}(t)}{\Delta(t)} \ket{\varepsilon_{1}} -
	\frac{\Omega_{12}(t)}{\Delta(t)} \ket{\varepsilon_{3}}
	\right]  \text{ , } \label{EqAdBatteryStable}
\end{equation}
with no quantal phase accompanying the evolution, because the adiabatic phase is null, once we have $E_{0}(t)=0$ and $\interpro{E_{0}(t)}{\dot{E}_{0}(t)} = 0$. The ergotropy is then
\begin{eqnarray}
	\Ecalb(t) = \hbar \frac{\omega_{3}\Omega_{12}^2(t) + \omega_{1}\Omega_{23}^2(t)}{\Delta^2(t)} - \hbar\omega_{1} \label{Cstb} \text{ , }
\end{eqnarray}
which achieves its maximum value when $\Omega_{12}(\tau_{\text{c}})\neq 0$ and $\Omega_{23}(\tau_{\text{c}}) = 0$, without any assumption about the value of $\tau_{\text{c}}$, in stark contrast to the unstable charging process. Clearly to get a fully charged battery both initial and final conditions on the parameters $\Omega_{12}(t)$ and $\Omega_{32}(t)$ are required. However, by exploiting the STIRAP protocol we can avoid the oscillatory behavior otherwise present due to accumulated quantal phases.

A second important physical lesson of these results is associated with the intrinsic characteristics of dark states. Unlike the other eigenstates of the Hamiltonian driving the system, the dark state does not allow population inversion even when we put the fields on resonance with the system. In fact, from the evolved state given in Eq.~\eqref{EqAdBatteryStable} we can see that any change in $\ket{\psi^{\text{ad}}(t)}$ just depends on the changes in parameters $\Omega_{12}(t)$ and $\Omega_{32}(t)$ associated with intensity of the fields used to drive the system. This property allows us to design a robust battery that does not suffer from spontaneous discharging if the control fields are not switched off after the charging process. Thus, the emergence of the dark state further highlights the relevance of three-level (or $N$-level) systems over the more commonly considered two-level qubits in designing stable quantum batteries~\cite{CampbellBatteries}.

\subsubsection*{Three-level superconducting transmon quantum batteries}

In previous section we showed as the stability problem of quantum batteries can be suitably solved with a three-level quantum system, where a spontaneous discharging (dashed red curve in Fig.~\ref{FigSchemeQuantumBattery}{\color{blue}a}) arises due to coherent oscillations of the system~\cite{Andolina:18,Ferraro:18,Le:18,Andolina:19,Zhang:18}. In fact, in classical batteries we have an stable charging process, in sense that after a ``charge time'' $\tau_{\text{c}}$ the charge in battery becomes constant in time (blue continuum curve in Fig.~\ref{FigSchemeQuantumBattery}{\color{blue}a}). Therefore, such a behavior is highly wished for developing robust (stable) quantum batteries. In particular, the three-level battery introduced here can be implemented in several physical systems in which we can encode a ladder three-level systems, as sketched in Fig.~\ref{FigSchemeQuantumBattery}{\color{blue}b} shown like trapped ion systems and superconducting circuit QED system~\cite{devoret2012,wendin2017,gu2017}, for example. Here we propose that superconducting transmon qubits are particularly suitable candidates~\cite{Koch:07,you2007,Schreier08,barends2013}, the ladder-type three-level system is schematically presented in Fig.~\ref{FigSchemeQuantumBattery}{\color{blue}c}. These qubits are fabricated (typically planar) chips and consist of two Josephson junctions, with capacitance $C_{\text{J}}$ and energy $E_{\text{J}}$, that are shunted by a large capacitor with capacitance $C_{\text{B}}$. While the quantized circuit corresponding to a standard $LC$ circuit (capacitance-inductance) will result in a harmonic oscillator, the Josephson junction functions as a non-linear inductor and distorts the spectrum of the oscillator away from the equally-spaced one. The great success of the transmon qubit derives from its large ratio of Josephson to capacitative energy $E_{\text{J}}/E_{\text{C}}$, where $E_{\text{C}} = e^2 /2C$ with $C = C_{\text{J}} +C_{\text{B}} + C_{\text{g}}$. As discussed by Koch {\it et al.}~\cite{Koch:07}, a large $E_{\text{J}}/E_{\text{C}}$ renders the system very insensitive to charge noise, hence enhancing the lifetime. There is a catch however: the anisotropy in the spectrum also scales with $E_{\text{J}}/E_{\text{C}}$ and goes down with increasing ratio. Recall that the anisotropy needs to be significant to ensure that we are away from the equally-spaced case and can address individual levels to produce well-defined qubits. Fortunately, the anisotropy scales as a power law, while the noise sensitivity depends exponentially on this ratio. Hence, one may find a ``sweet spot'' with good anisotropy and long lifetimes. In practice, one typically aims for $E_{\text{J}}/E_{\text{C}} \sim  80-100$~\cite{Koch:07}. Typical energy level splittings in transmon qubits are of the order of 10~GHz, while the anisotropies are of the order of 100~MHz, and while this is much smaller than the splitting, modern microwave techniques are more than adequate to address such levels~\cite{gu2017}.

\begin{figure}[t!]
	\centering
	\includegraphics[scale=0.55]{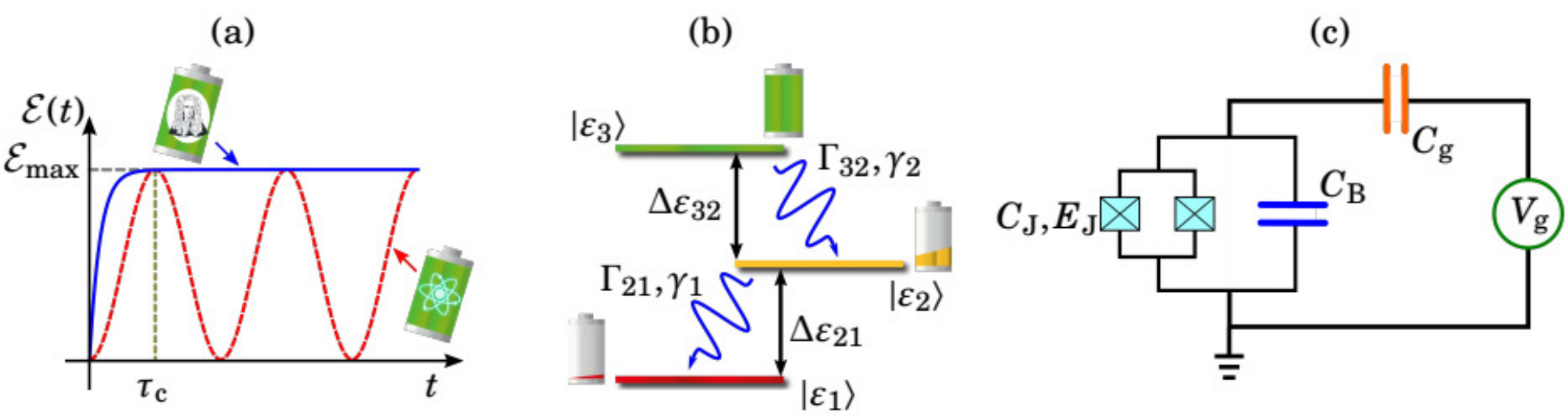}
	\caption{({\color{blue}a}) Schematic representation of the stored charge in classical and quantum batteries. While classical batteries present a charge stability after the charging time $\tau_{\text{c}}$, quantum batteries typically have oscillatory behavior. ({\color{blue}b}) Energy level structure of the ladder-type three-level system used here, where for a transmon qutrit the gap between energy levels read as $\Delta\varepsilon_{21} \sim 2\sqrt{2 E_{\text{J}}E_{\text{C}}}$ and $\Delta\varepsilon_{32} \sim \Delta\varepsilon_{21} - E_{\text{C}}$. Therefore, the maximum stored energy is given by means of the  energies $E_{\text{J}}$ and $E_{\text{C}}$ as $\Ecalb_{\text{max}} = \varepsilon_{3} - \varepsilon_{1} = \Delta\varepsilon_{32} + \Delta\varepsilon_{21} = 4\sqrt{2 E_{\text{J}}E_{\text{C}}} - E_{\text{C}}$. ({\color{blue}c}) The sketch of a superconducting transmon qubit circuit, where a Josephson junction of capacitance $C_{\text{J}}$ is shunted by a large capacitance $C_{\text{B}}$.}
	\label{FigSchemeQuantumBattery}
\end{figure}

The previous results have focused on an idealized setting where our quantum battery does not suffer any environmentally induced spoiling effects. However, because we want to propose a realistic implementation of our device, we need to deal with the performance and stability of our quantum battery when the most relevant environmental effects are taken into consideration (see Refs.~\cite{PRA_Ivanov,Vitanov:17} for other studies exploring decoherence effects on STIRAP protocols). In particular we will consider a dynamics governed by a Lindblad master equation~\cite{Lindblad:76} which takes into account both relaxation and dephasing phenomena, corresponding to the most natural non-unitary effects in superconducting circuits~\cite{Li:11,Martinis:03,JPCS_Li}, which we propose as a natural platform to realize our battery. The dynamics of the system is then given by the open-system dynamics equation
\begin{equation}
	\dot{\rho}_{\text{int}}(t) = \frac{1}{i\hbar} [H_{\text{int}}(t),\rho_{\text{int}}(t)] + \Lcal_{\text{rel}}[\rho_{\text{int}}(t)]+ \Lcal_{\text{dep}}[\rho_{\text{int}}(t)] \text{ , } \label{LindEq}
\end{equation}
where the superoperators $\Lcal_{\text{rel}}[\bullet]$ and $\Lcal_{\text{dep}}[\bullet]$ describe the relaxation and dephasing phenomena, respectively, and can be written as
\begin{subequations}
	\label{RelTerm}
	\begin{align}
		\Lcal_{\text{rel}}[\bullet] &= \sum_{k\neq j}\Gamma_{kj} \left[\sigma_{kj}\bullet\sigma_{jk} - \frac{1}{2}\{\sigma_{kk},\bullet\} \right] \text{ , } \\
		\Lcal_{\text{dep}}[\bullet] &= \sum_{j=2,3}\gamma_{j} \left[\sigma_{jj}\bullet\sigma_{jj} - \frac{1}{2}\{\sigma_{jj},\bullet\} \right] \text{ , }
	\end{align}
\end{subequations}
where $\sigma_{kj}=\ket{\varepsilon_{k}}\bra{\varepsilon_{j}}$ and $\Gamma_{kj}=\Gamma_{jk}$. Building on the general definitions we have introduced in Eqs.~\eqref{RelTerm}, we would like to clarify two important points on the characteristics of noise we consider in the rest of this work. First, the relaxation processes we consider are only the sequential decays, meaning, $\ket{\varepsilon_{3}}  \rightarrow  \ket{\varepsilon_{2}}$ and $\ket{\varepsilon_{2}}  \rightarrow  \ket{\varepsilon_{1}}$ characterized by the rates $\Gamma_{32}$ and $\Gamma_{21}$, respectively. We do not take into account the nonsequential decay mechanism which is responsible from inducing transitions like $\ket{\varepsilon_{3}}  \rightarrow  \ket{\varepsilon_{1}}$, since the rate associated with such a process, $\Gamma_{31}$, is an order of magnitude smaller for transmon qubits~\cite{Peterer:15} and, as we shall discuss later, one can take it as a negligible parameter in our discussion. Second, $\gamma_2$ and $\gamma_3$ determine the rates at which the superpositions between $\ket{\varepsilon_{1}}$ and $\ket{\varepsilon_{2}}$, and $\ket{\varepsilon_{1}}$ and $\ket{\varepsilon_{3}}$ are suppressed, respectively. Together, they also contribute to the dephasing of superpositions between $\ket{\varepsilon_{2}}$ and $\ket{\varepsilon_{3}}$. However, due to the nature of the STIRAP protocol with the dark state, the only dephasing rate that has an impact on the charging protocol is $\gamma_3$, since the state $\ket{\varepsilon_{2}}$ is never populated during the process. Clearly, there are a number of timescales and relevant noise parameters to fix in order to quantitatively assess the effectiveness of our protocol. In what follows, we will focus on those parameter ranges most relevant for transmon qubits, which provide a promising candidate architecture. Nevertheless, we expect the qualitative behavior discussed to hold in other relevant settings.

\emph{Stable charging under dissipation and decoherence --} As a first important question, we examine the effect that environmental spoiling mechanisms have on the charging process itself. To this end we consider $\Omega_{12}(t) = \Omega_{0} f(t)$ and $\Omega_{23}(t) = \Omega_{0} [1-f(t)]$, where $f(t)$ is a function which satisfies $f(0) = 0$ and $f(\tau) = 1$, such that the boundary conditions on $\Omega_{12}(t)$ and $\Omega_{23}(t)$ are satisfied and we realise the stable charging via STIRAP. We can readily examine the behavior of the ergotropy as a function of the dimensionless parameter $\Omega_{0}\tau$. In addition to the ergotropy, equally important is assessing the charging power of quantum batteries~\cite{Campaioli:18}, which we define as 
\begin{equation}
	\Pcalb(\tau) = \frac{\Ecalb(\tau)}{\tau} \text{ , } \label{CP}
\end{equation}
where $\Ecalb(\tau)$ needs to be understood as the amount of energy transferred to the battery from external fields during the time interval $\tau$. In order to make a meaningful comparison, we rescale $\Pcalb(\tau)$ with the maximal attainable power $\Pcalb_\text{max}$. As argued by Binder {\it et al}, it is physically reasonable to bound the amount of energy available for a given charging protocol~\cite{Binder:15}. The most efficient charging process therefore corresponds to one which needs only enough energy to fully charge the battery, in our case $\hbar(\omega_3 - \omega_1)$. We can then exploit the quantum speed limit~\cite{DeffnerReview} to determine the minimum time, $\tau_\text{QSL}$, needed for some time-independent process to charge the battery and thus corresponds to the most powerful charging obtainable, under this energy constraint~\cite{Binder:15}. Thus, $\Pcalb_\text{max}=\pi/(2\hbar(\omega_3 - \omega_1))$. We fix the functional form of $f(t)$ to be a simple linear ramp, $f(t) = t / \tau$. Naturally, one could consider any other ramp that satisfies the boundary conditions, however, as STIRAP is an adiabatic protocol, the means by which one manipulates the system is of little consequence. While from one ramp to another some qualitative differences may emerge in the behavior as one approaches the adiabatic regime, the quantitative features outlined in what follows persist.

\begin{figure}[t!]
	\centering
	\includegraphics[scale=0.55]{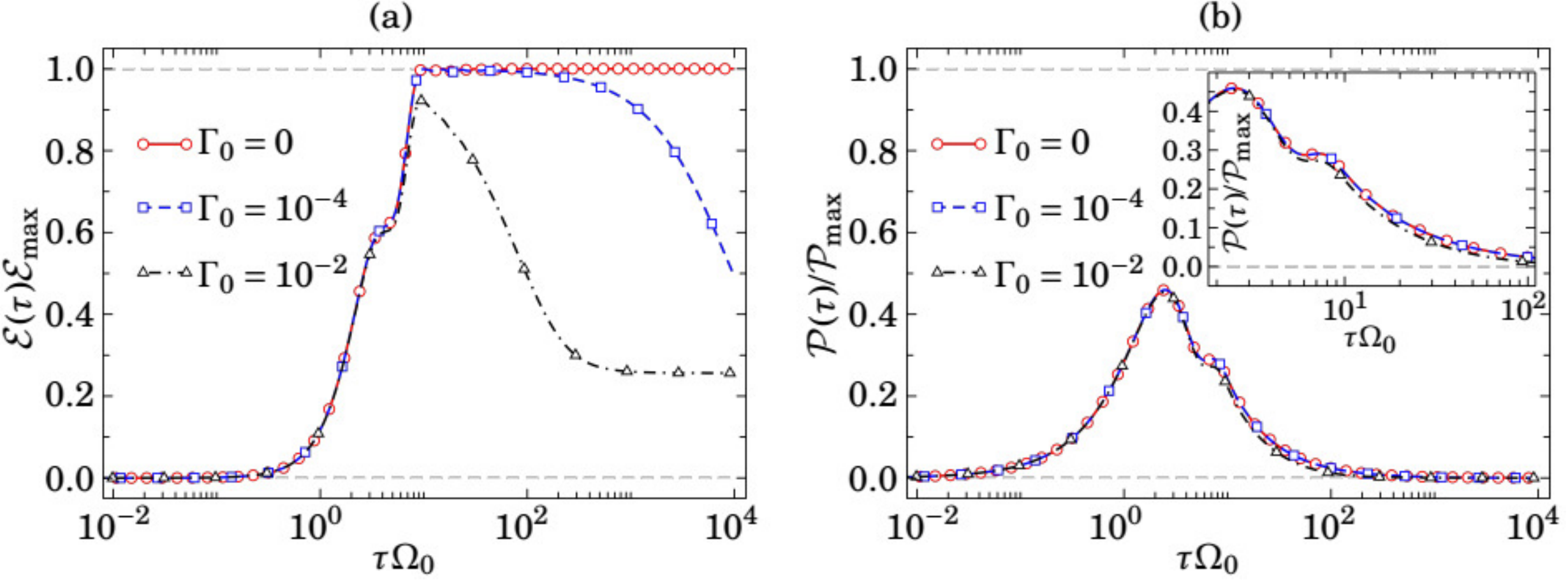}
	\caption{({\color{blue}a}) Ergotropy and ({\color{blue}b}) power for a linear ramp including the effects of both relaxation and dephasing. We chose the energy spectrum of our three-level system as $\omega_1 = 0$, $\omega_2 = \omega_{0}$ and $\omega_3 = 1.95\omega_{0}$ in order to account for the anharmonicity of the energy gaps in transmon qubits, resulting in $\Ecalb_{\text{max}} = 1.95\hbar\omega_{0}$. We set the rates characterizing the noise as $\Gamma_{32} = 2\Gamma_{21}$, $\gamma_2 = \Gamma_{21}$, and $\gamma_3 = \Gamma_{32} = 2\Gamma_{21}$, and $\Gamma_{21}=\Gamma_{0}\Omega_{0}$, again to match the state-of-the-art parameters measured for transmon qubits.}
	\label{FigErgoPowerQB}
\end{figure}

Therefore, by letting the system evolves under Eq.~\eqref{LindEq}, the instantaneous charge and charging power are shown in Fig.~\ref{FigErgoPowerQB}. The Fig.~\ref{FigErgoPowerQB}{\color{blue}a} shows the instantaneous charge (ergotropy) as funtion of the parameter $\Omega_{0}\tau$ for different values of the decohering parameters which we specify in details in the caption of the figure. Such parameters are considered in line with the transmon implementation we propose here. The asymptotic curve close to $1$ corresponds to the free-decohering situation. As already discussed in previous section, the stabilizing performance of three-level batteries adiabatically driven provide a very stable charging process. One can see that fast evolution gives a vanishingly small ergotropy as these timescales are far from the adiabatic limit, therefore the STIRAP protocol is ineffective and no population inversion can be observed. As we increase $\tau$, in line with the adiabatic theorem~\cite{Sarandy:05-1,Amin:09}, the maximum ergotropy grows and we achieve a fully charged state when the STIRAP protocol is faithfully implemented. We clearly see that in the case of no decoherence the charged state is perfectly stable for $\tau \gtrsim 10/\Omega_{0}$. Conversely, the ergotropy is affected when the decoherence effects become more significant. For small values of decoherence (blue, dashed curve) the STIRAP protocol is quite robust and only becomes significantly adversely affected when the times scales are an order of magnitude slower than strictly necessary. As the environmental effects are increased we find that achieving a fully charged battery is not possible, however, we can identify a range of values for $\tau$ for which we get the optimal stored charge, as we can see in the black, dotted curve in Fig.~\ref{FigErgoPowerQB}{\color{blue}a}. Thus, a given $\tau$ sets the speed of the adiabatic evolution and we can see that an optimality criterion between total evolution time and decoherence effects appears.

The power is shown in Fig.~\ref{FigErgoPowerQB}{\color{blue}b}, where we rescaled it using the maximum power value $\Pcalb_{\text{max}}$, which is only weakly affected for reasonable environmental parameters. Naturally, for fast protocols where the battery fails to charge the resulting power is negligible. As $\tau$ increases the charging power also increases until it reaches a maximum of $ \sim 0.5\Pcalb_\text{max}$. We are considering the optimal power $\Pcalb_{\text{max}}$ as provided by the quantum speed limit, then it is natural to see the discrepancy between $\Pcalb_\text{max}$ and the maximum power for an unitary dynamics. However, it is interesting to note that the maximum power does not correspond to when the battery is fully charged. By comparing Figs.~\ref{FigErgoPowerQB}{\color{blue}a} and~\ref{FigErgoPowerQB}{\color{blue}b} we see that, for all the considered noise values, the maximum ergotropy is achieved for $\tau \sim 10/\Omega_{0}$, which corresponds to $\Pcalb \sim 0.25\Pcalb_\text{max}$. Thus we find that there is a trade-off between the maximum achievable ergotropy and the power when stably charging a quantum battery via STIRAP. A promising method to boost the power of our protocol would be to employ so-called shortcuts-to-adiabaticity~\cite{Torrontegui:13,Odelin:19}. However, these techniques invariably come at the cost of some additional resources which will affect the resulting efficiency and power, but nevertheless may prove useful to ensure both fast and stable quantum batteries.

As already mentioned, it is important to stress that the results shown in Fig.~\ref{FigErgoPowerQB} do not take into account decay transitions between $\ket{\varepsilon_{3}}  \rightarrow  \ket{\varepsilon_{1}}$. This assumption is justified since, in case where no noise mechanism acts on our system, the adiabatic behavior is achieved for $\tau\Omega_{0} \sim 10$. Notice that the highest decay rate we consider is $\Gamma_{21}/\Omega_{0} \leq 10^{-2}$, which allows us to write $(\Gamma_{21}\tau/\tau\Omega_{0})\leq 10^{-2}$ or $\Gamma_{21}\tau \leq 10^{-2}\tau\Omega_{0}$. Replacing the parameters with their values in the adiabatic limit we find $\Gamma_{21}\tau  \lesssim  10^{-2}10 = 10^{-1}$. Thus, $\Gamma_{21}$ is indeed relevant to our discussion. As experimentally shown~\cite{Peterer:15}, the timescale $\tau_{31}$ for the process $\ket{\varepsilon_{3}}  \rightarrow  \ket{\varepsilon_{1}}$ is $\tau_{31}   \sim   10^2 \tau_{21}$, where $\tau_{21}$ is the timescale of the process $\ket{\varepsilon_{2}}  \rightarrow  \ket{\varepsilon_{1}}$. Therefore we can write $\Gamma_{31} \sim 10^{-2} \Gamma_{21}$. From this we have that $\Gamma_{31}\tau \sim 10^{-2} \Gamma_{21}\tau \lesssim 10^{-3}$, which allows us to conclude that the relaxation due to non-sequential rates, i.e. $\Gamma_{31}$, are negligible for the adiabatic time scales considered in this work.

\subsection{Stable and power switchable two-level quantum batteries}

In previous section we show how the stability of quantum batteries can be obtained by introducing a three-level quantum battery driven by adiabatic fields. There is not doubt that the additional energy level is very helpful for the results, since we drive the system through a dark-state, but here we will investigate the role of the adiabatic dynamics in a general context of quantum batteries. To this end, we introduce here the notion of power operator, a new quantum observable that develops a important role in theory of quantum batteries, as we shall see.

\subsubsection*{The power operator}

The derivation of the power operator takes into account the following physical model. Consider a composite quantum system described by a Hilbert space $\Bcal \otimes \Acal$, where $\Bcal$ is associated with a QB and $\Acal$ refers to an auxiliary system, which will play the role of a CH. Our choice is just a convenience, our results are suitably applicable to cases where the Hilbert space $\Bcal$ is replaced by a quantum charger and $\Acal$ becomes a fully discharged quantum battery. Therefore, our approach is useful for both charge and discharge process of quantum batteries. However, here we will consider the discharging process. The dynamics for the composite system is driven by the Hamiltonian $H(t) = H_{0} + H_{\text{int}}(t)$, where $H_{0}$ is the individual energy contribution for both QB and CH subsystems, while $H_{\text{int}}(t)$ is the corresponding interaction Hamiltonian. 
The Hamiltonian $H_{0}$ can be written as $H_{0} = H^{\Bcal}_{0} + H^{\Acal}_{0}$, with $H^{\Bcal}_{0}$ being the battery inner Hamiltonian and $H^{\Acal}_{0}$ the auxiliary free Hamiltonian. In general, the inner parts of the QB may interact with each other, with this inner interaction included in Hamiltonian $H^{\Bcal}_{0}$.

As already discussed, the success of the energy transfer from the QB to the CH can be measured by the {\textit{ergotropy}, i.e. the maximum amount of work which can be further extracted from the CH. Then, here we identify $\sigma_{\rho} = \rho^{\Acal}_{\text{gs}}$, where $\rho^{\Acal}_{\text{gs}}$ is the ground state of $H^{\Acal}_{0}$, because $\Acal$ is our system of interest. The contribution $\trs{H\sigma_{\rho}}$ to the ergotropy is the empty energy state, which is here given by $E^{\Acal}_{\text{emp}} = \trs{H^{\Acal}_{0}\rho^{\Acal}_{\text{gs}}}$. Hence, the ergotropy of the CH reads $\Ecalb^{\Acal}_{0} = \tr{\rho_{0}^{\Acal}H^{\Acal}_{0}} - E^{\Acal}_{\text{emp}}$, where $\rho_{0}^{\Acal}$ is the CH initial state. Notice that, by starting the evolution in the ground state of the CH, we have $\Ecal^{\Acal}_{0}=0$. However, by coupling the QB to the CH, an energy transfer process will take place, which will be governed by Schr\"odinger equation. It will be useful here to adopt the interaction picture, where the density operator $\rho_{\text{int}}(t)$ is driven by $i\hbar\dot{\rho}_{\text{int}}(t) = [H_{\text{int}}(t),\rho_{\text{int}}(t)]$, with the dot symbol denoting time derivative throughout the paper. The amount of energy $C(t)$ available at instant $t$ is obtained from the instantaneous ergotropy, reading 
	\begin{equation}
		C(t) = \Ecalb^{\Acal}_{0}(t) = \tr{H_{0}^{\Acal}\rho_{\text{int}}(t)} - E^{\Acal}_{\text{emp}} \text{ . }
		\label{Ct}
	\end{equation}
	Eq.~(\ref{Ct}) provides the amount of energy transferred from the QB to the CH, which can be further used as a source of work. Throughout our discussion we are assuming that the CH final state is a pure state, otherwise the above equation cannot be considered as ergotropy~\cite{Santos:20c}. Now, the starting point for the QB proposal is to consider the instantaneous power, which is defined as $\Pcalb(t) = \dot{C}(t)$. Notice that $\Pcalb(t)$ quantifies the transfer rate at which energy is available at the CH. Then, from definition of the power operator, we get
	\begin{equation}
		\Pcalb(t) = \dot{C}(t) = \tr{H_{0}^{\Acal}\dot{\rho}(t)} = \frac{1}{i\hbar}\tr{H_{0}^{\Acal}[H(t),\rho(t)]} \text{ , }
	\end{equation}
	where we used Schr\"odinger equation for $\rho(t)$. Thus, we find
	\begin{align}
		\Pcalb(t) &= \frac{1}{i\hbar}\tr{H_{0}^{\Acal}\left[H(t)\rho(t)-\rho(t)H(t)\right]} = \frac{1}{i\hbar}\tr{H_{0}^{\Acal}H(t)\rho(t)-H_{0}^{\Acal}\rho(t)H(t)} \nonumber \\
		&= \frac{1}{i\hbar}\tr{H_{0}^{\Acal}H(t)\rho(t)-H(t)H_{0}^{\Acal}\rho(t)} = \frac{1}{i\hbar}\tr{ [H_{0}^{\Acal},H_{\text{C}}(t)]\rho(t)} \text{ , }
	\end{align}
	where we used that $[H_{0}^{\Acal},H(t)] = [H_{0}^{\Acal},H_{0} + H(t)] = [H_{0}^{\Acal},H_{\text{C}}(t)]$. Then, by defining
	\begin{equation}
		\hat{\Pcalb}(t) = \frac{1}{i\hbar}[H_{0}^{\Acal},H_{\text{C}}(t)] \text{ , } \label{EqPowerOperator}
	\end{equation}
	we conclude that
	\begin{equation}
		\Pcalb(t) = \tr{ \hat{\Pcalb}(t)\rho(t)} \text{ . }
	\end{equation}
	
	For convenience, we need to consider the system dynamics in interaction picture so that the internal Hamiltonian $H_{0}$ does not contribute to the dynamics and for the charging process (see e.g. the section~\ref{SubSecStableThreeLevelAdQB}). Therefore, by considering Schr\"odinger equation in the interaction picture, we have
	\begin{equation}
		\dot{\rho}_{\text{int}}(t) = \frac{1}{i\hbar} [H_{\text{int}}(t),\rho_{\text{int}}(t)] \text{ , } \label{VonNeuInt}
	\end{equation}
	where the dot symbol denotes time derivative, $H_{\text{int}}(t) = \Zcal^{\dagger}(t) H_{\text{C}}(t) \Zcal(t)$, and 
	$\rho_{\text{int}}(t) = \Zcal^{\dagger}(t)\rho(t)\Zcal(t)$, with $\Zcal^{\dagger}(t) = e^{iH_{0}t/\hbar}$. 
	It is important to mention that the relevant quantities to be computed here are independent of the frame 
	used to study the dynamics. In fact, by computing the ergotropy 
	in the interaction picture, we get
	\begin{align}
		C_{\text{int}}(t) &= \tr{H_{0}^{\Acal}\rho_{\text{int}}(t)} - E^{\Acal}_{\text{emp}} = \tr{\Zcal^{\dagger}(t)H_{0}^{\Acal}\rho(t)\Zcal(t)} - E^{\Acal}_{\text{emp}} \nonumber \\
		&= \tr{H_{0}^{\Acal}\rho(t)\Zcal(t)\Zcal^{\dagger}(t)} - E^{\Acal}_{\text{emp}} 
		= \tr{H_{0}^{\Acal}\rho(t)} - E^{\Acal}_{\text{emp}}  = C(t) \text{ . }
	\end{align}
	
	By following the same calculation, in the interaction picture, we can use the Eq.~\eqref{VonNeuInt} to find that 
	\begin{equation}
		\Pcalb_{\text{int}}(t) = \tr{ \hat{\Pcalb}_{\text{int}}(t)\rho_{\text{int}}(t)} = \Pcalb(t) \text{ , }
	\end{equation}
	where $\Pcalb_{\text{int}}(t)$ is the power in interaction picture, which reads $\hat{\Pcalb}_{\text{int}}(t) = [H_{0}^{\Acal},H_{\text{int}}(t)]/(i\hbar)$. It shows that no information is lost if we consider our analysis from the interaction picture, then whenever it is convenient we will consider our discussion from interaction picture.
	
	In conclusion, it is possible to introduce a Hermitian operator able to compute the expected value of the power, with the above equation showing that the eigenvalues of $\hat{\Pcalb}(t)$ provide the expected values of a measurement of such physical quantity. As a first application of the observable $\hat{\Pcalb}(t)$, we can show that the \textit{adiabatic dynamics} allows for a {\it stable energy transfer to the CH}. In fact, by defining that the system QB-CH undergoes an adiabatic energy transfer process when the composite state $\ket{\psi(t)}$ evolves adiabatically under the Hamiltonian $H(t)$ that drives the system. Let the system be initialized in the state $\ket{\psi(0)} = \sum_{n} c_{n} \ket{E_{n}(0)}$, where $\{\ket{E_{n}(0)}\}$ is the set of instantaneous eigenstates of $H(0)$. 
	Then, in the adiabatic regime, we have $\ket{\psi_{\text{ad}}(t)} = \sum_{n} c_{n} e^{i\theta^{\text{ad}}_{n}(t)}\ket{E_{n}(t)}$, where $\theta^{\text{ad}}_{n}(t)$ are the adiabatic phases accompanying the adiabatic dynamics~\cite{Berry:84}. 
	Therefore
	\begin{equation}
		\Pcalb_{\text{ad}}(t) = \frac{1}{i\hbar} \sum\nolimits_{n,m} c_{n} c^{\ast}_{m} e^{i\Delta^{\text{ad}}_{nm}(t)} \bra{E_{n}(t)} [H_{0}^{\Acal},H(t)] \ket{E_{m}(t)} \text{ , }
	\end{equation}
	with $\Delta^{\text{ad}}_{nm}(t) = \theta^{\text{ad}}_{n}(t) - \theta^{\text{ad}}_{m}(t)$ and where we have used $[H_{0}^{\Acal},H(t)] = [H_{0}^{\Acal}, H_{\text{C}}(t)]$. Thus, we get
	\begin{equation}
		\Pcalb_{\text{ad}}(t) = \frac{1}{i\hbar} \sum\nolimits_{n,m} c_{n} c^{\ast}_{m} e^{i\Delta^{\text{ad}}_{nm}(t)} \left[E_{m}(t)-E_{n}(t)\right]\bra{E_{n}(t)}H_{0}^{\Acal}\ket{E_{m}(t)} 
		\text{ . } \label{EqAdPowerGen}
	\end{equation}
	Now, let us consider the case where our system starts in a single eigenstate of the Hamiltonian $H(0)$, for example the $k$-th eigenstate of $H(0)$. By doing that, we have $c_{n} = \delta_{nk}$ and the above equation becomes
	\begin{equation}
		\Pcalb_{\text{ad}}(t) = \frac{1}{i\hbar} \sum\nolimits_{n,m} \delta_{nk} \delta_{mk} e^{i\Delta^{\text{ad}}_{nm}(t)} \left[E_{m}(t)-E_{n}(t)\right]\bra{E_{n}(t)}H_{0}^{\Acal}\ket{E_{m}(t)} = 0\text{ . } 
	\end{equation}
	
	Therefore, since the stability condition for an interval $[t_{1},t_{2}]$ can be mathematically written as $\Pcalb(t) = 0$ for any $t \in [t_{1},t_{2}]$, this result allows us to conclude that adiabatic dynamics is a strategy to get a stable charging process in quantum batteries because $\Pcalb_{\text{ad}}(t\geq \tau_{\text{ad}}) = 0$, where $\tau_{\text{ad}}$ is the required time to achieve the adiabatic regime. The same calculation as be done for the situation where we have a degenerate eigenspace of the Hamiltonian, where get the same result.
	
	As a second application of the observable $\hat{\Pcalb}(t)$, it is introduced a general {\it energy trapping mechanism}~\cite{Santos:20c}. To begin with, assume that $H_{\text{int}}(t)$ is a constant Hamiltonian, 
	namely, $H_{\text{int}}(t)=H_{\text{int}}$, which implies that $\hat{\Pcalb}(t)=\hat{\Pcalb}$ is also a constant operator. Then, notice that, if $\hat{\Pcalb}$ and $H_{\text{int}}$ commute, 
	we have that  $\hat{\Pcalb}$ is a constant of motion. Assume now that there is a common eigenstate $\ket{p_{0}}$ of $\hat{\Pcalb}$ and $H_{\text{int}}$ with power eigenvalue $p_{0} = 0$. 
	Therefore, by initially preparing the quantum system at the quantum state $\ket{\psi(0)} = \ket{p_{0}}$, no amount of energy can be extracted from the QB. In fact, let us consider the 
	initial state of the system as $\ket{\psi(0)} = \ket{p_{0}}$, where we have $\hat{\Pcalb} \ket{p_{0}} = 0$ and $H_{\text{int}}\ket{p_{0}} = E_{p0}\ket{p_{0}}$. The evolved state reads 
	$\ket{\psi(t)} = e^{-iE_{p0}t/\hbar}\ket{p_{0}}$. Therefore, the instantaneous power is $\Pcalb(t) = \bra{\psi(t)} \hat{\Pcalb} \ket{\psi(t)} = \bra{p_{0}} \hat{\Pcalb} \ket{p_{0}} = 0$. 
	Hence, no amount of energy can be introduced or extracted from the system driven by $H_{\text{int}}$. Remarkably, this conclusion holds even if $\hat{\Pcalb}$ and $H_{\text{int}}$ 
	does {\it not} commute, as long as they share at least a single eigenstate with power eigenvalue $p_{0} = 0$. This less restrictive situation is indeed even more interesting and is it explored in Ref.~\cite{Santos:20c}. Indeed, the the trapping mechanism opens perspectives for a class of QBs in which the energy is not transferred even when the battery is connected 
	to the CH. 
	
	As we shall see, we can also develop an activation mechanism, where we can control the exact time to discharge the energy to the CH. In this thesis we will focus on adiabatic performance of such devices and the trapping mechanism will be considered in this framework. However in Ref.~\cite{Santos:20c} we study the trapping mechanism in terms of time-independent fields, where a Bell quantum battery is proposed because we use Bell states to achieve optimal performance of the battery. 
	
	\subsubsection*{Adiabatic model for stable QBs}
	
	As already introduced, phenomenon known as \textit{spontaneous discharge} is not desired, but in our case we can circumvents this problem if we eliminate the energy oscillates between the QB and the CH after the QB starts its energy distribution. In turn, there will be energy revivals in the QB for specific times, which prevents a stable battery discharge. Here, inspired by Eq.~\eqref{EqAdPowerGen}, we can provide a general approach to stabilizing the charging processes of QBs. The control of the battery discharge is again provided by a time-dependent Hamiltonian $H(t)$. It is assumed that $H(t)$ can be turned on and off with high control. As we turn it on, it is also able to ensure stability in the discharge process. As in the case of the Bell QB discussed in Ref.~\cite{Santos:20c}, we begin by taking the initial state of the whole system as $\ket{\psi(0)}=\ket{\beta_{11}}_{\Bcal} \ket{0}_{\Acal}$. Then, let us consider a suitable three-qubit time-dependent Hamiltonian, which reads (in interaction picture)
	\begin{eqnarray}
		H(t) = [1-f(t)] H_{\text{ini}} + [1-f(t)]f(t) H_{\text{mid}} + f(t) H_{\text{fin}} \text{ . } \label{EqHAdTwoCelQB}
	\end{eqnarray}
	
	Each part of the above Hamiltonian is important to the protocol. The initial Hamiltonian $H_{\text{ini}} =  \hbar J (\sigma_{x}^{\Bcal_1}\sigma_{x}^{\Bcal_2} + \sigma_{y}^{\Bcal_1}\sigma_{y}^{\Bcal_2})$ sets the initial battery state $\ket{\psi(0)}$ as its ground state, so that the desired state $\ket{00}_{\Bcal} \ket{1}_{\Acal}$ will be the system at the end of the evolution if we choose the final Hamiltonian $H_{\text{fin}}$ as $H_{\text{fin}} =  \hbar J ( \sigma_{z}^{\Bcal_1}\sigma_{z}^{\Acal} + \sigma_{z}^{\Bcal_2}\sigma_{z}^{\Acal} )$. To end, we need to guarantee that the system will flow an adiabatic dynamics in interaction picture we introduce the middle Hamiltonian $H_{\text{mid}} =  \hbar J ( \sigma_{x}^{\Bcal_1}\sigma_{x}^{\Bcal_2} + \sigma_{y}^{\Bcal_1}\sigma_{y}^{\Bcal_2} + \sigma_{x}^{\Bcal_2}\sigma_{x}^{\Acal} + \sigma_{y}^{\Bcal_2}\sigma_{y}^{\Acal} )$, so that there are no level crossings for $H(t)$, which is assured by the middle Hamiltonian $H_{\text{mid}}$. This allows for the quantum evolution towards the target state through the adiabatic dynamics.
	
	\begin{figure}
		\centering
		\includegraphics[scale=0.6]{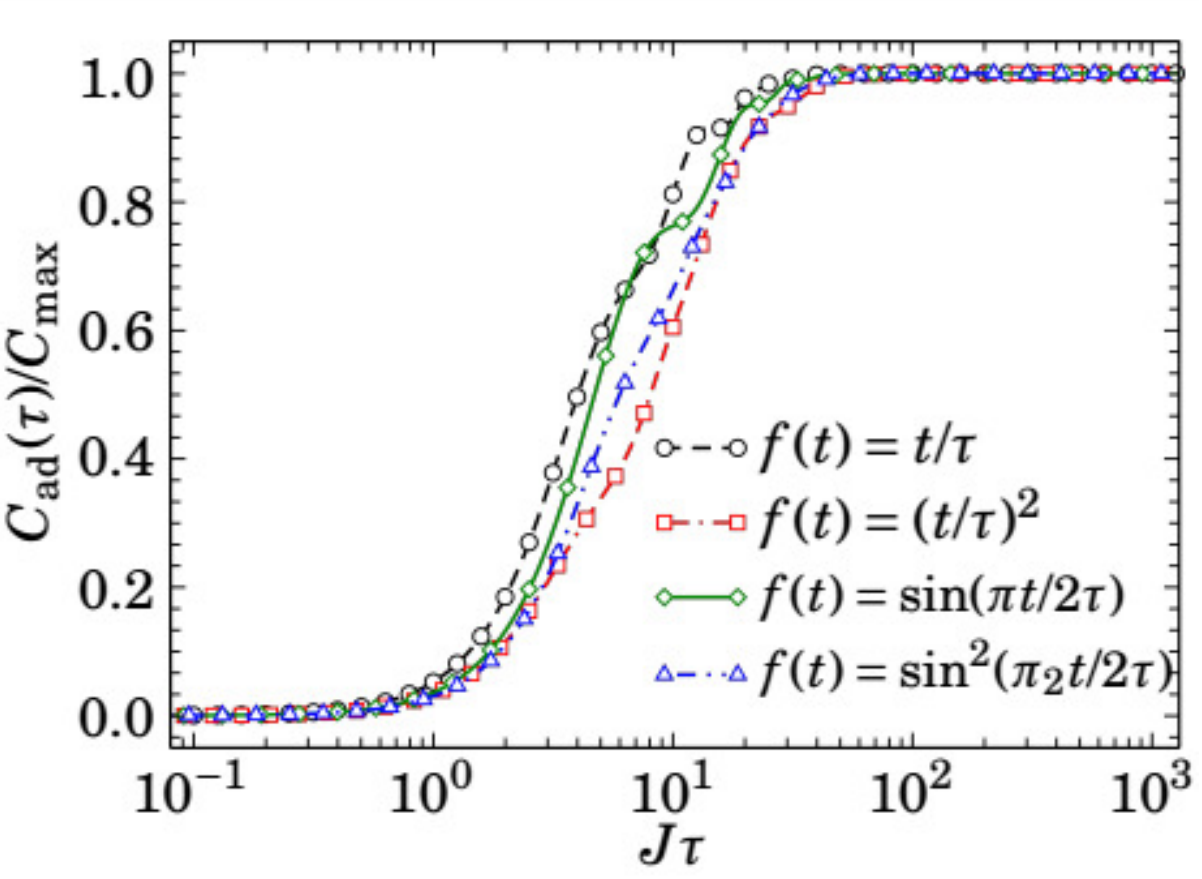}
		\caption{Amount of energy, as multiple of $C_{\text{max}} = 2\hbar \omega_{0}$, transferred from battery to consumption center through an adiabatic dynamics as function of the parameter $J\tau$. Here we considered different function interpolations $f(t)$.}\label{FigAdiabatic}
	\end{figure}
	
	Before all, it is important to deal with the problem of the double degenerate spectrum of the ground state of $H_{\text{fin}}$. Since both states $\ket{00}_{\Bcal} \ket{1}_{\Acal}$ (desired state) and $\ket{11}_{\Bcal} \ket{0}_{\Acal}$ are eigenvectors of $H_{\text{fin}}$ with same eigenvalue, then it remains to prove that the final state $\ket{00}_{\Bcal} \ket{1}_{\Acal}$ is achieved due to the symmetries of $H(t)$. Then, by defining the parity operator $\Pi_{z} = \sigma_{z}^{\Bcal_1}\sigma_{z}^{\Bcal_2}\sigma_{z}^{\Acal}$, we can verify that $[H(t),\Pi_{z}] = 0$ for all $t$. Therefore, $H(t)$ preserves the parity of the state throughout the evolution. Since we start the evolution at the state $\ket{\psi(0)}$, which has parity eigenvalue $-1$, the system will evolve to subsequent states with the same parity. Hence, transitions to $\ket{11}_{\Bcal} \ket{0}_{\Acal}$ at the end of the evolution are forbidden, so that all the energy of the QB can be adiabatically transferred to the CH, achieving the final state $\ket{00}_{\Bcal} \ket{1}_{\Acal}$. Then, by considering the CH system as a two-level system with internal Hamiltonian $H^{\Acal}_{0} = \hbar \omega_{0} \sigma_{z}^{\Acal}$, we illustrate in Fig.~\ref{FigAdiabatic} the instantaneous charge transferred to the CH. One shows that this transfer is indeed stable for several different choices of the interpolation function $f(t)$. In agreement with Eq.~(\ref{EqAdPowerGen}), the QB discharge process is stable and robust against variations of the interpolation scheme. 
	
	\newpage

\section{Conclusions of this chapter}

In this chapter we discussed the main results of this thesis concerning adiabatic dynamics in closed system. As a first result~\cite{Hu:19-b}, we studied a validation mechanism for the ACs already introduced in the literature. From a particular example of a two-level system driven by a time-dependent oscillating Hamiltonian, we showed that all the conditions widely used in literature are neither sufficient nor necessary in predicting the adiabatic dynamics of such evolution. Then, we have introduced a framework to validate such ACs for generic discrete multi-particle Hamitonians, which is
rather convenient to analyze quantum systems at resonance. Our strategy is based on the analysis of ACs in a suitably
designed non-inertial reference frame. By using this new strategy, we have both theoretically and experimentally shown that
the relevant ACs considered here, which include the traditional AC, are sufficient and necessary to
describe the adiabatic behavior of the system. In this case, sufficiency and
necessity are fundamentally obtained through the non-inertial frame map, with all the conditions failing to
point out the adiabatic behavior in the original reference frame. The experimental realization has been performed through a single trapped Ytterbium ion, with excellent agreement with the theoretical results. More
generally, the validation of ACs has been expanded to arbitrary Hamiltonians through the Theorems~\ref{TheoAdiab} and~\ref{TheoAdiabTI}, with
detailed conditions provided for a large class of Hamiltonians in the form of Eq.~\eqref{EqGenHAdOrig}. Therefore, instead of
looking for new approaches for defining ACs, we have introduced a mechanism based on “fictitious potentials”
(associated with non-inertial frames) to reveal a correct indication of ACs, both at resonance and of-resonant
situations. In addition, as a further example, we discussed how the validation mechanism through non-inertial
frames can be useful to describe the results presented in Ref.~\cite{Suter:08}, where the adiabatic dynamics of a single spin-$1/2$ in NMR had been previously investigated.

As an application of these general results, in a second moment we investigated the performance of adiabatic quantum batteries, where we have shown how to adiabatically charge/discharge quantum batteries in two different contexts. Firstly, we introduced a three-level quantum battery able to be fully charged through a very stable (adiabatic) trajectory. To this end, we employ STIRAP that allows one to bypass the undesired spontaneous discharging due to imprecise control on the fields that occur if the charging process couples directly only two levels of the battery, e.g. the ground and maximally excited states. While (effective) qubit batteries require careful manipulation of the charging fields, our three-dimensional quantum battery is able to exploit the STIRAP protocol to ensure a robust and stable charge. We explicitly consider the effects of the most relevant sources of noise and have shown that even for moderate values of decoherence and dissipation, our adiabatic quantum battery is quite robust. For more severe environmental effects we have shown that an optimal time emerges that dictates the maximal achievable ergotropy. We finally proposed that superconducting transmon qubits provide a promising implementation for adiabatic quantum batteries. Our results show that clear advantages can be gained by exploiting higher-dimensional quantum systems. As such we expect that extending our analysis to consider arrays of high-dimensional quantum batteries, and the role of entanglement in the collective charging process, will be of significant interest~\cite{Alicki:13, Binder:15, PRL2013Huber}.

After, we discussed on general results of adiabatic charging/discharging process of quantum batteries. Since two level system are more easy to control in experimental setups, it is quite important to circumvent the issue of stable charging of two-level systems. Then, we have proposed a system constituted by a QB (charged) and a CH that will be receive the amount of energy initially stored in battery. Of course, our system can be easily adapted to the situation of a initially discharged QB coupled to a charger. As main element of our study, we have introduced a new observable associated with the power, where the power operator is the source of two main applications: (i) a trapping energy mechanism based on a common eigenstate between the power operator and the interaction Hamiltonian, in which the battery can indefinitely retain its energy even if it is coupled to the CH; (ii) an asymptotically stable discharge mechanism, which is achieved through an adiabatic evolution leading to vanishing power. Then, we can solve in general the stability of the energy transfer process, with no backflow of energy from the CH to the QB. More specifically, we introduced a piecewise time-dependent Hamiltonian that stabilizes the energy discharging process through the adiabatic dynamics. The scaling of this model of QB can be achieved by adding Bell pairs to the QB, with each Bell pair providing a controllable mechanism for full and stable charge of a qubit in the CH. Experimental implementations may inherit hardware designs originally proposed for adiabatic quantum computing, being available with current technology.  

In both protocols of QB introduced here, we showed that they allow for the design of batteries that are robust to intrinsic errors in real physical scenarios concerning unknown delays in turning off the charging fields. 

\let\cleardoublepage\clearpage

\chapter{Adiabatic dynamics in open quantum systems}

\initial{W}hen we consider the adiabatic theorem as derived in the previous chapter, we take into account the absence of non-unitary effects on our system, where the system follows a unitary dynamics from some initial state to the final one. In cases we need to deal with decohering effects, due to the coupling of our system with some environment, the dynamical equation is drastically changed and the notion of adiabaticity previously established is breakdown. For this reason, we need to derive an alternative approach in that case. In this section we review the main results about adiabaticity in open systems, particularly we will following the most general approach as derived in Ref.~\cite{Sarandy:05-1}. It is possible to find alternative studies on adiabatic behavior in open systems~\cite{Venuti:16,Yi:07}, but all of the them can be viewed as a particular application of the most general study in Ref.~\cite{Sarandy:05-1}. The results presented in this chapter refer to the following works

$\bullet$ C.-K. Hu, A. C. Santos, J.-M. Cui, Y.-F. Huang, M. S. Sarandy, C.-F. Li, and G.-C. Guo,
“Adiabatic quantum dynamics under decoherence in a controllable trapped-ion setup”,
Phys. Rev. A \textbf{99}, 062320 (2019).

$\bullet$ C.-K. Hu, A. C. Santos, J.-M. Cui, Y.-F. Huang, D. O. Soares-Pinto, M. S. Sarandy, C.-F. Li,
and G.-C. Guo, “Quantum thermodynamics in adiabatic open systems and its trapped-ion
experimental realization”, arXiv e-prints , arXiv:1902.01145 (2019), arXiv:1902.01145
[quant-ph].

\section{Adiabatic approximation in open systems} \label{SecAdApprox}

As discussed in section~\ref{SecSuperOpForm}, the non-diagonalizable characteristic of the superoperator $\Lcalb(t)$ needs to be taken into account before defining adiabaticity in a non-unitary evolution. In this thesis we will consider the general definition of adiabaticity as established in Ref.~\cite{Sarandy:05-1}.

\begin{definition}[Adiabaticity in open systems]\label{DefAdiabOS}
	An open quantum system is said to undergo an adiabatic dynamics if the evolution of its Hilbert-Schmidt space can be decomposed into uncoupled Lindblad-Jordan eigenspaces associated with distinct, time-dependent, non-crossing eigenvalues of $\Lcalb[\bullet]$.
\end{definition}

By Lindblad-Jordan eigenspaces we mean a set of vectors composed by the eigenvectors of a single Jordan block of the Lindblad superoperator. Therefore, the independent dynamics of a Jordan block of $\Lcalb[\bullet]$ is analogous to the independent dynamics of a degenerated eigenspace of the Hamiltonian through an adiabatic dynamics in closed system. However, it is important to highlight this discussion as an analogy. In cases where $\Lcalb[\bullet]$ admits uni-dimensional Jordan blocks, which means that $\Lcalb[\bullet]$ is diagonalizable, the adiabatic dynamics in open system is not the same as adiabatic dynamics in closed system, since the set of eigenvectors of $\Lcalb[\bullet]$ does not characterize the spectrum of the Hamiltonian that drives the system. In addition, as we shall see, some eigenvectors of $\Lcalb[\bullet]$ does not represent realistic physical states of a system, unlike the set of eigenvectors of the system Hamiltonian.

Thus, the immediate question is: \textit{under what conditions can we achieve the adiabatic behavior in open systems?} To answer this question, our starting point is writing the evolved state $\dket{\rho(t)}$ expanded in the basis $\dket{\Dcalb_{\alpha}^{n_{\alpha}}(t)}$ as
\begin{equation}
	\dket{\rho(t)} = \sum_{\alpha=1}^{N} \sum _{n_{\alpha} = 1}^{N_{\alpha}} r_{\alpha}^{n_{\alpha}}(t) \dket{\Dcalb_{\alpha}^{n_{\alpha}}(t)} \text{ , } \label{EqRhoExpD}
\end{equation}
where $r_{\alpha}^{n_{\alpha}}(t)$ are coefficients to be determined. By putting the above equation in Eq.~\eqref{EqEqSuperLindEq}, we use the eigenvalue equations in Eqs.~\eqref{EqEqEigenStateL} to write
\begin{align}
	\dot{r}_{\beta}^{k}(t) 
	& = \lambda_{\beta}(t)r_{\beta}^{k}(t) - r_{\beta}^{k}(t) \dinterpro{\Ecalb_{\beta}^{k}(t)}{\dot{\Dcalb}_{\beta}^{k}(t)} + r_{\beta}^{k+1}(t) 
	- \sum _{n_{\beta} \neq k}^{N_{\beta}} r_{\beta}^{n_{\beta}}(t) \dinterpro{\Ecalb_{\beta}^{k}(t)}{\dot{\Dcalb}_{\alpha}^{n_{\alpha}}(t)} \nonumber \\ 
	&- \sum_{\alpha\neq \beta}^{N} \sum _{n_{\alpha} = 1}^{N_{\alpha}} r_{\alpha}^{n_{\alpha}}(t) \dinterpro{\Ecalb_{\beta}^{k}(t)}{\dot{\Dcalb}_{\alpha}^{n_{\alpha}}(t)} \text{ . } \label{Eqrdot}
\end{align}

The two first terms in right-hand side of the above equation is associated with the perfect uncoupled evolution, where we have decoupling between different Jordan blocks and inside of each block. The third and fourth terms tell us about the coupling between the $k$-th and all of other eigenvectors inside a single block. Therefore, in according with the Definition~\ref{DefAdiabOS}, both cases could be considered as genuine adiabatic dynamics in context of open systems. 

\subsection{One-dimensional Jordan blocks} \label{SecAdApprox-1DJB}

Before considering the most generalized case, let us restrict our first analysis to the case in which the $\Lmath(t)$ admits an Jordan decomposition with one-dimensional Jordan blocks in Eq.~\eqref{EqEqLindJ}. Under this assumption, the quasi-eigenstate relations in Eqs.~\eqref{EqEqEigenStateL} become genuine eigenstate equations given by
\begin{subequations}\label{EqEqEigenOneL}
	\begin{align}
		\Lmath(t)\dket{\Dcalb_{\alpha}(t)} &= \lambda_{\alpha}(t)\dket{\Dcalb_{\alpha}(t)} \text{ , } \\
		\dbra{\Dcalb_{\alpha}(t)}\Lmath(t) &= \dbra{\Ecalb_{\alpha}(t)}\lambda_{\alpha}(t) \text{ . }
	\end{align}
\end{subequations}

Hence, the Eq.~\eqref{Eqrdot} can be reduced to
\begin{equation}
	\dot{r}_{\beta}(t) = \lambda_{\beta}(t)r_{\beta}(t) - r_{\beta}(t) \dinterpro{\Ecalb_{\beta}(t)}{\dot{\Dcalb}_{\beta}(t)} - \sum_{\alpha\neq \beta}^{N} r_{\alpha}(t) \dinterpro{\Ecalb_{\beta}(t)}{\dot{\Dcalb}_{\alpha}(t)} \text{ . } \label{EqrdotOne}
\end{equation}

The equation above allows us to get a first notion of how we can make an adiabatic approximation, since the last term couples eigenvectors with distinct eigenvalues. Then, we define 
\begin{equation}
	r_{\beta}(t) = p_{\beta}(t) e^{\int_{t_{0}}^{t} \lambda_{\beta}(\xi)d\xi} \text{ , }
\end{equation}
and one finds a equation for $p_{\beta}(t)$ given by
\begin{equation}
	\dot{p}_{\beta}(t) = - p_{\beta}(t) \dinterpro{\Ecalb_{\beta}(t)}{\dot{\Dcalb}_{\beta}(t)} - \sum_{\alpha\neq \beta}^{N} p_{\alpha}(t) e^{\int_{t_{0}}^{t} \lambda_{\alpha}(\xi) - \lambda_{\beta}(\xi) d\xi}\dinterpro{\Ecalb_{\beta}(t)}{\dot{\Dcalb}_{\alpha}(t)} \text{ , } \label{EqpdotOne}
\end{equation}
with the second term in right-hand side being the responsible for coupling distinct Jordan-Lindblad eigenspaces during the evolution. If we are able to use some strategy to minimize the effects of such term in the above equation, we can approximate the dynamics to
\begin{equation}
	\dot{p}_{\beta}(t) \approx - p_{\beta}(t) \dinterpro{\Ecalb_{\beta}(t)}{\dot{\Dcalb}_{\beta}(t)} \text{ , }
\end{equation}
so that the adiabatic solution of the dynamics for $r_{\beta}(t)$ can be written as
\begin{equation}
	r_{\beta}(t) = r_{\beta}(t_{0}) e^{\int_{t_{0}}^{t} \lambda_{\beta}(\xi)d\xi}
	e^{- \int_{t_{0}}^{t} \dinterpro{\Ecalb_{\beta}(\xi)}{\dot{\Dcalb}_{\beta}(\xi)}d\xi} \text{ . } \label{Eqr1BJdec}
\end{equation}
where we already used $p_{\beta}(t_{0}) = r_{\beta}(t_{0})$. In conclusion, if the system undergoes an adiabatic dynamics along a non-unitary process, the evolved state can be written as
\begin{equation}
	\dket{\rho^{1\text{D}}_{\text{ad}}(t)} = \sum _{\alpha=1}^{N} r_{\alpha}(t) \dket{\Dcalb_{\alpha}(t)} = \sum _{\alpha=1}^{N} r_{\alpha}(t_{0}) e^{\int_{0}^{t} \Lambda_{\alpha}(t^{\prime})dt^{\prime}}\dket{\Dcalb_{\alpha}(t)} \label{EqAdEvol1D} \text{ , }
\end{equation}
with $\Lambda_{\alpha}(t)=\lambda_{\alpha}(t)-\dinterpro{\Ecalb_{\alpha}(t)}{\dot{\Dcalb}_{\alpha}(t)}$ being the generalized adiabatic phase which accompanying the dynamics of the $n$-th eigenvector. Throughout this thesis, the superscript ``1D'' in some physical parameter, quantity or state means that such element is valid if the adiabatic Lindbladian admits one-dimensional Jordan block decomposition.

The adiabatic evolved state in Eq.~\eqref{EqAdEvol1D} is not achieved in any situation, but we can find conditions in which such dynamics is expected. To derive adiabaticity conditions for open systems, let us rewrite the Eq.~\eqref{EqpdotOne} as
\begin{equation}
	e^{\int_{t_{0}}^{t} \dinterpro{\Ecalb_{\beta}(\xi)}{\dot{\Dcalb}_{\beta}(\xi)} d\xi }\frac{d}{dt} \left[p_{\beta}(t) e^{- \int_{t_{0}}^{t} \dinterpro{\Ecalb_{\beta}(\xi)}{\dot{\Dcalb}_{\beta}(\xi)} d\xi }\right] = - \sum_{\alpha\neq \beta}^{N} p_{\alpha}(t) e^{\int_{t_{0}}^{t} \lambda_{\alpha}(\xi) - \lambda_{\beta}(\xi) d\xi}\dinterpro{\Ecalb_{\beta}(t)}{\dot{\Dcalb}_{\alpha}(t)} \text{ , }
\end{equation}
so that one finds
\begin{equation}
	p_{\beta}(t) e^{- \int_{t_{0}}^{t} \dinterpro{\Ecalb_{\beta}(\xi)}{\dot{\Dcalb}_{\beta}(\xi)} d\xi } - p_{\beta}(t_{0}) = - \sum_{\alpha\neq \beta}^{N} \int_{t_{0}}^{t} F_{\alpha\beta}(\xi)  e^{\int_{t_{0}}^{\xi} \lambda_{\alpha}(\xi^{\prime}) - \lambda_{\beta}(\xi^{\prime}) d\xi^{\prime}} d\xi\text{ , }
\end{equation}
where we defined
\begin{equation}
	F_{\alpha\beta}(t) = e^{-\int_{t_{0}}^{t} \dinterpro{\Ecalb_{\beta}(\xi)}{\dot{\Dcalb}_{\beta}(\xi)} d\xi } p_{\alpha}(t) \dinterpro{\Ecalb_{\beta}(t)}{\dot{\Dcalb}_{\alpha}(t)} \text{ . } \label{EqFalphaBeta}
\end{equation}

Therefore, in order to achieve the adiabatic approximation we need to satisfy
\begin{equation}
	G_{\alpha\beta}(t) = \left \vert \int_{t_{0}}^{t} F_{\alpha\beta}(\xi)  e^{\int_{t_{0}}^{\xi} \lambda_{\alpha}(\xi^{\prime}) - \lambda_{\beta}(\xi^{\prime}) d\xi^{\prime}} d\xi \right \vert \ll 1 \text{ , }
\end{equation}
for all $\alpha$ and $\beta$. Now, by using the normalized time $s = t/\tau$ in above equation, we get
\begin{equation}
	G_{\alpha\beta}(s) = \left \vert \int_{s_{0}}^{s} \tilde{F}_{\alpha\beta}(s^{\prime})  e^{\tau\int_{s_{0}}^{s^{\prime}} \lambda_{\alpha}(s^{\prime\prime}) - \lambda_{\beta}(s^{\prime\prime}) ds^{\prime\prime}} ds^{\prime} \right \vert \text{ , }
\end{equation}
with (here we adopt $d_{s}f(s) = df(s)/ds$)
\begin{equation}
	\tilde{F}_{\alpha\beta}(s) = e^{-\int_{s_{0}}^{s} \dinterpro{\Ecalb_{\beta}(s^{\prime})}{d_{s}\Dcalb_{\beta}(s^{\prime})} ds^{\prime} } p_{\alpha}(s) \dinterpro{\Ecalb_{\beta}(s)}{d_{s}\Dcalb_{\alpha}(s)} \text{ . }
\end{equation}

Now, we can use that (by defining $\Gcalb_{\alpha\beta}(s) = \lambda_{\alpha}(s) - \lambda_{\beta}(s)$)
\begin{equation}
	\frac{d}{ds} \left[ \frac{\tilde{F}_{\alpha\beta}(s)  e^{\tau\int_{s_{0}}^{s} \Gcalb_{\alpha\beta}(s^{\prime})ds^{\prime}}}{\Gcalb_{\alpha\beta}(s)} \right] = \frac{d}{ds} \left[ \frac{\tilde{F}_{\alpha\beta}(s)}{\Gcalb_{\alpha\beta}(s)} \right]e^{\tau\int_{s_{0}}^{s} \Gcalb_{\alpha\beta}(s^{\prime}) ds^{\prime}} + \tau \tilde{F}_{\alpha\beta}(s)  e^{\tau\int_{s_{0}}^{s} \Gcalb_{\alpha\beta}(s^{\prime}) ds^{\prime}} \text{ , }
\end{equation}
to rewrite $G_{\alpha\beta}(s)$ as
\begin{align}
	G_{\alpha\beta}(s) &= \left \vert \frac{1}{\tau} \int_{s_{0}}^{s} \frac{d}{ds^{\prime}} \left[ \frac{\tilde{F}_{\alpha\beta}(s^{\prime})  e^{\tau\int_{s_{0}}^{s^{\prime}} \Gcalb_{\alpha\beta}(s^{\prime\prime})ds^{\prime\prime}}}{\Gcalb_{\alpha\beta}(s^{\prime})} \right] -
	\frac{d}{ds^{\prime}} \left[ \frac{\tilde{F}_{\alpha\beta}(s^{\prime})}{\Gcalb_{\alpha\beta}(s^{\prime})} \right]e^{\tau\int_{s_{0}}^{s^{\prime}} \Gcalb_{\alpha\beta}(s^{\prime\prime}) ds^{\prime\prime}}
	ds^{\prime} \right \vert \text{ , } \nonumber \\
	&= \left \vert \frac{\tilde{F}_{\alpha\beta}(s)  e^{\tau\int_{s_{0}}^{s} \Gcalb_{\alpha\beta}(s^{\prime})ds^{\prime}}}{\tau\Gcalb_{\alpha\beta}(s)} - \frac{\tilde{F}_{\alpha\beta}(s_0) }{\tau\Gcalb_{\alpha\beta}(s_0)} -
	\frac{1}{\tau}\int_{s_{0}}^{s} \frac{d}{ds^{\prime}} \left[ \frac{\tilde{F}_{\alpha\beta}(s^{\prime})}{\Gcalb_{\alpha\beta}(s^{\prime})} \right]e^{\tau\int_{s_{0}}^{s^{\prime}} \Gcalb_{\alpha\beta}(s^{\prime\prime}) ds^{\prime\prime}}
	ds^{\prime} \right \vert \text{ . }
\end{align}

Therefore, from above equation we have two different conditions to be satisfied. In fact, if we want to get $G_{\alpha\beta}(s)$ as smaller as possible, an \textit{sufficient criterion} to adiabaticity reads
\begin{subequations}\label{EqAdCondOpenSystem}
	\begin{align}
		\text{(C1)} \quad \quad & \left \vert \frac{\tilde{F}_{\alpha\beta}(s)  e^{\tau\int_{s_{0}}^{s} \Gcalb_{\alpha\beta}(s^{\prime})ds^{\prime}}}{\tau\Gcalb_{\alpha\beta}(s)} \right \vert \ll 1 \text{ , } \label{EqACOpenC1} \\
		\text{(C2)} \quad \quad &\left \vert \frac{1}{\tau}\int_{s_{0}}^{s} \frac{d}{ds^{\prime}} \left[ \frac{\tilde{F}_{\alpha\beta}(s^{\prime})}{\Gcalb_{\alpha\beta}(s^{\prime})} \right]e^{\tau\int_{s_{0}}^{s^{\prime}} \Gcalb_{\alpha\beta}(s^{\prime\prime}) ds^{\prime\prime}}
		ds^{\prime} \right \vert \ll 1 \text{ , } \label{EqACOpenC2}
	\end{align}
\end{subequations}
that allows us achieving the adiabatic behavior when satisfied for every $s_0\leq s \leq 1$. If the above equation is satisfied for all $\alpha$, the $\beta$-th eigenvector evolves uncoupled from the other eigenvectors, but in case the above equation is satisfied for all $\alpha$ and $\beta$, all eigenvectors of the spectrum of $\Lmath(t)$ evolve uncoupled from each other. Moreover, it is worth highlighting here that the above conditions are very similar to the Tong's conditions $C_{\text{Tong}}^{(\text{a})}$ and $C_{\text{Tong}}^{(\text{b})}$ given in Eq.~\eqref{EqAdTongCondAll}. In fact, by setting $s_{0} = 0$, the second condition can be easily rewritten as
\begin{equation}
	\text{(C2-Tong)} \quad \quad \left \vert \frac{1}{\tau} \frac{d}{ds^{\prime}} \left[ \frac{\tilde{F}_{\alpha\beta}(s^{\prime})}{\Gcalb_{\alpha\beta}(s^{\prime})} \right]e^{\tau\int_{s_{0}}^{s^{\prime}} \Gcalb_{\alpha\beta}(s^{\prime\prime}) ds^{\prime\prime}}
	\right \vert_M  \ll 1 \text{ , } \label{EqACOpenC2Max}
\end{equation}
where we have used that $\left \vert \int_{x_{0}}^{x_{1}}f(x)dx \right \vert \leq \int_{x_{0}}^{x_{1}} \left \vert f(x) \right \vert dx$ and the subscript $M$ denotes the maximal value of the modulus, so that the validity of the condition (C2-Tong) implies into validity of (C2). Notice that the above equation can be viewed as a generalization of the condition $C_{\text{Tong}}^{(\text{b})}$ for open systems. In summary, we get the adiabaticity coefficients in open system given by
\begin{subequations}
	\label{EqAdCoeffOS}
	\begin{align}
		\Xi_{\alpha \beta}^{(1)}(s) &= \left \vert \frac{\tilde{F}_{\alpha\beta}(s)  e^{\tau\int_{s_{0}}^{s} \Gcalb_{\alpha\beta}(s^{\prime})ds^{\prime}}}{\tau\Gcalb_{\alpha\beta}(s)} \right \vert \text{ , } \label{EqAdCoeffOS1} \\
		\Xi_{\alpha \beta}^{(2)}(s) &= \left \vert \frac{1}{\tau}\frac{d}{ds} \left[ \frac{\tilde{F}_{\alpha\beta}(s)}{\Gcalb_{\alpha\beta}(s)} \right]e^{\tau\int_{s_{0}}^{s} \Gcalb_{\alpha\beta}(s^{\prime}) ds^{\prime}}
		\right \vert \text{ , } \label{EqAdCoeffOS2}
	\end{align}
\end{subequations}
so that the conditions of the Eq.~\eqref{EqAdCondOpenSystem} can be rewritten as
\begin{equation}
	\max \left\{ \max_{s\in[0,1]} \Xi_{\alpha \beta}^{(1)}(s) , \max_{s\in[0,1]} \Xi_{\alpha \beta}^{(2)}(s) \right\}\ll 1 \text{ . } \label{EqAdCondOpenSystem-2}
\end{equation}

As it was observed by Sarandy and Lidar~\cite{Sarandy:05-1,Sarandy:05-2}, the nature of the function $\Gcalb_{\alpha\beta}(t)$ is an important element that needs to be addressed in details. In fact, since $\Gcalb_{\alpha\beta}(t): t \in \Rmath \leadsto \Gcalb_{\alpha\beta} \in \Cmath$, the argument in the exponential of the Eqs.~\eqref{EqAdCondOpenSystem} could admit both real and imaginary parts. On the one hand, the imaginary part of $\Gcalb_{\alpha\beta}(t)$ can be neglected due to module in Eqs.~\eqref{EqACOpenC2Max} and~\eqref{EqACOpenC1} (or due the Riemann-Lebesgue lemma~\cite{Sarandy:04}). On the other hand, the real part of $\Gcalb_{\alpha\beta}(t)$ could promote some divergence of the quantities in Eq.~\eqref{EqAdCondOpenSystem} for long evolution times (due the integral in argument of the exponential). Therefore, this means that adiabaticity is not achieved in the regime where $\tau \rightarrow \infty$, but it would be possible to find a range on the total evolution time $\tau$ so that the adiabatic approximation could be obtained~\cite{Sarandy:05-2}. However, as we shall see in Sec.~\ref{SecApplicAdiabDA}, it is possible to find situations in which the adiabatic approximation is achieved in the regime $\tau \rightarrow \infty$.

\subsection{General Jordan blocks}

In a general case, where we have multi-dimensional Jordan blocks in Eq.~\eqref{EqEqLindJ}, one needs to start from the coupled set of equations given in Eq.~\eqref{Eqrdot}. Before, without loss of generality we write
\begin{equation}
	r_{\beta}^{k}(t) = p_{\beta}^{k}(t) e^{\int_{t_{0}}^{t} \lambda_{\beta}(\xi)d\xi} \text{ , }
\end{equation}
in which the argument $\int_{t_{0}}^{t} \lambda_{\beta}(\xi)d\xi$ could be associated with a dynamical phase (in analogy with closed system case), and the Eq.~\eqref{Eqrdot} becomes
\begin{align}
	\dot{p}_{\beta}^{k}(t) & = - p_{\beta}^{k}(t) \dinterpro{\Ecalb_{\beta}^{k}(t)}{\dot{\Dcalb}_{\alpha}^{k}(t)} + p_{\beta}^{k+1}(t) 
	- \sum _{n_{\beta} \neq k}^{N_{\beta}} p_{\beta}^{n_{\beta}}(t) \dinterpro{\Ecalb_{\beta}^{k}(t)}{\dot{\Dcalb}_{\beta}^{k}(t)} \nonumber \\ 
	&- \sum_{\alpha\neq \beta}^{N} \sum _{n_{\alpha} = 1}^{N_{\alpha}} p_{\alpha}^{n_{\alpha}}(t) e^{\int_{t_{0}}^{t} \lambda_{\alpha}(\xi) - \lambda_{\beta}(\xi) d\xi}  \dinterpro{\Ecalb_{\beta}^{k}(t)}{\dot{\Dcalb}_{\alpha}^{n_{\alpha}}(t)} \text{ . } \label{Eqpdot}
\end{align}

As already mentioned, the last term in above equation is the `diabatic' contribution to the dynamics, so the Jordan blocks become uncoupled when such term is small enough. Therefore, by following the same procedure as before we can show that the adiabatic dynamics is approximately achieved when
\begin{subequations}\label{EqAdCondOpenSystemGenBlock}
	\begin{align}
		\text{(C1)} \quad \quad & \left \vert \frac{\tilde{F}^{\text{gen}}_{\alpha\beta}(s)  e^{\tau\int_{s_{0}}^{s} \Gcalb_{\alpha\beta}(s^{\prime})ds^{\prime}}}{\tau\Gcalb_{\alpha\beta}(s)} \right \vert \ll 1 \text{ , } \label{EqACOpenGenBlockC1} \\
		\text{(C2)} \quad \quad &\left \vert \frac{1}{\tau}\int_{s_{0}}^{s} \frac{d}{ds^{\prime}} \left[ \frac{\tilde{F}^{\text{gen}}_{\alpha\beta}(s^{\prime})}{\Gcalb_{\alpha\beta}(s^{\prime})} \right]e^{\tau\int_{s_{0}}^{s^{\prime}} \Gcalb_{\alpha\beta}(s^{\prime\prime}) ds^{\prime\prime}}
		ds^{\prime} \right \vert \ll 1 \text{ , } \label{EqACOpenGenBlockC2}
	\end{align}
\end{subequations}
where $\tilde{F}^{\text{gen}}_{\alpha\beta}(s^{\prime})$ defines the generalized version of the Eq.~\eqref{EqFalphaBeta} as
\begin{equation}
	\tilde{F}^{\text{gen}}_{\alpha\beta}(s) = \sum _{n_{\alpha} = 1}^{N_{\alpha}} p_{\alpha}^{n_{\alpha}}(s) e^{-\int_{s_{0}}^{s} \dinterpro{\Ecalb_{\beta}^{k}(s^{\prime})}{d_{s^{\prime}}\Dcalb_{\beta}^{k}(s^{\prime})} ds^{\prime} } \dinterpro{\Ecalb_{\beta}^{k}(s)}{d_{s}\Dcalb_{\alpha}^{n_{\alpha}}(s)} \text{ , }
\end{equation}
so that $\tilde{F}^{\text{gen}}_{\alpha\beta}(s) = \tilde{F}_{\alpha\beta}(s)$ for one-dimensional Jordan blocks. Therefore, if the conditions in Eq.~\eqref{EqAdCondOpenSystemGenBlock} are satisfied, the Eq.~\eqref{Eqpdot} can be (approximately) written as
\begin{equation}
	\dot{p}_{\beta}^{k}(t)  = - p_{\beta}^{k}(t) \dinterpro{\Ecalb_{\beta}^{k}(t)}{\dot{\Dcalb}_{\alpha}^{n_{\alpha}}(t)} + p_{\beta}^{k+1}(t) 
	- \sum _{n_{\beta} \neq k}^{N_{\beta}} p_{\beta}^{n_{\beta}}(t) \dinterpro{\Ecalb_{\beta}^{k}(t)}{\dot{\Dcalb}_{\beta}^{n_{\alpha}}(t)} \text{ , }
\end{equation}
and the solution to $p_{\beta}^{k}(t)$ from above equation is the \textit{open system adiabatic solution}. The above equation can be rewritten in a compact way as
\begin{equation}
	\dot{\vec{p}}_{\beta}(t) = \left[ \tilde{\1}_{\text{u-shift}} - \Gcal_{\beta}(t) \right] \vec{p}_{\beta}(t) \text{ , } \label{Eqpvec}
\end{equation}
where we already used $p_{\beta}^{N_{\alpha}+1}(t) = 0$. Here $\Gcal_{\beta}(t)$ is a $(N_{\beta}\times N_{\beta})$-dimensional matrix whose elements are $\Gcal_{kn_{\alpha}}(t) = \dinterpro{\Ecalb_{\beta}^{k}(t)}{\dot{\Dcalb}_{\beta}^{n_{\alpha}}(t)}$, $\vec{p}_{\beta}(t)$ is a vector with $N_{\beta}$ components $p_{\beta}^{k}(t)$ and $\tilde{\1}_{\text{u-shift}}$ an upper shift matrix which reads
\begin{equation}
	\tilde{\1}_{\text{u-shift}} = \begin{bmatrix}
		0 & 1 & 0 & \cdots & 0 \\
		0 & 0 & 1 & \cdots & 0 \\
		\vdots & \ddots & \ddots & \ddots & \vdots \\
		0 & \cdots & \cdots & 0 & 1 \\
		0 & \cdots & \cdots & 0 & 0
	\end{bmatrix} \text{ . }
\end{equation}

Thus, it is possible to see that the uncoupled evolution within a single Jordan block is not necessarily obtained even when $\Gcal(t) \approx 0$. In this case, the most convenient way to write the adiabatic evolution operator $\Ucalb (t,t_{0})$ is from the definition of individual inner evolution operators $\Ucalb_{\alpha} (t,t_{0})$ for each bloch given by
\begin{equation}
	\Ucalb_{\alpha} (t,t_{0}) = e^{\int_{t_{0}}^{t} \lambda_{\alpha}(\xi)d\xi} \sum _{n_{\alpha} = 1}^{N_{\alpha}} \sum _{m_{\alpha} = 1}^{N_{\alpha}} u_{n_{\alpha}m_{\alpha}}(t)\dket{\Dcalb_{\alpha}^{n_{\alpha}}(t)}\dbra{\Ecalb_{\alpha}^{m_{\alpha}}(t_{0})} \label{EqUalpha}\text{ , }
\end{equation}
where the elements $u_{n_{\alpha}m_{\alpha}}(t)$ make explicit the case in which there is inner transition in a single Jordan block and $e^{\int_{t_{0}}^{t} \lambda_{\alpha}(\xi)d\xi}$ is a phase independent on the inner structure of the block. The functions $u_{n_{\alpha}m_{\alpha}}(t)$ are determined from solution of the Eq.~\eqref{Eqpvec}, therefore it strongly depends on the elements of matrix of $\Gcal(t)$. From this definition, the complete evolution operator $\Ucalb (t,t_{0})$ is conveniently written as
\begin{equation}
	\Ucalb (t,t_{0}) = \sum_{\alpha = 1}^{N} \Ucalb_{\alpha} (t,t_{0}) \text{ . } \label{EqUAdOS}
\end{equation}

Notice that $\Ucalb (t,t_{0})$ does not admit transitions between two eigenvectors from different blocks, therefore it is our adiabatic evolution operator. An immediate result from above operator is that, depending on the matrix $\Gcal(t)$, we may be unable to find the analytical solution of the adiabatic dynamics. It happens whenever the Jordan block of interest does not admit solution to Eq.~\eqref{Eqpvec}, so if our dynamics does not start from a vector in a Jordan block with unknown solution for the Eq.~\eqref{Eqpvec}, we can analytically get the adiabatic solution. 

Before following our discussion, it is important to mention that $\Ucalb (t,t_{0})$ admits an operator $\Ucalb^{-1} (t,t_{0})$ so that $\Ucalb (t,t_{0})\Ucalb^{-1} (t,t_{0}) = \1$, where $\Ucalb^{-1} (t,t_{0})$ is given by
\begin{equation}
	\Ucalb^{-1} (t,t_{0}) = \sum_{\alpha = 1}^{N} \Ucalb_{\alpha}^{-1} (t,t_{0}) \text{ , }
\end{equation}
with each operator $\Ucalb_{\alpha}^{-1} (t,t_{0})$ obtained from
\begin{equation}
	\Ucalb_{\alpha}^{-1} (t,t_{0}) = e^{\int_{t_{0}}^{t} \lambda_{\alpha}(\xi)d\xi} \sum _{n_{\alpha} = 1}^{N_{\alpha}} \sum _{m_{\alpha} = 1}^{N_{\alpha}} \tilde{u}_{n_{\alpha}m_{\alpha}}(t)\dket{\Dcalb_{\alpha}^{n_{\alpha}}(t_{0})}\dbra{\Ecalb_{\alpha}^{m_{\alpha}}(t)} \text{ , }
\end{equation}
where the coefficients should satisfies
\begin{equation}
	\sum _{j_{\nu} = 1}^{N_{\nu}} 
	u_{n_{\nu}j_{\nu}}\tilde{u}_{j_{\nu}m_{\nu}} = \delta_{n_{\nu}m_{\nu}} \text{ . } \label{EqUcoeff}
\end{equation}

In addition, the operator $\Ucalb (t,t_{0})$ can be designed in order to be identified as the operator which ``block-diagonalizes'' $\Lmath(t)$, that is, if we impose four additional conditions given by
\begin{subequations}
	\begin{align}
		\sum _{n_{\nu} = 1}^{N_{\nu}} \tilde{u}_{l_{\nu}(n_{\nu}-1)} u_{n_{\nu}l_{\nu}} = 0 &\quad \text{ , } \quad \sum _{n_{\nu} = 1}^{N_{\nu}} \tilde{u}_{l_{\nu}n_{\nu}} u_{n_{\nu}l_{\nu}} = 1 \label{Eqmu01} \text{ , } \\
		\sum _{n_{\nu} = 1}^{N_{\nu}} \tilde{u}_{l_{\nu}n_{\nu}} u_{n_{\nu}(l_{\nu}+1)} = 0 \quad &\text{ and } \quad \sum _{n_{\nu} = 1}^{N_{\nu}} \tilde{u}_{l_{\nu}(n_{\nu}-1)} u_{n_{\nu}(l_{\nu}+1)} = 1 \text{ , } \label{EqmuND01}
	\end{align}
	\label{Eqmu}
\end{subequations}
so one can show that
\begin{equation}
	\Lmath_{\text{J}}(t) = \Ucalb^{-1} (t,t_{0}) \Lmath(t) \Ucalb (t,t_{0}) \text{ . } \label{EqLDiagU}
\end{equation}

A detailed proof of Eqs.~\eqref{EqUcoeff} and~\eqref{Eqmu} is provided in Appendix~\ref{ApUAd}. In particular, from Eqs.~\eqref{EqUalpha} and \eqref{EqAdEvol1D} we can redo the above analysis in order to show that the adiabatic evolution operator for one-dimensional Jordan blocks reads
\begin{equation}
	\Ucalb^{1\text{D}} (t,t_{0}) = \sum _{\alpha=1}^{N} e^{\int_{t_{0}}^{t} \Lambda_{\alpha}(\xi)d\xi} \dket{\Dcalb_{\alpha}(t)}\dbra{\Ecalb_{\alpha}(t_{0})} \label{EqUalphaOneJB}\text{ . }
\end{equation}

In fact, by starting the system in an arbitrary initial state $\dket{\rho(t_{0})} = r_{\alpha}(t_{0})\dket{\Dcalb_{\alpha}(t_{0})}$, one shows that the solution $\dket{\rho^{1\text{D}}_{\text{ad}}(t)}$ given in Eq.~\eqref{EqAdEvol1D} can be obtained from above equation as $\dket{\rho^{1\text{D}}_{\text{ad}}(t)} = \Ucalb^{1\text{D}} (t,t_{0})\dket{\rho(t_{0})}$. Moreover, as we shall see, the knowledge of the evolution operator for adiabatic dynamics is the main element required to study transitionless quantum driving in open systems. For this reason, the evolution operators derived in this section are important results of this thesis.

\section{Application to adiabatic Deutsch algorithm} \label{SecApplicAdiabDA}

For the applications of adiabatic approximation in open system considered in this thesis, the system dynamics is given by
\begin{equation}
	\dot{\rho}(t) = \Hcalb[\rho(t)] + \Rcalb[\rho(t)] \text{ , }
\end{equation}
with $\Hcalb[\bullet]$ is the unitary part of the dynamics given by
\begin{equation}
	\Hcalb[\bullet] = \frac{1}{i\hbar} [H(t),\bullet] \text{ , }
\end{equation}
and $\Rcalb[\bullet]$ is the non-unitary contribution to the dynamics. In particular, here we will be interested in a number of evolution given by Lindblad master equation~\cite{Lindblad:76}
\begin{equation}
	\Rcalb[\rho(t)] = \frac{1}{2}\sum_{n} \left[ 2\Gamma_{n}(t)\rho(t)\Gamma^{\dagger}_{n}(t) - \left\{\Gamma^{\dagger}_{n}(t)\Gamma_{n}(t),\rho(t)\right\}\right]
	\text{ , } \label{EqEqLindPartic}
\end{equation}
where $\Gamma_{n}(t)$ are the Lindblad operators.

\subsection{Adiabatic Deutsch algorithm under dephasing}



At beginning of the development of quantum computation the main question was finding some concrete example (some algorithm) in which quantum mechanically driven system could provide computation tasks with high performance concerning its classical counterpart. In this sense, in 1985, David Deutsch derived a quantum algorithm in which we can get a enhanced performance from its classical one~\cite{Deutsch:85}, called \textit{Deutsch algorithm}. Although the Deutsch's algorithm is not a important (powerful) algorithm in quantum computation, since it does not provide a exponential gain concerning the its classical version, it is the first main example of how powerful a quantum algorithm can be. The problem addressed in Deutsch's algorithm is how to determinate whether a dichotomic real function $f: x\in\{0,1\} \rightarrow f(x)\in\{0,1\}$ is \textit{constant} (the output result $f(x)$ is the same regardless the input value $x$) or \textit{balanced} (the output result $f(x)$ assumes different values in according with the input value $x$). The adiabatic version of such algorithm is obtained through the operator $\Ocalb_{f}$ which computes $f$ (oracle), given by
\begin{equation}
	\Ocalb_{f} = (-1)^{f(0)}\ket{0}\bra{0} + (-1)^{f(1)}\ket{1}\bra{1} \text{ , }
\end{equation}
thus one can write the adiabatic Hamiltonian which implements the algorithm as
\begin{equation}
	H^{\text{DA}}(t) = U_{f}(t) H_{0} U_{f}^{\dagger}(t) \text{ , }
\end{equation}
where $H_{0} = -\hbar \omega \sigma_{x}/2$ and $U_{f}(t) = \exp(i\frac{\pi}{2}\frac{t}{\tau}\Ocalb_{f})$. At $t=0$ we have $H^{\text{DA}}(0) =H_{0}$, so that the initial input state is written as $\ket{\psi_{\text{inp}}} = \ket{+} = (1/\sqrt{2})(\ket{0}+\ket{1})$. One can show that, when the evolution is slow enough, the adiabatic dynamics of the system driven by the Hamiltonian $H^{\text{DA}}(t)$ will be given by
\begin{equation}
	\rho_{\text{cs}}^{\text{DA}}(t) = \frac{1}{2} \left[ \1 + \cos \left( \frac{\pi F}{2}\frac{t}{\tau} \right) \sigma_{x} - \sin \left( \frac{\pi F}{2}\frac{t}{\tau} \right) \sigma_{y} \right] \text{ , } \label{EqDAClosedSol}
\end{equation}
where $F = 1-(-1)^{f(0)+f(1)}$ and the subscript ``cs'' denotes that $\rho_{\text{cs}}^{\text{DA}}(t)$ is obtained from adiabatic solution for closed systems. By considering the system has its dynamics governed by an environment so that the main effects on the system is phase damping with rate $\gamma(t)$, whose contribution reads
\begin{equation}
	\Rcalb[\bullet] = \gamma(t) \left[ \sigma_{z} \bullet \sigma_{z} - \bullet \right] \label{EqRcalbDephasing}
\end{equation}
we describe the system evolution as
\begin{equation}
	\dot{\rho}^{\text{DA}}(t) = - \frac{i}{\hbar} [H^{\text{DA}}(t),\rho^{\text{DA}}(t)] + \gamma(t) \left[ \sigma_{z} \rho^{\text{DA}}(t) \sigma_{z} - \rho^{\text{DA}}(t) \right] \text{ . } \label{EqLindDephDA}
\end{equation}

In order to study the adiabatic dynamics of the system in according with above equation, let us to rewrite it in superoperator formalism as
\begin{equation}
	\dket{\dot{\rho}^{\text{DA}}(t)} = \Lmath^{\text{DA}}(t) \dket{\rho^{\text{DA}}(t)} \text{ , }
\end{equation}
where
\begin{equation}
	\Lmath^{\text{DA}}(t) = \begin{bmatrix}
		0 & 0 & 0 & 0 \\ 0 & -2 \gamma & 0 & \omega \sin \left( \frac{\pi F}{2}\frac{t}{\tau} \right) \\ 0 & 0 & -2 \gamma & \omega \cos \left( \frac{\pi F}{2}\frac{t}{\tau} \right) \\ 0 & - \omega \sin \left( \frac{\pi F}{2}\frac{t}{\tau} \right) & - \omega \cos \left( \frac{\pi F}{2}\frac{t}{\tau} \right) & 0
	\end{bmatrix} \text{ , }
\end{equation}
whose the eigenvectors are (the superscript ``t'' denotes transpose)
\begin{subequations}
	\label{EqDARightEigenVec}
	\begin{align}
		\dket{\Dcalb^{\text{DA}}_{0}(t)} &= 
		\left[ \frac{}{} 1 \quad 0 \quad 0 \quad 0\text{ } \right]^{\text{t}} \text{ , } \\
		\dket{\Dcalb^{\text{DA}}_{1}(t)} &= 
		\left[ \text{ } 0 \quad -\cos \left( \frac{\pi F}{2}\frac{t}{\tau} \right) \quad \sin \left( \frac{\pi F}{2}\frac{t}{\tau} \right) \quad 0\text{ } \right]^{\text{t}} \text{ , } \\
		\dket{\Dcalb^{\text{DA}}_{2}(t)} &= 
		\left[ \text{ } 0 \quad \Delta_{+}(t) \sin \left( \frac{\pi F}{2}\frac{t}{\tau}\right) \quad \Delta_{+}(t) \cos \left( \frac{\pi F}{2}\frac{t}{\tau}\right) \quad 1\text{ } \right]^{\text{t}} \text{ , } \\
		\dket{\Dcalb^{\text{DA}}_{3}(t)} &= 
		\left[ \text{ } 0 \quad \Delta^{-1}_{+}(t) \sin \left( \frac{\pi F}{2}\frac{t}{\tau}\right) \quad \Delta^{-1}_{+}(t) \cos \left( \frac{\pi F}{2}\frac{t}{\tau}\right) \quad 1\text{ } \right]^{\text{t}} \text{ , }
	\end{align}
\end{subequations}
and left eigenvectors
\begin{subequations}
	\label{EqDALeftEigenVec}
	\begin{align}
		\dbra{\Ecalb^{\text{DA}}_{0}(t)} &= \left[ \frac{}{} 1 \quad 0 \quad 0 \quad 0\text{ } \right] \text{ , } \\
		\dbra{\Ecalb^{\text{DA}}_{1}(t)} &= \left[ \text{ } 0 \quad -\cos \left( \frac{\pi F}{2}\frac{t}{\tau} \right) \quad \sin \left( \frac{\pi F}{2}\frac{t}{\tau} \right) \quad 0\text{ } \right] \text{ , } \\
		\dbra{\Ecalb^{\text{DA}}_{2}(t)} &= \frac{1}{2} \left[ \text{ } 0 \quad \omega \sin \left( \frac{\pi F}{2}\frac{t}{\tau}\right) \quad \omega \cos \left( \frac{\pi F}{2}\frac{t}{\tau}\right) \quad - \frac{\Delta_{-}(t)}{\sqrt{\gamma^2(t)-\omega^2}}\text{ } \right] \text{ , } \\
		\dbra{\Ecalb^{\text{DA}}_{3}(t)} &= \frac{1}{2}\left[ \text{ } 0 \quad - \omega \sin \left( \frac{\pi F}{2}\frac{t}{\tau}\right) \quad -\omega \cos \left( \frac{\pi F}{2}\frac{t}{\tau}\right) \quad \frac{\Delta_{+}(t)}{\sqrt{\gamma^2(t)-\omega^2}}\text{ } \right] \text{ , }
	\end{align}
\end{subequations}
with eigenvalues $\lambda_{0}(t) = 0$, $\lambda_{1}(t) = -2 \gamma(t) $, $\lambda_{2}(t) = - \Delta_{+}(t) $ and $\lambda_{3}(t) = -\Delta_{-}(t)$, where we defined $\Delta_{\pm}(t) = \gamma(t) \pm \sqrt{\gamma^2(t) - \omega^2}$. The non-degenerate spectrum of $\Lmath^{\text{DA}}(t)$ shows that it can be written in one-dimensional Jordan block form, so that the adiabatic behavior of the system can be obtained from adiabatic solution discussed in Sec.~\ref{SecAdApprox-1DJB}. By writing the density matrix of the initial state $\rho^{\text{DA}}(0) = \ket{\psi_{\text{inp}}}\bra{\psi_{\text{inp}}} = \ket{+}\bra{+} = (1/2)(\1+\sigma_{x})$ in superoperator formalism, one shows that the initial state can be written as a linear combination of the vectors $\dket{\Dcalb^{\text{DA}}_{0}(0)}$ and $\dket{\Dcalb^{\text{DA}}_{1}(0)}$ as
\begin{equation}
	\dket{\rho^{\text{DA}}(0)} = \begin{bmatrix} \text{ } 1 & 1 & 0 & 0\text{ } \end{bmatrix}^{\text{t}} = \dket{\Dcalb^{\text{DA}}_{0}(0)} - \dket{\Dcalb^{\text{DA}}_{1}(0)} \text{ . } \label{EqIniStaDA}
\end{equation}
Hence, if we let the system evolves adiabatically, the evolved state is obtained from Eq.~\eqref{EqAdEvol1D} as
\begin{equation}
	\dket{\rho^{\text{DA}}(t)} = \dket{\Dcalb^{\text{DA}}_{0}(t)} - e^{-2\int_{t_{0}}^{t} \gamma(\xi)d\xi}\dket{\Dcalb^{\text{DA}}_{1}(t)} \label{EqDAAdEvolution} \text{ , }
\end{equation}
where we already used that $\Lambda_{1}(t) = \lambda_{1}(t)=-2 \gamma(t)$ and $\Lambda_{0}(t) = \lambda_{0}(t)=0$, since from Eqs.~\eqref{EqDARightEigenVec} and~\eqref{EqDALeftEigenVec} one can see that $\dinterpro{\Ecalb_{0}(t)}{\dot{\Dcalb}_{0}(t)}=\dinterpro{\Ecalb_{1}(t)}{\dot{\Dcalb}_{1}(t)}=0$. Now, by rewrite the above equation in matrix notation we get
\begin{equation}
	\dket{\rho^{\text{DA}}(t)} = \left[ \text{ } 1 \quad - e^{-2\int_{t_{0}}^{t} \gamma(\xi)d\xi}\cos \left( \frac{\pi F}{2}\frac{t}{\tau} \right) \quad - e^{\int_{t_{0}}^{t} \lambda_{1}(\xi)d\xi}\sin \left( \frac{\pi F}{2}\frac{t}{\tau} \right) \quad 0\text{ } \right]^{\text{t}}  \text{ , }
\end{equation}
where we can easily to determinate the components of the coherence vector associated with density matrix $\rho^{\text{DA}}(t)$ and write
\begin{equation}
	\rho^{\text{DA}}(t) = \frac{1}{2} \left[ \1 + e^{-2\int_{t_{0}}^{t} \gamma(\xi)d\xi}\cos \left( \frac{\pi F}{2}\frac{t}{\tau} \right) \sigma_{x} - e^{-2\int_{t_{0}}^{t} \gamma(\xi)d\xi}\sin \left( \frac{\pi F}{2}\frac{t}{\tau} \right)\sigma_{y} \right]  \text{ . }\label{EqDAAdSolOS}
\end{equation}

In unitary dynamics limit $\gamma(t) \rightarrow 0$, it is straightforward to see that we recover the density matrix of the unitary dynamics given in Eq.~\eqref{EqDAClosedSol}. However, the above solution is just valid when the adiabatic regime is achieved, in other words, when the adiabaticity conditions discussed in Sec.~\ref{SecAdApprox}. In this case more precisely, the relevant conditions are related with one-dimensional Jordan blocks as discussed in Sec.~\ref{SecAdApprox-1DJB}. Thus, by defining $\Gcal_{\beta \alpha}(s) = \dinterpro{\Ecalb_{\alpha}(s)}{d_{s}\Dcalb_{\beta}(s)}$, where we are using $s = t/\tau$, we find the non-null elements given by
\begin{subequations}\label{EqCoefficDA}
	\begin{align}
		\Gcal_{12}(s) &= -\frac{\pi F \Delta_{+}(s)}{2\omega} \text{ \, \, , \, \, }
		\Gcal_{13}(s) = -\frac{\pi F \omega}{2 \Delta_{+}(s)} \text{ , } \\
		\Gcal_{21}(s) &= - \Gcal_{31}(s) = \frac{\pi F \omega}{4 \sqrt{\gamma^2(s) - \omega^2}} \text{ , } \\
		\Gcal_{22}(s) &= - \Gcal_{32}(s) = \frac{ \Delta_{+}(s) d_{s}\gamma(s)}{2 \left(\gamma^2(s) - \omega^2\right)} \text{ , } \\
		\Gcal_{23}(s) &= - \Gcal_{33}(s) = - \frac{ \omega^2 d_{s}\gamma(s)}{2 \left(\gamma^2(s) - \omega^2\right) \Delta_{+}(s) } \text{ , }
	\end{align}
\end{subequations}
with $\Delta_{\pm}(s) = \gamma(s) \pm \sqrt{\gamma^2(s) - \omega^2}$. From above equation, it is possible to see that we can simplify our discussion if we consider $\gamma(t) = \gamma(s) = \gamma_{0}$, so that $\Gcal_{22}(s)= \Gcal_{32}(s)=\Gcal_{23}(s) =\Gcal_{33}(s)=0$, moreover, the rest of the above quantities are time-independent by using this assumption. Thus, from Eqs.~\eqref{EqAdCoeffOS} we have the adiabaticity coefficients given by
\begin{subequations}\label{EqCoef1DA}
	\begin{align}
		\Xi_{12}^{(1)}(s) &= \frac{e^{-s\tau\gamma_{0}} F\pi \left \vert \omega p_{1}(s) \right \vert  }{4\tau \left \vert \omega^{2} - \gamma_{0} \left( \gamma_{0} - i\sqrt{ \omega^2 - \gamma_{0}^2 }\right)\right \vert } \text{ \, \, , \, \, }
		\Xi_{13}^{(1)}(s) = \frac{e^{-s\tau\gamma_{0}} F\pi \left \vert \omega p_{1}(s) \right \vert  }{4\tau \left \vert \omega^{2} - \gamma_{0} \left( \gamma_{0} + i\sqrt{ \omega^2 - \gamma_{0}^2 }\right)\right \vert } \text{ , } \\
		\Xi_{21}^{(1)}(s) &= \frac{e^{s\gamma_{0}\tau} F\pi \left \vert \omega p_{3}(s) \right \vert }{2\tau\left \vert \gamma_{0} + i\sqrt{ \omega^2 - \gamma_{0}^2 }\right \vert} \text{ \, \, , \, \, } 
		\Xi_{31}^{(1)}(s) = \frac{e^{s\gamma_{0}\tau} F \pi |p_{2}(s)|}{2\omega^3\tau} \left \vert \omega^{2} - 2\gamma_{0} \left( \gamma_{0} + i\sqrt{ \omega^2 - \gamma_{0}^2 } \right) \right \vert \text{ , }
	\end{align}
\end{subequations}
where we already used that $\omega \geq \gamma$, so that $\left \vert\text{exp}\sqrt{\gamma_{0}^2 - \omega^2}\right \vert=1$. To end, the second adiabatic coefficients are
\begin{subequations}\label{EqCoef2DA}
	\begin{align}
		\Xi_{12}^{(2)}(s) &= \frac{e^{-s\tau\gamma_{0}} F\pi \left \vert \omega d_{s}p_{1}(s) \right \vert  }{4\tau \left \vert \omega^{2} - \gamma_{0} \left( \gamma_{0} - i\sqrt{ \omega^2 - \gamma_{0}^2 }\right)\right \vert } \text{ \, \, , \, \, }
		\Xi_{13}^{(2)}(s) = \frac{e^{-s\tau\gamma_{0}} F\pi \left \vert \omega d_{s}p_{1}(s) \right \vert  }{4\tau \left \vert \omega^{2} - \gamma_{0} \left( \gamma_{0} + i\sqrt{ \omega^2 - \gamma_{0}^2 }\right)\right \vert } \text{ , } \\
		\Xi_{21}^{(2)}(s) &= \frac{e^{s\gamma_{0}\tau} F\pi \left \vert \omega d_{s}p_{3}(s) \right \vert }{2\tau\left \vert \gamma_{0} + i\sqrt{ \omega^2 - \gamma_{0}^2 }\right \vert} \text{ \, \, , \, \, } 
		\Xi_{31}^{(2)}(s) = \frac{e^{s\gamma_{0}\tau} F \pi |d_{s}p_{2}(s)|}{2\omega^3\tau} \left \vert \omega^{2} - 2\gamma_{0} \left( \gamma_{0} + i\sqrt{ \omega^2 - \gamma_{0}^2 } \right) \right \vert \text{ . }
	\end{align}
\end{subequations}

The other coefficients are zero and this means that some Lindblad-Jordan eigenspaces are uncoupled between each other for any $\tau > 0$. In fact, we found that $\Xi_{\alpha0}^{(1)}(s) = \Xi_{\alpha0}^{(2)}(s) = 0$ so that the eigenvector $\dket{\Dcalb_{0}(s)}$ evolves independent on the $\alpha$-th eigenvector of $\Lmath(s)$. On the other hand, $\Xi_{0\alpha}^{(1)}(s) = \Xi_{0\alpha}^{(2)}(s) = 0$ means that the $\alpha$-th eigenvector of $\Lmath(s)$ evolves independent on the eigenvector $\dket{\Dcalb_{0}(s)}$.

In Eqs.~\eqref{EqCoef1DA}, due to the positive value of the exponential in above equations, the value of $\Xi_{21}^{(1)}(s)$ and $\Xi_{31}^{(1)}(s)$ should increase as $\gamma_{0}\tau \rightarrow \infty$, however in some cases there is a critical values $\Xi_{21}^{(1)}(s)$ and $\Xi_{31}^{(1)}(s)$ for some values of $\gamma_{0}\tau$. Therefore, as discussed in Ref.~\cite{Sarandy:05-1}, for a range of values for $\gamma_{0}\tau$ the adiabatic behavior is approximately obtained, but as $\gamma_{0}\tau$ grows the coefficients $\Xi_{21}^{(1)}(s)$ and $\Xi_{31}^{(1)}(s)$ become big enough so that we achieve a regime in which an \textit{adiabaticity breakdown} happens. On the other hand, let us consider a more robust analysis by studying the behavior of the other coefficients $\Xi_{12}^{(1)}(s)$ and $\Xi_{13}^{(1)}(s)$. The negative exponential in $\Xi_{12}^{(1)}(s)$ and $\Xi_{13}^{(1)}(s)$ shows a different behavior because as $\tau$ increases the coefficient $e^{-s\gamma_{0}\tau}/\gamma_{0}\tau$ decreases. Moreover, in the limit $\gamma_{0} \tau \rightarrow \infty$, one gets $\Xi_{12}^{(1)}(s) \rightarrow 0$ and $\Xi_{13}^{(1)}(s) \rightarrow 0$. The same result is obtained for the coefficients in Eqs.~\eqref{EqCoef2DA}. From this analysis, we have a scenario where in the limit of slow dynamics ($\gamma_{0} \tau \gg 1$) the eigenvector $\dket{\Dcalb_{1}(s)}$ does not evolve uncoupled from $\dket{\Dcalb_{2}(s)}$ and $\dket{\Dcalb_{3}(s)}$ (no adiabaticity), but the eigenvectors $\dket{\Dcalb_{2}(s)}$ and $\dket{\Dcalb_{3}(s)}$ are decoupled from $\dket{\Dcalb_{1}(s)}$. However, interestingly, here we will discuss how it is possible to proof that adiabaticity can be obtained in the regime $\gamma_{0} \tau \gg 1$.

By using the Eqs.~\eqref{EqCoefficDA}, from set of equations given in Eq.~\eqref{EqrdotOne}, one finds $\dot{r}_{0}(t) = 0$ and
\begin{subequations}
	\begin{align}
		d_{s}r_{1}(s) & = \tau\lambda_{1}(s)r_{1}(s)  - r_{2}(s) \Gcal_{12}(s) - r_{3}(s) \Gcal_{13}(s) \text{ , } \label{Eqr1dotDA} \\
		d_{s}r_{2}(s) & = \tau\lambda_{2}(s)r_{2}(s)  - r_{1}(s) \Gcal_{21}(s) \text{ , } \label{Eqr2dotDA} \\
		d_{s}r_{3}(s) & = \tau\lambda_{3}(s)r_{3}(s) - r_{1}(s) \Gcal_{31}(s) \text{ . } \label{Eqr3dotDA} 
	\end{align}
\end{subequations}

Now, we use the discussion above in order to rewrite the Eqs.~\eqref{Eqr2dotDA} and~\eqref{Eqr3dotDA} in adiabatic regime as ($j = 2,3$)
\begin{equation}
	d_{s}r_{j}(s)  \approx \tau\lambda_{j}(s)r_{j}(s) \text{ \, \, \, } \Rightarrow \text{ \, \, \, } r_{j}(s) = r_{j}(0) e^{\tau \int_{0}^{s}\lambda_{j}(s^{\prime})ds^{\prime}}  \text{ . } \label{EqSolr2andr3}
\end{equation}

As it was already discussed here, the coefficients $r_{2}(s)$ and $r_{3}(s)$ do not depend on the other coefficients. However, by rewriting the Eq.~\eqref{Eqr1dotDA} in this regime we get
\begin{equation}
	d_{s}r_{1}(s)  = \tau\lambda_{1}(s)r_{1}(s) - r_{2}(0) e^{\tau \int_{0}^{s}\lambda_{2}(s^{\prime})ds^{\prime}} \Gcal_{12}(s) - r_{3}(0) e^{\tau \int_{0}^{s}\lambda_{3}(s^{\prime})ds^{\prime}} \Gcal_{13}(s) \text{ , }
\end{equation}
thus, note that if we consider the initial state such that $r_{2}(0) = r_{3}(0) = 0$, then we find
\begin{equation}
	d_{s}r_{1}(s)|_{r_{j}(0)=0}  = \tau\lambda_{1}(s)r_{1}(s) \text{ \, \, \, } \Rightarrow \text{ \, \, \, } r_{1}(s)|_{r_{j}(0)=0} = r_{1}(0) e^{\int_{0}^{s}\tau\lambda_{1}(s^{\prime})ds^{\prime}} \text{ , }
\end{equation}
where the adiabatic dynamics is obtained. From this example we conclude that the adiabatic approximation in open system requires a more detailed study, we mean, the validity conditions of the adiabatic dynamics alone are not enough to guarantee the adiabatic behavior in open systems. In conclusion, this discussion allows us to guarantee that the Deutsch algorithm considered here achieves its adiabatic dynamics when very-slowly driven fields are used. In fact, from Eq.~\eqref{EqIniStaDA} it is possible to see that the initial condition $r_{2}(0) = r_{3}(0) = 0$ is satisfied and the dynamics is dictated by the Eq.~\eqref{EqDAAdEvolution}.

\subsection{Experimental verification and asymptotic adiabatic dynamics for open systems}

In order to verify the theoretical results obtained in last section, where the adiabatic dynamics in open system is obtained in asymptotic long evolution total time, in this section we will present the experimental verification in a trapped ion Ytterbium qubit~\cite{Hu:19-a}. However, different from Sec.~\ref{SecExpAdCondCS} where the dynamics is unitary, here we need to consider the non-unitary effects on the dynamics because of the long time coherence of the system~\cite{Leibfried:03}. In fact, as shown in Appendix~\ref{ApTrappeIonDeco}, while the Rabi frequency achieved in our experiment is of the order of KHz, the natural dephasing rate $\gamma_{\text{nd}}$ of the system used in our experiment was computed as $\gamma_{\text{nd}} \approx 3.03$~Hz. For this reason, a simulation of non-unitary effects on the two-level system is required and it is done with the experimental apparatus shown in Fig.~\ref{FigAdiabExpComTrappedIon}. For more details about the feature used in the simulation, see Appendix~\ref{SecExpAdCondCS}.

\begin{figure}[t!]
	\centering
	\includegraphics[scale=0.65]{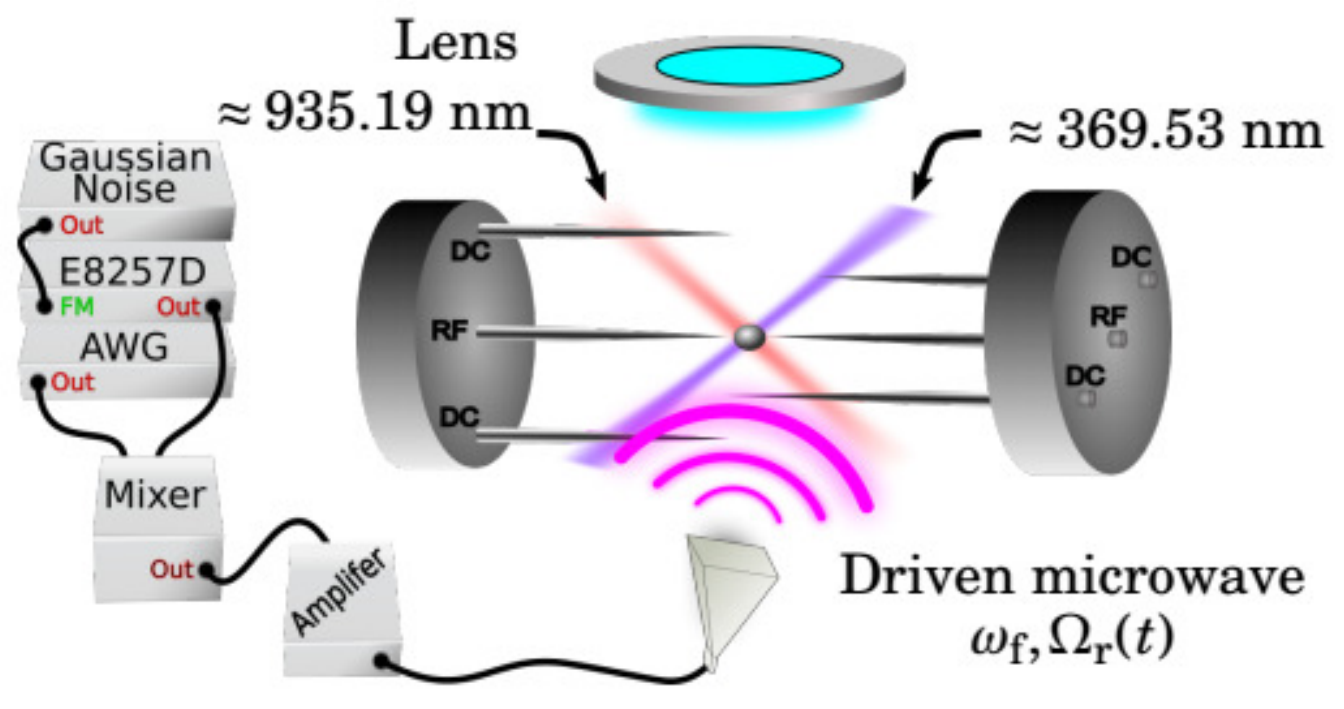}
	\caption{A modified version of the experimental setup presented in Fig.~\ref{FigAdiabExpTrappedIon}{\color{blue}a}, where here we have a new electronic component used to introduce a Gaussian noise in driving microwave.}
	\label{FigAdiabExpComTrappedIon}
\end{figure}

As a quantifier of the adiabatic dynamics we define the fidelity between two density matrices $\rho_{1}$ and $\rho_{2}$ given by
\begin{equation}
	\Fcalb(\rho_{1},\rho_{2}) = \text{Tr}\sqrt{ \sqrt{\rho_{1}} \rho_{2} \sqrt{\rho_{1}}} \text{ , } \label{EqFidelityGeneral}
\end{equation}
from what we can set here two important fidelities. By putting $\rho_{1}$ as the desired dynamics and $\rho_{2}$ as the numerical solution of the dynamics, we define the theoretical fidelity $\Fcalb_{\text{the}}(t)$. Otherwise, when $\rho_{2}$ assume the values of the experimental density matrix (obtained from tomography process) we have the experimental fidelity $\Fcalb_{\text{exp}}(t)$. Said that, the theoretical/experimental fidelity of the system follows an adiabatic path in open system as function of the total evolution time $\tau$ for the Deutsch algorithm reads
\begin{equation}
	\Fcalb_{\text{the(exp)}}(\tau) = \text{Tr}\sqrt{ \sqrt{\rho^{\text{DA}}(\tau)} \rho_{\text{the(exp)}}(\tau) \sqrt{\rho^{\text{DA}}(\tau)}} \text{ , }
\end{equation}
with $\rho_{\text{the(exp)}}(\tau)$ being obtained through numerical solution (experimental realization) and $\rho^{\text{DA}}(\tau)$ is given by Eq.~\eqref{EqDAAdSolOS}. In case where $\Fcalb_{\text{the(exp)}}(\tau) \approx 1$, we have the adiabatic solution with high fidelity, otherwise the dynamics is not adiabatic. For the experiment of the Deutsch algorithm we take a balanced function as example and we set the parameter $\omega = 2\pi \times 10$~KHz, so that in Fig.~\ref{FigGraphsDA} we present the results for the fidelity $\Fcalb(\tau)$. In all of the experiments realized in this work, the error bars are obtained from the standard deviation associated with 60 000 repetitions of the experiment. For every fidelity, we perform state tomography by measuring the qubit in the three Pauli bases ($\sigma_{x}$, $\sigma_{y}$, and $\sigma_{z}$)~\cite{James:01}, with every basis measured $2000$ times and repeated 10 times.

\begin{figure}[t!]
	\centering
	\includegraphics[scale=0.6]{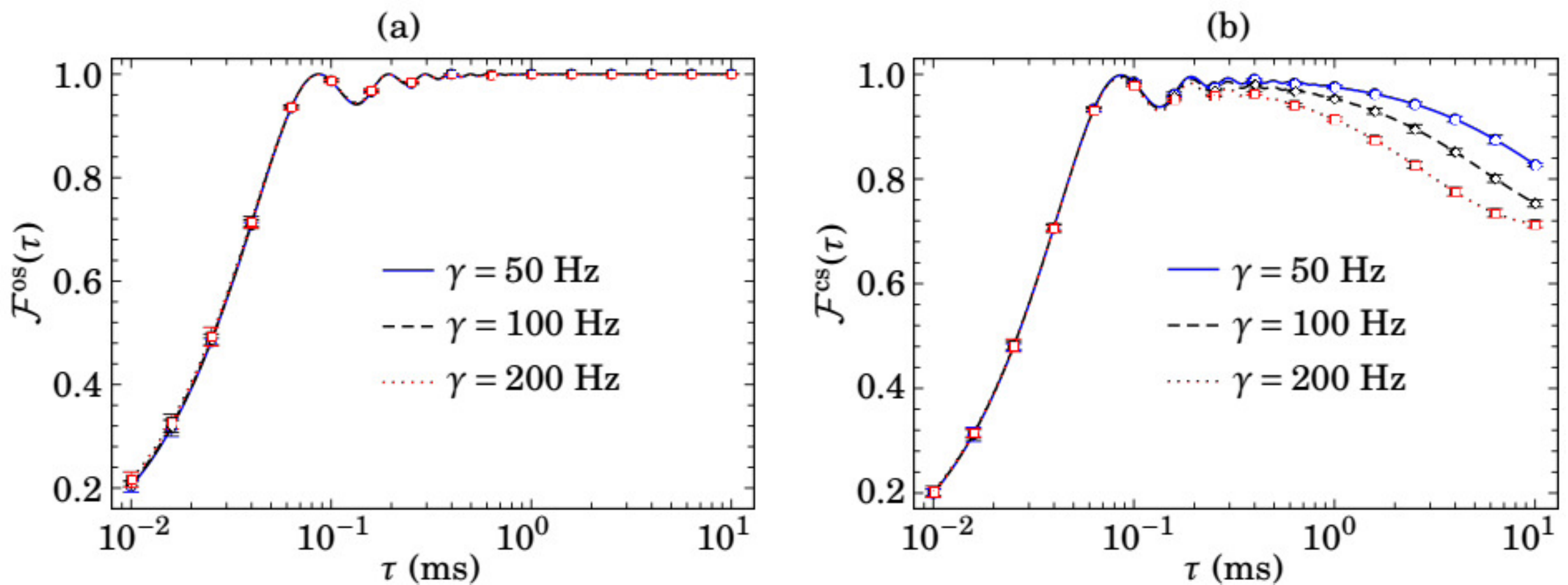}
	\caption{({\color{blue}a}) Fidelity for the adiabatic open system behavior of the Deutsch algorithm. It is shown $\Fcalb^{\text{os}}_{\text{the}}(\tau)$ (continuum lines) and its experimental counterpart $\Fcalb^{\text{os}}_{\text{exp}}(\tau)$ (symbols) as functions of $\tau$ for several values of the dephasing rate $\gamma$. ({\color{blue}b}) Fidelity for the pure target state in the Deutsch algorithm for several values of the dephasing rate $\gamma$. Lines represent $\Fcalb^{\text{cs}}_{\text{the}}(\tau)$ and the dots its experimental counterpart $\Fcalb^{\text{cs}}_{\text{exp}}(\tau)$. In both graphs we consider a balanced function with $\omega = 2\pi \times 10$~KHz. Due the high controllability and measurement fidelity, some experimental points data are superposed.}
	\label{FigGraphsDA}
\end{figure}

It is argued in Ref.~\cite{Sarandy:05-1} that there should be an optimal time for the adiabatic approximation. While this is true in general, here fidelity increases to its maximum as $\tau \rightarrow \infty$ due to the initial conditions of the system state. Actually, as discussed in previous section, the relation between adiabatic and the decoherence time scales allows us to get an adiabatic dynamics for very-slowly driven Lindbladian. It is also important to stress here that the high fidelity for the adiabatic behavior as $\tau \rightarrow \infty$ does not imply in a high fidelity for the target state of the algorithm. Indeed, the output for a balanced function $f$ is the pure state $\ket{-}$, which can be represented by the density operator $\rho_{\text{cs}}^{\text{DA}}(\tau)$ given in Eq.~\eqref{EqDAClosedSol}. In order to quantify the success of the algorithm under decoherence, we define the fidelity of the real dynamics with respect to the target output density operator given by
\begin{equation}
	\Fcalb^{\text{cs}}_{\text{exp}}(\tau) = \text{Tr}\sqrt{ \sqrt{ \rho_{\text{cs}}^{\text{DA}}(\tau) } \rho_{\text{the(exp)}}(t) \sqrt{\rho_{\text{cs}}^{\text{DA}}(\tau)}} \text{ . }
\end{equation}

Observe that, since we are evolving under decoherence, the target state $\rho_{\text{cs}}^{\text{DA}}(\tau)$ is distinct of the adiabatic solution given in Eq.~\eqref{EqDAAdSolOS}, with the adiabatic solution reducing to the target state in the limit of vanishing decoherence. The fidelity with respect to the pure target state in Fig.~\ref{FigGraphsDA}{\color{blue}b} decays for long times as the decoherence rate increases. This is in contrast with the fidelity with respect to the mixed-state adiabatic evolution for open system in Fig.~\ref{FigGraphsDA}{\color{blue}a}, which goes to one as $t \rightarrow \infty$. This is because adiabaticity in open systems is related to the decoupling of Jordan blocks in a nonunitary evolution, which is only approximately equivalent (in the weak coupling regime) to achieving a pure target state after a computation process. Therefore, by focusing on pure target states, it is possible to see a preferred time-window exhibiting maximum fidelity. In Fig.~\ref{FigGraphsDA}{\color{blue}a}, we show two windows with high fidelities, which are highlighted in light green color. Notice that these windows also correspond to high fidelities in Fig.~\ref{FigGraphsDA}{\color{blue}b}, which means that adiabaticity in open systems is indeed able to provide the target state of the Deutsch algorithm with high fidelity for convenient measurement times. Therefore, this allows us to understand a bit more how to obtain a favorable trade-off between the necessary time to achieve the adiabatic regime and the (restrictive) time scale of decohering effects.

Now, let us to generalize the above result for a more general situation. By using the master equation, it follows that the superoperator $\Lcalb_{t}[\bullet]$ satisfies $\trs{\Lcalb_{t}[\rho(t)]} = 0$, since $\trs{\dot{\rho}(t)} = 0$.
Thus, if we consider the basis $\{\sigma_{n}\}$, with $\sigma_{0}=\1$, the first row of the matrix representation of $\Lmath(t)$ is vanishing.
In fact, by adopting that the matrix elements of $\Lmath(t)$ are written as $\Lmath_{mn}(t) = (1/D)\trs{\sigma_{m}^{\dagger}\Lcalb_{t}[\sigma^{n}]}$, we have
\begin{equation}
	\Lmath_{0n}(t) = \frac{1}{D}\trs{\Lcalb_{t}[\sigma^{n}]} = 0 \text{ . }
\end{equation}
Then $\Lmath(t)$ has at least one eigenvalue zero with eigenvector constant. We further assume that $\Lmath(t)$ admits one-dimensional Jordan decomposition and
that there are no eigenvalue crossings in the spectrum of $\Lmath(t)$.
If we suitably order the basis so that the first eigenvalue is $\lambda_{0} = 0$, we get the associated eigenvector as
\begin{equation}
	\dket{\Dcalb_{0}} = \begin{bmatrix}
		1 & 0 & 0 & \cdots & 0 & 0
	\end{bmatrix}^t \text{ . }
\end{equation}
As an immediate result, it follows that $\dbra{\Ecalb_{0}} = \dket{\Dcalb_{0}}^{t}$. In addition, such vector is associated with the maximally mixed state $(1/D)\1$. In fact, the elements $\varrho_{n}(t)$ of $\dket{\rho(t)}$ are given by $\varrho_{n}(t) = \trs{\sigma_{n}^{\dagger}\rho(t)}$, so that in the basis $\{\1,\sigma_{n}\}$ we have
\begin{equation}
	\varrho_{n}(t) = \frac{1}{D}\trs{\sigma_{k}^{\dagger} \1} = \delta_{0n} \text{ . }
\end{equation}
where we use that $\trs{\sigma_{k}^{\dagger}\sigma_{n}} = D \delta_{kn}$ and $\trs{\sigma_{n}} = D\delta_{n0}$. Therefore, by writing the density matrix as in Eq.~\eqref{EqEqRhoCoherence}, it is possible to conclude that any physical initial state $\dket{\rho(0)}$ \textit{must} be written as a combination of $\dket{\Dcalb_{0}}$ and other vectors $\dket{\Dcalb_{\beta\neq 0}(t)}$. Hence, the initial state can be generally written as
\begin{equation}
	\dket{\rho(0)} = c_0(0) \dket{\Dcalb_{0}} +\sum\nolimits_{\beta \neq 0} c_{\beta}(0) \dket{\Dcalb_{\beta}(0)} \text{ , } \label{ApInitState}
\end{equation}
where $c_0(0)=1$ and $c_{\beta}(0)$ are general complex coefficients.

The dynamics of the vanishing eigenvalue subspace can be studied from equation
\begin{equation}
	\dot{c}_{0}(t) = \sum_{\beta \neq0} c_{\beta}(t) \dinterpro{\Ecalb_{0}}{\dot{\Dcalb}_{\beta}(t)}e^{\int_{0}^{t} \lambda_{\beta}(t^\prime)dt^\prime} \text{ . }\label{Apc0dot}
\end{equation}
where we already used $\lambda_{0}(t) = 0$ and $\dinterpro{\Ecalb_{\beta}(t)}{\dot{\Dcalb}_{0}(t)} = 0$. Now, using that the supervectors $\dket{\dot{\Dcalb}_{\beta}(t)}$ and $\dbra{\Ecalb_{\beta}(t)}$ satisfy $\dinterpro{\Ecalb_{\alpha}(t)}{{\Dcal}_{\beta}(t)} = \delta_{\beta \alpha}$ for any $\beta$ and $\alpha$, we have
\begin{equation}
	\frac{d}{dt}\dinterpro{\Ecalb_{\alpha}(t)}{\Dcalb_{\beta}(t)} = \dinterpro{\dot{\Ecalb}_{\alpha}(t)}{\Dcalb_{\beta}(t)}+ \dinterpro{\Ecalb_{\alpha}(t)}{\dot{\Dcalb}_{\beta}(t)} = 0 \text{ , } \nonumber
\end{equation}
and consequently we conclude that
\begin{equation}
	\dinterpro{\dot{\Ecalb}_{\alpha}(t)}{\Dcalb_{\beta}(t)} = - \dinterpro{\Ecalb_{\alpha}(t)}{\dot{\Dcalb}_{\beta}(t)} \text{ . }
\end{equation}
By using this result in Eq.~\eqref{Apc0dot}, we get $\dot{c}_{0}(t) = 0$, which then implies in $c_0(t)=1, \, \forall t$.

Concerning the dynamics of the remaining eigenstates of $\Lmath(t)$, let us assume that
the initial state is such that a single Jordan block is populated in addition to the block associated with $\lambda_0(t)$,
i.e., $c_\eta(0) \ne 0$ for a single $\eta \in \left[1,D^2-1\right]$.
We then start by looking at the adiabatic condition in Eqs.~\eqref{EqAdCondOpenSystemGenBlock}, they tell us whether or not
$\beta$ can evolve decoupled from $\alpha$, but it does not provide any information whether $\alpha$ can evolve decoupled from $\beta$.
Assume that the adiabaticity conditions hold asymptotically in time ($t \rightarrow \infty$),
due to the fact that $\Re\left[\lambda_{\alpha}(t)-\lambda_{\beta}(t)\right] < 0, \, \forall \alpha,t$.
Provided the absence of level crossings as a function of time,
this condition selects the largest eigenvalue $\lambda_{\beta}(t)$.
Then, we write $\dot{c}_{\beta}(t) =  b(t) c_{\beta}(t)$, with $b(t)$ denoting a complex coefficient. The solution reads $c_{\beta}(t) = c_{\beta}(0)e^{\int_{0}^{t}b(t^\prime)dt^\prime}$.
On the other hand, once the parameter $\xi_{\alpha\beta}(t)$ does not necessarily satisfy the adiabaticity conditions, we generically write
$\dot{c}_{\alpha}(t) = a_\beta(t) c_{\beta}(t) + \sum_{\alpha^\prime \neq \beta} a_{\alpha^\prime} (t) c_{\alpha^\prime}(t)$, with $a_{\alpha^\prime}(t)$ and $a_{\beta}(t)$
complex coefficients. However, by imposing $c_{\beta}(0) = 0$,
it yields ${c}_{\beta}(t) = 0$ and consequently
$\dot{c}_{\alpha}(t) = \sum_{\alpha^\prime \neq \beta} a_{\alpha^\prime} (t) c_{\alpha^\prime}(t)$. By iteratively applying this argument after decoupling $\lambda_\beta(t)$,
we finally obtain that $\dot{c}_{\eta}(t) = a_{\eta}(t) c_{\eta}(t)$, stopping as $\lambda_\eta(t)$ becomes the largest eigenvalue of the remaining set.
Thus, by assuming a single $c_\eta(0) \ne 0$ in the initial state, which is equivalent to assuming an initial superposition of only two Jordan blocks, an adiabatic dynamics
for $t\rightarrow \infty$ is always achieved, reading
\begin{equation}
	\dket{\rho(t)} = \dket{\Dcalb_{0}} + c_{\eta}(t) \dket{\Dcalb_{\eta}(t)} \text{ . }
\end{equation}

\section{Thermodynamics of adiabatic dynamics in open system}

Thermodynamics is the knowledge field that encompasses physical processes of systems composed by a large number of particles subject to interaction with its environment. In particular, processes where some amount of energy exchanges that may arise from the difference of temperature between two systems (heat) or due to changes in force fields acting on some system (work). In this sense, a new knowledge field emerges when we use notions of thermodynamics to quantum mechanical systems: \textit{quantum thermodynamics}. Quantum thermodynamics deals with thermodynamics quantities (heat and work) of systems that present physical properties inherent to quantum systems (entanglement and coherence, for example). For example, quantum thermodynamics may be used to understand how coherence allows us to design efficient thermal engines at quantum level~\cite{Scovil:59,Klatzow:19}, and entanglement is used to enhance charging power in quantum batteries~\cite{Ferraro:18,PRL_Andolina,PRL2017Binder}. Here we will consider a particular question related with the thermodynamics of adiabatic dynamics. To this end, we present a short discussion on how to define heat and work for quantum mechanical systems driven by master equations as in Eq.~\eqref{EqEqLind}. Because it is not the scope of this thesis to carefully present general results and features of quantum thermodynamics, for further details on quantum thermodynamic and its applications we recommend some relevant references~\cite{Deffner-Campbel:Book,Alicki:18,CP_Vinjanampathy,Anders:13,Gemmer:Book}.

\subsection{Work and heat in quantum thermodynamics}

In classical thermodynamics, the first thermodynamics law establishes that the internal energy variation of a system is given by
\begin{equation}
	d U = \dbar W + \dbar Q \text{ , } \label{EqdU}
\end{equation}
where $\dbar Q$ is some amount of heat exchange between the system of interest and its environment, and $\dbar W$ is work realized by/on the system. Here $\dbar$ denotes that the heat and word depend on the thermodynamics path followed by the system. In above equation, heat and work are very well defined and it can be obtained from observation macroscopic properties of the system. However, physical quantities in quantum mechanics are obtained from its postulates and, consequently, from Hermitean observables. In general, it has been shown that it is not possible to associate an observable for the thermodynamic definition of heat and of work~\cite{Talkner:07}. Then, the starting point widely used to define such physical quantities in quantum systems is from the definition of internal energy~\cite{Alicki:79,Kieu:04}. If we consider a quantum system driven by a time-dependent Hamiltonian $H(t)$, we can can compute the instantaneous internal energy of the system as
\begin{equation}
	U (t) = \tr{\rho (t) H(t)} \text{ , }
\end{equation}
given the instantaneous system state at instant $t$. Therefore, the energy variation of the system can be directly obtained as
\begin{equation}
	d U (t) = \frac{d}{dt} \tr{\rho (t) H(t)} dt = \tr{\dot{\rho} (t) H(t)} dt + \tr{\rho (t) \dot{H}(t)} dt \text{ . } \label{EqdUQS}
\end{equation}

Now, in order to get a relation between the above equation with Eq.~\eqref{EqdU}, we need to consider two important observations. First, let be the system evolving free from any external interaction with other systems, that is, we have a closed system and it follows a unitary dynamics. In this case, we get
\begin{equation}
	d U_{\text{cs}} (t) = \tr{\rho (t) \dot{H}(t)} dt \text{ , } \label{EqdUQSUniDy}
\end{equation}
because we can use the von Neumann equation for the density operator to write
\begin{equation}
	\tr{\dot{\rho} (t) H(t)} = \frac{1}{i\hbar}\tr{[H(t),\rho (t)] H(t)} = \frac{1}{i\hbar}\tr{H(t)\rho (t) H(t) - \rho (t) H(t) H(t)} = 0 \text{ , }
\end{equation}
valid whenever the system dynamics is unitary (closed system). That means any variation in the internal energy of the system, through a unitary dynamics, is due to time-dependence of the Hamiltonian. Consequently, some external field control is required when we want to change the system energy. For this reason, it is convenient to associate the energy variation $\Delta U_{\text{cs}} (t_{1},t_{0})$ in interval $t \in [t_{1},t_{0}]$ with work realized by/on the system $W(t_{1},t_{0})$ as
\begin{equation}
	W(t_{1},t_{0}) = \Delta U_{\text{cs}} (t_{1},t_{0}) = \int_{t_{0}}^{t_{1}} \underbrace{\tr{\rho (t) \dot{H}(t)} dt}_{\dbar W} \text{ . } \label{EqWorkQS}
\end{equation}

Because we have an addition in Eq.~\eqref{EqdU}, when $W(t_{1},t_{0}) > 0$ we say that work is done \textit{on} the system, otherwise work is done \textit{by} the system. Here we need to be careful because the above equation is valid for unitary dynamics, so when the system is coupled to an external system, we cannot identity the above equation as work in general~\cite{Deffner-Campbel:Book,Anders:13}. On the other hand, let us consider the case where the system is driven with a \textit{time-independent} Hamiltonian $H(t) = H$, then from Eq.~\eqref{EqdUQS} we get
\begin{equation}
	d U_{\text{os}} (t) = \tr{\dot{\rho} (t) H } dt = \tr{\Lcalb_{t}[\rho(t)]H}dt \text{ , } \label{EqdUQSNonUniDy}
\end{equation}
where we used the Eq.~\eqref{EqEqLind} in last equality. The meaning of the subscript ``os'' in above equation is because $d U_{\text{os}} (t) \neq 0$ if the system is coupled to some external system (reservoir), otherwise we get $d U_{\text{os}} (t) = 0$. In fact, this result is immediate from Eq.~\eqref{EqdUQSUniDy} with $H(t) = H$. Then, when the system is driven by time-independent fields, the energy of our system just change if we couple it to some external system which will transfer energy to our system of interest. For this reason, we identify the above quantity with the amount of \textit{heat} exchanged between the system and its environment as
\begin{equation}
	Q(t_{1},t_{0}) = \Delta U_{\text{os}} (t_{1},t_{0}) = \int_{t_{0}}^{t_{1}} \underbrace{\tr{\Lcalb_{t}[\rho(t)]H} dt}_{\dbar Q} \text{ . } \label{EqHeatQS}
\end{equation}

An important remark here is that in \textit{steady state} regime we get $\dbar Q = 0$. In fact, because we have that the steady state $\rho_{\text{ss}}$ satisfies $\Lcalb_{t}[\rho_{\text{ss}}] = 0$, in the limit where $\Lcalb_{t}[\rho(t)] \rightarrow \Lcalb_{t}[\rho_{\text{ss}}] = 0$, we get $\dbar Q = 0$. From this result, if the dynamics admits a steady state, we can establish a maximum amount of heat exchanged between system and reservoir as
\begin{equation}
	Q_{\text{ss}} = \int_{t_{0}}^{\tau_{\text{ss}}} \tr{\Lcalb_{t}[\rho(t)]H} dt = \tr{H\rho_{\text{ss}}} - \tr{H\rho(t_{0})} \text{ . } \label{EqHeatQS-SS}
\end{equation}
where $\tau_{\text{ss}}$ is the time required to achieve the steady state $\rho_{\text{ss}}$.

\subsection{Adiabatic process \textit{versus} adiabatic dynamics}

In thermodynamics, by definition adiabaticity is associated to a process with no heat exchange between the system and its reservoir. On the other hand, as discussed in Sec.~\ref{SecAdApprox}, adiabaticity in dynamics of quantum systems is well defined in a different way. Therefore, the question that arises here is: \textit{How are thermodynamics adiabatic processes related with an adiabatic dynamics?} As a preliminary result (but insufficient yet), we can restrict our analysis to the closed system case. In general, as already discussed here, unitary processes do not admit energy transference as heat, since no system is coupled to our main system. Because this result is independent on the kind of dynamics (adiabatic or not), we can say that \textit{adiabatic processes in closed systems are not related with an adiabatic dynamics}. Therefore, in order to give a more complete study on this issue, it is convenient to consider an open system dynamics because some heat exchange is admissible. To this end, we need to introduce heat and work in a general way from superoperator formalism, where adiabatic dynamics in open system is well defined.

Therefore, let us consider the heat exchange as given in Eq.~\eqref{EqdUQSNonUniDy}. To derive the corresponding expression in the superoperator formalism we first define the basis of operators given by $\{\sigma_{i}\}$, $i=0,\cdots,D_{-}$ (again we are using $D_{-} = D^2-1$), where $\trs{\sigma^{\dagger}_{i}\sigma_{j}} = D\,\delta_{ij}$. In this basis, we can write $\rho (t)$ and $H(t)$ generically as
\begin{equation}
	H(t) = \frac{1}{D}\sum_{n=0}^{D_{-}} h_{n}(t)\sigma ^{\dagger}_{n} \text{ \, and \, } \rho(t) = \frac{1}{D}\sum_{n=0}^{D_{-}} \varrho_{n}(t)\sigma_{n} \text{ , } \label{EqHroSuper}
\end{equation}
where we have $h_{n}(t) = \trs{H(t)\sigma_{n}}$ and $\varrho_{n}(t) = \trs{\rho(t)\sigma^{\dagger}_{n}}$. Then, we get
\begin{equation}
	\dbar Q = \frac{1}{D^2}\left( \sum _{n,m=0}^{D_{-}}  \trs{\Lcalb[\varrho_{n}(t)\sigma_{n}] h_{m}(t)\sigma ^{\dagger}_{m}} \right) dt = \frac{1}{D^2}\left( \sum _{n,m=0}^{D_{-}} \varrho_{n}(t) h_{m}(t) \trs{\Lcalb[\sigma_{n}] \sigma ^{\dagger}_{m}} \right) dt \label{EqdHeatSuperoForm} \text{ .}
\end{equation}
Now, we use the definition of the matrix elements of the superoperator $\Lmath(t)$, associated with $\Lcalb[\bullet]$, which reads $\Lmath_{mn} = (1/D)\trs{\sigma ^{\dagger}_{m}\Lcalb[\sigma_{n}] }$, so that we write
\begin{equation}
	\dbar Q = \frac{1}{D}\left( \sum _{n,m=0}^{D_{-}} h_{m}(t) \Lmath_{mn} \varrho_{n}(t) \right) dt \label{dQ3open} \text{ .}
\end{equation}
In conclusion, by defining the vector elements
\begin{align}
	\dbra{h(t)} &= \begin{bmatrix}
		h_{0}(t) & h_{1}(t) & \cdots & h_{D_{-}}(t)
	\end{bmatrix}^{\text{t}} \text{ , } \label{EqBraH}\\
	\dket{\rho(t)} &= \begin{bmatrix}
		\varrho_{0}(t) & \varrho_{1}(t) & \cdots & \varrho_{D_{-}}(t)
	\end{bmatrix}\text{ , } \label{EqKetRho}
\end{align}
we can rewrite Eq.~\eqref{dQ3open}, yielding
\begin{equation}
	\dbar Q = \frac{1}{D}\dinterpro{h(t)}{\Lmath (t)|\rho(t)}dt \text{ . } \label{EqQsuperOp}
\end{equation}
Equivalently, from Eq.~\eqref{EqWorkQS} we can use the Eq.~\eqref{EqHroSuper} to write $\dot{H}(t) = (1/D)\sum_{n=0}^{D_{-}} \dot{h}_{n}(t)\sigma ^{\dagger}_{n}$ and, consequently
\begin{equation}
	\dbar W =  \frac{1}{D}\sum_{n=0}^{D_{-}}\dot{h}_{n}(t)\trs{\rho (t) \sigma ^{\dagger}_{n}} dt =  \frac{1}{D}\sum_{n=0}^{D_{-}}\dot{h}_{n}(t)\varrho_{n}(t) dt \text{ , }
\end{equation}
where we used the definition of the coefficients $\varrho_{n}(t)$ in last equality. Now, by using Eqs.~\eqref{EqBraH} and~\eqref{EqKetRho} into above equation, we conclude that
\begin{equation}
	\dbar W = \frac{1}{D}\dinterpro{\dot{h}(t)}{\rho(t)} dt \text{ . } \label{EqWorkSuperoForm}
\end{equation}

To end, since the system is evolving under a non-unitary dynamics, some entropy variation is computed along the evolution. Here we use the entropy variation as provided by the variation in Von Neumann entropy given by
\begin{equation}
	\Delta S(t,t_{0}) = S(t) - S(t_{0}) = - \tr{\rho(t) \ln [\rho(t)]} + \tr{\rho(t_{0}) \ln [\rho(t_{0})]} \text{ , }
\end{equation}
that can be written in a differential way as
\begin{equation}
	d S(t) = - \left[\frac{d}{dt} \tr{\rho(t) \ln [\rho(t)]}\right] dt = - \trs{ \dot{\rho} (t)\log \rho(t) } - \trs{ \dot{\rho} (t)} \text{ , }
\end{equation}
and by using that $\trs{ \dot{\rho} (t)} = 0$, because $\trs{ \rho (t)} = 1$, and $ \dot{\rho} (t) = \Lcalb_{t} [\rho(t)]$ we get
\begin{equation}
	d S(t) = - \trs{ \dot{\rho} (t)\log \rho(t) }dt = - \trs{ \Lcalb_{t} [\rho(t)] \log \rho(t) }dt \text{ . }
\end{equation}

Now, let us to write
\begin{equation}
	\log \rho(t) = \frac{1}{D}\sum_{n=0}^{D_{-}} \varrho_{n}^{\log}(t)\sigma_{n}^{\dagger} \text{ , } 
\end{equation}
so that we can define the vectors $\dbra{\rho_{\log}(t)}$ associated to $\log \rho(t)$ with components $\varrho_{n}^{\log}(t)$ obtained as $\varrho_{n}^{\log}(t) = \trs{\sigma_{n}\log \rho(t)}$. Thus, we get
\begin{equation}
	dS (t) = -\frac{1}{D^2}\sum_{m=0}^{D_{-}}\sum_{n=0}^{D_{-}}\varrho_{m}(t) \varrho_{n}^{\log}(t)\trs{ \Lcalb_{t} [\sigma_{m}]\sigma_{n}^{\dagger} } dt = - \frac{1}{D}\dbra{\rho_{\log}(t)}\Lmath (t) \dket{\rho(t)} dt \text{ . }  \label{EqVarEntropySuperop}
\end{equation}

Thus, from Eqs.~\eqref{EqdHeatSuperoForm},~\eqref{EqWorkSuperoForm} and~\eqref{EqVarEntropySuperop} we can study the thermodynamics of adiabatic dynamics in a general framework. However, since work of an adiabatic dynamics has been addressed in a number of applications, here we are interested in study the implications of the adiabatic dynamics in the expression of heat derived in Eqs.~\eqref{EqdHeatSuperoForm} and~\eqref{EqVarEntropySuperop}. To this end, let us consider the most general case in which the system starts from a superposition $\dket{\rho (0)} = \sum \nolimits _{\alpha,k_{\alpha}} r^{k_{\alpha}}_{\alpha}\dket{\Dcalb^{k_{\alpha}}_{\alpha}(0)}$, associated with the initial arbitrary matrix density $\rho(0)$. Then, if we take into account the situations where the system evolves adiabatically, we use the evolution operator in Eqs.~\eqref{EqUAdOS} and~\eqref{EqUalpha} to write the evolved as 
\begin{align}
	\dket{\rho_{\text{ad}} (t)} &= \Ucalb (t) \dket{\rho (0)} = \sum _{\beta}^{N} \sum _{k_{\beta}}^{N_{\beta}} \sum_{\alpha = 1}^{N} \sum _{n_{\alpha},m_{\alpha}}^{N_{\alpha}} r^{k_{\beta}}_{\beta}\eta_{n_{\alpha}m_{\alpha}}(t)\dket{\Dcalb_{\alpha}^{n_{\alpha}}(t)}\dinterpro{\Ecalb_{\alpha}^{m_{\alpha}}(0)}{\Dcalb^{k_{\beta}}_{\beta}(0)} \nonumber \\
	&= \sum _{\beta}^{N} \sum _{n_{\beta},k_{\beta}}^{N_{\beta}} r^{k_{\beta}}_{\beta}\eta_{n_{\beta}k_{\beta}}(t)\dket{\Dcalb_{\beta}^{n_{\beta}}(t)}
\end{align}
where we defined $\eta_{n_{\alpha}m_{\alpha}}(t) = e^{\int_{0}^{t} \lambda_{\alpha}(\xi)d\xi}u_{n_{\alpha}m_{\alpha}}(t)$. By using the above result, from Eq.~\eqref{EqQsuperOp} we write
\begin{equation}
	\dbar Q_{\text{ad}} = \frac{1}{D}\dinterpro{h(t)}{\Lmath (t)|\rho_{\text{ad}}(t)}dt =
	\frac{1}{D}\sum _{\beta}^{N} \sum _{n_{\beta},k_{\beta}}^{N_{\beta}} r^{k_{\beta}}_{\beta}\eta_{n_{\beta}k_{\beta}}(t)
	\dinterpro{h(t)}{\Lmath (t)|\Dcalb_{\beta}^{n_{\beta}}(t)}dt \text{ , }
\end{equation}
as the amount of heat exchanged between the system and its environment along an adiabatic dynamics in open systems. A more useful expression can be found using the quasi-eigenvector equation for $\Lmath (t)$ given in Eq.~\eqref{EqEqEigenStateL} and reads as
\begin{equation}
	\dbar Q_{\text{ad}}(t) = \frac{1}{D}\sum _{\beta}^{N} \sum _{n_{\beta},k_{\beta}}^{N_{\beta}} r^{k_{\beta}}_{\beta}\eta_{n_{\beta}k_{\beta}}(t) \left[\dinterpro{h(t)}{\Dcalb_{\beta}^{n_{\beta}-1}(t)} + \lambda_{\beta}(t)\dinterpro{h(t)}{\Dcalb_{\beta}^{n_{\beta}}(t)}\right] dt 
	\text{ . } \label{EqHeatAdOpen}
\end{equation}

Moreover, let us consider the case where we have an one-dimensional Jordan block decomposition of $\Lmath (t)$. In this case, we use the adiabatic solution in Eq.~\eqref{EqQsuperOp} to write
\begin{equation}
	\dbar Q_{\text{ad}}^{\text{1D}}(t) = \frac{1}{D}\dinterpro{h(t)}{\Lmath (t)|\rho^{1\text{D}}_{\text{ad}}(t)}dt = \frac{1}{D}\sum _{\beta}^{N} r_{\beta} e^{\int_{0}^{t} \Lambda_{\beta}(t^{\prime})dt^{\prime}} \lambda_{\beta}(t) \dinterpro{h(t)}{\Dcalb_{\beta}^{n_{\beta}}(t)} dt 
	\text{ . } \label{EqHeatAdOpen1D}
\end{equation}

The Eqs.~\eqref{EqHeatAdOpen} and~\eqref{EqHeatAdOpen1D} constitute the main result of this section, where it is evident the possibility of heat exchange during an adiabatic dynamics in open system. In addition, the Eq.~\eqref{EqHeatAdOpen1D} allows us to find a condition for achieving an adiabatic process through an adiabatic dynamics. In fact, note that if we start the dynamics from a superposition of eigenvectors $\{\Dcalb_{\beta}^{n_{\beta}}(0)\}$ of $\Lmath (0)$ with $\lambda_{\beta}(t) = 0$ for all $t$, we get $\dbar Q_{\text{ad}}^{\text{1D}}(t) = 0$ for all $t$. In conclusion, we can establish that \textit{an adiabatic dynamics in quantum mechanics is not in general associated with an adiabatic process in quantum thermodynamics, with a sufficient condition for thermal adiabaticity being the evolution within an eigenstate set with vanishing eigenvalue of $\Lmath (t)$}.

To end, from Eq.~\eqref{EqVarEntropySuperop}, we can write the von Neumann entropy variation as
\begin{equation}
	dS (t) = - \frac{1}{D}\dbra{\rho^{\text{ad}}_{\log}(t)}\Lmath (t) \dket{\rho^{\text{ad}}(t)}dt = - \frac{1}{D}\sum \nolimits _{i,k_{i}} c^{(k_{i})}_{i} e^{\int_{0}^{t} \tilde{\lambda}_{i,k_{i}}(t^{\prime})dt^{\prime}}\dbra{\rho^{\text{ad}}_{\log}(t)}\Lmath (t) \dket{\Dcalb^{(k_{i})}_{i}(t)}dt \nonumber \text{ , }
\end{equation}
so that we can use the Eq.~\eqref{EqEqEigenStateL} to write
\begin{equation}
	dS (t) =\frac{1}{D}\sum \nolimits _{i,k_{i}} c^{(k_{i})}_{i} e^{\int_{0}^{t} \tilde{\lambda}_{i,k_{i}}(t^{\prime})dt^{\prime}}
	\Gamma_{i,k_{i}}(t)dt \text{ , } \label{EntropyProAp}
\end{equation}
where $\Gamma_{i,k_{i}}(t) = \dinterpro{\rho^{\text{ad}}_{\log}(t)}{\Dcalb^{(k_{i}-1)}_{i}(t)} + \lambda_{i}(t) \dinterpro{\rho^{\text{ad}}_{\log}(t)}{\Dcalb^{(k_{i})}_{i}(t)}$, with $\dbra{\rho^{\text{ad}}_{\log}(t)}$ standing for the adiabatic evolved state associated with $\dbra{\rho_{\log}(t)}$.

\emph{Thermal adiabaticity for a qubit adiabatic dynamics --} As an illustration of how adiabatic dynamics is an adiabatic process, let us begin by considering a two-level system initialized in a thermal equilibrium
state $\rho_{\text{th}}(0)$ for the Hamiltonian $H(0)$ at inverse temperature $\beta\!=\!1/k_{\text{B}}T$, where $k_{\text{B}}$ and $T$ are the
Boltzmann's constant and the absolute temperature, respectively. Now, let the system be governed by a Lindblad
equation, where the environment acts as a \textit{dephasing} channel in the energy eigenstate basis
$\{\ket{E_{n}(t)}\}$ of $H(t)$. Thus, we describe the coupling between the system and its reservoir through
\begin{equation}
	\dot{\rho} (t) = \Lcalb [\rho(t)] = \frac{1}{i\hbar} [H(t),\rho(t)] + \gamma (t) \left[ \Gamma^{\text{dp}}(t) \rho(t) \Gamma^{\text{dp}}(t) - \rho(t) \right] \text{ , }
\end{equation}
where $\Gamma^{\text{dp}}(t) = \ket{E_{1}(t)}\bra{E_{1}(t)} - \ket{E_{0}(t)}\bra{E_{0}(t)}$, with $\ket{E_{n}(t)}$ being the instantaneous eigenstates of $H(t)$, here we are assuming $E_{1}(t) > E_{0}(t)$ for all $t$. In this case, the set of eigenvectors of $\Lcalb [\bullet]$
can be obtained from set of operators $P_{nm}(t) = \ket{E_{n}(t)}\bra{E_{m}(t)}$, because we have
\begin{align}
	\Lcalb [P_{nm}(t)] &= \frac{1}{i\hbar} [H(t),P_{nm}(t)] + \gamma (t) \left[ \Gamma^{\text{dp}}(t) P_{nm}(t) \Gamma^{\text{dp}}(t) - P_{nm}(t) \right] \nonumber \\
	&= \frac{1}{i\hbar} \left[ E_{n}(t) - E_{m}(t) \right]P_{nm}(t) + \gamma (t) \left[ \Gamma^{\text{dp}}(t) P_{nm}(t) \Gamma^{\text{dp}}(t) - P_{nm}(t) \right]P_{nm}(t)
\end{align}

Now, by computing the second term in above equation we get
\begin{align}
	\Gamma^{\text{dp}}(t) P_{nm}(t) \Gamma^{\text{dp}}(t) &= \left[\ket{E_{1}(t)}\bra{E_{1}(t)} - \ket{E_{0}(t)}\bra{E_{0}(t)}\right] \ket{E_{n}(t)}\bra{E_{m}(t)}\left[\ket{E_{1}(t)}\bra{E_{1}(t)} - \ket{E_{0}(t)}\bra{E_{0}(t)}\right] \nonumber \\
	&= \left[\delta_{1n} \ket{E_{1}(t)} - \delta_{0n} \ket{E_{0}(t)} \right] \left[ \delta_{1m}\bra{E_{1}(t)} - \delta_{0m}\bra{E_{0}(t)} \right]
	\nonumber \\
	&= \sum_{k=0,1} \delta_{kn}\delta_{km} \ket{E_{k}(t)}\bra{E_{k}(t)} - \sum_{k=0,1} \delta_{kn}\delta_{\bar{k}m} \ket{E_{k}(t)} \bra{E_{\bar{k}}(t)} \text{ , }
\end{align}
so that we can write
\begin{equation}
	\Gamma^{\text{dp}}(t) P_{nm}(t) \Gamma^{\text{dp}}(t) = \left\{ \begin{matrix}
		P_{nn}(t) & \text{for } n = m \\
		- P_{nm}(t) & \text{for } n \neq m 
	\end{matrix} \right. \text{ , }
\end{equation}
where we conclude that $P_{nm}(t)$ satisfies the eigenvalue equation $\Lcalb [P_{nm}(t)] = \lambda_{nm}(t)P_{nm}(t)$ with
\begin{equation}
	\lambda_{nm}(t) = \left\{ \begin{matrix}
		0 & \text{for } n = m \\
		\left[ E_{n}(t)-E_{m}(t) \right]/i\hbar - 2\gamma(t) & \text{for } n \neq m 
	\end{matrix} \right. \text{ . }
\end{equation}

Therefore, by writing $P_{nm}(t)$ in superoperator formalism the of $\dket{\Dcalb_{nm}(t)}$, with components $\Dcalb^{(i)}_{nm}(t) = \tr{P_{nm}(t)\sigma_{i}}$, the eigenvalue equation for $\Lmath (t)$ can be written as 
\begin{equation}
	\Lmath (t) \dket{\Dcalb_{nm}(t)} = \lambda_{nm}(t) \dket{\Dcalb_{nm}(t)} \text{ , }
\end{equation}
where $\lambda_{nm}(t) = E_{n}(t)-E_{m}(t) - 2(1-\delta_{nm})\gamma(t)$. Since we are considering the initial state as a thermal state, in the superoperator formalism the initial state $\rho_{\text{th}}(0)$, we can write
\begin{equation}
	\rho_{\text{th}}(0) = \Zcalb^{-1}(0) \sum _{n} e^{-\beta E_{n}(0)}P_{nn}(0) \text{ \, \, } \Rightarrow \text{ \, \, } \dket{\rho_{\text{th}}(0)} = \Zcalb^{-1}(0) \sum _{n} e^{-\beta E_{n}(0)}\dket{\Dcalb_{nn}(0)} \text{ , }
\end{equation}
where $\Zcal(0)=\tr{e^{-\beta H(0)}}$ is the
partition function of the system. Note that, since $\dket{\rho_{\text{th}}(0)}$ is given by a superposition of eigenvectors of $\Lmath (t)$ with eigenvalue $\lambda_{nn}(t) = 0$ for all $t$, we obtain from Eq.~\eqref{EqHeatAdOpen1D} that $\dbar Q_{\text{ad}} = 0$ for all $t$ whenever the evolution is adiabatic. Therefore,
thermal adiabaticity is achieved for an arbitrary open-system adiabatic dynamics subject to dephasing in the energy eigenbasis.
Hence, any internal energy variation for this situation should be identified as \textit{work}.

\subsection{Experimental realization of heat exchange in adiabatic dynamics}

Now, we will consider a case where the open system dynamics is adiabatic and some amount of heat is exchanged between system and reservoir. The system used here is the same as before, a two level system evolving under dephasing. However, here we conveniently change the decohering basis and now the decohering acts in computational basis $\ket{0}$ and $\ket{1}$, instead the eigenstate basis of $H(t)$. Then the dynamics of the system is given by
\begin{equation}
	\dot{\rho}(t) = \Lcalb[\rho(t)] = \frac{1}{i\hbar} [H_{x} , \rho(t)] + \gamma(t) \left[ \sigma_{z} \rho(t) \sigma_{z} - \rho(t) \right] \text{ , } \label{EqThermDynamicsDeph}
\end{equation}
with $H_{x} = \hbar \omega \sigma_{x}$. Since we are interested in studying the heat transfer during the adiabatic dynamics, we drive the system with a time-independent Hamiltonian, so that $\dbar W = 0$ for all $t$ and therefore any variation in internal energy of the system is due to heat flux. Moreover, because we want to drive the system through an adiabatic dynamics, the parameter $\gamma(t)$ needs to be time-dependent.

The system is initialized in the thermal of $H_{\text{x}}$ at inverse temperature $\beta$ written as
\begin{equation}
	\rho(0) = \frac{1}{2} \left( \1 + \tanh[\beta \hbar \omega] \sigma_{x} \right) \text{ , }
\end{equation}
where rewrite the above state in superoperator formalism as the state $\dket{\rho(0)}$ in the basis $\Ocal_{\text{tls}}$ as
\begin{equation}
	\dket{\rho(0)} = \dket{\sigma_{0}} - \tanh[\beta \hbar \omega]\dket{\sigma_x} \text{ , }
\end{equation}
with the basis $\dket{\sigma_k}$ ($k = \{0,x,y,z\}$) defined in Eq.~\eqref{EqSuperOpBasis}. In this dynamics, the superoperator $\Lmath (t)$ associated with the generator $\Lcalb[\bullet]$ reads
\begin{equation}
	\Lmath (t) = \begin{bmatrix}
		0 & 0            & 0            & 0 \\
		0 & -2 \gamma(t) & 0            & 0 \\
		0 & 0            & -2 \gamma(t) & -2 \omega \\
		0 & 0            & 2 \omega & 0 
	\end{bmatrix} \text{ . }
\end{equation}

It is possible to show that the set $\{\dket{\sigma_{0}},\dket{\sigma_{x}}\}$ satisfies the eigenvalue equation for $\Lmath (t)$ as
\begin{equation}
	\Lmath (t) \dket{\sigma_{0}} = 0 \text{ \ \ , \ \ } \Lmath (t)\dket{\sigma_{x}} = -2\gamma(t)\dket{\sigma_{x}} \text{ , } \label{EigenAp}
\end{equation}
so, if the dynamics is adiabatic, we can write the evolved state as $\dket{\rho_{\text{ad}}(t)} = r_{1}(t)\dket{\sigma_{0}} + r_{x}(t)\dket{\sigma_{x}}$, where $r_{y}(t) =r_{y}(0) = 0$ and $r_{z}(t)=r_{z}(0)=0$ because the coefficients evolve independently form each other. Thus, from the adiabatic solution in open quantum system given in Eq.~\eqref{EqAdEvol1D},
we obtain $r_{1}(t) = 1$ and $r_{x}(0)=-\tanh[\beta \hbar \omega]$, so that we can use $\tilde{\lambda}_{1}=0$ and $\tilde{\lambda}_{x}=-2\gamma(t)$ to obtain
\begin{equation}
	\dket{\rho_{\text{ad}}(t)} = \dket{\sigma_{0}} - e^{-2 \int_{0}^{t}\gamma(\xi)d\xi}\tanh[\beta \hbar \omega] \dket{\sigma_{x}} \text{ . } \label{EqAprhox}
\end{equation}

Notice that by rewriting the above equation in the standard operator formalism, one gets
\begin{equation}
	\rho_{\text{ad}}(t) = \frac{1}{2} \left[\1 - e^{-2 \int_{0}^{t}\gamma(\xi)d\xi}\tanh(\beta \hbar \omega) \sigma_{x}\right] \text{ , } \label{EqrhoxStand}
\end{equation}
moreover, by using this formalism, it is also possible to show that the dephasing channel can be used as a thermalization process if we suitably choose the parameter $\gamma(t)$ and the total evolution time $\tau_{\text{dec}}$. In fact, we can define a new inverse temperature $\beta_{\text{deph}}$ so that Eq.~(\ref{EqAprhox}) behaves as thermal state, namely,
\begin{equation}
	\dket{\rho(t)} = \dket{\sigma_{0}} - \tanh[\beta_{\text{deph}} \hbar \omega] \dket{\sigma_{x}} \text{ , } \label{ThermalAp}
\end{equation}
where we immediately identify
\begin{equation}
	\beta_{\text{deph}} = \frac{1}{\hbar \omega} \text{arctanh}\left[ e^{-2 \int_{0}^{t}\gamma(\xi)d\xi}\tanh(\beta \hbar \omega) \right] \text{ . } \label{EqBetaDeph}
\end{equation}

Now, we use the Eq.~\eqref{EqHeatAdOpen} to compute the exchanged heat during an infinitesimal time interval $dt$ as
\begin{equation}
	\dbar Q_{\text{ad}}(t) = 2 \hbar \tanh(\beta \hbar \omega) \omega \gamma(t) e^{-2 \int_{0}^{t}\gamma(\xi)d\xi} dt \text{ , }
\end{equation}
and therefore we get the total exchanged heat as a function of the total evolution time $\tau_{\text{dec}}$ in which the system remains coupled to reservoir as
\begin{equation}
	Q_{\text{ad}}(\tau_{\text{dec}}) = \int_{0}^{\tau_{\text{dec}}} \left(\frac{\dbar Q_{\text{ad}}(t)}{dt}\right) dt = \hbar \omega \tanh(\beta \hbar \omega) \left( 1 - e^{-2 \bar{\gamma} \tau_{\text{dec}}} \right) \text{ , } \label{EqQad}
\end{equation}
where $\bar{\gamma} = (1/\tau_{\text{dec}})\int_0^{\tau_{\text{dec}}} \gamma(\xi) d\xi$ is the average dephasing rate during $\tau_{\text{dec}}$. A first remark here is that $Q_{\text{ad}}(\tau_{\text{dec}})>0$ for any value of $\bar{\gamma}>0$ and $\tau$, so this means that the reservoir here works as an \textit{artificial} thermal reservoir at inverse temperature $\tilde{\beta}=\beta_{\text{deph}}<\beta$. We also considered a similar analysis in an experimental simulation of a thermal engine in all optical experimental setup~\cite{Passos:19}. In addition, due to the negative argument in the exponential, the higher the mean-value of $\gamma(t)$ the faster the heat exchange ends. Then, we can further compute the maximum exchanged heat from Eq.~\eqref{EqQad}, in the limit $\bar{\gamma} \tau_{\text{dec}} \gg 1$, as a quantity independent of the environment parameters and given by $Q_{\text{max}} = \hbar \omega \tanh(\beta \hbar \omega)$. This result can be also obtained from Eqs.~\eqref{EqHeatQS-SS} and~\eqref{EqrhoxStand} by using the asymptotic steady state $\rho_{\text{ss}} = \rho_{\text{ad}}(t \rightarrow \infty) \propto \1$. It would be worth highlighting that, for quantum thermal machines weakly coupled to thermal reservoirs at different temperatures \cite{Alicki:79}, the maximum heat $Q_{\text{max}}$ is obtained with high-temperature hot reservoirs~\cite{Geva:92,Henrich:07,Lutz:16-Science}.

The experimental realization of the above results was done through the trapped ion system already used previously, but in this case we need to consider an additional challenger due to the parameter $\gamma(t)$ to be time-dependent. To do that, it is used the same procedure as before with an additional element able to provide a time-dependent decohering quantum channel with high controllability. For further details, please see Appendix~\ref{ApTrappeIonDeco}. In Fig.~\ref{FigAdHeat} we show the experimental results for the heat exchange $Q_{\text{ad}}(\tau_{\text{dec}})$ as a function of $\tau_{\text{dec}}$, where we have chosen $\gamma(t)\!=\!\gamma_{0}(1+t/\tau_{\text{dec}})$, where here the parameter $\tau_{\text{dec}}$ is experimentally controlled through the time interval associated to the action of our decohering quantum channel. The solid curves in Fig.~\ref{FigAdHeat} are computed from Eq.~\eqref{EqQad}, while the experimental points are directly computed through the variation of internal energy as $Q(\tau_{\text{dec}})\!=\!U_{\text{fin}} - U_{\text{ini}}$, where $U_{\text{fin(ini)}}\!=\!\trs{\rho_{\text{fin(ini)}}H_{x}}$. The computation of $U_{\text{fin(ini)}}$ is directly obtained from quantum state tomography of $\rho_{\text{fin(ini)}}$ for each value of $\tau_{\text{dec}}$. Although the maximum exchanged heat is independent of $\gamma_0$, the initial dephasing rate $\gamma_0$ affects the \textit{power} for which the system exchanges heat with the reservoir for a given evolution time $\tau_{\text{dec}}$. 

\begin{figure}[t!]
	\centering
	\includegraphics[scale=0.6]{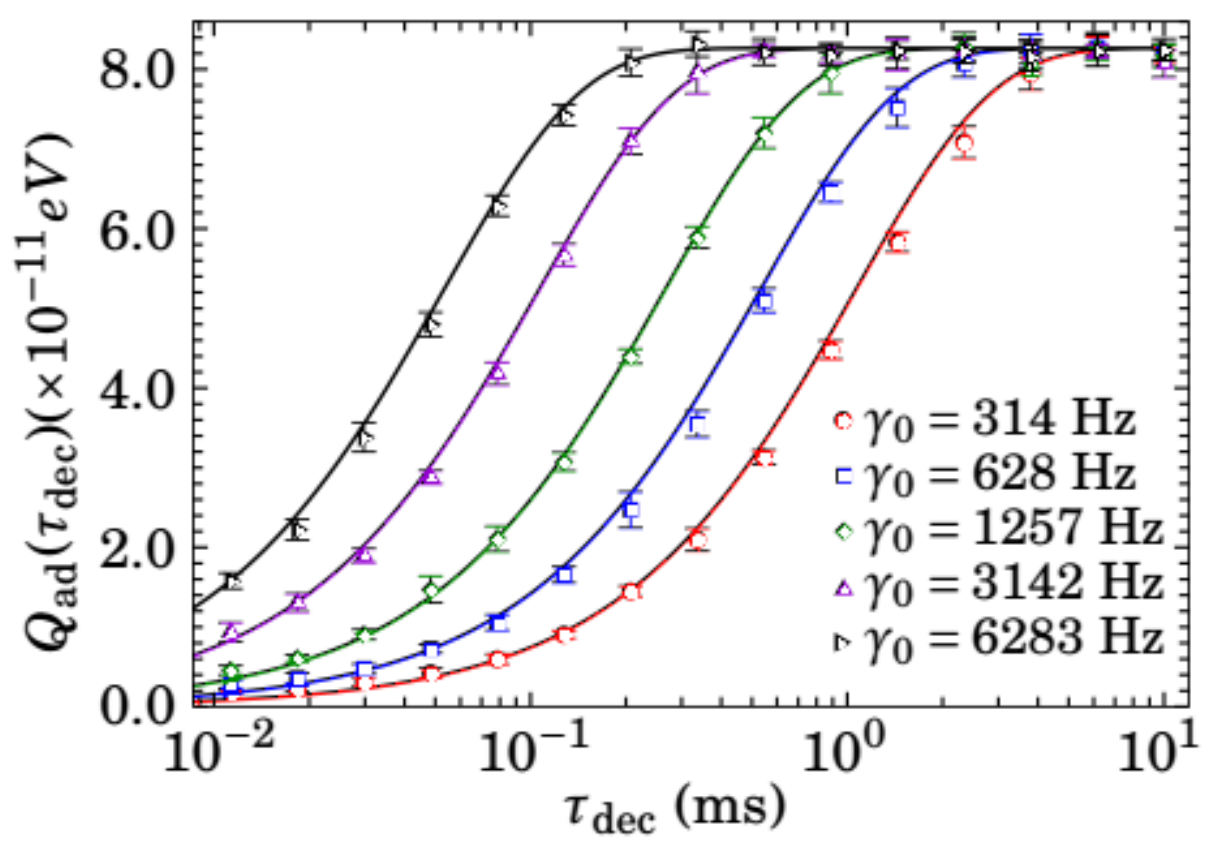}
	\caption{Heat $Q_{\text{ad}}(\tau_{\text{dec}})$ as a function of the total evolution time $\tau_\text{dec}$ for several values of the parameter $\gamma_{0}$. We use $\hbar\omega = 82.662$~peV and $\beta^{-1} = 17.238$~peV, with the physical constants $\hbar \approx 6.578 \cdot 10^{-16}$~eV$\cdot$s and $k_{\text{B}} \approx 8.619 \cdot 10^{-5}$~eV/K.}
	\label{FigAdHeat}
\end{figure}

By completeness and to provide a more detailed view of this heat exchange, we analyze the von Neumann entropy during the evolution. From Eq.~\eqref{EqVarEntropySuperop}, for the particular case of an one-dimensional Jordan block decomposition, we write the entropy variation in our dynamics as
\begin{equation}
	dS (t) =\frac{1}{2}\sum \nolimits _{i = {0}}^{1} c_{i} e^{\int_{0}^{t} \tilde{\lambda}_{i}(t^{\prime})dt^{\prime}}
	\lambda_{i}(t) \dinterpro{\rho^{\text{ad}}_{\log}(t)}{\Dcalb_{i}(t)}dt \text{ , }
\end{equation}
where we used that $\Gamma_{i}(t) = \lambda_{i}(t) \dinterpro{\rho^{\text{ad}}_{\log}(t)}{\Dcalb_{i}(t)}$. By computing $\dbra{\rho^{\text{Ad}}_{\log}(t)}$, where we find 
\begin{equation}
	\dbra{\rho^{\text{Ad}}_{\log}(t)} = \log \left(\frac{1 - g^2(t)}{4} \right) \dbra{1} -2 \text{arctanh} [g(t)] \dbra{x} \text{ , }
\end{equation}
with $g(t)\!=\!e^{-2 \int_{0}^{t}\gamma(\xi)d\xi}\tanh(\beta \hbar \omega)$. Then, from the set of adopted values for our parameters and the spectrum of the Lindbladian, we conclude that 
\begin{equation}
	dS (t) = 4 g(t) \gamma(t) \text{arctanh} [g(t)] dt \text{ . }
\end{equation}

Notice that the relation between heat and entropy can be obtained by rewriting the exchanged heat $d Q$ in the interval $dt$ as 
$\dbar Q_{\text{ad}}(t) = 2 \hbar \omega \gamma(t) g(t)  dt$. In conclusion, the energy variation can indeed be identified as heat exchanged along the adiabatic dynamics. Indeed, by computing the thermodynamic relation between $dS(t)$ and $\dbar Q_{\text{ad}}(t)$ we get $dS(t) = \beta_{\text{deph}} \dbar Q_{\text{ad}}(t)$, where $\beta_{\text{deph}}$ is the  inverse temperature of the simulated thermal bath shown in Eq.~\eqref{EqBetaDeph}.

To guarantee that the amount of heat shown in Fig.~\ref{FigAdHeat} are associated with an exchange heat in an adiabatic dynamics, we computed the fidelity $\Fcalb (\tau_{\text{dec}})$ between the evolved state (solution of the Eq.~\eqref{EqThermDynamicsDeph}) and the adiabatic solution given in Eq.~\eqref{EqrhoxStand}. In Table~\ref{TableAdFidelThermo} we show the minimum value of the fidelity along the evolution, that is, we compute the quantity $\Fcalb_{\text{min}} = \min_{\tau_{\text{dec}}} \Fcalb (\tau_{\text{dec}})$. From Table~\ref{TableAdFidelThermo} it is possible to see that all of heat exchange behavior shown in Fig.~\ref{FigAdHeat} are associated with an adiabatic dynamics in open system with high fidelity.

\begin{table}[t!]
	\centering
	\caption{Minimum value of the experimental fidelity $\Fcalb_{\text{min}} = \min_{\tau_{\text{dec}}} \Fcalb (\tau_{\text{dec}})$ for each choice of $\gamma_{0}$. The maximum experimental error $\Delta\Fcalb_{\text{min}}$ for $\Fcalb_{\text{min}}$ is about $\Delta\Fcalb_{\text{min}} = 0.13\%$ of $\Fcalb_{\text{min}}$.}
	\label{TableAdFidelThermo}
	\begin{tabular}{cccccc}
		\multicolumn{1}{l|}{$\gamma_{0}$}         & 314 Hz    & 628 Hz    & 1257 Hz   & 3142 Hz   & 6283 Hz \\ \hline
		\multicolumn{1}{l|}{$\Fcalb_{\text{min}}$} & 0.9971(3) & 0.9965(4) & 0.9980(7) & 0.9952(8) & 0.9942(9)
	\end{tabular}
\end{table}

Despite we have provided a specific open-system adiabatic evolution, we can determine infinite classes
of system-environment interactions exhibiting the same amount of heat exchange $\dbar Q$. In particular, there are infinite engineered environments 
that are able to extract a maximum heat amount $Q_{\text{max}}$. In fact, it is possible to determine infinite classes of systems exhibiting the same amount of heat exchange $\dbar Q$. This is provided in Theorem 1 below.

\begin{theo}
	Let ${\cal S}$ be an open quantum system governed by a time-local master equation in the form
	$\dot{\rho}(t) = \Hcalb [\rho (t)] + \Rcalb_{t}  [\rho (t)]$, where $\Hcalb[\bullet] = (1/i\hbar) [H,\bullet]$ and
	$\Rcalb_{t}[\bullet]=\sum \nolimits _{n} \gamma_{n} (t) [ \Gamma_{n}(t) \bullet \Gamma^{\dagger}_{n}(t) - (1/2)\{  \Gamma^{\dagger}_{n}(t) \Gamma_{n}(t),\bullet\} ]$.
	The Hamiltonian $H$ is taken as a constant operator so that no work is realized by/on the system.
	Assume that the heat exchange between ${\cal S}$ and its reservoir during the quantum evolution is given by $\dbar Q$.
	Then, any unitarily related adiabatic dynamics driven by $\dot{\rho}^{\prime}(t) = \Hcalb^{\prime} [\rho^{\prime} (t)] + \Rcalb^{\prime}_{t}  [\rho^{\prime} (t)]$,
	where $\dot{\rho}^\prime (t) = U \dot{\rho}(t) U^\dagger$, $\Hcalb^{\prime}[\bullet] = U\Hcalb[\bullet]U^{\dagger}$ and $\Rcalb^{\prime}_{t}  [ \bullet] = U \Rcalb_{t} [ \bullet]U^{\dagger}$, for some constant unitary $U$,
	implies in an equivalent heat exchange $\dbar Q^\prime = \dbar Q$.
\end{theo}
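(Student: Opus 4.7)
The plan is to show the equality of the two heat differentials by direct computation, exploiting only that $U$ is a constant unitary together with the cyclic property of the trace. First I would write $\dbar Q$ in its standard form following Eq.~\eqref{EqHeatQS}, namely $\dbar Q = \trs{H\,\Rcalb_{t}[\rho(t)]}\,dt$, which is available because $H$ is time-independent and so $\dbar W = 0$. The analogous expression in the primed frame is $\dbar Q^{\prime} = \trs{H^{\prime}\,\Rcalb^{\prime}_{t}[\rho^{\prime}(t)]}\,dt$, so the task reduces to relating each primed object to its unprimed counterpart.

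Next I would identify the effective primed Hamiltonian. Since $\Hcalb^{\prime}[\bullet] = U\Hcalb[\bullet]U^{\dagger}$ and $U$ is constant, one finds $\Hcalb^{\prime}[\rho^{\prime}(t)] = (1/i\hbar)[UHU^{\dagger},U\rho(t)U^{\dagger}]$, so the primed unitary part is generated by $H^{\prime} = UHU^{\dagger}$, which is again time-independent (again $\dbar W^{\prime} = 0$). Because $\dot{\rho}^{\prime}(t) = U\dot{\rho}(t)U^{\dagger}$, subtracting the unitary part from the total gives the dissipator evaluated on $\rho^{\prime}(t)$ as $\Rcalb^{\prime}_{t}[\rho^{\prime}(t)] = U\Rcalb_{t}[\rho(t)]U^{\dagger}$, which is the content of the map $\Rcalb^{\prime}_{t}[\bullet] = U\Rcalb_{t}[\bullet]U^{\dagger}$ when read as a similarity transformation between the two evolutions.

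The proof is then concluded by substitution:
\begin{equation}
\dbar Q^{\prime} = \trs{UHU^{\dagger}\,U\Rcalb_{t}[\rho(t)]U^{\dagger}}\,dt = \trs{U^{\dagger}U\,H\,\Rcalb_{t}[\rho(t)]}\,dt = \trs{H\,\Rcalb_{t}[\rho(t)]}\,dt = \dbar Q,
\end{equation}
where I first used $U^{\dagger}U = \1$ inside the product of operators and then cycled the trace. Integrating over any time interval immediately yields $Q^{\prime}(\tau) = Q(\tau)$.

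The only conceptual obstacle I anticipate is the interpretation of the statement $\Rcalb^{\prime}_{t}[\bullet] = U\Rcalb_{t}[\bullet]U^{\dagger}$: one must be careful that this map is the one which, applied to $\rho^{\prime}(t)$, reproduces the transformed dissipative part $U\Rcalb_{t}[\rho(t)]U^{\dagger}$ (equivalently, the Lindblad operators transform as $\Gamma^{\prime}_{n}(t) = U\Gamma_{n}(t)U^{\dagger}$, which preserves both the complete positivity and the trace-preserving character of the semigroup). Once this identification is in place, no further subtleties arise, and in particular the argument never invokes adiabaticity, so the conclusion is actually valid for the whole class of unitarily related evolutions regardless of the adiabatic regime; adiabaticity only fixes the physical context in which $\dbar Q$ is being compared.
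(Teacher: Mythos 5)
Your proof is correct and follows essentially the same route as the paper's: establish that the primed dynamics is the unitary conjugation of the unprimed one (with $H' = UHU^{\dagger}$ and $\Gamma'_n = U\Gamma_n U^{\dagger}$) and then invoke the cyclic property of the trace. The only cosmetic difference is that you cancel $U^{\dagger}U$ at the level of the differential $\dbar Q = \trs{H\,\Rcalb_t[\rho(t)]}\,dt$, whereas the paper does it on the integrated internal-energy difference $Q = \trs{H\rho(\tau)} - \trs{H\rho(0)}$; your closing observation that adiabaticity plays no role is also consistent with the paper's argument.
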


\begin{proof}
	Let us consider that $\rho(t)$ is solution of
	\begin{equation}
		\dot{\rho}(t) = \Hcalb [\rho (t)] + \Rcalb_{t}  [\rho (t)] \text{ , }
	\end{equation}
	so, by multiplying both sides of the above equation by $U$ (on the left-hand-side) and $U^{\dagger}$ (on the right-hand-side), we get
	\begin{align}
		U\dot{\rho}(t)U^{\dagger} &= U\Hcalb [\rho (t)]U^{\dagger} + U\Rcalb_{t}  [\rho (t)]U^{\dagger} \nonumber \\
		&= \frac{1}{i\hbar}U[H,\rho (t)]U^{\dagger} + \sum \nolimits _{n} \gamma_{n} (t) U\Gamma_{n}(t)\rho (t) \Gamma^{\dagger}_{n}(t)U^{\dagger} - \frac{1}{2}\sum \nolimits _{n} \gamma_{n} (t)U\{  \Gamma^{\dagger}_{n}(t) \Gamma_{n}(t),\rho (t)\} ]U^{\dagger} \text{ , }
	\end{align}
	thus, by using the relations $[UAU^{\dagger},UBU^{\dagger}] = U[A,B]U^{\dagger}$ and $\{UAU^{\dagger},UBU^{\dagger}\} = U\{A,B\}U^{\dagger}$, we find
	\begin{equation}
		\dot{\rho}^{\prime}(t) = \frac{1}{i\hbar}[UHU^{\dagger},\rho^{\prime} (t)] + \sum \nolimits _{n} \gamma_{n} (t) \Gamma^{\prime}_{n}(t)\rho^{\prime} (t) \Gamma^{\prime\dagger}_{n}(t) - \frac{1}{2}\sum \nolimits _{n} \gamma_{n} (t)\{  \Gamma^{\prime\dagger}_{n}(t) \Gamma^{\prime}_{n}(t),\rho^{\prime} (t)\}  \text{ , }
	\end{equation}
	where $\Gamma^{\prime} (t) = U\Gamma_{n}(t)U^{\dagger}$. In conclusion, we get that $\rho^{\prime}(t) = U\rho (t)U^{\dagger}$ is a solution of
	\begin{equation}
		\dot{\rho}^{\prime}(t) = \Hcalb^{\prime} [\rho^{\prime} (t)] + \Rcalb^{\prime}_{t}  [\rho^{\prime} (t)] \text{ , }
	\end{equation}
	where
	\begin{align}
		\Hcalb^{\prime}[\bullet] &= \frac{1}{i\hbar}[UHU^{\dagger},\bullet] = U\Hcalb[\bullet]U^{\dagger}, \\
		\Rcalb_{t}^{\prime}[\bullet] &= \sum \nolimits _{n} \gamma_{n} (t) [\Gamma^{\prime}_{n}(t)\rho^{\prime} (t) \Gamma^{\prime\dagger}_{n}(t) - \frac{1}{2}\{\Gamma^{\prime\dagger}_{n}(t) \Gamma^{\prime}_{n}(t),\rho^{\prime} (t)\} ] = U\Rcalb_{t}[\bullet]U^{\dagger} \text{ . }
	\end{align}
	Now, by taking into account that the Hamiltonian $H$ is a constant operator, we have that no work is realized by/on the system.
	Then, by computing the amount of heat extracted from the system in the prime dynamics during an interval $t \in [0,\tau]$, we obtain
	\begin{equation}
		Q^{\prime} = \trs{H^{\prime}\rho^{\prime}(\tau)} - \trs{H^{\prime}\rho^{\prime}(0)} \text{ , }
	\end{equation}
	where, by definition, we can use $\rho^{\prime}(t) = U\rho(t)U^{\dagger}$, $ \forall t \in [0,\tau]$. Hence
	\begin{equation}
		Q^{\prime} = \trs{H^{\prime}U\rho(\tau)U^{\dagger}} - \trs{H^{\prime}U\rho(0)U^{\dagger}} = \trs{U^{\dagger}H^{\prime}U\rho(\tau)} - \trs{U^{\dagger}H^{\prime}U\rho(0)} = Q  \text{ , }
	\end{equation}
	where we have used the cyclical property of the trace and that $Q = \trs{H\rho(\tau)} - \trs{H\rho(0)}$.
\end{proof}

As an example of application of the above theorem, let us consider a system-reservoir interaction governed by $\Rcalb^{\text{x}}_{t} [\bullet] = \gamma (t) \left[ \sigma_{x} \bullet \sigma_{x} - \bullet \right]$ (bit-flip channel). We can then show that the results previously obtained for dephasing can be reproduced if the quantum system is initially prepared in thermal state of $H_{\text{y}}^{0}=\omega\sigma_{y}$. Such a result is clear if we choose $U = R_{\text{x}}(\pi/2)R_{\text{z}}(\pi/2)$. Then, it follows that $\Rcalb^{\text{x}}_{t}  [ \bullet] = U\Rcalb^{\text{z}}_{t}[ \bullet]U^{\dagger}$ and $\Hcalb^{\prime}[\bullet] = U\Hcalb[\bullet]U^{\dagger}$, where $R_{\text{z(x)}}(\theta)$ are rotation matrices with angle $\theta$ around $z(x)$-axes for the case of a single qubit. Thus, the above theorem assures that the maximum exchanged heat will be $Q_{\text{max}} = \hbar \omega \tanh[\beta \hbar \omega]$.

\section{Conclusions of this chapter}

In this chapter we studied foundations on adiabatic dynamics of open system, where we could to apply our results in context of quantum thermodynamics. First, we have shown that, in completeness with the results present in literature~\cite{Sarandy:05-1,Sarandy:05-2}, we have shown a class of quantum systems where the adiabatic approximation for open systems holds in the asymptotic time limit $t\rightarrow \infty$. However, such result is an arbitrarily valid consequence of the one-dimensional Jordan decomposition of the Lindblad superoperator that drives the system, the absence of level crossings, and the initialization of the system as a superposition of only two eigenstates of such operator. As an application, we studied both theoretical and experimental implementation of an adiabatic version of the Deutsch algorithm, analyzing its run-time optimality in open systems. In contrast with the general picture previously derived in the literature~\cite{Sarandy:05-1,Sarandy:05-2}, we have shown that the adiabatic approximation for open systems holds in the asymptotic time limit $t\rightarrow \infty$ in the Deutsch algorithm. However, it is worth mentioning that, if we are interested in maximizing the probability of measuring the target (pure) state as the outcome of the algorithm, instead of the open system adiabatic mixed-state density operator, then there is an \textit{optimal time window} for the system measurement (for previously related discussions on this topic, 
see also Refs.~\cite{Steffen:03,Sarandy:05-2,Albash:15,Keck:17}). Concerning the experimental results, we have reported, to the best of our knowledge, the first experimental investigation of the adiabatic approximation in a fully controllable open system, where the total evolution time and the decoherence rates can be freely set to verify the quantitative validity conditions for the adiabatic behavior. By using a hyperfine qubit encoded in ground-state energy levels of a trapped Ytterbium ion, all of the theoretical predictions highlighted above have been successfully realized. 

As an application, we studied the thermodynamics of adiabatic evolutions in context of open systems. From the general approach for adiabaticity in open quantum systems discussed here, we provided a relationship between adiabaticity in quantum mechanics and 
in quantum thermodynamics. In particular, we derived a sufficient condition for which the adiabatic dynamics in open quantum systems leads to adiabatic processes in thermodynamics. By using a particular example of a single qubit undergoing an open-system adiabatic evolution path, we have illustrated the existence of both adiabatic and diabatic regimes in quantum thermodynamics, computing the associated heat fluxes in the processes. As a further result, we also proved the existence of an infinite family of decohering systems exhibiting the same maximum heat exchange. From the experimental side, we have realized adiabatic open-system evolutions using an Ytterbium trapped ion, with its hyperfine energy level encoding a qubit (work substance). In particular, heat exchange and entropy production can be optimized along the adiabatic path as a function of the total evolution time. Our implementation exhibits high controllability, opening perspectives for analyzing thermal machines (or refrigerators) in open quantum systems under adiabatic evolutions. The associated effects of the engineered reservoirs on the thermal efficiencies are left for future research.

Then, the results presented in this chapter provide a framework to exploit the experimental realization of adiabaticity in quantum computing under decoherence, tackling features such as optimal run-time of an algorithm, asymptotic time behavior, competition between adiabatic and relaxation time-scales, outcome fidelities, thermodynamic processes via adiabatic dynamics, among others.

\chapter{Transitionless Quantum Driving in closed systems} \label{Chapter-TQDCS}

\initial{A}diabatic dynamics is a useful approach to achieve a number of quantum tasks. However, as already previously highlighted, the constraints on the speed evolution is a problem when we consider the undesired effects. As a strategy to bypass these undesired effects in adiabatic dynamics, Mustafa Demirplak and Stuart A. Rice proposed a special dynamics that allows us to mimic an adiabatic dynamics without the requirement of slowly driving the system. The problem solved by Demirplak and Rice in 2003 is how to speed up an adiabatic dynamics~\cite{Demirplak:03}. To this end, Demirplak and Rice showed how to use an auxiliary field to inhibit natural transitions between energy level of the system when we implement fast evolutions. This additional field constitutes a \textit{counter diabatic field paradigm}, as mentioned in their revised paper, published two years later~\cite{Demirplak:05}. In this chapter we will review the method developed by Demirplak and Rice, revisited some years later by Michael Victor Berry~\cite{Berry:09}. As a contribution of this thesis to this field, we show how to generalize such method and we highlight the benefits we can get from this generalized approach. In recent years, the term ``counter diabatic'' has been replaced by the terminology \textit{transitionless quantum driving} (TQD) used by Berry~\cite{Berry:09}. Here we will use the Berry's terminology and the term ``counter diabatic'' will be used whenever it is convenient for us. The results presented in this chapter are associated with the following works:

$\bullet$ A. C. Santos and M. S. Sarandy, ``Generalized shortcuts to adiabaticity and enhanced robustness against decoherence'', J. Phys. A: Math. Theor. \textbf{51}, 025301 (2018).

$\bullet$ C.-K. Hu, J.-M. Cui, A. C. Santos, Y.-F. Huang, M. S. Sarandy, C.-F. Li, and G.-C. Guo, ``Experimental implementation of generalized transitionless quantum driving'', Opt. Lett. \textbf{43}, 3136 (2018).

$\bullet$ A. C. Santos, ``Quantum gates by inverse engineering of a Hamiltonian'', J. Phys. B: At. Mol. Opt. Phys. \textbf{51}, 015501 (2018).

$\bullet$ A. C. Santos, A. Nicotina, A. M. Souza, R. S. Sarthour, I. S. Oliveira, and M. S. Sarandy,
``Optimizing NMR quantum information processing via generalized transitionless quantum
driving'', EPL (Europhysics Letters) \textbf{129}, 30008 (2020).

\section{Standard TQD in closed quantum systems} \label{Sec-T-TQDCS}

Consider $H_{0}(t)$ the driven reference (adiabatic) Hamiltonian of a quantum system with dimension $D$, $\ket{E_{n}(t)}$ the set of instantaneous eigenstates of $H(t)$ with eigenvalues $E_{n}(t)$. The solution for the dynamics of the system that evolves through an adiabatic path (trajectory) is given by
\begin{equation}
	\ket{\psi_{\text{ad}} (t)} = \sum_{n=1}^{D} c_{n} \exp \left[-\frac{i}{\hbar} \int_{0}^{t} E_n(t) + i \hbar \interpro{\dot{E}_{n}(t)}{E_{n}(t)} dt \right] \ket{E_{n}(t)} \text{ , }
\end{equation}
with the coefficients $c_{i}$ defined by the initial state of the system $\ket{\psi(0)} = \sum\nolimits_{n}^{D} c_{n} \ket{E_{n}(0)}$. By convenience in the notation, from now on we will define the set of \textit{adiabatic phase} $\theta^{\text{ad}}_{n}(t)$ as quantal phase given by $\theta^{\text{ad}}_{n}(t)=\theta^{\text{din}}_{n}(t)+\theta^{\text{geo}}_{n}(t)$, where $\theta^{\text{din}}_{n}(t)$ and $\theta^{\text{geo}}_{n}(t)$ are the dynamical and geometrical phases defined as~\cite{Berry:84}
\begin{equation}
	\theta^{\text{din}}_{n}(t) = -\frac{1}{\hbar} E_{n}(t)   \text{ \ \ \ and \ \ \ }
	\theta^{\text{geo}}_{n}(t) =-i \interpro{\dot{E}_{n}(t)}{E_{n}(t)} \text{ . } \label{EqThetaDinGeo}
\end{equation}

In this way, it is straightforward to define the evolution operator $U_{\text{ad}}(t)$ of an adiabatic dynamics as
\begin{equation}
	U_{\text{ad}}(t) = \sum_{n=1}^{D} e^{i \int_{0}^{t} \theta^{\text{ad}}_{n}(t)dt} \ket{E_{n}(t)}\bra{E_{n}(0)} \text{ , }
\end{equation}
because we can check that $\ket{\psi_{\text{ad}} (t)} = U_{\text{ad}}(t)\ket{\psi(0)}$. From the above operator $U_{\text{ad}}(t)$, Demirplak and Rice raised the question whether there is another Hamiltonian (not the adiabatic) that shares of the same evolution operator. Since we know the evolution operator of a given dynamics, we can use an inverse engineering approach to determinate the Hamiltonian that implements such dynamics. The Hamiltonian that mimics the adiabatic dynamics can be obtained from inverse engineering using that
\begin{align}
	H_{\text{tqd}}^{\text{std}}(t) &= -i\hbar U_{\text{ad}}\left( t\right) \dot{U}_{\text{ad}}^{\dag }\left(t\right) 
	= \sum_{n=1}^{D} \sum_{m=1}^{D}e^{i \int_{0}^{t}\theta^{\text{ad}}_{n}(\xi)d\xi} \ket{E_{n}(t)}\interpro{E_{n}(0)}{E_{m}(0)}
	\frac{d}{dt}\left[ e^{-i \int_{0}^{t}\theta^{\text{ad}}_{n}(\xi)d\xi} \bra{E_{m}(t)} \right] \nonumber \\
	&= H_{0}(t) + i \hbar \sum_{n}^{D} \ket{\dot{E}_{n}(t)} \bra{E_{n}(t)} + \interpro{\dot{E}_{n}(t)}{E_{n}(t)} 
	\ket{E_{n}(t)} \bra{E_{n}(t)} \text{ , } \label{EqHTQDstd}
\end{align}
where we already used the Eqs.~\eqref{EqThetaDinGeo} and $H_{0}(t) = \sum_{n} E_{n}(t) \ket{E_{n}(t)} \bra{E_{n}(t)}$, the spectral decomposition of $H_{0}(t)$. A first point to be highlighted here is the absence of any condition on the total evolution time, so a system driven by the above Hamiltonian follows an adiabatic path framed by the reference Hamiltonian $H_{0}(t)$ for any instant $t$. More rigorously, we showed that naturally the speed of the evolution is constrained by the energy cost of the implementation, with faster 
evolutions being more energy demanding~\cite{Santos:15}. Therefore, note that this approach requires our ability to include external fields to the original field used to implement $H_{0}(t)$. These additional fields work as inhibitors of diabatic transitions between energy levels of the system. For this reason, these fields has been referred as \textit{counter diabatic} fields and the additional term in $H_{\text{tqd}}^{\text{std}}(t)$ as \textit{counter diabatic Hamiltonian}. Thus, we write $H_{\text{tqd}}^{\text{std}}(t) = H_{0}(t) + H_{\text{cd}}(t)$, with
\begin{equation}
	H_{\text{cd}}(t) = i \hbar \sum_{n}^{D} \ket{\dot{E}_{n}(t)} \bra{E_{n}(t)} + \interpro{\dot{E}_{n}(t)}{E_{n}(t)} 
	\ket{E_{n}(t)} \bra{E_{n}(t)} \text{ . } \label{EqHcdStd}
\end{equation}

Here we are using the label ``std'' in $H_{\text{tqd}}^{\text{std}}(t)$ to denote the Hamiltonian that implements a TQD as provided by the ``standard" model developed by Demirplak and Rice. It is convenient because we shall see how to generalize this approach, in which this approach is recovered as particular case of the generalized approach.

The standard TQD method has been investigated in a number of different applications, where different discussions on the energetic cost of implementing such evolution has been considered. For example, studies on energetic cost of standard TQD were considered through the Frobenius norm
of the Hamiltonian~\cite{Santos:15,Coulamy:16,Santos:16}, quantum speed limit inequalities~\cite{Campbell-Deffner:17} and average thermodynamics work~\cite{Adolfo:14}. In all of these studies, standard TQD is more costing than its adiabatic counterpart and we can say that speed up adiabatic processes through this method demands additional energy resources. However, in this section we will present how to provide more efficient (energetically) TQD Hamiltonian by exploring the Gauge freedom of the phases that accompanying the adiabatic and, consequently, the TQD evolution.

\section{Generalized TQD in closed quantum systems} \label{Sec-G-TQDCS}

In some protocols of inverse engineering of a Hamiltonian~\cite{Kang:16,Santos:18-a}, the starting point is the equation
\begin{equation}
	H(t) = -i \hbar U(t) \dot{U}^{\dagger}(t) \text{ , }
\end{equation}
with $U(t)$ being the operator that describes the desired evolution. Now, observe that there are a number of adiabatic processes in which the quantal phases collected along the evolution can be negligible, situations where we are interested in the adiabatic trajectory associated to a single eigenstate of $H_{0}(t)$. For this reason, we can use the quantal phases that accompanying a TQD as a free parameter and define the generalized TQD evolution operator given by
\begin{equation}
	U_{\text{tqd}}(t) = \sum_{n=1}^{D} e^{i \int_{0}^{t}\theta _{n}(\xi)d\xi} \ket{E_{n}(t)}\bra{E_{n}(0)} \text{ , }
\end{equation}
with $\theta _{n}(t)$ being arbitrary parameters, that can be adjusted to minimize some physical quantity, for example.
The TQD Hamiltonian considered in previous section is a particular Hamiltonian that can be found from above equation. In fact, by adopting the choice $\theta _{n}(t) = \theta^{\text{ad}} _{n}(t)$, we get $U_{\text{tqd}}(t) = U_{\text{ad}}(t)$. The operator $U_{\text{tqd}}(t)$ implements an generalized TQD, because it allows for a TQD with arbitrary quantal phases, consequently we can find the generalized TQD Hamiltonian as
\begin{align}
	H_{\text{tqd}}(t) &= -i \hbar U_{\text{tqd}}(t) \dot{U}_{\text{tqd}}^{\dagger}(t) = \sum_{n=1}^{D} \sum_{m=1}^{D} 
	e^{i \int_{0}^{t}\theta _{n}(\xi)d\xi} \ket{E_{n}(t)}\interpro{E_{n}(0)}{E_{m}(0)}
	\frac{d}{dt} \left[ e^{i \int_{0}^{t}\theta _{m}(\xi)d\xi} \bra{E_{m}(t)}\right] \nonumber \\
	&= i \hbar \sum_{n=1}^{D} \ket{\dot{E}_{n}(t)}\bra{E_{n}(t)} + i \theta _{n} (t) \ket{E_{n}(t)}\bra{E_{n}(t)} \text{ . }
\end{align}

Now, if we identify the functions $\theta _{n}\left( t\right)$ as it has originally been identified with the adiabatic 
phase $\theta_n(t) = \theta^{\text{ad}}_n(t)$, which 
exactly mimics an adiabatic evolution, then we can write 
$H_{\text{tqd}}(t)$ as $H_0(t) + H_{\text{CD}}(t) = H_{\text{tqd}}^{\text{std}}(t)$. Now, it is relevant to highlight here the absence of the original reference Hamiltonian $H_0(t)$ in generalized TQD Hamiltonian $H_{\text{tqd}}(t)$. This result makes clear that we do not need to know how to implement $H_0(t)$ if we want to drive the system through an adiabatic trajectory of $H_0(t)$ via a TQD method. Moreover, if we define the energy cost of TQD using the Frobenius norm, as done in Refs.~\cite{Santos:15,Santos:16,Coulamy:16}, one notes that $||H_{\text{tqd}}(t)|| \neq ||H_{\text{tqd}}^{\text{std}}(t)||$. Therefore, a question is raised here: \textit{Is the generalized TQD method energetically more efficient than the standard one?}

\subsection{Energetically optimal transitionless quantum driving}

As already mentioned, there is a number of situations for which an adiabatic 
process do not need to be mimicked exactly, but only assure that the system is kept in an instantaneous eigenstate of the original Hamiltonian $H_0(t)$ (independently 
of its associated quantum phase)~\cite{Santos:15,Santos:16,Coulamy:16,Adolfo:16,Stefanatos:14,Lu:14,Deffner:16,Liang:16,An:16,Xia:16,Zhang:16,Marcela:14,Chen:10}. 
This generalized dynamics in terms of arbitrary phases $\theta_n(t)$ will be denoted 
as a {\it transitionless} evolution. Now, we will show that $\theta_n(t)$ can be nontrivially 
optimized in transitionless evolutions both in terms of energy cost and robustness against 
decoherence effects. In this direction, we adopt as a measure of energy cost the average 
Hilbert-Schmidt norm of the Hamiltonian throughout the 
evolution, which is 
given by~\cite{Zheng:16,Santos:15,Campbell-Deffner:17,Nathan:14}
\begin{equation}
	\Sigma  \left( \tau \right) = \frac{1}{\tau }\int_{0}^{\tau }||H(t)|| \text{ }dt = \frac{1}{\tau }\int_{0}^{\tau }\sqrt{\text{Tr}%
		\left[ {H}^{2}\left( t\right) \right] }\text{ }dt \text{ \ ,} \label{EqCostGeneric}
\end{equation}
where $\tau$ denotes the total evolution time.
The energy cost $\Sigma  \left( \tau \right)$ aims at identifying changes in the energy coupling constants and gap structure, 
which typically account for the effort of speeding up adiabatic processes. It is a well-defined measure for 
finite-dimensional Hamiltonians exhibiting non-degeneracies in their energy spectra, so it is applicable, e.g., for  
generic systems composed of a finite number of quantum bits (qubits) under magnetic or electric fields 
(see Refs.~\cite{Nathan:14,Zheng:16} for similar cost measures). More specially, the above energy quantifier can be adequately applied to the Hamiltonians that describe the quantum systems within the scope of this thesis. Note that Eq.~(\ref{EqCostGeneric}) is non-invariant with respect to a change of the zero energy offset, but we can adopt a fixed reference frame where it can be used to quantify the energy cost involved in attempts of accelerating the adiabatic path, either via an increasing of the energy gap in the adiabatic approach or via a reduction of $\tau$ by adjusting the relevant energy couplings in the counter-diabatic theory. In addition, as $\tau$ can be set by the quantum speed limit~\cite{Deffner:13}, Eq.~(\ref{EqCostGeneric}) allows us to establish a trade-off between speed and energy cost for an arbitrary dynamics~\cite{Santos:15,Campbell-Deffner:17}. For instance, in NMR experimental setups, the  quantity $||H(t)||$ represents how intense a magnetic field $\vec{B}(t)$ is expected to be in order to control the speed of such a dynamics. In conclusion, the quantifier defined in Eq.~\eqref{EqCostGeneric} will be used here to measure the energy cost expanded by a generalized TQD with regards to its standard and adiabatic counterparts.

As a first result, we discuss how, for a fixed time $\tau$, the generalized approach of TQD allows us to minimize the energy cost in a TQD by a suitable choice of the arbitrary parameters $\theta_{n}\left( t\right)$. This optimization can be analytically 
derived, which is established by Theorem~\ref{TheoOptmEner} below (its derivation is provided in Appendix~\ref{ApProofTheoOptTQD}).

\begin{theo} \label{TheoOptmEner}
	Consider a closed quantum system under adiabatic evolution governed by a Hamiltonian 
	$H_0\left( t\right) $. The energy cost to implement its generalized transitionless counterpart,  
	driven by the Hamiltonian $H_{\text{tqd}}(t)$, can be minimized by setting  
	\begin{equation}
		\theta _{n}\left( t\right) = \theta ^{\min}_{n}\left( t\right) = -i \langle \dot{E}_{n}(t)|E_{n}(t)\rangle \text{ .}
		\label{OptTeta}
	\end{equation}
\end{theo}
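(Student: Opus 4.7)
The plan is a direct optimization: show that $\text{Tr}[H_{\text{tqd}}^{2}(t)]$ is a separable quadratic form in the real parameters $\theta_n(t)$ whose unique minimum is attained at $\theta_n^{\min}(t)$ of Eq.~(\ref{OptTeta}), and then observe that because the choice of $\theta_n(t)$ at each instant is independent, pointwise minimization in $t$ of the integrand of Eq.~(\ref{EqCostGeneric}) automatically minimizes $\Sigma(\tau)$.

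First I would substitute the explicit form of $H_{\text{tqd}}(t)$ derived just before the theorem and expand $H_{\text{tqd}}^{2}(t)$ into four pieces. Using orthonormality $\interpro{E_n(t)}{E_m(t)}=\delta_{nm}$ and cyclicity of the trace, I would obtain
\begin{equation}
\text{Tr}[H_{\text{tqd}}^{2}(t)] = \hbar^{2}\sum_{n,m}\left|\interpro{E_n(t)}{\dot{E}_m(t)}\right|^{2} + \hbar^{2}\sum_{n}\theta_{n}^{2}(t) - 2i\hbar^{2}\sum_{n}\theta_{n}(t)\interpro{E_n(t)}{\dot{E}_n(t)} .
\end{equation}
The manifestly non-negative first term requires the identity $\interpro{\dot{E}_n}{E_m} = -\interpro{E_n}{\dot{E}_m}$ for $n\neq m$, obtained by differentiating $\interpro{E_n}{E_m}=\delta_{nm}$; the same identity for $n=m$ forces $\interpro{E_n}{\dot{E}_n}$ to be purely imaginary, so the cross term in the last sum is in fact real (and the expression is real overall, as it must be for a Hermitian $H_{\text{tqd}}$).

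Next, I would complete the square in each $\theta_n(t)$ independently,
\begin{equation}
\theta_{n}^{2} - 2i\theta_{n}\interpro{E_n}{\dot{E}_n} = \left(\theta_{n} - i\interpro{E_n}{\dot{E}_n}\right)^{2} + \interpro{E_n}{\dot{E}_n}^{2} ,
\end{equation}
whose unique minimum over $\theta_n\in\Rmath$ occurs at $\theta_n(t) = i\interpro{E_n(t)}{\dot{E}_n(t)}$. Using $\interpro{E_n}{\dot{E}_n} = -\interpro{\dot{E}_n}{E_n}$ (which follows from the same differentiation identity, the quantity being purely imaginary), this coincides with $\theta_n^{\min}(t) = -i\interpro{\dot{E}_n(t)}{E_n(t)}$ of Eq.~(\ref{OptTeta}), as desired.

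Finally, to lift this pointwise result to the integrated cost $\Sigma(\tau)=(1/\tau)\int_{0}^{\tau}\sqrt{\text{Tr}[H_{\text{tqd}}^{2}(t)]}\,dt$, I would note that the integrand is nonnegative and that the gauge-like parameters $\{\theta_n(t)\}$ can be chosen freely at each $t$ without any dynamical constraint, so that minimizing the integrand at each $t$ minimizes the integral. The main obstacle I expect is purely bookkeeping: tracking signs and conjugations carefully enough to verify that the quadratic form in $\theta_n(t)$ is diagonal (no $\theta_n\theta_m$ couplings with $n\neq m$ survive), which is what allows componentwise minimization. Once the cross-products between the off-diagonal block $i\hbar\sum_n\ket{\dot{E}_n}\bra{E_n}$ and the diagonal gauge block are handled cleanly, the theorem reduces to the completion-of-the-square step above.
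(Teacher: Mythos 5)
Your proposal is correct and follows essentially the same route as the paper's proof in Appendix~\ref{ApProofTheoOptTQD}: both expand $\text{Tr}[H_{\text{tqd}}^{2}(t)]$ into a quadratic form that is diagonal in the $\theta_{n}(t)$ (using $\interpro{E_n(t)}{\dot{E}_n(t)}$ purely imaginary) and minimize it pointwise in $t$. The only difference is that you complete the square where the paper sets $\partial_{\theta_n}\{\text{Tr}[H_{\text{tqd}}^{2}(t)]\}=0$ and checks the second derivative, which is an equivalent computation.
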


In particular, for any evolution that satisfies the quantum parallel-transport condition, given by $\langle \dot{E}_{n}(t)|E_{n}(t)\rangle = 0$~\cite{Berry:84}, the energy cost to implement a transitionless evolution is 
always optimized by choosing $\theta_{n}^{\text{min}}\left( t\right)=0$. 
This approach is useful for providing both realistic and energetically optimal Hamiltonians in several physical scenarios. 
For example, by considering nuclear spins driven by a magnetic field $\vec{B}(t)$ in a nuclear magnetic resonance setup, the energy 
cost can be optimized by adjusting $\theta_{n}^{\text{min}}\left( t\right)$ such that the magnitude $B(t)$ of the magnetic field is reduced, 
since $||H(t)|| \propto B(t) $.

As a second result, it is possible to proof that  
the generalized TQD approach can be used as a tool to yield time-independent 
Hamiltonians for transitionless evolutions. In general, the Hamiltonian $H_{\text{tqd}}(t)$ has its form 
constrained both by the choice of the phases $\theta_{n} \left( t\right) $ and by eigenstates 
of the adiabatic Hamiltonian $H_0(t)$. Thus, now we delineate under what conditions we can choose 
the set $\{\theta_{n} \left( t\right)\}$ in order to obtain a \textit{time-independent} Hamiltonian for a transitionless evolution.
To answer this question, we impose $\dot{H}_{ \text{tqd} }\left( t\right)=0$ considering 
arbitrary phases $\theta_{n}\left( t\right)$. This leads to the Theorem~\ref{TheoTimeIndep} below 
(its derivation is provided in Appendix~\ref{ApProofTheoOptTQD}).

\begin{theo} \label{TheoTimeIndep}
	Let $H_0\left( t\right) $ be a discrete quantum Hamiltonian, with $\{ \ket{E_{m}(t)} \}$ denoting its 
	set of instantaneous eigenstates. If $\{\ket{E_{m}(t)}\}$ satisfies $\interpro{E_{k}(t)}{\dot{E}_{m}(t)} = c_{km}$, 
	with $c_{km}$ complex constants $\forall$ $k,m$, then a family of time-independent 
	Hamiltonians $H^{\{\theta\}}$ for generalized transitionless evolutions can be defined by setting 
	$\theta_{m}\left( t\right) = \theta$, with $\theta$ a single arbitrary real constant $\forall m$.
\end{theo}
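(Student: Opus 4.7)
My plan is to work directly from the expression for $H_{\text{tqd}}(t)$ derived just above the theorem statement, and show that with the hypothesis $\langle E_k(t)|\dot E_m(t)\rangle=c_{km}$ constant, the generalized transitionless Hamiltonian with a common constant phase $\theta_m(t)=\theta$ has vanishing time derivative. Splitting
\begin{equation}
H_{\text{tqd}}(t)=i\hbar A(t)-\hbar\theta\,\mathbb{1},\qquad A(t)=\sum_n|\dot E_n(t)\rangle\langle E_n(t)|,
\end{equation}
the projector part collapses to $-\hbar\theta\,\mathbb{1}$ by completeness and is manifestly time-independent, so the whole task reduces to proving $\dot A(t)=0$.

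The key observation is that the matrix of $A(t)$ in the instantaneous eigenbasis is precisely the constant matrix $C=(c_{jk})$, since $\langle E_j(t)|A(t)|E_k(t)\rangle=\langle E_j(t)|\dot E_k(t)\rangle=c_{jk}$. Thus I would compute $\langle E_j(t)|\dot A(t)|E_k(t)\rangle$ and show it vanishes using two algebraic identities derived from the hypothesis: first, orthonormality $\langle E_k|E_m\rangle=\delta_{km}$ gives $c_{km}^{*}=-c_{mk}$; second, differentiating $\langle E_k|\dot E_m\rangle=c_{km}$ in time and inserting a resolution of identity yields $\langle E_k|\ddot E_m\rangle=(C^{2})_{km}$. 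Expanding $\dot A(t)=\sum_n[\,|\ddot E_n\rangle\langle E_n|+|\dot E_n\rangle\langle\dot E_n|\,]$ and evaluating $\langle E_j|\dot A|E_k\rangle$ via completeness, the two contributions become $(C^{2})_{jk}$ and $\sum_n c_{jn}c_{kn}^{*}=-(C^{2})_{jk}$, which cancel exactly.

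Since $\{|E_n(t)\rangle\}$ is a complete orthonormal basis at every $t$, vanishing of all matrix elements implies $\dot A(t)=0$ as an operator; hence $A(t)=A(0)$ and $H_{\text{tqd}}=i\hbar A(0)-\hbar\theta\,\mathbb{1}$ is time-independent, giving the claimed family $H^{\{\theta\}}$ parametrized by the single real constant $\theta$. Hermiticity is automatic because differentiating $\sum_n|E_n\rangle\langle E_n|=\mathbb{1}$ shows $A+A^{\dagger}=0$, so $iA$ is Hermitian.

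The routine part is step 2's algebra, which I would not carry out in full detail in the paper. The main subtlety to flag is the direction of the implication: the hypothesis must hold for \emph{all} pairs $(k,m)$, including $k=m$, because the cancellation uses the full matrix $C^{2}$; if the diagonal elements $c_{nn}$ were allowed to drift, the identity $\langle E_k|\ddot E_m\rangle=(C^{2})_{km}$ fails and $\dot A\neq 0$ in general. This is the conceptual obstacle: one must be careful that ``constancy of the Berry connection'' in the usual geometric sense (diagonal $c_{nn}$ only) is strictly weaker than what the theorem requires, and it is the stronger condition on the full off-diagonal connection matrix that forces the cancellation.
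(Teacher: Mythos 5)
Your proof is correct. The computational core is the same as the paper's appendix proof: both reduce the claim to showing that every matrix element of $\dot{H}_{\text{tqd}}(t)$ in the instantaneous eigenbasis vanishes, using the antisymmetry $c_{km}^{*}=-c_{mk}$ of the connection matrix and a resolution of the identity. You organize it differently in two respects, though. First, you fix $\theta_m=\theta$ at the outset and collapse the phase term to $-\hbar\theta\1$ by completeness, so only $\dot A(t)=0$ remains to be checked; the paper instead keeps arbitrary $\theta_m(t)$, computes $\langle E_k(t)|\dot{H}_{\text{tqd}}(t)|E_m(t)\rangle$ in general, and reads off the conditions $\dot{\theta}_m(t)=i\frac{d}{dt}\langle E_m(t)|\dot{E}_m(t)\rangle$ and $\left[\theta_m(t)-\theta_k(t)\right]\langle E_k(t)|\dot{E}_m(t)\rangle=i\frac{d}{dt}\langle E_k(t)|\dot{E}_m(t)\rangle$, from which the common constant $\theta$ emerges as a solution. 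Your route is a cleaner verification of the stated claim; the paper's route is a derivation that additionally records what freedom survives when some off-diagonal $c_{km}$ vanish (then $\theta_k$ and $\theta_m$ need not coincide). Second, where the paper recognizes $\langle E_k(t)|\ddot{E}_m(t)\rangle+\langle\dot{E}_k(t)|\dot{E}_m(t)\rangle$ as the total derivative $\frac{d}{dt}c_{km}=0$, you unpack the identical cancellation as $(C^2)_{km}-(C^2)_{km}$; the two phrasings are equivalent, the total-derivative one being marginally shorter. Your closing caveat that the hypothesis is needed for the diagonal entries as well is apt and corresponds exactly to the paper's equation fixing $\dot{\theta}_m$.
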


In case where the hypothesis of the Theorem~\ref{TheoTimeIndep} are satisfied and the parallel-transport condition is also satisfied for all eigenstates of $H_{0}(t)$, we can get an energetically optimal time-independent TQD Hamiltonian by setting $\theta_{m}\left( t\right) = \theta_{m}^{\text{min}}\left( t\right) = 0$, as we shall see soon. Although we use the Frobenius norm of the Hamiltonian as measurement of energy cost, we can define different quantifiers and physical quantities to be optimized using these arbitrary quantal phases. Here we will present three applications where different approaches to energy cost optimization are considered.



\subsection{Experimental implementation of Landau-Zener transitionless dynamics}

As a first application, let us consider the dynamics of a 
two-level quantum system, i.e., a qubit, evolving under the Landau-Zener Hamiltonian (here we use the normalized time $s = t/\tau$)
\begin{align}
	H_0^{\text{LZ}}(s) &= \hbar \Delta \sigma_{z} + \hbar \Omega_{0} (s) \sigma_{x} \label{LZAdHam}
\end{align}
where $\Delta$ is a detuning, $\Omega (s)$ is the Rabi frequency and the time $s$ is defined for the interval~$[0,1]$. The adiabatic dynamic is obtained from a very-slow continuum evolution for the Hamiltonian $H_{\text{LZ}}(t)$. The above equation can be rewritten as
\begin{align}
	H_0^{\text{LZ}}\left( s\right) = \hbar \Delta \left[ \sigma _{z}+\tan \vartheta \left( s\right) \sigma _{x}\right] \text{ , }
	\label{H-LZ}
\end{align}
with $\tan \vartheta \left( s\right)$ a dimensionless time-dependent parameter associated with the Rabi frequency, with $\vartheta \left( s\right) = \arctan[\Omega_{0} (s)/\Delta]$. This Hamiltonian describes transitions in two-level systems exhibiting anti-crossings 
in its energy spectrum~\cite{Zener:32}. In particular, it can be applied, e.g., to perform adiabatic population transfer in a two-level system driven by a chirped field~\cite{Demirplak:03} and to investigate molecular collision processes \cite{Lee:79}.
The instantaneous ground $|E_{-}\left( s\right) \rangle$ 
and first excited $|E_{+}\left( s\right) \rangle$ states of 
$H_0^{\text{LZ}}\left( s\right)$ are
\begin{subequations}\label{EqEigenStatesHLZ}
	\begin{align}
		|E_{-}\left( s\right) \rangle  &=-\sin \left[ \frac{\vartheta \left(
			s\right) }{2}\right] |1\rangle +\cos \left[ \frac{\vartheta \left( s\right) 
		}{2}\right] |0\rangle \text{ ,}  \label{FundLZ} \\
		|E_{+}\left( s\right) \rangle  &= \cos \left[ \frac{\vartheta \left(
			s\right) }{2}\right] |1\rangle +\sin \left[ \frac{\vartheta \left( s\right) 
		}{2}\right] |0\rangle\text{ . }  \label{EqExcLZ}
	\end{align}
\end{subequations}

The system is initialized in the ground state $|E_{-}\left( 0\right) \rangle =|0\rangle $ 
of $H_0^{\text{LZ}}\left( 0\right)$. By considering a unitary dynamics and a sufficiently 
large total evolution time (adiabatic time), the qubit evolves to the 
instantaneous ground state $|E_{-}\left( s\right) \rangle$ of $H_0^{\text{LZ}}\left( s\right)$. By applying the standard TQD approach, it is possible to get the TQD Hamiltonian $H_{\text{std}}^{\text{LZ}}(s)$ given by $H_{\text{std}}^{\text{LZ}}(s) = H_0^{\text{LZ}}\left( s\right) + H_{\text{cd}}^{\text{LZ}}\left( s\right)$, where
\begin{align}
	H_{\text{cd}}^{\text{LZ}}\left( s\right) = \frac{1}{\tau}i \hbar \sum_{k=\pm }|d_{s}E_{k}\left( s\right)
	\rangle \langle E_{k}\left( s\right) |=\frac{\hbar d_{s}\vartheta \left( s\right) 
	}{2\tau }\sigma _{y} \text{ , }
	\label{EqExcLZ-std}
\end{align}
because the eigenstates of the Hamiltonian $H_0^{\text{LZ}}\left( s\right)$ the parallel-transport condition. 

Now, let us discuss the generalized transitionless dynamics theory for the Landau-Zener model. 
For optimal energy cost, Eq.~(\ref{OptTeta}) establishes 
$\theta _{n}\left( t\right) =\langle d_{s}E_{\pm }\left( s\right)
|E_{\pm }\left( s\right) \rangle =0$ for the states in Eqs.~\eqref{EqEigenStatesHLZ}. 
Therefore, the optimal Hamiltonian ${H}_{\text{opt}}^{\text{LZ}}\left( s\right) $ is given by 
\begin{align}
	H_{\text{opt}}^{\text{LZ}}\left( s\right) =H_{\text{cd}}^{\text{LZ}}\left( s\right) \text{ . }
	\label{EqExcLZ-opt}
\end{align}

From Eq.~(\ref{EqExcLZ-std}), we can see that $H_0^{\text{LZ}}\left( s\right)$ satisfies the hypotheses of  
Theorem \ref{TheoTimeIndep} if, and only if, we choose the linear interpolation 
$\vartheta (s) = \vartheta _{0}s$. Thus, we adopt this choice 
for simplicity and, consequently, we 
have $H_{\text{opt}}^{\text{LZ}}\left( s\right) = (\vartheta _{0}/2\tau ) \sigma _{y}$.
We observe that a complete (avoided) level crossing picture for the Landau-Zener model 
is described by varying the parameter $\tan \vartheta \left( s\right)$ from $-\infty$ to $+\infty$. 
Here, we are taking a narrower range for $\tan \vartheta \left( s\right)$, which simplifies the 
description of the transitionless dynamics for the model.

Again, the experimental realization was done using the Ytterbium trapped ion setup shown in Fig.~\eqref{FigAdiabExpComTrappedIon}. We use it because we aim to investigate both energy cost and robustness against decoherence of the optimal TQD scheme. In our experiment we used both resonant and off-resonant microwaves fields to implement adiabatic and TQD dynamics.
The adiabatic Hamiltonian dynamics was performed by using a non-resonant microwave,
where the time-dependent effective Rabi frequency is given by $\Omega_{\text{eff}}(s) = [\Omega_{0}^2(s) + \Delta^2]^{1/2}$.
This is also known as ``generalized'' Rabi frequency, i.e, Rabi frequency with detuning~\cite{Knight:80,Mogilevtsev:08,Brunel:98}. To drive the system by using the standard TQD,
we use an independent microwave to simulate the counter-diabatic term $H_{\text{cd}}(s)$.
This additional field is a resonant microwave ($\Delta = 0$),
whose Rabi frequency can be obtained from $H_{\text{cd}}(s)$ in Eq.~\eqref{EqExcLZ-std} as $\Omega_{0\text{cd}}(s) = d_{s}\vartheta \left( s\right) / 2\tau$.
On the other hand, different from standard TQD Hamiltonian,
the optimal dynamics driven by $H_{\text{opt}}(s)$ could be implemented using a single resonant microwave with Rabi frequency $\Omega_{0\text{cd}}(s)$,
where we have turned-off the non-resonant field used for simulate $H_{0}(s)$. 

In order to quantify the energy resources employed in the quantum evolution, we study the intensity of the fields used to perform the adiabatic dynamics as well as standard and optimal TQD.
The field intensity is associated with the Rabi frequency through the relation $I(s) = \Gamma \Omega_{0}^2(s)$, where $\Gamma$ is a constant that depends on the microwave amplifier.
Considering the whole evolution time, we can define the average intensity field as
\begin{align}
	\bar{I} (\tau) = \frac{1}{\tau} \int_{0}^{\tau} I(t) dt = \Gamma \int_{0}^{1} |\Omega (s)|^2 ds \text{ , }
\end{align}
where we use the parametrization $s = t/\tau$. Thus, from Eqs. \eqref{LZAdHam}, \eqref{EqExcLZ-std} and \eqref{EqExcLZ-opt}, we get the field intensity for the adiabatic, standard and optimal TQD Hamiltonian, respectively, as
\begin{subequations}\label{EqIntensityFieldsTrappedIon}
	\begin{align}
		\bar{I}_{0} (\tau) &= \Gamma \int \nolimits _{0}^{1}|\Omega_{\text{0}}(s)|^{2}ds = \Delta^2 \Gamma \int \nolimits _{0}^{1}\tan^2 \vartheta \left( s\right) ds \text{ , } \\
		\bar{I}_{\text{std}} (\tau) &= \bar{I}_{0} (\tau) + \bar{I}_{\text{opt}} (\tau) = \Delta^2 \Gamma \int \nolimits _{0}^{1}\tan^2 \vartheta \left( s\right) ds + \frac{\Gamma}{4\tau^2} \int \nolimits _{0}^{1} | d_{s}\vartheta \left( s\right) |^2 ds \text{ , } \\
		\bar{I}_{\text{opt}} (\tau) &= \Gamma \int \nolimits _{0}^{1}|\Omega_{\text{CD}}(s)|^{2}ds = \frac{\Gamma}{4\tau^2} \int \nolimits _{0}^{1} | d_{s}\vartheta \left( s\right) |^2 ds  \text{ . }
	\end{align}
\end{subequations}

Notice that for the standard TQD we find $\bar{I}_{\text{std}} (\tau) = \bar{I}_{0} (\tau) + \bar{I}_{\text{opt}} (\tau)$, since
the standard TQD field is composed by both the adiabatic and the optimal TQD contributions. The parameter $\Gamma$ is a characteristic parameter that depends on the device used to amplify and modulate the driving microwave acting on the system. Then, in order to give a description of the energy cost independent of any auxiliary device, we take the relative field intensities expressed in unities of the adiabatic intensity $\bar{I}_{0}$. By doing this, it is possible to disregard the constant $\Gamma$.
In fact, we define $\Icalb_{\text{std}} (\tau) = \bar{I}_{\text{std}} (\tau) / \bar{I}_{0}$ and $\Icalb_{\text{opt}} (\tau) = \bar{I}_{\text{opt}} (\tau) / \bar{I}_{0}$, and adopt the normalization $\Icalb_{0} (\tau) = 1$. 

From Eqs.~\eqref{EqIntensityFieldsTrappedIon}, it is possible to identify a value of $\tau_{\text{B}}$, for the total evolution time, in which the intensity fields for implementing optimal TQD becomes less intense than the adiabatic intensity. In fact, firstly note that $\bar{I}_{\text{std}} (\tau) \geq \bar{I}_{\text{opt}} (\tau)$, but on the other side, we can see that the value of $\tau$ implies into $\bar{I}_{\text{opt}} (\tau) \geq \bar{I}_{0} (\tau)$ or $\bar{I}_{\text{opt}} (\tau) \leq \bar{I}_{0} (\tau)$. So, here we impose the desired situation in which $\bar{I}_{\text{opt}} (\tau) \leq \bar{I}_{0} (\tau)$ to obtain the relation
\begin{eqnarray}
	\frac{\Gamma}{4\tau^2} \int \nolimits _{0}^{1} | d_{s}\vartheta \left( s\right) |^2 ds \leq \Delta^2 \Gamma \int \nolimits _{0}^{1}\tan^2 \vartheta \left( s\right) ds \text{ , }
\end{eqnarray}
therefore we conclude that, in this situation, $\tau$ should satisfy
\begin{eqnarray}
	\tau \geq \frac{1}{|\Delta|} \sqrt{\frac{\int \nolimits _{0}^{1} | d_{s}\vartheta \left( s\right) |^2 ds}{\int \nolimits _{0}^{1}\tan^2 \vartheta \left( s\right) ds}} \text{ . }
\end{eqnarray}

This result shows the existence of a boundary total evolution time $\tau_{\text{B}}$ given by
\begin{eqnarray}
	\tau_{\text{B}} = \frac{1}{|\Delta|} \sqrt{\frac{\int \nolimits _{0}^{1} | \vartheta^{\prime} \left( s\right) |^2 ds}{\int \nolimits _{0}^{1}\tan^2 \vartheta \left( s\right) ds}} \text{ , } \label{EqTauBoundary}
\end{eqnarray}
so that if $\tau > \tau_{\text{B}}$ we get $\bar{I}_{\text{Opt}} (\tau) < \bar{I}_{\text{Ad}} (\tau)$, otherwise we find $\bar{I}_{\text{Opt}} (\tau) > \bar{I}_{\text{Ad}} (\tau)$. In case where $\tau = \tau_{\text{B}}$ we have identical field intensities. It would be worth highlighting that $\tau_{\text{B}}$ does not depend on the $\Gamma$. Notice that, after $\tau_{\text{B}}$, the shortcut to adiabaticity defined by the optimal TQD can be implemented by spending less energy resources, as measured by the field intensity, than the adiabatic approach.

\begin{figure}[t!]
	\centering
	\includegraphics[scale=0.6]{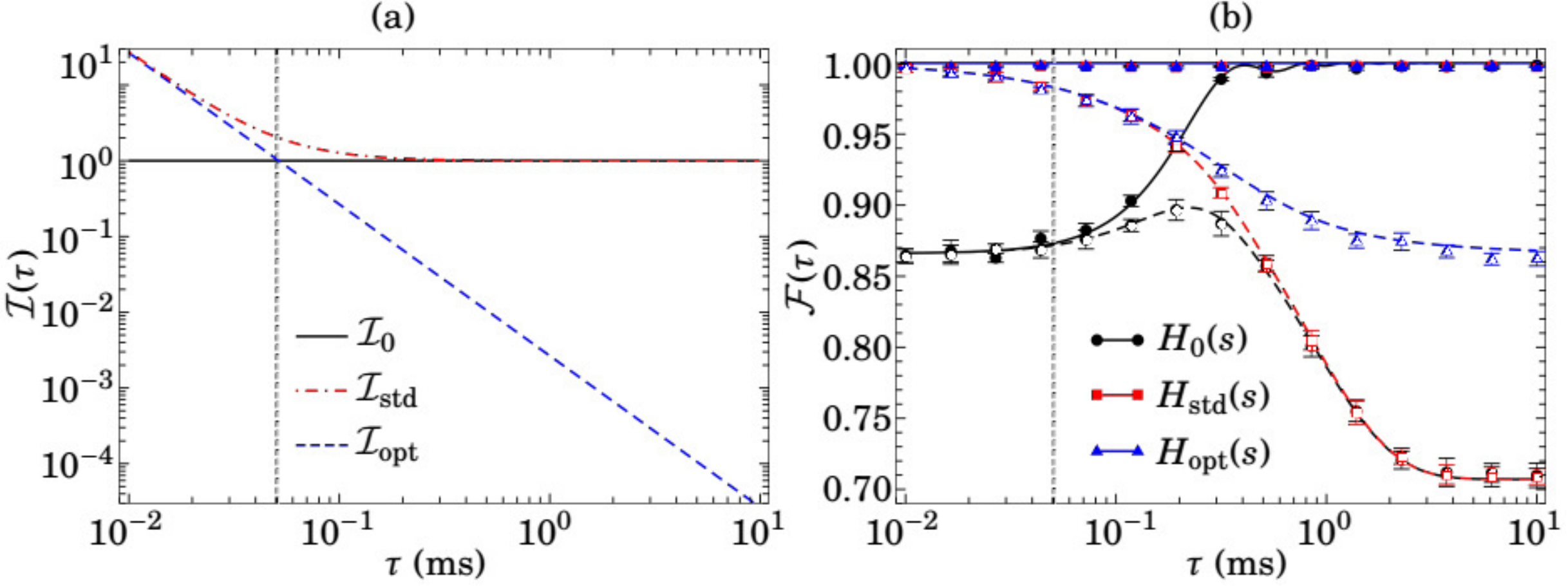}
	\caption{({\color{blue}a}) The calculated relative field intensity $\Icalb (\tau)$ for standard TQD $\Icalb_{\text{std}} (\tau)$ and optimal TQD $\Icalb_{\text{opt}} (\tau)$ as function of the total evolution time $\tau$, where the horizontal black line represents $\Icalb_{0} (\tau)$. ({\color{blue}b}) Fidelity for unitary dynamics (continuum lines) and non-unitary one (dashed lines) under dephasing. The symbols and lines represent experimental data and theoretical results, respectively. In both figures the dashed gray vertical line denotes the boundary time $\tau_{\text{B}} \approx 0.052$~ms, obtained from Eq.~\eqref{EqTauBoundary}, between the regions $\bar{I}_{\text{opt}} (\tau)>\bar{I}_{0} (\tau)$ (left hand side) and $\bar{I}_{\text{opt}} (\tau)<\bar{I}_{0} (\tau)$ (right hand side). Here we set $\vartheta \left( s\right) = \pi s /3 $, $\Delta = 2\pi \times 1$ KHz and $\Omega_{0} \left( s\right) = \Delta \tan (\pi s /3)$ is used in our experiment. The decohering rate was kept as $\gamma = 2.5$ KHz for all non-unitary dynamics. The first gray vertical line denotes the boundary time $\tau_{\text{B}} \approx 0.052$~ms obtained from Eq.~\eqref{EqTauBoundary}.}
	\label{FigFidelityTrappedIon}
\end{figure}

It is important to mention here that the energy cost as computed by the Frobenius norm of the Hamiltonian does not represents a real energy cost. In fact, it is possible to show that the norm of the Hamiltonian $H_0^{\text{LZ}}\left( s\right)$ depends on the parameter $\Delta$, but since of $\Delta$ can changed without change the microwave intensity, the norm of $H_0^{\text{LZ}}\left( s\right)$ cannot be used as a realistic cost of the dynamics in this particular case. However, if we consider this analysis, the Theorem~\ref{TheoOptmEner} guarantees that $\Sigma_{\text{std}} (\tau) > \Sigma_{\text{opt}} (\tau)$ for every $\tau$. Thus, because we spend less energy with the optimal TQD scheme, it is convenient to study the impact of this optimal TQD on the robustness of the dynamics against decoherence. To do a direct comparison between adiabatic dynamics, standard and optimal TQD, we implemented each dynamics evolving under a phase damping channel described by the non-unitary contribution given in Eq.~\eqref{EqRcalbDephasing}. 

In the experiment we consider a constant dephasing rate as where $\gamma = 2.5$~KHz is the dephasing rate and a linear interpolation for the function $\vartheta \left( s\right) = \vartheta_{0} s $ (where we get a time-independent optimal TQD Hamiltonian), with $\vartheta_{0} = \pi /3$. Therefore, we have $\Omega_{0} \left( s\right) = \Delta \tan (\pi s /3)$, where we pick the detuning parameter $\Delta$ as $2\pi \times 2$~KHz. As previously, here we quantify the robustness of the protocols from the fidelity defined in Eq.~\eqref{EqFidelityGeneral}. However, since we have a pure state as our target state $\ket{\psi_{\text{tg}}(s)}$, it is possible to proof that the Eq.~\eqref{EqFidelityGeneral} becomes~\cite{Nielsen:Book}
\begin{align}
	\Fcalb (\tau) = \sqrt{\bra{\psi_{\text{tg}}(s)}\rho\left( s\right) \ket{\psi_{\text{tg}}(s)}} \text{ , }
\end{align}
with $\rho\left( s\right)$ being solution of Schrödinger equation, obtained by numerical calculations, and $\ket{\psi_{\text{tg}}(s)}$ given by Eq.~\eqref{FundLZ}. The result for the fidelity $\Fcalb (\tau)$ as a function of $\tau$ for both free-decohering and decohering dynamics are shown in Fig.~\ref{FigFidelityTrappedIon}{\color{blue}b}.

The continuum lines, with full symbols denoting the associated experimental data, in Fig.~\ref{FigFidelityTrappedIon}{\color{blue}b} denotes the values of $\Fcalb (\tau)$ under a unitary dynamics for each protocol and the corresponding experimental data points. This is possible because the the maximum total evolution time is $\tau=10$~ms, while the coherence time is of order of $200$~ms for the qubit used here. As expected, the black curve shows that the adiabatic behavior is achieved for long total evolution time $\tau \gg \tau_{0} \approx 0.021$~ms, with $\tau_{0}$ computed from $\tau_{0} = \max_{s\in [0,1]}|\bra{E_{-}(s)} d_{s} H_{0}(s)\ket{E_{+}(s)}/g^{2}(s)|$, where $g(s) = E_{+}(s)-E_{-}(s)$ is the gap between fundamental $\ket{E_{-}(s)}$ and excited $\ket{E_{+}(s)}$ energy levels~\cite{Sarandy:04}. It is important highlight here that, for fast evolutions ($\tau < \tau_{0}$), both optimal and traditional TQD provide a high fidelity protocol, while adiabatic dynamics fails. Naturally, this high performance is accompanied by costly fields used for implementing the TQD protocols, as previously shown in Fig.~\ref{FigFidelityTrappedIon}{\color{blue}b}. On the other hand, it is important to highlight that for total evolution time $\tau > \tau_{\text{B}}$, where the optimal TQD field is smaller than both adiabatic and traditional TQD ones, the high performance of optimal TQD is kept.

By looking now at the effect of the dephasing channel (dashed curves, with empty symbols denoting their respective experimental data) from Fig.~\ref{FigFidelityTrappedIon}, we can see that optimal TQD exhibits the highest robustness for every $\tau$, with its advantage increasing as $\tau$ increases.
For fast evolution times, this high performance is again associated with costly fields in comparison with the
adiabatic fields.
However, for large evolution times, while the traditional TQD fidelity converges to adiabatic fidelity (as expected, because the counter-diabatic Hamiltonian becomes less effective than the adiabatic one~\cite{Coulamy:16}), the fidelity behavior of optimal TQD is much better than both adiabatic dynamics
and traditional TQD, with less energy resources spent in the process. This is a remarkable result,
since we can achieve both better fidelities with less energy fields involved. Table \ref{TableCompStdOptTQD} summarizes fidelity and intensity fields for the TQD models considered here.
We highlight that $\Fcalb_{\text{opt}}(\tau) \gtrapprox \Fcalb_{\text{std}}(\tau)$ and there is a range of $\tau$,
where we get $\Fcalb_{\text{opt}}(\tau) > \Fcalb_{0}(\tau)$ even when $\bar{I}_{\text{opt}}(\tau) \ll \bar{I}_{0}(\tau)$. In conclusion, the generalized TQD approach allows us for getting, at least to this case, time-independent TQD Hamiltoniansthat are more robust and less energetically costing than its adiabatic and standard counterparts.

\begin{table}[t!]
	\centering
	\caption{Intensity fields and fidelity relations for both standard and optimal TQD.}
	\label{TableCompStdOptTQD}
	\begin{tabular}{c|c|c|c}
		\hline
		Model                             & $\tau < \tau_{\text{B}}$     &   $\tau = \tau_{\text{B}}$   &  $\tau > \tau_{\text{B}}$      \\ \hline
		\multicolumn{1}{c|}{Standard} & \begin{tabular}[c]{@{}c@{}}$\Fcalb_{\text{std}}(\tau) > \Fcalb_{0}(\tau)$\\ $\bar{I}_{\text{std}}(\tau) > \bar{I}_{0}(\tau)$\end{tabular} & \begin{tabular}[c]{@{}c@{}}$\Fcalb_{\text{std}}(\tau) > \Fcalb_{0}(\tau)$\\ $\bar{I}_{\text{std}}(\tau) > \bar{I}_{0}(\tau)$\end{tabular} & \begin{tabular}[c]{@{}c@{}}$\Fcalb_{\text{std}}(\tau) \gtrapprox \Fcalb_{0}(\tau)$\\ $\bar{I}_{\text{std}}(\tau) \gtrapprox \bar{I}_{0}(\tau)$\end{tabular} \\ \hline
		\multicolumn{1}{c|}{Optimal}  & \begin{tabular}[c]{@{}c@{}}$\Fcalb_{\text{opt}}(\tau) > \Fcalb_{0}(\tau)$\\ $\bar{I}_{\text{opt}}(\tau) > \bar{I}_{0}(\tau)$\end{tabular} & \begin{tabular}[c]{@{}c@{}}$\Fcalb_{\text{opt}}(\tau) > \Fcalb_{0}(\tau)$\\ $\bar{I}_{\text{opt}}(\tau) = \bar{I}_{0}(\tau)$\end{tabular} & \begin{tabular}[c]{@{}c@{}}$\Fcalb_{\text{opt}}(\tau) > \Fcalb_{0}(\tau)$\\ $\bar{I}_{\text{opt}}(\tau) < \bar{I}_{0}(\tau)$\end{tabular} \\ \hline
	\end{tabular}
\end{table}

\section{Generalized TQD for a single-spin in a rotating magnetic field}

As a second application, we study optimal transitionless evolution of a single spin-$\frac{1}{2}$ particle in presence of a rotating magnetic field~\cite{Santos:20b}, whose Hamiltonian $H_{0}(t)$ is given by the Eq.~\eqref{EqNMRHamil}. Here we do not assume any restriction on the parameters $\omega_{0}$ and $\omega_{1}$, as done previously. As we know, the set of eigenvectors for is given in Eqs.~\eqref{EqNMREigenVectors}, then one can show that the generalized TQD Hamiltonian reads
\begin{align}
	H^{\text{gen}}_{\text{tqd}}(t) &= \frac{\hbar}{2} \theta(t)\sin \alpha (\sigma_{x}\cos \omega t + \sigma_{y}\sin \omega t )  +\frac{\hbar}{2} \left[ \omega + \theta(t)\cos \alpha \right]\sigma_{z} \text{ , }
\end{align}
where $\theta(t) = \theta_{0}(t) - \theta_{1}(t)$ and $\alpha = \arctan[ \omega_{1}/\omega_{0} ]$, with $\theta_{0}(t)$ and $\theta_{1}(t)$ being the generalized quantal phases (arbitrary). Thus, note that a transitionless quantum 
driving is not obtained by changing the angular frequency of the RF field, but rather by changing the intensity of the magnetic fields. 
In fact, as in Hamiltonian $H^{\text{nmr}}_{0}(t)$, we can write the magnetic field $\vec{B}^{\text{gen}}_{\text{tqd}}(t)$ associated with the Hamiltonian $H^{\text{gen}}_{\text{tqd}}(t) = -\gamma \vec{S}\cdot \vec{B}^{\text{gen}}_{\text{tqd}}(t)$ as
\begin{align}
	\vec{B}^{\text{gen}}_{\text{tqd}}(t) = \left[ B_{\omega} + B_{\theta}(t)\cos \alpha \right] \hat{z} + B_{\theta}(t)\sin \alpha (\cos \omega t \hat{x}+\sin \omega t \hat{y}) \text{ , } \label{GenBSA}
\end{align}
where $B_{\theta}(t) = - \theta(t)/\gamma$ is associated with the free parameter $\theta(t)$ and $B_{\omega} = - \omega/\gamma$. Therefore, the generalized transitionless theory allows us to use the phases $\theta_{n}(t)$ ($n=0,1$) to minimize the magnetic field required to implement the desired dynamics. In some cases, such a field is fixed and no free parameter can be used to optimize the field intensity. For example, if we set the parameters $\theta_{n}(t)$ in order to obtain the exact adiabatic dynamics with adiabatic phases $\theta^{\text{ad}}_{n}(t)$, we find the standard shortcut to adiabaticity, which is provided by the Hamiltonian in Eq.~\eqref{EqHTQDstd}, with the counter-diabatic term $H_{\text{cd}}(t)$ reading
\begin{align}
	H_{\text{cd}}(t) = \frac{\hbar \omega}{2} \left[ \sin^2\alpha \sigma_{z} - \frac{1}{2}\sin 2\alpha (\sigma_{x}\cos \omega t + \sigma_{y}\sin \omega t ) \right] \text{ , } \label{cc}
\end{align}
leading to the standard quantum driving Hamiltonian in the form $H^{\text{std}}_{\text{tqd}}(t) =  -\gamma \vec{S} \cdot \vec{B}^{\text{std}}_{\text{tqd}}(t)$, with the magnetic field $\vec{B}^{\text{std}}_{\text{tqd}}(t)$ to implement $H^{\text{std}}_{\text{tqd}}(t)$ reading
\begin{align}
	\vec{B}^{\text{std}}_{\text{tqd}}(t) &= \left[B_{xy} - \frac{1}{2}B_{\omega}\sin 2\alpha \right](\cos \omega t \hat{x}+\sin \omega t \hat{y}) + \left[ B_{z} + B_{\omega}\sin^2\alpha \right] \hat{z} \text{ , }
	\label{eqBTtqd}
\end{align}
where $B_{\omega}$ is the additional magnetic field.
By analyzing the energy resources to implement the desired evolution, we can find the optimal protocol for the transitionless dynamics in terms of 
the magnetic field intensities required by the Hamiltonian. In particular, it is possible to show that the optimal field is obtained by setting 
$\theta_{n}(t) = i\interpro{\dot{E}_{n}(t)}{E_{n}(t)}$, as established in Theorem~\ref{TheoOptmEner}. Therefore, the optimal transitionless counterpart 
reads
\begin{eqnarray}
	H^{\text{opt}}_{\text{tqd}}(t) =\frac{\hbar \omega}{2} \left[ \sin^2\alpha \sigma_{z} - \frac{1}{2}\sin 2\alpha (\sigma_{x}\cos \omega t + \sigma_{y}\sin \omega t ) \right] \text{ , }  \label{xx}
\end{eqnarray}
that can be written as $H^{\text{opt}}_{\text{tqd}}(t) =- \gamma \vec{S} \cdot \vec{B}^{\text{opt}}_{\text{tqd}}(t) $ with the optimal magnetic $\vec{B}^{\text{opt}}_{\text{tqd}}(t)$ identified as
\begin{eqnarray}
	\vec{B}^{\text{opt}}_{\text{tqd}}(t) = B_{\omega}\sin^2\alpha \hat{z} - \frac{1}{2}B_{\omega}\sin 2\alpha (\cos \omega t \hat{x}+\sin \omega t \hat{y}) \text{ . }
\end{eqnarray}
We can see that the norms of the fields satisfy $||\vec{B}^{\text{std}}_{\text{tqd}}(t)|| > ||\vec{B}_{0}(t)||$ and 
$||\vec{B}^{\text{std}}_{\text{tqd}}(t)|| > ||\vec{B}^{\text{opt}}_{\text{tqd}}(t)||$ for any choice of the set of parameters 
$\Omega = \{\omega,\omega_{0},\omega_{1}\}$. This means that a fast evolution based on the standard shortcut 
exactly mimicking the adiabatic phase, such as given by Eq.~(\ref{EqHTQDstd}), always requires more energy resources than the 
original adiabatic dynamics. On the other hand, the relation between $||\vec{B}^{\text{opt}}_{\text{tqd}}(t)||$ and $||\vec{B}_{0}(t)||$ 
depends on the parameters $\omega$, $\omega_{0}$ and $\omega_{1}$ used in $||\vec{B}_{0}(t)||$. In particular, 
this means that the optimal shortcut to adiabaticity, such as given by Eq.~(\ref{xx}), can be faster while spending even less resources 
than the original adiabatic dynamics. More specifically, this trade-off can be expressed through the norm relation
\begin{eqnarray}
	\frac{||\vec{B}_{0}(t)||}{||\vec{B}^{\text{opt}}_{\text{tqd}}(t)||} = \frac{B_{1}^2 + B_{0}^2}{B_{1}B_{\omega}}  =  \frac{\omega_{1}^2 + \omega_{0}^2}{\omega_{1} \omega} 
	\text{ . } \label{FieldsRel}
\end{eqnarray}
As we can see, for the case $\omega_{1}=\omega_{0}=\omega$, we get $||\vec{B}_{0}(t)|| = 2 ||\vec{B}^{\text{opt}}_{\text{tqd}}(t)||$. This shows 
the advantage of the optimal TQD approach in comparison with its adiabatic counterpart, whose implementation requires a higher magnetic field. It is worth mentioning that, since the experimental implementation of the driven Hamiltonian in NMR is obtained in the rotating frame, the dynamics here is implemented by a phase modulated pulse with constant amplitude applied off resonance. This approach is similar to the adiabatic pulse technique where large swept-frequency offsets are used to make the effective magnetic field, observed in the rotating frame,  approximately collinear with the spin magnetization~\cite{Garwood:01}. In our experiment no swept-frequency has been used. The energy cost is well defined from the transverse RF field used 
to drive the system, since the $Z$ component of the Hamiltonian is implemented by a frequency shift, with no real cost associated with it.

\subsection{Experimental implementation}

In order to demonstrate the advantage of  the TQD approach as a tool in quantum control, we have performed an experimental realization 
using a two-qubit NMR system, namely, the $^1$H and $^{13}$C  spin$-1/2$ nuclei in the Chloroform molecule, with up and down spin states denoted by $\ket{1}$ and $\ket{0}$, respectively~\cite{Sarthour:Book}. The experiment has been realized at room temperature in a Varian 500 MHz spectrometer. More specifically, we have compared the dynamics of the system driven by fields $\vec{B}_{0}(t)$,~$\vec{B}^{\text{std}}_{\text{tqd}}(t)$ and~$\vec{B}^{\text{opt}}_{\text{tqd}}(t)$. Since we need a single qubit, the $^{1}$H spin has been driven by phase modulated magnetic fields, while the $^{13}$C spin has been decoupled during the experiment. In the rotating frame, the Hamiltonian associated with the magnetic pulse is written as
\begin{eqnarray}
	H_{\text{rf}} = \frac{\hbar}{2} \left[ \omega_{0}\sigma_{z} + \omega_{1} (\sigma_{x}\cos \phi + \sigma_{y}\sin \phi ) \right] \text{ . } \label{H02}
\end{eqnarray} 
By changing the amplitude of the pulses, the offset frequency and using time phase modulation, one can reproduce the desired dynamics. The evolution has been implemented in an on-resonance condition $\omega_z\!=\!\omega_{xy}\!=\!\omega\!=\!2\pi \times 200$Hz. This means that, even though the evolution obeys the adiabatic condition, it can lead to population transfer, resulting in a violation of the expected adiabatic evolution as a resonance effect~\cite{Suter:08}. From the values set for $\omega$, $\omega_{z}$ and $\omega_{xy}$, Eq.~\eqref{FieldsRel} implies that $||\vec{B}_{0}(t)|| = 2||\vec{B}^{\text{opt}}_{\text{tqd}}(t)||$, so that $||\vec{B}^{\text{opt}}_{\text{tqd}}(t)||\!<\!||\vec{B}_{0}(t)||\!<\!||\vec{B}^{\text{std}}_{\text{tqd}}(t)||$. Notice that the optimal TQD indeed spends less energy resources as measured by the strength of the magnetic field applied. The system is then initially prepared in the ground state of $H_{0}(0)$, given by $\ket{\psi(0)} = \cos\alpha \ket{0} -\sin\alpha \ket{1}$, and evolved according to the desired Hamiltonian. The quantum state experimentally obtained is determined via quantum state tomography~\cite{Leskowitz:04} and compared to the theoretically evaluated instantaneous ground state of $H_{0}(t)$. This is performed by computing the fidelity $\Fcal(t)$ between the instantaneous ground state of $H_0(t)$, represented by the density operator $\rho_{\text{gs}}(t)$, and the dynamically evolved quantum state driven by the Hamiltonians $H_0(t)$, $H^{\text{opt}}_{\text{tqd}}(t)$, and $H^{\text{std}}_{\text{tqd}}(t)$, represented by the density operator $\rho(t)$. The fidelity $\Fcal(t)$ is defined here as the \textit{relative purity} given by~\cite{Audenaert:14,Adolfo:13-QSL,Meng:15}
\begin{equation}
	\Fcal(t) =\frac{\left \vert\trs{\rho_{\text{gs}}(t)\rho(t)}\right \vert}{\sqrt{\trs{\rho_{\text{gs}}^2(t)}}\sqrt{\trs{\rho^2(t)}}} \text{ . } \label{fidelity1}
\end{equation} 

The theoretical and experimental results are shown in Fig.~\ref{NMR-Fidelity} for the three distinct dynamics. 
Since the experiment is performed on-resonance, the evolution driven by the Hamiltonian  $H_{0}(t)$ is found to be 
nonadiabatic, oscillating as a function of time. On the other hand, the shortcuts given by $H^{\text{opt}}_{\text{tqd}}(t)$, 
and $H^{\text{std}}_{\text{tqd}}(t)$ keep a non-transitional evolution, being immune to the resonance effect. 

\begin{figure}[t!]
	\centering
	\includegraphics[scale=0.6]{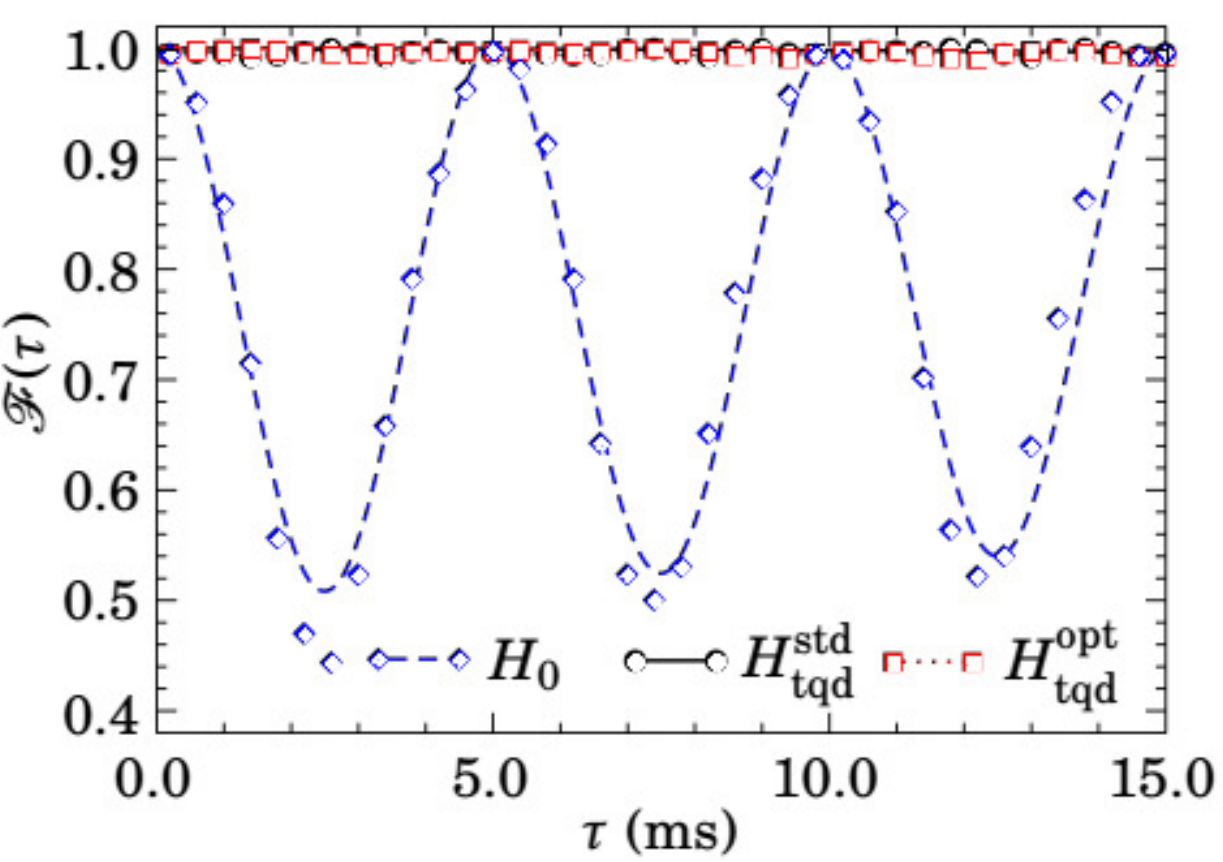}
	\caption{Fidelity $\Fcal(t)$ between the instantaneous ground state of $H_0(t)$ and the dynamically evolved quantum states driven by the Hamiltonians $H_0(t)$, $H^{\text{opt}}_{\text{tqd}}(t)$, and $H^{\text{std}}_{\text{tqd}}(t)$. The curves denote the theoretical predictions and the symbols are the experimental values. The standard $H_{0}(t)$ implementation is more sensitive to RF miscalibration, so that we have less tolerance to field inhomogeneity compared to the other cases. For this reason, we may have a fluctuation between experimental and theoretical results in some points.}
	\label{NMR-Fidelity}
\end{figure}

The fitting of the experimental data has been performed in an open quantum system scenario, which is accomplished by adjusting the parameters of the Lindblad equation whose non-unitary contribution is given by Eq.~\eqref{EqEqLindPartic}. The NMR system considered in this work is mainly affected by dephasing, which occurs due to the inhomogeneity of the static magnetic field. This variation causes the spins of all the molecules to slowly desynchronize and, therefore, lose coherence across the sample. In this case, we have a single Lindblad operator $\Gamma_{1}(t) = \gamma_{0}\sigma_{z}$, where the decohering rate $\gamma_{0}$ is given by $\gamma_{0} = \sqrt{1/T_2}$, with $T_2$ denoting the dephasing relaxation time of the system. Other non-unitary effects, such as generalized amplitude damping characterized by a relaxation scale $T_1$, are also present, but they are negligible for the time ranges considered in the experiment. Indeed, in our realization, we have $T_2 = 0.25$~s and $T_1 = 5.11$~s. Notice then that $T_1$ is two orders of magnitude 
larger than the total time of evolution, so that generalized amplitude damping does not exhibit significant effect. Hence, by considering dephasing as a main effect, the non-unitary part of the dynamics is dictated by the Eq.~\eqref{EqRcalbDephasing}. 
As we showed previously, due to the resonance configuration~\cite{Suter:08,Hu:19-b}, the adiabatic dynamics for $H_{0}(t)$ does not hold in general, being 
achieved just for some particular instants of time. Notice that the oscillation in the adiabatic fidelity is damped as a function of time. This is a by-product of dephasing 
in the system. Moreover, there are also errors induced by the implementation of RF pulses. Those errors are about $3\%$ per pulse. 
Remarkably, shortcuts to adiabaticity show considerable robustness against the decoherence and unitary errors. 

As shown in Fig.~\ref{NMR-Fidelity}, fidelity stays close to one 
throughout the evolution, which means that the states of the evolution are, indeed, kept as instantaneous eigenstates of $H_{0}(t)$. Therefore, as expected, the adiabatic behavior of the system is drastically affected by the 
resonance phenomenon. On the other hand, the standard TQD and the optimal TQD are both immune to the resonance destructive effect. In particular, 
we have shown that the optimal TQD approach provides a transitionless Hamiltonian that can be implemented with low intensity fields in comparison with the fields 
used to implement the adiabatic and the standard TQD Hamiltonians.

\section{Quantum gates by Controlled TQD}

In this section we show a useful application of TQD, namely application to quantum information processing, where we use TQD as a shortcuts to adiabaticity to speed up adiabatic quantum gates. More specifically, it has been applied to perform universal quantum computation (QC) via either counter-diabatic controlled evolutions~\cite{Santos:15} or counter-diabatic quantum teleportation~\cite{Santos:16}. As hybrid models, these approaches provide a convenient digital architecture for physical realizations while potentially keeping both the generality and some inherently robustness of analog implementations. Experimentally, digitized implementations of quantum annealing processes have been recently provided~\cite{Barends:16}, with controlled quantum gates adiabatically realized with high fidelity via superconducting qubits~\cite{Martinis:14}. In this Section, by focusing on controlled evolutions, we will now show that counter-diabatic QC can be more robust against decoherence than its adiabatic counterpart as long as the gate runtime is suitably determined within a range of evolution times.

\subsection{Quantum gates by adiabatic controlled evolutions}

In adiabatic protocols, we need to start the evolution at some eigenstate of the Hamiltonian $H_{0}(0)$, and at the end of the evolution ($t = \tau$) we should get the respective updated eigenstate of $H_{0}(\tau)$. A important point to highlight here is that the input state of the computation should not depend on the $H_{0}(0)$, and vice-versa. It is a strong requirement because we want to design a universal model of computation, so the input state needs to be considered as an \textit{unknown state}. To this end, we consider a bipartite system composed by a target subsystem $\Tcalb$ and an auxiliary subsystem $\Acalb$, whose individual Hilbert spaces $\Hcalb_\Tcalb$ and $\Hcalb_\Acalb$ have dimensions $d_\Tcalb$ and $d_\Acalb$, respectively. The auxiliary subsystem $\Acalb$ will be driven by a family of time-dependent Hamiltonians $\{\Hcal_{k}\left( s\right)\}$, with $0 \le k \le d_\Tcalb-1$. The target subsystem will be driven by a complete set $\left\{ P_{k}\right\}$ of orthogonal projectors over $\Tcalb$, which satisfy $P_{k}P_{m}=\delta _{km}P_{k}$ and $\sum_{k}P_{k}=\1$. In a controlled adiabatic dynamics, the composite system $\Tcalb\Acalb$ will be governed by a Hamiltonian in the form~\cite{Hen:15} 
\begin{equation}
	H_{0}\left( s\right) =\sum\nolimits_{k}P_{k}\otimes \Hcal_{k}\left( s\right) \text{ , } \label{EqAdHamilControlGate}
\end{equation}
with $\Hcal_{k}\left( s\right) =g\left( s\right) \Hcal_{k}^{\left( f\right)}+f\left( s\right) \Hcal^{\left( b\right) }$, where $\Hcal^{(b)}$ is the beginning Hamiltonian, $\Hcal^{(f)}_k$ is the contribution $k$ to the final Hamiltonian, and the time-dependent functions $f(s)$ and $g(s)$ satisfy the boundary conditions $f\left( 0\right) =g\left( 1\right) =1$ and $g\left( 0\right) =f\left(1\right) =0$.

Now, we prepare $\Tcalb\Acalb$ in the initial state $\ket{\Psi _{\text{init}}} =\ket{\psi} \otimes \ket{\varepsilon _{b}}$, where $\ket{\psi}$ is an arbitrary state of $\Tcalb$ and 
$\ket{\varepsilon _{b}}$ is the ground state of $\Hcal^{\left( b\right) }$. Then $\ket{\Psi _{\text{init}}}$ is the ground state of the initial Hamiltonian $\1 \otimes \Hcal^{\left( b\right) }$. By applying the adiabatic theorem~\cite{Messiah:Book,Sarandy:04}, a sufficiently slowing-varying evolution of $H_{0}(t)$ will drive the system (up to a phase) to the final state 
\begin{equation}
	\left\vert \Psi_{\text{final}}\right\rangle =\sum\nolimits_{k}P_{k}\ket{\psi} \otimes \left\vert \varepsilon _{k}\right\rangle \text{ , }
	\label{EqAdpsi-final}
\end{equation}
where $\left\vert \varepsilon_{k}\right\rangle$ is the ground state of $\Hcal_{k}^{\left( f\right) }$~\cite{Hen:15}. Note that an arbitrary projection $P_k$ over the unknown state $\ket{\psi}$ can be yielded by performing a convenient measurement over $\Acalb$. Therefore, by adjusting the set of projectors $P_{k}$ and the Hamiltonians $\Hcal_{k}\left( s\right)$, it is possible to control the output state $\ket{\psi_{\text{out}}}$ of $\Tcalb$, so that single and $N$-qubit gates can be implemented~\cite{Santos:15}. In fact, 

Let us begin by considering $\Tcalb$ as a single qubit 
and a single-qubit gate as an arbitrary 
rotation of angle $\phi$ around a direction $\hat{n}$ over the Bloch 
sphere. Under this consideration, the Hamiltonian that adiabatically implements such a single-qubit gate for an arbitrary input state 
$\ket{\psi} = a \ket{0} + b \ket{1}$, with $a,b \in \Cmath$, is given by~\cite{Hen:15}
\begin{equation}
	H_{0}^{\text{sg}}\left( s\right) = P_{+}\otimes \Hcal_{0}\left(
	s\right) + P_{-} \otimes \Hcal_{\phi}\left( s\right) \text{ ,} \label{HAdsg}
\end{equation}%
where $\{ P_{ \pm } \}$ is a complete set of orthogonal projectors over the Hilbert space of 
the target qubit. 
The projectors can be parametrized as $P_{ \pm } = ( \1 \pm \hat{n} \cdot \vec{\sigma} ) / 2$, with 
$\hat{n}$ associated with the direction of the target qubit on the Bloch sphere. In Eq.~(\ref{HAdsg}), each Hamiltonian $\Hcal_{\xi}\left(s\right)$ 
($\xi = \{ 0, \phi \}$) acts on $\Acalb$, and is given by~\cite{Hen:15}
\begin{equation}
	\Hcal_{\xi }\left( s\right) =-\omega \{ \sigma _{z}\cos (\varphi
	_{0}s)+\sin (\varphi _{0}s) [ \sigma _{x}\cos \xi +\sigma _{y}\sin \xi %
	] \} \text{ , }
	\label{Hphy}
\end{equation}
with $\varphi _{0}$ denoting an arbitrary parameter that sets the success probability 
of obtaining the desired state at the end of the evolution. This parameter plays a 
role in the energy performance of counter-diabatic QC, with probabilistic 
counter-diabatic QC ($ \varphi _{0} \neq \pi $) being energetically more favorable than its 
deterministic ($ \varphi _{0} = \pi $) counterpart~\cite{Coulamy:16}.

The projectors $\{ P_{ \pm } \}$ may be written in terms of two basis vectors $\{ \ket{n_{\pm}} \}$ in the Bloch sphere as $\{ P_{ \pm } \} = \ket{n_{\pm}} \bra{n_{\pm}}$, where
\begin{align}
	\ket{n_{+}} &= \cos (\varepsilon/2) \ket{0} + e^{i\delta} \sin (\varepsilon/2) \ket{1} \label{nmais} \text{ ,} \\
	\ket{n_{-}} &= -\sin (\varepsilon/2) \ket{0} + e^{i\delta} \cos (\varepsilon/2) \ket{1} \label{nmenos} \text{ .}
\end{align}
Thus, a quantum gate is encoded as a rotation of $\phi$ around the vector $\ket{n_{+}}$ (as we shall illustrate in Fig.~\ref{Bloch-Scheme}{\color{blue}a}). Now, by expressing the state $\ket{\psi}$ in the basis $\{ \ket{n_{\pm}} \}$, we write 
$\ket{\psi} =  \alpha \ket{n_{+}} + \beta \ket{n_{-}}$, with $\left\vert \hat{n}_{\pm }\right\rangle$ 
being a state in the direction $\hat{n}$ and $\alpha,\beta \in \Cmath$. 
We then prepare the system in the initial state $\ket{\Psi (0)}  = \ket{\psi} \ket{0}$. Then, 
assuming an adiabatic dynamics, the evolved state $\ket{\Psi (s)} $ 
is given by the superposition 
\begin{equation}
	\ket{\Psi (s)} = \alpha \ket{n_{+}} \ket{ E_{-,0} (s) } + \beta \ket{n_{-}} \ket{ E_{-,\phi} (s) } = \cos \left( \frac{\varphi_{0} s}{2} \right) \ket{\psi} \ket{0} + \sin \left( \frac{\varphi_{0} s}{2} \right) \ket{ \psi_{\text{rot}}} \ket{1} \text{ ,} \label{RefereeA1}
\end{equation}
with $\ket{ \psi_{\text{rot}}} = \alpha \ket{n_{+}} + e^{i \phi} \beta \ket{n_{-}}$ being the rotated 
desired state and the ground $\ket{ E_{-,\xi} (s) }$ and first excited $\ket{ E_{+,\xi} (s) } $ states 
of $\Hcal_{\xi }\left( s\right)$ given by
\begin{align}
	\ket{ E_{-,\xi} (s) }  &= \cos ( \varphi_{0} s / 2 ) \ket{0}  + e^{i \xi } \sin ( \varphi_{0} s / 2 ) \ket{1} \text{ ,} \label{Fund} \\
	\ket{ E_{+,\xi} (s) }  &= - \sin ( \varphi_{0} s / 2 ) \ket{0} + e^{i \xi } \cos ( \varphi_{0} s / 2 ) \ket{1} \text{ .} \label{Exc} 
\end{align}

We observe that, due to the dynamics of the auxiliary qubit through two adiabatic paths, there are quantum (adiabatic) phases $\theta^{\text{ad}}_{0}(s)$ and $\theta^{\text{ad}}_{\phi}(s)$ accompanying the evolutions associated with $\ket{ E_{-,0} (s) }$ and $\ket{ E_{-,\phi} (s) }$, respectively. Then, relative phases should in principle be considered in Eq. (\ref{RefereeA1}). However, as shown in the Ref.~\cite{Hen:15}, such phases satisfy $\theta^{\text{ad}}_{0}(s) = \theta^{\text{ad}}_{\phi}(s)$. Thus, they factorize as a global phase of the state $\ket{\Psi (s)}$. At the end of the evolution, a measurement on the auxiliary qubit yields the rotated state with probability $\sin ^2 \left( \varphi_{0} /2 \right)$ and the input state with probability $\cos ^2 \left( \varphi_{0} /2 \right)$. The computation process is therefore \textit{probabilistic}, which succeeds if the auxiliary qubit ends up in the state $\ket{1}$. Otherwise, the target system automatically returns to the input state and we simply restart the protocol. In the adiabatic scenario, the parameter $\varphi_{0}$ can then be adjusted in order to obtain the optimal fidelity $1$ by taking the limit $\varphi_{0} \rightarrow \pi$, implying in a deterministic computation. 

This model can be easily adapted to implement controlled single-qubit gates (two-qubit gates). To this end, the target system has to be increased from one qubit to two qubits, where one of these qubits will be used as control and the second one as target. Again, we need to consider the auxiliary qubit in order to provide the universality of the protocol. Here, we adopt that the single-qubit gate acts on the target register if the state of the control register is $\ket{1}$. With this convention, the Hamiltonian that implements a controlled single-qubit gate is given by
\begin{equation}
	H_{0}^{\text{cg}}\left( s\right) = (\1 - P_{1,-})\otimes \Hcal_{0}\left( s\right) + P_{1,-} \otimes \Hcal_{\phi}\left( s\right) \text{ ,} \label{HAdcg}
\end{equation}
where now the set the orthogonal projectors is given by 
$P_{k,\pm} = \ket{k} \bra{k} \otimes \ket{n_{\pm}} \bra{n_{\pm}} $, where $\ket{k}$ denotes the 
computational basis. The input state of the target system is now written as 
$\ket{ \psi_{2} } = a \ket{ 00 } + b \ket{ 01 } + c \ket{ 10 } + d \ket{ 11 } $, with $a,b,c,d \in \Cmath$ and 
$\ket{ nm }=\ket{ n } \ket{ m }$ denoting the control and target register, respectively. 
By rewriting $\ket{ \psi_{2} } $ in terms of the basis $\ket{n_{\pm}}$, we have 
$\ket{ \psi_{2} } = \alpha \ket{ 0 n_{+} } + \beta \ket{ 0 n_{-} } + \gamma \ket{ 1 n_{+} } + \delta \ket{ 1 n_{-} } $, 
with $\alpha,\beta,\gamma,\delta \in \Cmath$. 
Therefore, by assuming adiabatic evolution, the system evolves from the state 
$\ket{\Psi_{2}(0)} = \ket{ \psi_{2} } \ket{0}$, to the instantaneous state
\begin{align}
	\ket{\Psi (s)} &= \alpha \ket{ 0 n_{+} } \ket{ E_{-,0} (s) } + \beta \ket{ 0 n_{-} } \ket{ E_{-,0} (s) } + \gamma \ket{ 1 n_{+} } \ket{ E_{-,0} (s) } + \delta \ket{ 1 n_{-} } \ket{ E_{-,\phi} (s) } \nonumber \\ 
	&= \cos \left( \frac{\varphi _{0} s}{2} \right) \ket{ \psi_{2} } \ket{0} + \sin \left( \frac{\varphi_{0} s}{2} \right) \ket{ \psi_{2\text{rot}} } \ket{1} \text{ , }
\end{align}
with $\ket{ \psi_{2\text{rot}} } = \alpha \ket{ 0 n_{+} } + \beta \ket{ 0 n_{-} } + \gamma \ket{ 1 n_{+} } + e^{i \phi} \delta \ket{ 1 n_{-} } $ being the rotated desired state. We then see that the final state $\ket{\Psi (1)}  $ allows for a probabilistic interpretation for the evolution and, consequently, the computation protocol can again be taken as 
probabilistic ($\varphi_0 \neq \pi$) or deterministic ($\varphi_0 = \pi$).

\subsection{Quantum gates by standard and optimal TQD controlled evolutions}

As shown in Ref.~\cite{Santos:15}, the standard TQD Hamiltonian that implements the shortcut to controlled adiabatic evolutions reads
\begin{equation}
	H_{\text{std}}\left( s\right) = \sum_{k} P_{k} \otimes \Hcal_{k}^{\text{std}}\left(s\right) \text{ , }  \label{sce1.7}
\end{equation}
where each Hamiltonian $\Hcal_{k}^{\text{std}}\left(s\right)$ is the corresponding standard TQD Hamiltonian associated with the Hamiltonian $\Hcal_{k}\left( t\right)$ given in Eq.~\eqref{EqAdHamilControlGate}. Thus, we write $\Hcal_{k}^{\text{std}}\left(s\right) = \Hcal_{k}\left( s\right) + \Hcal_{k}^{\text{cd}}(s)$, with $\Hcal_{k}^{\text{cd}}(s)$ the corresponding counter-diabatic term of $\Hcal_{k}\left( s\right)$. For the single and two-qubit gates previously discussed, the counter-diabatic Hamiltonians $\Hcal_{\xi}^{\text{cd}}(s)$ associated with the Hamiltonians $\Hcal_{\xi }\left( s\right)$ in Eq.~\eqref{Hphy} are {\it time-independent} and given by ($\xi = \{ 0, \phi \}$)
\begin{equation}
	\Hcal_{\xi}^{\text{cd}}(s) = \Hcal_{\xi }^{\text{cd}}=\hbar \frac{\vartheta _{0}}{2\tau}\left[ \sigma _{y}\cos \xi - \sigma_{x}\sin \xi \right] \text { . } \label{sce1.7.ad}
\end{equation}

Therefore, one note that the cost of performing standard TQD requires the knowledge of the eigenvalues and eigenstates of $\Hcal_{k}\left( t\right)$. Moreover, we can show that, for an arbitrary $n$-controlled quantum gate, we get time-independent counter-diabatic Hamiltonians~\cite{Santos:15}. Therefore, the Eq.~(\ref{sce1.7.ad}) shows that the implementation of the shortcut can be achieved with a very simple assistant Hamiltonian in the quantum dynamics. Then, the standard TQD Hamiltonian that implements a single-qubit gate reads
\begin{equation}
	H_{\text{std}}^{\text{sg}}\left( s\right) = P_{+}\otimes \Hcal_{0}^{\text{std}}\left(s\right) + P_{-} \otimes \Hcal_{\phi}^{\text{std}}\left(s\right) \text{ , }  \label{EqHgatestd}
\end{equation}
with
\begin{equation}
	\Hcal_{\xi}^{\text{std}}\left(s\right) = \Hcal_{\xi}\left(s\right) + \hbar \frac{\vartheta _{0}}{2\tau}\left[ \sigma _{y}\cos \xi - \sigma_{x}\sin \xi \right] \text{ . } \label{EqHxiTQDStd}
\end{equation}

Similarly, we derive a general TQD Hamiltonian in which the quantal phases are considered as arbitrary. Remarkably, the Hamiltonian satisfy the conditions of the Theorem~\ref{TheoTimeIndep} and the parallel transport condition. Therefore, the optimal TQD reads
\begin{equation}
	H_{\text{opt}}^{\text{sg}}\left( s\right) = P_{+}\otimes \Hcal_{0}^{\text{opt}}\left(s\right) + P_{-} \otimes \Hcal_{\phi}^{\text{opt}}\left(s\right) \text{ , }  \label{EqHgateOpt1}
\end{equation}
in which each optimal Hamiltonian $\Hcal_{\xi}^{\text{opt}}\left(s\right)$ is given by $\Hcal_{\xi}^{\text{opt}}\left(s\right) = \Hcal_{\xi}^{\text{opt}} = \Hcal_{\xi}^{\text{cd}}$. Thus, the application of generalized TQD theory to quantum computation allows us to get  a optimal costing time-independent TQD Hamiltonian $H^{\text{sg}}_{\text{opt}}$ given by
\begin{equation}
	H_{\text{opt}}^{\text{sg}}  = P_{+}\otimes \Hcal_{0}^{\text{cd}} + P_{-} \otimes \Hcal_{\phi}^{\text{cd}} \text{ , }  \label{EqHgateOpt}
\end{equation}
where $\Hcal_{k }^{\text{cd}}$ is given by the Eq.~\eqref{sce1.7.ad}. To end, by using the same discussion as above, the complete set of TQD Hamiltonian to be used as to achieve an universal scheme of quantum computation is obtained as
\begin{align}
	H_{\text{std}}^{\text{cg}}\left( s\right) &= (\1 - P_{1,-})\otimes \Hcal_{0}^{\text{std}}\left(s\right) + P_{1,-} \otimes \Hcal_{\phi}^{\text{std}}\left(s\right) \text{ , } \label{EqTQDHstdcg} \\
	H_{\text{opt}}^{\text{cg}} &= (\1 - P_{1,-})\otimes \Hcal_{0}^{\text{opt}} + P_{1,-} \otimes \Hcal_{\phi}^{\text{opt}} \text{ , } \label{EqTQDHOptcg} 
\end{align}
with the projector $P_{1,-}$ defined as in Eq.~\eqref{HAdcg}, and the Hamiltonians $\Hcal_{\xi}^{\text{std}}\left(s\right)$ and $\Hcal_{\xi}^{\text{opt}}$ given as in Eqs.~\eqref{sce1.7.ad} and~\eqref{EqHxiTQDStd}, respectively.

As a important remark, it is evident that a number of strategies can be used to implement quantum gates. For example, inverse engineering of a Hamiltonian is an interesting strategy to design a class of Hamiltonians to quantum computation~\cite{Santos:18-a}, where it is shown how arbitrary single and controlled quantum gates are implementable through very simple time-independent Hamiltonians, and to speed up state engineering~\cite{Yu:18}, as well. Here we are using a different approach to achieve the same result as in Refs.~\cite{Santos:18-a,Yu:18}. However, among the infinite curves connecting the input to the output state, here we are interested in that dynamics where the path followed by the system in the Hilbert space is the adiabatic path. For this reason, these number of additional controls (and fields), auxiliary qubits and measurements (sometimes) are required.

The performance of optimal TQD approach in comparative with its adiabatic and standard counterpart is studied in Ref.~\cite{Santos:18-b}, where different scenarios of energy resource and decohering effects were considered. As main result, in any case of decoherence and energetic resource, there always exist dynamical regimes (short evolution times) for which the optimal transitionless evolutions are more robust and therefore a preferred approach in a decohering physical environment. This has been shown for both the dephasing and GAD channels acting on the eigenstate bases. It is also possible to show the advantage in other bases, such as the computational basis. The general picture is that the gain will typically occur during some finite time range, disappearing in the limit of long evolution times. These results are encouraging for the generalized transitionless approach in the open-system realm as long as local Hamiltonians are possible to be designed. In the specific case of quantum gate Hamiltonians, this approach can be applied, e.g. to derive robust local building blocks for analog implementations of quantum circuits (see, e.g., Refs.~\cite{Martinis:14,Barends:16}).

\subsection{Optimal TQD quantum computation in NMR experimental setup}

The experimental realization of quantum gates is implemented through the Chloroform molecule by taking the $^{13}$C nucleus as the target qubit and the $^1$H nucleus as the auxiliary subsystem.  In our experiment~\cite{Santos:20b}, we encode the computational states $\ket{0}$ and $\ket{1}$ into the spin states $\ketds$ and $\ketus$, respectively. The schematic representation of the quantum dynamics is illustrated in Fig.~\ref{Bloch-Scheme}{\color{blue}b}, where the composite system is initially prepared at $t=0$ and measured in $t=\tau$. The decohering time scales in our system are $T_{1\text{C}} = 7.33$~s and $T_{2\text{C}} = 4.99$~s for $^{13}$C nucleus and $T_{1\text{H}} = 14.52$~s and $T_{2\text{H}} = 0.77$~s for $^1$H nucleus.

\begin{figure}[t!]
	\centering
	\includegraphics[scale=0.6]{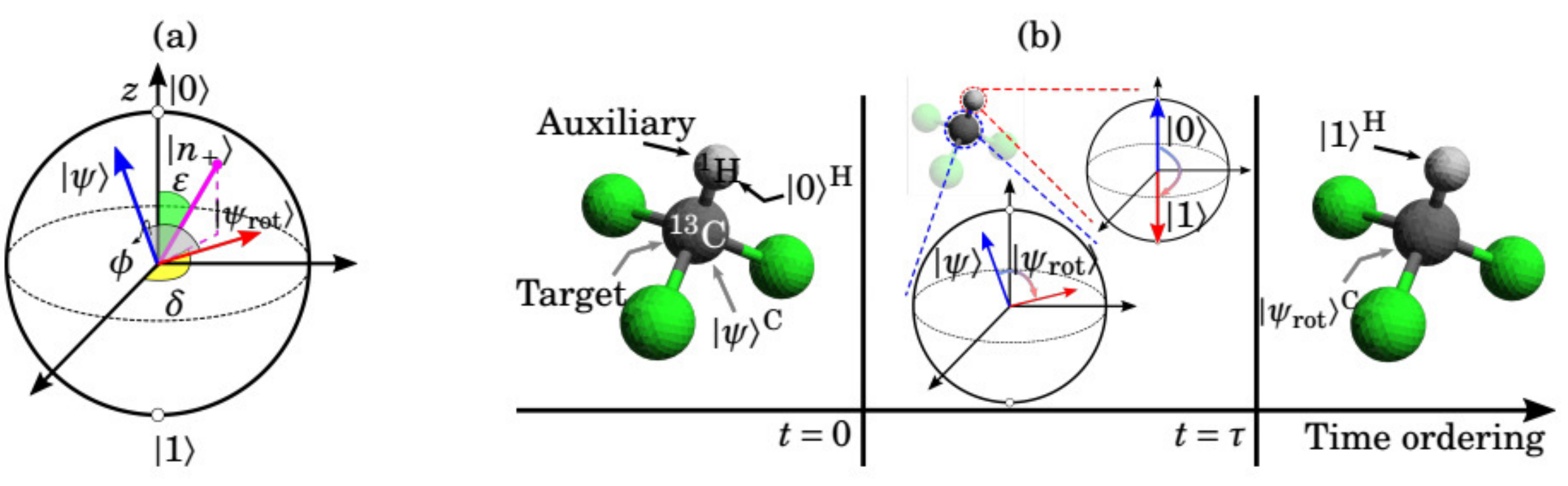}
	\caption{({\color{blue}a}) Representation of a single-qubit gate as an arbitrary rotation implemented in the Bloch sphere. ({\color{blue}b}) Schematic representation of the quantum dynamics for the Chloroform molecule, with the target and ancilla qubits encoded in the Carbon and Hydrogen nuclei, respectively.}
	\label{Bloch-Scheme}
\end{figure}

We have considered the experimental implementation of the single-qubit phase gate $Z$, which yields the Pauli matrix $\sigma^z$ applied to 
the input qubit, which is prepared in an arbitrary initial state along a direction $\hat{r}$ in the Bloch sphere. To this gate, the adiabatic, standard TQD, and optimal TQD Hamiltonians are given by (the subscript ``C" and ``H" denotes operation on Carbon and Hydrogen nucleus, respectively)
\begin{subequations}\label{EqHExpNMRQuantumGates}
	\begin{align}
		H^{\text{ph}}_{0}(t) &= - 2 \pi \nu \left[ \cos\left(\frac{\pi t}{\tau}\right) \1^{\text{C}} \otimes \sigma_{z}^{\text{H}} + \sin\left(\frac{\pi t}{\tau}\right) \sigma_{z}^{\text{C}} \otimes \sigma_{x}^{\text{H}}\right] \text{ , } \label{Hz0}  \\
		H^{\text{ph}}_{\text{std}}(t) &= H^{\text{ph}}_{0}(t) + \frac{\pi}{2\tau} \sigma_{z}^{\text{C}} \otimes \sigma_{y}^{\text{H}}\text{ , }  \label{HzTQD}
		\\
		H^{\text{ph}}_{\text{opt}} &= \frac{\pi}{2\tau} \sigma_{z}^{\text{C}} \otimes \sigma_{y}^{\text{H}}\text{ , } \label{HzOptTQD}
	\end{align}
\end{subequations}
which are obtained from Eqs.~(\ref{HAdcg}),~(\ref{EqHgatestd}), and~(\ref{EqHgateOpt}), respectively, with $\omega = 2 \pi \nu$ and $\nu$ denoting a real frequency in the experiment. It is worth highlighting that a study on the energy cost to implement adiabatic and TQD Hamiltonians has been previously discussed in Ref.~\cite{Santos:18-b} from an operator norm approach. Here, we are interested in analyzing the cost from an alternative point of view, where we associate fixed energy amounts for each pulse in a pulse sequence. Therefore, this includes the effective energy spent to implement each pulse of magnetic field, while disregarding the free evolution contributions to the quantum dynamics.

Particularly, in our experimental implementation of the phase gate $Z$, we have set $\nu = 35$~Hz. We can see that, differently from the standard TQD and adiabatic Hamiltonians, the optimal TQD protocol provides a time-independent Hamiltonian to realize the phase gate $Z$. In Fig.~\ref{Gates}{\color{blue}a}, we present the pulse sequences used to implement each Hamiltonian in Eqs.~\eqref{EqHExpNMRQuantumGates} for any input state. As an illustration, the initial state of the target qubit considered in our experiment has been taken as $\ket{\psi(0)} = \ket{+}^{\text{C}}\otimes\ket{0}^{\text{H}}$, with $\ket{+} = (1/\sqrt{2})(\ket{0}+\ket{1})$. Each pulse sequence implements the correct dynamics up to a rotation around the $Z$-axis over the auxiliary qubit. Since the final state of the auxiliary qubit is $\ket{1}$, the circuits provide the correct output up to a global phase. The pulse composition is described in details in Appendix~\ref{ApExpPuls}.

\begin{figure}[t!]
	\centering
	\includegraphics[scale=0.56]{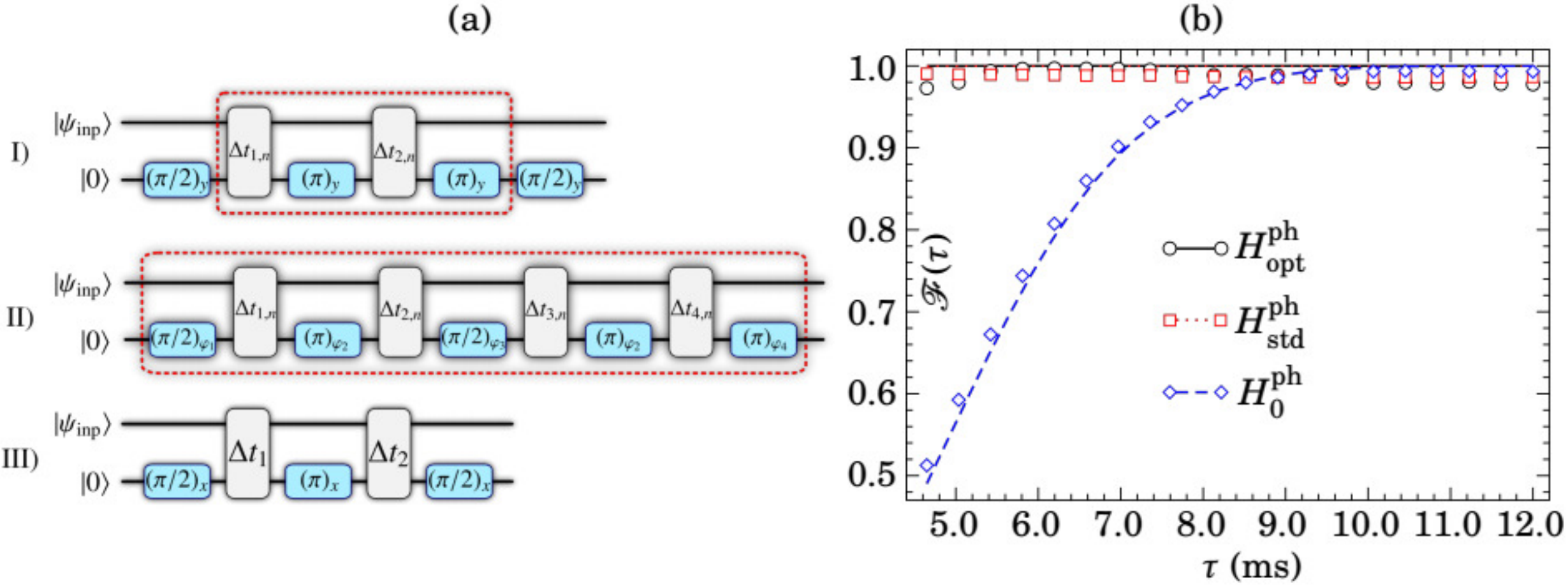}
	\caption{({\color{blue}a}) Adiabatic, standard TQD, and optimal TQD protocol pulses. Two-qubit blocks labeled with $\Delta t_{j,n}$ represent a free evolution of the chloroform molecule during a time interval $\Delta t_{j,n}$, while single-qubit blocks are 	rotations $(\vartheta)_{\varphi_{n}}$ in Hilbert space (See Appendix~\ref{ApExpPuls} for more details). ({\color{blue}b}) Fidelity for the $Z$ phase gate implementation over the initial input state $\ket{+} = (1/\sqrt{2})(\ket{0}+\ket{1})$ encoded in the Carbon nucleus, with the Hydrogen nucleus representing the auxiliary qubit in the initial state $\ket{0}$. Fidelity $\Fcal(\tau)$ is computed between the instantaneous ground state of $H^{\text{ph}}_{0}(t)$ and the dynamically evolved quantum states driven by the Hamiltonians $H^{\text{ph}}_{0}(t)$, $H^{\text{ph}}_{\text{std}}(t)$, and $H^{\text{ph}}_{\text{opt}}(t)$. Solid curves and discrete symbols represent theoretical predictions and experimental results, respectively.}
	\label{Gates}
\end{figure}

The Hamiltonians in Eq.~(\ref{Hz0}) and~(\ref{HzTQD}) are time-dependent and do not commute at different instants of time, i.e. 
$[H^{\text{ph}}_{0}(t_{1}),H^{\text{ph}}_{0}(t_{2})]\neq0$ and $[H^{\text{ph}}_{\text{std}}(t_{1}),H^{\text{ph}}_{\text{std}}(t_{2})]\neq0$ for some $t_{1} \neq t_{2}$. Then, their pulse sequences are required to implement the Dyson series for the corresponding unitaries~\cite{Nielsen:Book}. Thus, as shown in Figs.~\ref{Gates}{\color{blue}a}-I and~\ref{Gates}{\color{blue}a}-II, the $N$ repetitions are associated to a``trotterization" of the Dyson series for the adiabatic and standard TQD protocols, respectively, with the implementation of the Dyson series being exact in the limit $N\rightarrow \infty$. The repeated application of the sequence can lead to the accumilation to experimental sistematic errors, in order to avoid the rrors we have employed NMR composite pulses~\cite{Brown:04}. On the other hand, the optimal TQD Hamiltonian $H^{\text{ph}}_{\text{opt}}$ can be implemented by using a very short pulse sequence, as shown in Fig~\ref{Gates}{\color{blue}a}-III.

An immediate discussion arising from the analysis above is the amount of energy required to implement the protocol (the pulse sequence associated with the Dyson series). In order to conveniently study the energy cost of implementing the optimal TQD protocol, we need to analyze the energy cost of implementing the sequence of pulses able to reproduce the Dyson series for each dynamics. Consequently, from the pulse sequence presented in Fig~\ref{Gates}{\color{blue}a}, the energy cost evolving the optimal TQD is (at least) $N$ times less than both that associated with the adiabatic and the standard TQD protocols. In fact, if we consider that each pulse in Fig.~\ref{Gates}{\color{blue}a} [given by a rotation $(\vartheta)_{\varphi_{n}}$] is associated with an energy cost $E_{0}$, the overall energy cost for implementing the optimal TQD is $E^{\text{opt}}_{\text{tqd}} = 3E_{0}$ (disregarding the energy cost of free evolutions), while the energy cost of the adiabatic and the standard TQD are given by $E_{\text{ad}} = 2(N+1)E_{0}$ and $E^{\text{std}}_{\text{tqd}} = 5NE_{0}$, respectively. From this analysis, it follows that generalized TQD can be used to provide Hamiltonians exhibiting optimal pulse sequence.

We also consider the $\Fcal(t)$ between the instantaneous ground state of $H_Z(t)$ and the dynamically evolved quantum states driven by the Hamiltonians $H_Z(t)$, $H^{\text{opt}}_{\text{tqd}}(t)$, and $H^{\text{std}}_{\text{tqd}}(t)$. This is shown in Fig.~\ref{Gates}{\color{blue}b}. Notice that, as expected for the Hamiltonian $H_Z(t)$, the fidelity improves as the evolution time is increased. In contrast, $H^{\text{opt}}_{\text{tqd}}(t)$ and $H^{\text{std}}_{\text{tqd}}(t)$ are capable of achieving high fidelities for any evolution time, since they are designed to mimic the adiabatic evolution for arbitrary $\tau$. Notice also that, for the time scale considered in the experiment, which goes up to $\tau = 12$~ms, decoherence has little effect, since this $\tau$ is still much smaller than the dephasing relaxation time scale $T_2$.

\section{Conclusions of this chapter}

In this chapter we have developed a generalized minimal resource demanding counter-diabatic theory, which is able to yield efficient shortcuts to adiabaticity via fast transitionless evolutions. To this end, we explore the gauge freedom of the quantal phases that accompany the evolution dynamics, so that we can fix them in order to minimize some physical quantity of our interest. In particular, we have focused on minimum energy and pulse costs for implement a optimal TQD, comparing it with its associated adiabatic dynamics and traditional TQD counterparts. The performance of the theory has been experimentally investigated in two different experimental setups. Firstly, by using the trapped Ytterbium ion qubit the result are: 
i) The optimal version of generalized TQD has been shown to be a useful protocol for obtaining the optimal shortcut to adiabaticity. While adiabatic and traditional TQD require time-dependent quantum control, optimal TQD can be experimentally realized by using time-independent fields. In addition, the necessity of auxiliary fields in traditional TQD is not a requirement for implementing TQD via its optimal version.
ii) Optimal TQD is an energetically optimal protocol of shortcut to adiabaticity. By considering the average intensity fields as a measure of energy cost for implementing the protocols discussed here, we were able to show that the optimal version of generalized TQD may be energetically less demanding. This result is kept also for alternatives definitions of energy cost, e.g. taking into account the detuning contribution. 
iii) By simulating an environment-system coupling associated with the dephasing channel, we have shown that optimal TQD can be more robust than the adiabatic dynamics and the traditional TQD while at the same time spending less energy resources for a finite time range. Once the Landau-Zener Hamiltonian can be implemented in other physical systems, e.g. nuclear magnetic resonance~\cite{Nielsen:Book} and two-level systems driven by a chirped field~\cite{Demirplak:03}, it is reasonable to think that the results obtained here can be also realized in other experimental architectures.

As a second experimental verification, we investigated the generalized approach for TQD in an NMR setup. As a first application in this architecture, we considered the adiabatic dynamics of a single spin$-1/2$ particle in a resonant time-dependent rotating magnetic field. As expected, the adiabatic behavior of the system is drastically affected by the resonance phenomenon. On the other hand, the standard TQD and the optimal TQD are both immune to the resonance destructive effect. In particular, we have shown that the optimal TQD approach provides a transitionless Hamiltonian that can be implemented with low intensity fields in comparison with the fields used to implement the adiabatic and the standard TQD Hamiltonians. The second application was done through the study of the adiabatic and counter-diabatic 
implementations of single-qubit quantum gates in NMR. From the generalized approach for TQD, we have addressed the problem of the feasibility of the shortcuts to adiabaticity, as provided by TQD protocols, in the context of quantum computation via controlled evolutions~\cite{Hen:15,Santos:15}. By using the generalized TQD Hamiltonian~\cite{Santos:18-b}, we have presented the optimal solution in terms of pulse sequence and resources to implement fast quantum gates through counter-diabatic quantum computation with high fidelity. The energy-optimal protocol presented here is potentially useful for speeding up digitized adiabatic quantum computing~\cite{Hen:14}. Our study explicitly illustrates the performance of generalized TQD in terms of both energy resources and optimal pulse sequence. Since digitized adiabatic quantum computing requires the Trotterization of the adiabatic dynamics, our protocol could be useful in reducing the number of steps used in digital adiabatic quantum processes. In addition, such process is independent of the experimental approach used to digitize the adiabatic quantum evolution. 

These results are encouraging for the generalized transitionless approach in the open-system realm as long as local Hamiltonians 
are possible to be designed. In the specific case of quantum gate Hamiltonians, this approach can be applied, e.g. to derive robust local building blocks for analog implementations of quantum circuits (see, e.g., Refs.~\cite{Martinis:14,Barends:16}).


\chapter{Overview and future perspectives}

\initial{I}n this thesis we presented general results on adiabatic dynamics in closed and open systems, and shortcuts to adiabaticity by transitionless quantum driving. In the context of adiabatic dynamics in closed system we presented potentially useful theorems that can be viewed as a validation mechanism for the well-known adiabaticity conditions, where changes of frame in Schrödinger equation are required. In particular, for the closed system case, we have both theoretically and experimentally shown that
several relevant ACs, which include the traditional AC, are neither sufficient nor necessary to
describe the adiabatic behavior of a qubit in an oscillating field considered here. This result leads us to the conclusion that we do not typically have a fully applicable AC. However, from an adequate choice of the reference frame used to describe the system dynamics, sufficiency and
necessity of the ACs are fundamentally obtained under certain conditions through the non-inertial frame map. As a potential application of adiabatic dynamics, we provided a way to enhance the charging performance of quantum devices able to store energy (quantum batteries). We propose the notion of \textit{adiabatic QBs}, where the adiabatic behavior takes place in the non-inertial frame and it allows for a stable charging process of the battery, since in inertial frame the dynamics is not adiabatic. To this aim, we propose the notion of a new quantum observable that quantifies the energy transfer rate from the quantum battery to a consumption hub, the power operator. Applications of this operator have demonstrated its usefulness in quantum battery context. For open systems, our theoretical results give complementary contributions to the results known in the literature. The validity conditions used to predict the adiabatic behavior of a quantum system evolving through non-unitary dynamics are reviewed. We have shown that the adiabatic approximation for open systems holds in the asymptotic time limit $t\rightarrow \infty$ as an arbitrarily valid consequence of the one-dimensional Jordan decomposition of the Lindblad superoperator, the absence of level crossings, and the initialization of the system as a superposition of only two eigenstates of Lindbladian that drives the system. Our main result in adiabaticity in open system refers to the application in quantum thermodynamics, where we can provide some enhanced understanding on the thermal adiabatic process and dynamical adiabatic evolutions. To end, as an original contribution to the TQD theory, we have developed a generalized minimal energy demanding counter-diabatic theory, which is able to yield efficient shortcuts to adiabaticity via fast transitionless evolutions. Moreover, we have investigated the robustness of adiabatic dynamics, standard and optimal TQD under decoherence. There always exist dynamical regimes for which generalized transitionless evolutions are more robust and therefore a preferred approach in a decohering setup. 

\section*{Future perspectives}

As future perspectives, there are a number of open questions to be addressed.

\quad \textbf{1)} The study introduced here on adiabatic dynamics allows us to realize further future research as detailed below.

\quad \quad \textbf{1-a)} The consideration and influence of different reference frames in the adiabatic dynamics of the system lead us to the fundamental question of the existence of a preferable non-inertial frame. In addition, our development takes into account a single resonant field acting on the system, so that a more general approach for an arbitrary number of fields is an open question.

\quad \quad \textbf{1-b)} The experimental implementation of the two-cell adiabatic quantum battery proposed here is a desired task to be taken in consideration as future research. The charging and discharging process of the quantum battery proposed in this thesis will be investigated in a transmon qubit superconducting system in collaboration with experimental groups. The questions to be experimentally addressed here refers to stable charging/discharging of quantum batteries based on Bell states and its robustness against decohering effects.

\quad \quad \textbf{1-c)} When we use the terminology ``quantum" in quantum devices, it is required some detailed discussion on what quantum is in such devices. For example, the quantum advantage of quantum batteries may be associated with quantum entanglement~\cite{Alicki:13,PRL2013Huber,PRL2017Binder,Ferraro:18,PRL_Andolina} and it becomes evident in multi-cell quantum batteries. Therefore, the scalability of the model proposed here is of great interest, where we can verify the real role of entanglement for the power of the charging/discharging process of the battery. Moreover, since the experimental verification of the expected quantum advantage has not been addressed so far, collaboration with experimental groups will be considered in future research.

\vspace{0.3cm}

\quad \textbf{2)} As for the generalized TQD theory, we investigated its performance in minimizing the energy cost of fast evolutions, and the pulse sequence required to implement optimal TQD. However, we can think about a number of applications that can be benefited from these particular evolutions. For example, applications of generalized TQD to holonomic quantum computing~\cite{Xu:12}, superadiabatic charging of quantum batteries through minimal intensity fields~\cite{Santos:19-a,Santos:20c} and optimal superadiabatic quantum thermal engines~\cite{Lutz:18,Baris:19}.

\quad \quad \textbf{2-a)} In the general theory of optimal control~\cite{Rabitz:98}, one studies how to find a set of fields $\{\epsilon_{n}(t)\}$ (pulses, for example) that drives the system along a path $\ket{\psi(t)}$ in such way that we can maximize the quantity $\Ocal[\psi] = \bra{\psi(t)}\Ocalb\ket{\psi(t)}$, for some predefined observable $\Ocalb$~\cite{Werschnik:07}. Then, how can we use the generalized approach of TQD in a general approach of optimal control? What is the general formalism to find optimal quantal phases from an arbitrary observable $\Ocalb$?

\quad \quad \textbf{2-b)} The standard TQD has been applied to enhance performance of quantum thermal engines concerning the adiabatic version of such heat engines~\cite{Lutz:18,Baris:19,Abah:19}. However, an open question here is whether this enhancement is due to the additional fields which acts on the system. As future research we will address this question by using the generalized TQD and providing a scenario where TQD and adiabatic dynamics spend a same amount of energy. Moreover, it is known that if we speed up a thermodynamics cycle in order to enhance the power, some amount of energy is thrown away as a dissipated work due to diabatic transitions~\cite{Cakmak:16,Plastina:14}. In particular, it was shown that shortcuts to adiabaticity can be a good strategy to deal with this problem~\cite{Deng:18}. The open question is whether there is some optimal TQD Hamiltonian able to minimize undesired entropy production along some superadiabatic thermal cycle.

\quad \quad \textbf{2-c)} Given the notion of adiabatic quantum battery and the role of TQD in quantum thermodynamic tasks, the proposal of superadiabatic quantum batteries based on TQD is an exciting possibility. Although the adiabatic version of such devices allows for a stable charging process, the average power is limited by the total evolution time imposed by the adiabatic theorem. Then, standard and generalized TQD positively contribute to design a high power and stable energy transference process.

\quad \quad \textbf{2-d)} The extension for open systems of the standard version of TQD, as proposed by Demirplak and Rice, was proposed by Vacanti \textit{et al}~\cite{Vacanti:14}. However, different from closed system TQD, the open system TQD requires our ability of controlling both fields and decohering environment to drive the system along of an open system adiabatic path. Therefore, the extension of the generalized approach of TQD for open system is highly desirable as topic for future research.
\appendix
\chapter{Change of frame in quantum mechanics}\label{ApFrameChangeQM}

\initial{H}ere we present the calculations related with a change of frame in quantum mechanics, where we explore resonance phenomena and present auxiliary calculations on the validity conditions of adiabaticity from different reference frame representations.

\section{Spin 1/2 particle in a rotating magnetic field} \label{ApFrameChangeQM-NMR}

Let us start from below Hamiltonian
\begin{align}
	H(t) = \frac{\hbar \omega_{0}}{2} \sigma_{z} + \frac{\hbar \omega_{1}}{2} \left[ \cos(\omega t)\sigma_{x} + \sin(\omega t)\sigma_{y}\right] \text{ , }
\end{align}
which describes the spin-$\frac{1}{2}$ particle in presence of a rotating magnetic field. The above dynamics is considered concerning the laboratory inertial frame. In such frame, the unitary system dynamics is governed by Von Neumann's equation given by
\begin{align}
	\dot{\rho}(t) = \frac{1}{i\hbar} [H(t),\rho(t)] \text{ . }
\end{align}

Due the oscillatory behavior of the transverse field, it is convenient to rewrite the system dynamics in rotating frame provided by the time-dependent unitary transformation $R(t) = e^{i\frac{\omega}{2}t\sigma_{z}}$. By doing that, we can implement a rotation in above equation as
\begin{align}
	R(t)\dot{\rho}(t)R^{\dagger}(t) = \frac{1}{i\hbar} R(t)[H(t),\rho(t)]R^{\dagger}(t) = \frac{1}{i\hbar} [R(t)H(t)R^{\dagger}(t) , R(t)\rho(t)R^{\dagger}(t)] \text{ . }
\end{align}
where we can define the system rotated state as $\rho_{\text{R}}(t) = R(t)\rho(t)R^{\dagger}(t)$, so that
\begin{align}
	R(t)\dot{\rho}(t)R^{\dagger}(t) &= \dot{\rho}_{\text{R}}(t) - \dot{R}(t)\overbrace{R^{\dagger}(t)R(t)}^{\1}\rho(t)R^{\dagger}(t) - R(t)\rho(t)\overbrace{R^{\dagger}(t)R(t)}^{\1}\dot{R}^{\dagger}(t) \nonumber \\
	&= \dot{\rho}_{\text{R}}(t) - \dot{R}(t)R^{\dagger}(t)\rho_{\text{R}}(t) - \rho_{\text{R}}(t)R(t)\dot{R}^{\dagger}(t) \text{ , }
\end{align}
in which we can use $\dot{R}(t)R^{\dagger}(t) = -R(t)\dot{R}^{\dagger}(t)$ to get
\begin{align}
	R(t)\dot{\rho}(t)R^{\dagger}(t) = \dot{\rho}_{\text{R}}(t)-[\dot{R}(t)R^{\dagger}(t),\rho_{\text{R}}(t)] \text{ . }
\end{align}
Thus, we have
\begin{align}
	\dot{\rho}_{\text{R}}(t) &= \frac{1}{i\hbar} [R(t)H(t)R^{\dagger}(t) , \rho_{\text{R}}(t)] + [\dot{R}(t)R^{\dagger}(t),\rho_{\text{R}}(t)] \nonumber \\
	&= \frac{1}{i\hbar} [R(t)H(t)R^{\dagger}(t) , \rho_{\text{R}}(t)] + \frac{i\hbar}{i\hbar}[\dot{R}(t)R^{\dagger}(t),\rho_{\text{R}}(t)] \text{ , }
\end{align}
and we conclude that the dynamics of the system in rotating frame will be given by
\begin{align}
	\dot{\rho}_{\text{R}}(t) = \frac{1}{i\hbar} [H_{\text{R}}(t) , \rho_{\text{R}}(t)] \text{ , }
\end{align}
where the new Hamiltonian in rotating frame is $H_{\text{R}}(t) = R(t)H(t)R^{\dagger}(t) + i\hbar \dot{R}(t)R^{\dagger}(t)$. By computing the new Hamiltonian
\begin{align}
	H_{\text{R}}(t) &= \frac{\hbar \omega_{0}}{2} e^{i\frac{\omega }{2}t\sigma_{z}}\sigma_{z}e^{-i\frac{\omega }{2}t\sigma_{z}} + \frac{\hbar \omega_{1}}{2} e^{i\frac{\omega }{2}t\sigma_{z}}\left[ \cos(\omega t)\sigma_{x} + \sin(\omega t)\sigma_{y}\right]e^{-i\frac{\omega }{2}t\sigma_{z}} + i\hbar \frac{d}{dt} \left[ e^{i\frac{\omega }{2}t\sigma_{z}}\right]e^{-i\frac{\omega }{2}t\sigma_{z}}
	\nonumber \\
	&= \frac{\hbar \omega_{0}}{2} \sigma_{z} + \frac{\hbar \omega_{1}}{2} \left[ \cos(\omega t)e^{i\frac{\omega }{2}t\sigma_{z}}\sigma_{x}e^{-i\frac{\omega }{2}t\sigma_{z}} + \sin(\omega t) e^{i\frac{\omega }{2}t\sigma_{z}}\sigma_{y}e^{-i\frac{\omega }{2}t\sigma_{z}}\right] - \hbar\frac{\omega }{2}\sigma_{z} \text{ , }
\end{align}
now we can use that
\begin{align}
	e^{i\frac{\omega }{2}t\sigma _{z}}\sigma _{x}e^{-i\frac{\omega }{2}t\sigma _{z}} &= \cos \left( \omega t\right) \sigma_{x}-\sin \left( \omega t\right) \sigma _{y} \text{ , } \\
	e^{i\frac{\omega }{2}t\sigma _{z}}\sigma _{y}e^{-i\frac{\omega }{2}t\sigma _{z}} &= \cos \left( \omega t\right) \sigma_{y} + \sin \left( \omega t\right) \sigma _{x} \text{ , }
\end{align}
to get the time-independent Hamiltonian
\begin{align}
	H_{\text{R}}(t) = H_{\text{R}} &= \frac{\hbar \omega_{0}-\omega}{2} \sigma_{z} + \frac{\hbar \omega_{1}}{2} \sigma_{x} \text{ . }
\end{align}

Thus, the solution of the dynamics in rotating frame reads as
\begin{align}
	\rho_{\text{R}}(t) = \exp\left[ -\frac{i}{\hbar} H_{\text{R}}t \right] \rho(0) \exp\left[ \frac{i}{\hbar} H_{\text{R}}t \right]\text{ , }
\end{align}
so that in non-rotating one we get
\begin{align}
	\rho (t) = R(t)\rho_{\text{R}}(t)R^{\dagger}(t) = e^{i\frac{\omega }{2}t\sigma_{z}}e^{-\frac{i}{\hbar} H_{\text{R}}t } \rho(0) e^{\frac{i}{\hbar} H_{\text{R}}t }e^{-i\frac{\omega }{2}t\sigma_{z}} \text{ . }
\end{align}

\section{Resonance in a two-level system driven by oscillating field}\label{ApFrameChangeQM-TLSOscila}

Let us consider a Landau-Zener type Hamiltonian given by
\begin{align}
	H(t) = \frac{\hbar \omega_{0}}{2}  \sigma_{z} + \frac{\hbar \omega_{\text{x}}}{2} \sin(\omega t)\sigma_{x} \text{ .} \label{ApEqOscHamil}
\end{align}
We can rewrite Eq.~(\ref{ApEqOscHamil}) as
\begin{align}
	H(t) = \frac{\hbar \omega_{0}}{2} \left[ \sigma_{z} + \tan \theta \sin(\omega t)\sigma_{x} \right] \text{ , } \label{ApResH}
\end{align}
where $\theta = \arctan [\omega_{\text{x}}/\omega_{0}]$. It is not obvious that the Hamiltonian $H(t)$ in Eq.~\eqref{ApResH} exhibits a resonant behavior. In order to see this fact, let us define an time-dependent oscillating frame $R(t) = e^{-i\frac{\omega}{2}t\sigma_{z}}$. In this oscillating frame, the Hamiltonian is given by
\begin{align}
	H_{\text{R}}(t) = R(t) H(t) R^{\dagger}(t) + i R(t)\dot{R}^{\dagger}(t) = \hbar\frac{\omega_{0} - \omega}{2} \sigma_{z} + \hbar \frac{\sin ( \omega t) \tan \theta}{2} \vec{\omega}_{xy}(t)\cdot\vec{\sigma}_{xy} \text{ , } \label{ApEqOscHamilNonInertial}
\end{align}
with $\vec{\omega}_{xy}(t) = \omega_{0}[\cos(\omega t) \hat{x} - \sin(\omega t) \hat{y}]$ and $\vec{\sigma}_{xy} = \sigma_{x} \hat{x} + \sigma_{y} \hat{y}$.

\section{Proof of the theorems}\label{AppProofTheoremsAdiab}

Let us consider two Hamiltonians, an inertial frame Hamiltonian $H(t)$ and its non-inertial counterpart $H_{\Ocalb}(t)$, which are related by a time-dependent unitary $\Ocalb(t)$. The dynamics associated with Hamiltonians $H(t)$ and $H_{\Ocalb}(t)$ are given by
\begin{align}
	\dot{\rho}(t) &= \frac{1}{i\hbar}[H(t),\rho(t)] \text{ , } \label{ApSchodingerEq} \\
	\dot{\rho}_{\Ocalb}(t) &= \frac{1}{i\hbar}[H_{\Ocalb}(t),\rho_{\Ocalb}(t)] \text{ , } \label{ApSchodingerEqRot}
\end{align}
where $H_{\Ocalb}(t) = \Ocalb(t)H(t)\Ocalb^{\dagger}(t) + i\hbar \dot{\Ocalb}(t)\Ocalb^{\dagger}(t) $ and $\rho_{\Ocalb}(t) = \Ocalb(t)\rho(t)\Ocalb^{\dagger}(t)$. Then, the connection between the evolved states $\ket{\psi(t)}$ and $\ket{\psi_{\Ocalb}(t)}$ in inertial and non-inertial frames, respectively, is given by $\ket{\psi_{\Ocalb}(t)} = \Ocalb(t)\ket{\psi(t)}$, $\forall t \in [t_0,\tau]$. By considering the initial state in inertial frame given by a single eigenstate of $H(t)$, namely $\ket{\psi(t_0)}=\ket{E_{k}(t_0)}$, the adiabatic dynamics in this frame is written as
\begin{align}
	\ket{\psi(t)} = e^{i\int_{t_{0}}^{t} \theta_{k}(\xi)d\xi}\ket{E_{k}(t)} \text{ , }
\end{align}
where $\theta_{k}(t) = -E_k(t)/\hbar + i \langle E_k(t) | (d/dt) | E_k(t)\rangle$ is the adiabatic phase composed by the dynamical and geometrical phase, respectively. On the other hand, an adiabatic behavior is obtained in non-inertial frame if and only if
\begin{align}
	|\interpro{E^{\Ocal}_{m}(t)}{\psi_{\Ocalb}(t)}| = |\interpro{E^{\Ocalb}_{m}(t_{0})}{\psi_{\Ocalb}(t_{0})}| \text{ , } \forall m \, , \,  \forall t \in [t_0,\tau] \text{ . }
\end{align}
Therefore, we can write
\begin{align}
	|\interpro{E^{\Ocalb}_{m}(t)}{\psi_{\Ocalb}(t)}| &= |\interpro{E^{\Ocalb}_{m}(t_{0})}{\psi_{\Ocalb}(t_{0})}| \text{ , }  \nonumber \\
	|\bra{E^{\Ocalb}_{m}(t)}\Ocalb(t)\ket{\psi(t)}| &= |\bra{E^{\Ocalb}_{m}(t_{0})}\Ocalb(t_{0})\ket{\psi(t_{0})}| \text{ , } \nonumber \\
	|\bra{E^{\Ocalb}_{m}(t)}\Ocalb(t)\ket{E_{k}(t)}| &= |\bra{E^{\Ocalb}_{m}(t_{0})}\Ocalb(t_{0})\ket{E_{k}(t_{0})}| \text{ . } \label{AppCond}
\end{align}
Thus, Eq.~(\ref{AppCond})  establishes a necessary and sufficient condition to obtain an adiabatic evolution in the non-inertial frame,
assuming an adiabatic evolution in the original frame.
To conclude our proof, let us consider the converse case, where the system starts in a eigenstate of $\ket{E^{\Ocal}_{m}(t_{0})}$ in non-inertial frame. If the dynamics is adiabatic we write
\begin{align}
	\ket{\psi_{\Ocalb}(t)} = e^{i\int_{t_{0}}^{t} \theta^{\Ocalb}_{m}(\xi)d\xi}\ket{E^{\Ocalb}_{m}(t)} \text{ , }
\end{align}
where $\theta^{\Ocalb}_{m}(t)$ is the adiabatic phase collected in this frame. The dynamics will be adiabatic in the inertial frame if and only if
\begin{align}
	|\interpro{E_{m}(t)}{\psi(t)}| = |\interpro{E_{m}(t_{0})}{\psi(t_{0})}| \text{ , } \forall m \, , \,  \forall t \in [t_0,\tau] \text{ . }
\end{align}

Therefore, by using the same procedure as before, we get the condition
\begin{align}
	|\bra{E_{k}(t)}\Ocalb^{\dagger}(t)\ket{E^{\Ocalb}_{m}(t)}| &= |\bra{E_{k}(t_{0})}\Ocalb^{\dagger}(t_{0})\ket{E^{\Ocalb}_{m}(t_{0})}| \text{ , }
\end{align}
which is equivalent to Eq.~\eqref{AppCond}. This ends the proof of the Theorem~\ref{TheoAdiab}. Now, let us proof the second theorem discussed in this thesis.

To this end, we need to consider a time-independent Hamiltonian $H_{\Ocal}$ in the non-inertial frame, so that its evolution operator can be written as $U_{\Ocalb}(t,t_{0}) = e^{-\frac{i}{\hbar} H_{\Ocalb}(t-t_{0})}$. Thus, we can write the dynamics in non-inertial frame as
\begin{align}
	\ket{\psi_{\Ocalb}(t)} = e^{-\frac{i}{\hbar} H_{\Ocalb}(t-t_{0})}\ket{\psi_{\Ocalb}(t_{0})} \text{ . } \label{ApDyn}
\end{align}
Moreover, assuming adiabatic dynamics in the inertial frame, we get
\begin{align}
	| \interpro{E_{k}(t)}{\psi(t)} | = |\interpro{E_{k}(t_{0})}{\psi(t_{0})}| \text{ . }
\end{align}
By using the relationship between inertial and non-inertial frames as $\ket{\psi(t)} = \Ocalb^{\dagger}(t)\ket{\psi_{\Ocalb}(t)}$, we can write
\begin{align}
	| \bra{E_{k}(t)}\Ocalb^{\dagger}(t)e^{-\frac{i}{\hbar} H_{\Ocalb}(t-t_{0})}\ket{\psi_{\Ocalb}(t_{0})} | = |\interpro{E_{k}(t_{0})}{\psi(t_{0})}| \text{ , }
\end{align}
where we have used the Eq.~\eqref{ApDyn}. Now, by taking $\ket{\psi_{\Ocalb}(t_{0})} = \Ocalb(t_0)\ket{\psi(t_{0})}$, we obtain
\begin{align}
	| \bra{E_{k}(t)}\Ocalb^{\dagger}(t)e^{-\frac{i}{\hbar} H_{\Ocalb}(t-t_{0})}\Ocalb(t_{0})\ket{\psi(t_{0})} | = |\interpro{E_{k}(t_{0})}{\psi(t_{0})}| \label{t2af} \text{ . }
\end{align}
Thus, by inserting the initial state $\ket{\psi(t_{0})} = \ket{E_{n}(t_{0})}$ in Eq.~(\ref{t2af}), we get
\begin{align}
	| \bra{E_{k}(t)}\Ocalb^{\dagger}(t)e^{-\frac{i}{\hbar} H_{\Ocalb}(t-t_{0})}\Ocalb(t_{0})\ket{E_{n}(t_{0})} | = |\interpro{E_{k}(t_{0})}{E_{n}(t_{0})}| \text{ . }
\end{align}
This concludes the proof of Theorem~\ref{TheoAdiabTI}.

\subsection{Application to the oscillating Hamiltonian}\label{AppAppliTheorem1}

In order to show how the Theorem~\ref{TheoAdiab} allows us to predict the adiabatic behavior of the system driven by the Hamiltonian in Eq.~\eqref{ApEqOscHamil}, let us first consider a generic system under action of a single time-dependent oscillating field with characteristic
frequency $\omega$, whose Hamiltonian reads
\begin{align}
	H(\omega,t) = \hbar \omega_{0} H_{0} + \hbar \omega_{\text{T}} H_{\text{T}}(\omega,t) \text{ , } \label{ApHOrig}
\end{align}
where we consider the transverse term $\hbar \omega_{\text{T}} H_{\text{T}}(\omega,t)$ as a perturbation, so that $||\omega_{0} H_{0}|| \gg ||\omega_{\text{T}} H_{\text{T}}(\omega,t)||$, $\forall t\in [0,\tau]$. In this case, the eigenstates $\ket{E_{n}(t)}$ and energies $E_{n}(t)$ of $H(\omega,t)$ can be obtained as perturbation of eigenstates $\ket{E_{n}^{0}}$ and energies $E_{n}^{0}$ of $\hbar \omega_{0} H_{0}$ as (up to a normalization coefficient)
\begin{align}
	\ket{E_{n}(t)} &= \ket{E_{n}^{0}} + \Ocalb \left(||\hbar \omega_{\text{T}} H_{\text{T}}(\omega,t)|| \right) \text{ , } \label{ApvecnonRot}\\
	E_{n}(t) &= E_{n}^{0} + \Ocalb \left(||\hbar \omega_{\text{T}} H_{\text{T}}(\omega,t)|| \right) \text{ . } \label{ApEnenonRot}
\end{align}
On the other hand, in the non-inertial frame, we have $H_{\Ocalb}(t) = \Ocalb(t)H(t)\Ocalb^{\dagger}(t) + i\hbar \dot{\Ocalb}(t)\Ocalb^{\dagger}(t) $, which yields
\begin{align}
	H_{\Ocalb}(\omega,t) = \hbar \left( \omega_{0} - \omega \right) H_{0} + \hbar \omega_{\text{T}} H_{\Ocalb , \text{T}}(\omega,t) \text{ . } \label{ApHrot}
\end{align}
where $H_{\Ocalb , \text{T}}(\omega,t) = \Ocalb(t)H_{\text{T}}(\omega,t)\Ocalb^{\dagger}(t)$. Now, we separately consider two specific cases:

$\bullet$ \emph{Far-from resonance situation $|\omega_{0} - \omega|\gg |\omega_{\text{T}}|$}: In this case, the term $\hbar \omega_{\text{T}} H_{\Ocalb , \text{T}}(\omega,t)$ in Eq.~\eqref{ApHrot} works as a perturbation. Therefore the set of eigenvectors of $H_{\Ocalb}(\omega,t)$ reads
\begin{align}
	\ket{E^{\Ocalb}_{n}(t)} &= \ket{E_{n}^{0}} + \Ocalb \left(||\hbar \omega_{\text{T}} H_{\text{T}}(\omega,t)|| \right) \text{ , } \label{ApvecRot}
\end{align}
where we have used that the energy gaps $\tilde{E}_{n}^{0} - \tilde{E}_{k}^{0}$ of the Hamiltonian $\hbar \left( \omega_{0} - \omega \right) H_{0}$ are identical to energy gaps $E_{n}^{0} - E_{k}^{0}$ of $\hbar \omega_{0} H_{0}$ and $||\hbar \omega_{\text{T}} H_{\Ocalb , \text{T}}(\omega,t)|| = ||\hbar \omega_{\text{T}} H_{\text{T}}(\omega,t)||$. Thus,
from Eqs.~\eqref{ApvecnonRot} and~\eqref{ApvecRot} we conclude, for any eigenstate  $\ket{E_{k}{(t)}}$, that
\begin{align}
	\bra{E^{\Ocalb}_{m}(t)}\Ocalb(t)\ket{E_{k}(t)} \approx e^{i\frac{\omega}{\omega_{0}} \frac{E_{k}^{0}}{\hbar} t}\delta_{mk} \text{ , }
\end{align}
so that we get $|\bra{E^{\Ocalb}_{m}(t)}\Ocalb(t)\ket{E_{k}(t)}| = \text{constant}$, $\forall m$, $\forall t\in [t_{0},\tau]$.

$\bullet$ \emph{Resonance situation $|\omega_{0} - \omega| \ll |\omega_{\text{T}}|$}: Now, we have a more subtle situation.
Firstly, we can use Eqs.~\eqref{ApvecnonRot} and~\eqref{ApEnenonRot} to write
\begin{align}
	\Ocalb (t)\ket{E_{n}(t)} &= e^{i\frac{\omega}{\omega_{0}} \frac{E_{n}^{0}}{\hbar}t } \ket{E_{n}^{0}} + \Ocalb \left(||\hbar \omega_{\text{T}} H_{\text{T}}(\omega,t)|| \right) \text{ , } \\
	\int_{t_{0}}^{t}\theta_{n}(\xi)d\xi &= -\frac{E_{n}^{0}}{\hbar}(t-t_{0}) + \Ocalb \left(||\hbar \omega_{\text{T}} H_{\text{T}}(\omega,t)|| \right)  \text{ , }
\end{align}
so that
\begin{align}
	\bra{E^{\Ocalb}_{m}(t)}\Ocalb(t)\ket{E_{k}(t)} \approx e^{i\frac{\omega}{\omega_{0}} \frac{E_{k}^{0}}{\hbar} t} \interpro{E^{\Ocalb}_{m}(t)}{E^{0}_{k}} \text{ . }
\end{align}
Now, it is possible to see that if $|\interpro{E^{\Ocalb}_{m}(t)}{E^{0}_{k}}| = |\interpro{E^{\Ocalb}_{m}(t_{0})}{E^{0}_{k}}|$, $\forall t\in [t_{0},\tau]$, then the theorem is satisfied. By applying this results to the Hamiltonian in Eq.~\eqref{ApEqOscHamilNonInertial} we can see that for $|\omega_{0} - \omega| \ll |\omega_{\text{T}}|$ we get $H_{\Ocalb}(\omega,t) \propto \cos(\omega t) \sigma{x} - \sin(\omega t) \sigma{y}$ and
\begin{align}
	\ket{E^{\Ocalb}_{\pm}(t)} = \frac{\ket{1} \pm e^{i \omega t} \ket{0}}{\sqrt{2}} \text{ . }
\end{align}

Since $\ket{E^{0}_{k}} = \ket{k}$, with $k=\{0,1\}$, we conclude that $|\interpro{E^{\Ocalb}_{m}(t)}{E^{0}_{k}}| = |\interpro{E^{\Ocalb}_{m}(t_{0})}{E^{0}_{k}}|$.

\chapter{Experimental setup} \label{ApExperimentalSetups}

\section{The Ytterbium trapped ion experimental setup} \label{ApTrappeIon}

In this section we present more details on the first system used to experimentally study some theoretical results presented in this thesis. In particular, we are interested here in providing a detailed discussion about the main elements used in this thesis, so that the readers could better understand our results. For a complete discussion, we indicate the PhD thesis by Chang-Kang Hu~\cite{Hu:Thesis} (thesis in Chinese, read it moderately!).

\subsection{Foundations of ion trapping and qubit initialization} \label{ApTrappeIonQubit}

In general, the trapping mechanism of neutral or charged particles is dependent on the particle to be trapped, among many other factors, so here we provide a discussion on the particular system used in this thesis~\cite{Hu:Thesis}. However, independent on the system, to trap a particle (charged or not) we use electric and magnetic fields. A general result, called Earnshaw's theorem~\cite{Earnshaw:42}, allows us to identify the minimum requirement to trap a particle in a stable way, more specifically, a charged particle cannot be held in a stable equilibrium by using static fields. Even though very general, the demonstration of this result is very simple and it immediately comes from the Gauss equations for electric and magnetic static fields $\vec{\nabla} \cdot \vec{E} = 0$ and $\vec{\nabla} \cdot \vec{B} = 0$. These equations tell us that the electric/magnetic field flux going through a region (volume $V$) is null, so a charged particle under action of this field will undergoes a non-confining state.

It is reasonable to think about the trapping task for slow motion particles (less energetic), otherwise we need strong fields to trap them around a specific position. For atomic beam that comes from a heat gas, we need to reduce the beam temperature by using a procedure called Doppler cooling~\cite{Dehmelt:75,Hansch:75}. The Doppler cooling is a technique used to slow down the motion of each atom by using a laser beam tuned slightly below at the natural resonance frequency of the atom energy level. In a Ytterbium ion $^{171}$Yb$^+$, the relevant energy levels populated during the cooling procedure are shown in Fig.~\ref{FigReleEnerLev}~\cite{Hu:Thesis,Olmschenk:07}. For the ion $^{171}$Yb$^+$ the Doppler cooling is achieved through a $369.5$~nm laser beam, where the cooling is obtained from the optical transition cycle $^{2}S_{1/2}\!\rightleftarrows\!^{2}P_{1/2}$~\cite{Olmschenk:07}. This cycle transition happens because the system at state $^{2}S_{1/2}$ absorbs a photon and it becomes excited at the state $^{2}P_{1/2}$, then the atom spontaneously emits a photon in a aleatory direction and uniformly probability distribution. In particular, if the system decays to state the $^{2}S_{1/2}$, the system remain in cooling cycle. However, as shown in Fig.~\ref{FigReleEnerLev}, during the transition cycle $^{2}S_{1/2}\!\rightleftarrows\!^{2}P_{1/2}$ there is a branching ratio $R$ for population decay from $^{2}P_{1/2}$ state to the $^{2}D_{3/2}$, where the decay probability has been theoretically studied in Ref.~\cite{Migdalek:80}, with experimentally verification in Refs.~\cite{Olmschenk:07,Yu:00}. In order to sent the system back to the cooling cycle, a light at $935.2$~nm is used to promote transitions $^{2}D_{3/2}\!\rightleftarrows\!^{3}D[3/2]_{1/2}$, where the system can quickly decay from $^{3}D[3/2]_{1/2}$ to $^{2}S_{1/2}$. Therefore, by driving the system through a many cooling cycles process we can achieve the cooling limit (minimal temperature). This limit is known as \textit{Doppler cooling limit} and given it predicts the minimal temperature $T_{\text{D}} = \hbar \Gamma / 2k_{\text{B}}$, where $\Gamma$ is a characteristic parameter associated to system (transition linewidth) and $k_{\text{B}}$ is the Boltzmann's constant~\cite{Letokhov:77,Wineland:79,Chu:98,Cohen:98,Phillips:98}. For the Ytterbium ion $^{171}$Yb$^+$ we have $\Gamma = 2 \pi \times 19.6$~MHz, so that we can find $T_{\text{D}} \approx 468.5~\mu$K. Thus, since we are interested in applications in quantum information processing, after Doppler cooling step a single ion is trapped by the radio frequency electric field.

\begin{figure}
	\centering
	\includegraphics[scale = 0.6]{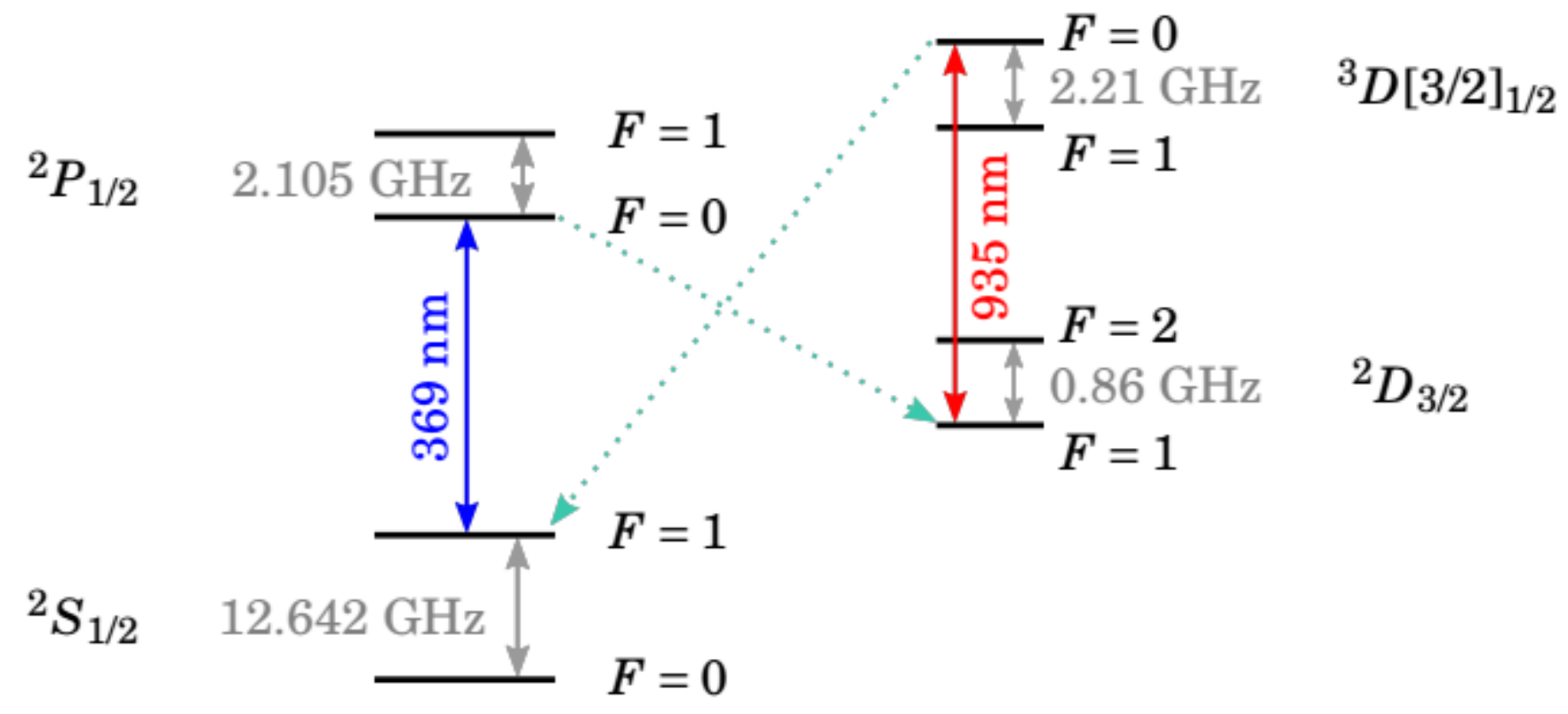}
	\caption{Relevant energy levels of the Ytterbium ion $^{171}$Yb$^+$ used in the experimental implementations in this thesis.}
	\label{FigReleEnerLev}
\end{figure}

After the cooling and trapping process, the qubit is initialized and then can be manipulated in order to process quantum information. Our qubit is encoded within energy subspace $^{2}S_{1/2}$ spanned by the states $^{2}S_{1/2}\ket{F=0,m=0}$, $^{2}S_{1/2}\ket{F=1,m=0}$ and  $^{2}S_{1/2}\ket{F=1,m=\pm 1}$, where $F$ denotes the total angular momentum of the atom and $m_{F}$ is its projection along the quantization axis. In absence of an external fields, the subspace $F=1$ is degenerate. Therefore, when we apply an external static magnetic field $\vec{B}$ the Zeeman effect in order to highlight the hyperfine structure splitting, so that an amount of energy $\delta E \propto ||\vec{B}||$ separates the energy eigenstates with $F=1$, as shown in Fig.~\ref{FigAdiabExpTrappedIon}. Thus, we identify subspace of two states that can be used as a two-level system, i.e., as our desired qubit. The Zeeman shift in transition frequency is given by $\delta_{\text{Z}} = 1.4$~MHz/G and the transition frequency of the qubit reads as $\omega_{\text{s}}^{0} \approx 12.64$~GHz. It is important to mention that the frequency $\omega_{\text{s}}^{0}$ is smoothly affected by the external magnetic field as $\omega_{\text{s}} = \omega^{0}_{\text{s}} + \delta_{\text{2z}}$, where $\delta_{\text{2z}} \approx 310.8 B^2$~Hz is the second order Zeeman shift ($B$ is the intensity of the static magnetic field in Gauss)~\cite{Hu:Thesis}. Thus, in the experiments considered in this thesis, an additional shift about $2 \pi \times 10$~KHz is added. Therefore, the qubit is encoded in states $\ket{0}\!\equiv\!^{2}S_{1/2}\ket{F=0,m=0}$ and $\ket{1}\!\equiv\!^{2}S_{1/2}\ket{F=1,m=0}$, where the energy gap reads as $E_{01} = \hbar \omega_{\text{s}}$.

\subsection{Trapped ion qubit dynamics} \label{ApManipulationQubit}

Given the natural configuration of the system and the set of states spanned by $^{2}S_{1/2}\ket{F=0,m=0}$ and $^{2}S_{1/2}\ket{F=1,m=0}$, with energies $E_{0}$ and $E_{1}$, respectively, we can define the bare Hamiltonian for our qubit as
\begin{align}
	H_{0} = E_{0} \ket{0}\bra{0} + E_{1} \ket{1}\bra{1} = \frac{E_{01}}{2} \sigma_{z} = \frac{\hbar \omega_{\text{s}}}{2} \sigma_{z} \text{ , }
\end{align}
where $\sigma_{z} = \ket{1}\bra{1} - \ket{0}\bra{0}$ is the Pauli matrix. As schematically shown in Fig.~\ref{FigAdiabExpTrappedIon}, by using an arbitrary waveform generator (AWG) we can manipulate the qubit
through a unitary dynamics by driving the system with an magnetic field $\vec{B}_{\text{un}}(t) = \vec{B}_{0}\cos( \omega t + \phi) $, one uses the dipole approximation to write the contribution of this interaction to energy of the system as $H_{1} = - \vec{\mu} \cdot \vec{B}_{\text{un}}(t)$, where $\vec{\mu} = \vec{\mu}_{01}\ket{0}\bra{1} + \vec{\mu}_{10}\ket{1}\bra{0}$~\cite{Leibfried:03,Wineland:98}. Here we use the subscript ``un" to denote the field associated with a unitary dynamics, later we shall see that the same AWG can be used to design a non-unitary dynamics through a field $\vec{B}_{\text{n-un}}(t)$. Therefore, we find
\begin{align}
	H_{1}(t) = -\vec{\mu}_{01}\cdot\vec{B}_{0}\cos( \omega t + \phi ) \ket{0}\bra{1} - \vec{\mu}_{10}\cdot\vec{B}_{0}\cos( \omega t + \phi)\ket{1}\bra{0} \text{ . } \label{ApH1Control}
\end{align}

Then, for simplicity we use that $\vec{\mu}_{01} = \vec{\mu}_{10}=\vec{\mu}$, so that the total Hamiltonian $H(t) = H_{0} + H_{1}$ reads as
\begin{align}
	H(t) = \frac{\hbar \omega_{\text{s}}}{2} \sigma_{z} + \hbar \omega_{\text{R}} \cos( \omega t + \phi) \sigma_{x}  \text{ , }
\end{align}
where we defined the Rabi frequency $\omega_{\text{R}} = - \vec{\mu}_{01}\cdot\vec{B}_{0}/\hbar$ and the Pauli matrix $\sigma_{x} = \ket{0}\bra{1}+\ket{1}\bra{0}$. Now, by using perturbation theory we can write the dynamical equation to the coefficients $c_{0}(t)$ and $c_{1}(t)$ associated with the evolved state
\begin{align}
	\ket{\psi(t)} = c_{0}(t) \ket{0} + c_{1}(t) \ket{1} \text{ . }
\end{align}

In fact, by writing $\ket{\psi_{\text{I}}(t)} = e^{iH_{0}t/\hbar} \ket{\psi(t)}$ we get the dynamics for $\ket{\psi_{\text{I}}(t)}$ as 
\begin{align}
	\ket{\dot{\psi}_{\text{I}}(t)} = i\hbar V_{\text{I}}(t) \ket{\psi_{\text{I}}(t)} \text{ , }
\end{align}
with $V_{\text{I}}(t) = e^{iH_{0}t/\hbar} H_{1}(t) e^{-iH_{0}t/\hbar}$. As a first remark, note that $[H_{1}(t_{1}),H_{1}(t_{2})]=0$ for any $(t_1,t_2)$, therefore we can also write $[V_{\text{I}}(t_{1}),V_{\text{I}}(t_{2})]=0$ for any $(t_1,t_2)$. This tells us that the solution for the above equation reads as
\begin{align}
	\ket{\psi_{\text{I}}(t)} = U_{\text{I}}(t) \ket{\psi_{\text{I}}(0)} \text{ , }
\end{align}
with
\begin{align}
	U_{\text{I}}(t) = \exp \left[ \frac{1}{i\hbar} \int_{0}^{t} V_{\text{I}}(\xi) d\xi \right] \text{ . } \label{AppUI}
\end{align}

We use that
\begin{align}
	V_{\text{I}}(t) &= e^{iH_{0}t/\hbar} H_{1}(t) e^{-iH_{0}t/\hbar} = \frac{\hbar \omega_{\text{R}}}{2} \left( e^{i(\omega t + \phi)} + e^{-i(\omega t + \phi)}\right) e^{iH_{0}t/\hbar}\sigma_{x}e^{-iH_{0}t/\hbar} \nonumber \\
	&= \frac{\hbar \omega_{\text{R}}}{2} \left( e^{i(\omega t + \phi)} + e^{-i(\omega t + \phi)}\right) e^{i\omega_{\text{s}}t/2} \left(\ket{0}\bra{1}+\ket{1}\bra{0}\right)e^{-i\omega_{\text{s}}t/2} \nonumber \\
	&= \frac{\hbar \omega_{\text{R}}}{2} \left( e^{i(\omega t + \phi)} + e^{-i(\omega t + \phi)}\right) 
	\left(e^{-i\omega_{\text{s}}t} \ket{0}\bra{1} + e^{i\omega_{\text{s}}t}\ket{1}\bra{0}\right) \nonumber \\
	&= \frac{\hbar \omega_{\text{R}}}{2} \left( e^{i(\omega_{-} t + \phi)} + e^{-i(\omega_{+} t + \phi)}\right) \ket{0}\bra{1}
	+\frac{\hbar \omega_{\text{R}}}{2} \left( e^{i(\omega_{+} t + \phi)} + e^{-i(\omega_{-} t + \phi)} \right)\ket{1}\bra{0} \text{ , }
\end{align}
where $\omega_{\pm} = \omega - \omega_{\text{s}}$. Therefore, by integrating the above expression
\begin{align}
	\int_{0}^{t} V_{\text{I}}(\xi) d\xi &= \frac{\hbar\omega_{\text{R}}}{2} \left[ 
	\frac{-ie^{i \phi} \left( e^{i\omega_{-} t} -1 \right)}{\omega_{-}} 
	+ \frac{ie^{-i \phi} \left( e^{-i\omega_{+} t} -1 \right)}{\omega_{+}} \right] 
	\ket{0}\bra{1} \nonumber \\
	&+\frac{\hbar\omega_{\text{R}}}{2}\left[\frac{-ie^{i \phi} \left( e^{i\omega_{+} t} -1 \right)}{\omega_{+}}
	+ \frac{ie^{-i \phi} \left( e^{-i\omega_{-} t} -1 \right)}{\omega_{-}}  \right] 
	\ket{1}\bra{0} \text{ . }
\end{align}

Thus, for a highly oscillating external field, we can approximate the above expression to
\begin{align}
	\int_{0}^{t} V_{\text{I}}(\xi) d\xi &\approx \frac{\hbar\omega_{\text{R}}}{2} \left[ 
	\frac{-ie^{i \phi} \left( e^{i\omega_{-} t} -1 \right)}{\omega_{-}} 
	\ket{0}\bra{1} + \frac{ie^{-i \phi} \left( e^{-i\omega_{-} t} -1 \right)}{\omega_{-}} \ket{1}\bra{0} \right] \text{ . }
\end{align}

For this reason, in our analysis we can neglect highly oscillating terms, as we shall use soon. Now, by writing the Hamiltonian $H(t)$ in rotating frame by using $R(t) = e^{i \omega t\sigma_{z}/2}$ we find
\begin{align}
	H_{R} &= R(t) H(t) R^{\dagger}(t) + i\hbar \dot{R}(t)R^{\dagger}(t) = \frac{\hbar \Delta}{2} \sigma_{z} + R(t) H_{1}(t) R^{\dagger}(t) \nonumber \\ 
	&= \frac{\hbar \Delta}{2} \sigma_{z} + \frac{\hbar \omega_{\text{R}}}{2} \left( e^{i(\omega t + \phi)} + e^{-i(\omega t + \phi)}\right) \left(e^{-i\omega t} \sigma_{-} + e^{i\omega t} \sigma_{+}\right)  \text{ , }
\end{align}
where we define $\Delta = \omega_{\text{s}} - \omega$ and we already used that $e^{i \omega t\sigma_{z}/2}\sigma_{x}e^{-i \omega t\sigma_{z}/2} = e^{-i\omega t} \sigma_{-} + e^{i\omega t} \sigma_{+}$, where $\sigma_{-} = \ket{0}\bra{1}$ and $\sigma_{+} = \sigma_{-}^{\dagger}$. Now, by neglecting the highly oscillating terms, it is possible to proof that we get
\begin{align}
	H_{R} &= \frac{\hbar \Delta}{2} \sigma_{z} + \frac{\hbar \omega_{\text{R}}}{2} \left( e^{i\phi} \sigma_{-} + e^{-i\phi} \sigma_{+}\right) \text{ . }
\end{align}

Now, by using that $\sigma_{\pm} = (1/2) (\sigma_{x} \pm i \sigma_{y})$, we find
\begin{align}
	H_{R} &= \frac{\hbar \Delta}{2} \sigma_{z} + \frac{\hbar \omega_{\text{R}}}{2} \left( \cos \phi \sigma_{x} + \sin \phi \sigma_{y}\right) \text{ . } \label{ApGenHamiltoTrapped}
\end{align}

Therefore, the above Hamiltonian can be used to implement an arbitrary dynamics of the qubit encoded as shown in Fig.~\ref{FigAdiabExpTrappedIon}.

\subsection{State detection and tomography process} \label{ApIonTomography}

To read out the outcome of some operation in the trapped ion system used in this thesis, it is used a ion fluorescence technique~\cite{Berkeland:98}. In Ytterbium trapped ion systems, this technique is applied by brightening the ion with light at $369.53$~nm, where we promote transitions from the state $^{2}S_{1/2}\ket{F=1}$ to $^{2}P_{1/2}\ket{F=0}$~\cite{Olmschenk:07}. This process allows us to determinate the population in energy level $^{2}S_{1/2}\ket{F=1}$, since the state $^{2}S_{1/2}\ket{F=0}$ is a \textit{dark state} for the laser at $369.53$~nm. After repeating this procedure many times, it is possible to accurately estimate the probability in \textit{bright state} $^{2}S_{1/2}\ket{F=1}$ and the dark one. If some population is present in state $^{2}S_{1/2}\ket{F=0}$, few photons are 
detected due to the background scattering, and the dark counting of the photomultiplier tube (a high efficiency detector used to
measure the ion fluorescence during the state detection). On the other hand, a large number of photons are scattered by the system when the system is in state $^{2}S_{1/2}\ket{F=1}$. These scattered photons are collected by a numerical aperture $NA=0.4$ objective lens. After a processing of the captured photons, it is possible to estimate that the measurement fidelity is around $99.4\%$~\cite{Hu:18}.

In terms of state tomography, the above procedure is used to estimate the values of the quantities $|\alpha|$ and $|\beta|$ of an unknown state $\ket{\psi} = \alpha \ket{0} + \beta \ket{1}$. Therefore, to measure the system state we need to follow the tomography standard process where we need to measure a number of different parameters in order to reconstruct the density matrix of the system~\cite{Nielsen:Book}. In our case, the state tomography is obtained from three parameters, since an arbitrary density matrix for a two-level system can be written as in Eq.~\eqref{EqDensiMatrixDecompGen}.

\subsection{Controllable dephasing in trapped ion systems} \label{ApTrappeIonDeco}

In order to simulate the coupling of our qubit with a dephasing reservoir, it is used frequency modulation (FM) method where a Gaussian noise is introduced by mixing the field $\vec{B}_{1}(t)$. The experimental is sketched in Fig.~\ref{FigAdiabExpComTrappedIon}. After this process, the new magnetic field reads as (the experiment considered in this thesis sets $\phi = 0$ in Eq.~\eqref{ApGenHamiltoTrapped}.)
\begin{align}
	\vec{B}_{\text{n-un}}(t) = \vec{B}_0 \cos \left[\omega t + C \eta(t) t \right] \text{ , }
\end{align}
where the noise source is encoded in the function $\eta(t)=A_{0} g(t)$, with $A$ being the average amplitude of the noise, $g(t)$ is a random
analog voltage signal and $C$ is the modulation depth supported by the commercial microwave generator E8257D. Here we want to demonstrate that this Gaussian noise can be used to implement the dynamics given by
\begin{eqnarray}
	\dot{\rho}(t) = \frac{1}{i\hbar} [H(t),\rho(t)] + \gamma(t) \left[\sigma_z \rho(t) \sigma_z - \rho(t)\right] \text{ , }
	\label{ApEqLindbaldTarget}
\end{eqnarray}
with $\gamma(t)$ being the desired decohering rate. To this end, we consider here the time-independent Hamiltonian $H^{\prime} = \hbar \tilde{\omega} \sigma_{x}$ obtained from Eq.~\eqref{ApGenHamiltoTrapped} with $\Delta = 0$, $\omega_{\text{R}}=\tilde{\omega}$ and $\phi = 0$. That is, a resonant microwave with Rabi frequency $\tilde{\omega}$. Therefore, experimentally we can proof that the dynamics is given by the above equation in two different ways: i) by accompanying the dynamics of the system and ii) by computing the \textit{process matrix} $\chi$ that describes the quantum process. Here we will consider both protocols.

By taking into account a particular dynamics of a system driven by the Hamiltonian $H^{\prime}$, from Eq.~\eqref{ApEqLindbaldTarget} we can write the Bloch equations for $\rho(t)$ as (with $\gamma(t) = \gamma_{0}$)
\begin{align}
	\dot{r}_{x}(t) &= -2\gamma_{0} r_{x}(t) \text{ , } \\
	\dot{r}_{y}(t) &= -2\gamma_{0} \left[ r_{z}(t) + r_{y}(t) \right] \text{ , } \\
	\dot{r}_{z}(t) &= 2\gamma_{0} r_{y}(t) \text{ , }
\end{align}
whose solution for the initial condition $\ket{\psi(0)}=\ket{0}$ reads as
\begin{align}
	r_{x}(t) &= 0 \text{ , } \quad r_{y}(t) = \frac{ 2\omega e^{-\gamma_{0}t} }{\sqrt{4\tilde{\omega}^2 - \gamma_{0}}} \sin \left(t\sqrt{4\tilde{\omega}^2 - \gamma_{0}}\right) \text{ , } \\
	r_{z}(t) &= -e^{-\gamma_{0}t} \left[ \cos \left(t\sqrt{4\tilde{\omega}^2 - \gamma_{0}}\right) + \frac{ \gamma_{0}\sin \left(t\sqrt{4\tilde{\omega}^2 - \gamma_{0}}\right) }{\sqrt{4\tilde{\omega}^2 - \gamma_{0}}} \right] \text{ . }
\end{align}

Therefore, by defining the projector $\Pcal_{1} = \ket{1}\bra{1}$ computing the population in state excited state $\ket{1}$ we get
\begin{align}
	p_{1}(t) = \trs{\Pcal_{1}\rho(t)} = \frac{1}{2} - \frac{1}{2}e^{-\gamma_{0}t} \left[ \cos \left(t\sqrt{4\tilde{\omega}^2 - \gamma_{0}}\right) + \frac{ \gamma_{0}\sin \left(t\sqrt{4\tilde{\omega}^2 - \gamma_{0}}\right) }{\sqrt{4\tilde{\omega}^2 - \gamma_{0}}} \right] \text{ , }
\end{align}
where in the limit $\gamma_{0} \rightarrow 0$ we recover the well-known equation $p_{n}(t)|_{\gamma_{0}=0} = \sin^2 (\tilde{\omega} t)$. As we can see, the behavior of $p_{n}(t)$ is a damped oscillating function that converges to $1/2$ in the regime $\gamma_{0}t \gg 1$, where the damping rate is dictated by the parameter $\gamma_{0}$. 

\begin{figure}[t!]
	\centering
	\includegraphics[scale=0.6]{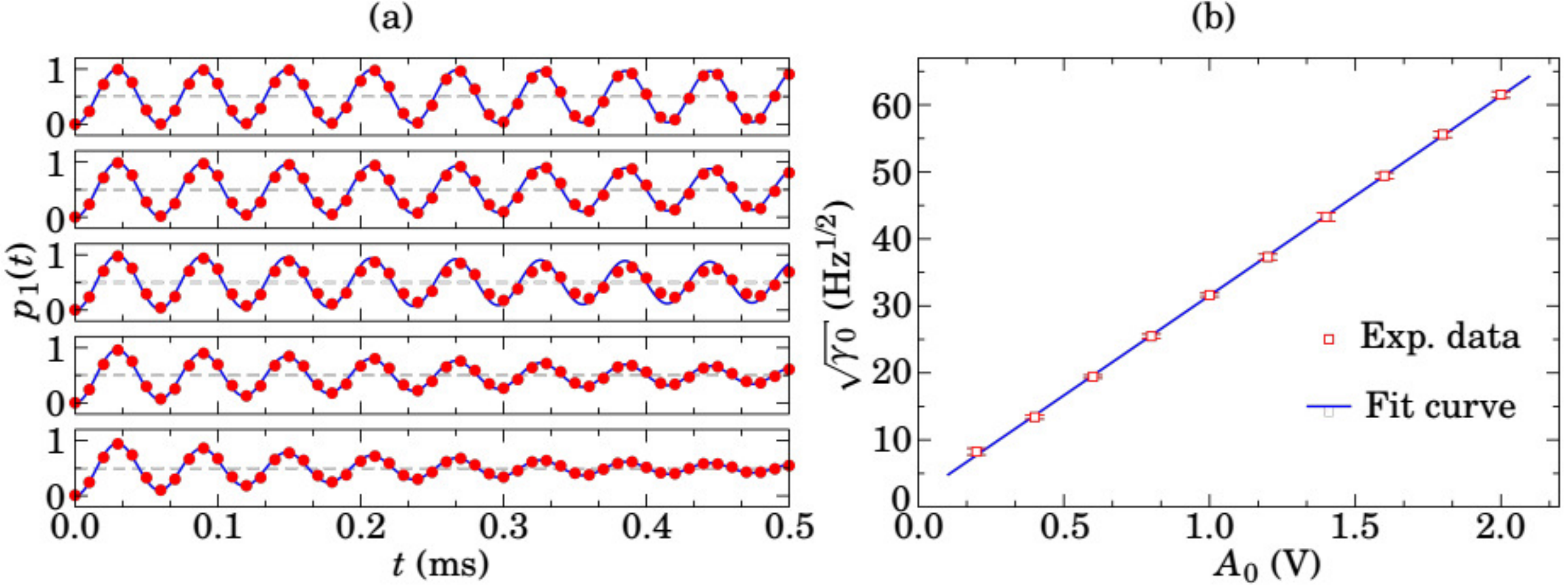}
	\caption{Dephasing rate controlled by the amplitude of noise. ({\color{blue}a}) Rabi oscillations between states $\ket{0}$ and $\ket{1}$ under  different noise intensities. From top to bottom, the noise amplitude $A_{0}$ is set to $0.4$~V, $0.8$~V, $1.2$~V, $1.6$~V and $2.0$~V, with the corresponding damping rates $\gamma$ are approximately $180$~Hz, $649$~Hz, $1391$~Hz, $2441$~Hz and $3785$~Hz, respectively. ({\color{blue}b}) Dephasing rate as a function of the noise amplitude. Points are measured data. A linear fit is obtained. Without driving noise (noise amplitude is zero), the dephasing rate of the qubit is fitted as 3.03 Hz, which is caused by the magnetic fluctuation in the laboratory.}
	\label{FigApsDecayRate}
\end{figure}

To show how to simulate the non-unitary dynamics in Eq.~\eqref{ApEqLindbaldTarget}, the experimental realization it was considered with the parameter $\tilde{\omega} = 2\pi \times 8.44$~KHz and the modulation depth $C$ as $96.00$~KHz/V, here both parameters $\tilde{\omega}$ and $C$ are kept fixed for all the experiments. Thus, by adjusting the amplitude noise $A_{0}$, different dynamics were implemented. We experimentally evaluated the quantity $p_{1}(t)$ to get the Rabi oscillations of the system for each dynamics during a total time interval of $\Delta t_{\text{exp}}=0.5$~ms. The result are shown in Fig.~\ref{FigApsDecayRate}{\color{blue}a} for the Rabi oscillations between states $\ket{0}$ and $\ket{1}$. For different values of the amplitude $A_{0}$, it is possible to see different behaviors, each one of them associated with distinct values of $\gamma_{0}$. Moreover, it is possible to experimentally evaluate the relation between the dephasing rate $\gamma$ and the noise amplitude $A_{0}$, where we used the graph in Fig.~\ref{FigApsDecayRate}{\color{blue}b} to this end. By evaluating the square root of $\gamma$ we find a linear behavior as $A_{0}$ increases, so through the linear regression method we find (approximately)
\begin{eqnarray}
	\sqrt{\gamma} = 29.806861 A_{0} + 1.739986 \text{ , } \label{EqApSqrtGamma}
\end{eqnarray}
and this show that a perfect control of decoherence rate in Eq.~\eqref{ApEqLindbaldTarget} is obtained from a precise control of amplitude noise.

As a second, and complementary, analysis of the decohering dynamics it was used the tomography process to reconstruct the process matrix $\chi$. Let us now give brief introduction to the matrix $\chi$ used to completely describe a physical process. A general physical process is adequately characterized by the equation~\cite{Nielsen:Book}
\begin{align}
	\rho (0) \rightarrow \rho (x) = \Ecalb_{x}[\rho (0)] = \sum\nolimits_{n=1}^{N} E_{n} \rho (0) E_{n}^{\dagger} \text{ , }
\end{align}
where $E_{n}$ is a complete set of $N$ operation elements that satisfy $\sum_{n=1}^{N} E_{n}^{\dagger}E_{n}=\1$, because $\Ecalb_{x}[\bullet]$ should be a trace-preserving operation. In general we have a specific set $\{E_{n}\}$ for each process considered, so for this reason it is convenient to consider a fixed basis of $K$ operators $F_{n}^{k}$ so that we can write $E_{n} = \sum_{k=1}^{K} e_{nk}\bar{E}_{k}$. Therefore, we get
\begin{align}
	\Ecalb_{x}[\rho (0)] = \sum\nolimits_{n=1}^{N} \left(\sum\nolimits_{k=1}^{K} e_{nk}\bar{E}_{k}\right) \rho (0) \left(\sum\nolimits_{l=1}^{K} e^{\ast}_{nl}\bar{E}^{\dagger}_{l}\right) = \sum\nolimits_{k,l=1}^{K} \chi_{kl} \bar{E}_{k} \rho (0) \bar{E}^{\dagger}_{l}\text{ . }
\end{align}

\begin{figure}[t!]
	\centering
	\includegraphics[scale=0.6]{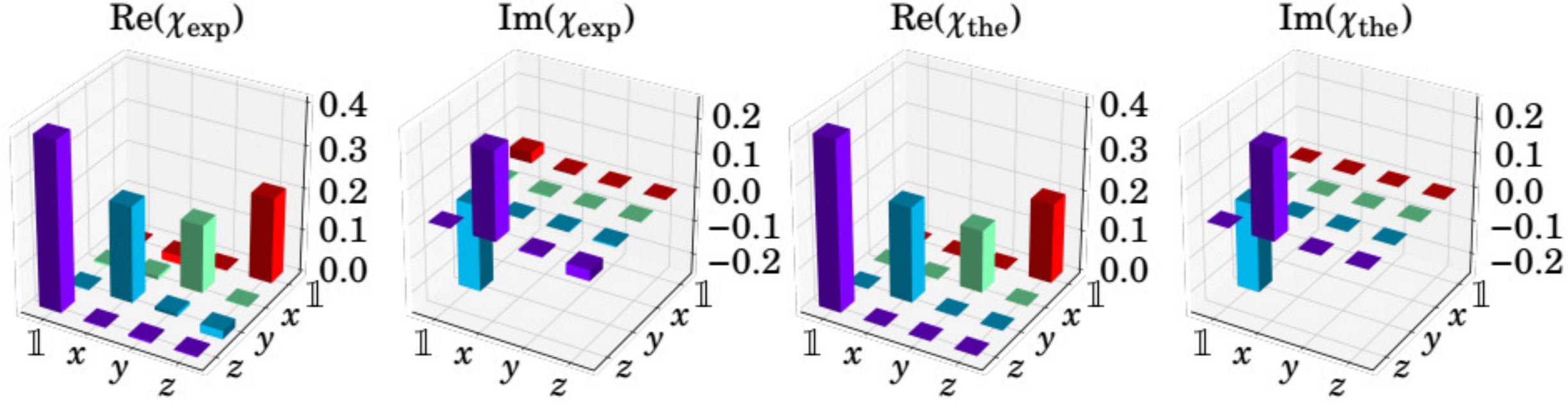}
	\caption{Process matrix obtained by process tomography, where we plot the real and imaginary parts of $\chi$ obtained from the experimental measured data ($\chi_{\text{exp}}$) and numerical simulation ($\chi_{\text{the}}$).}
	\label{ApFigTomograph}
\end{figure}

Therefore, since the set is early defined, the above equation tells that the process is characterized by a matrix with elements $\chi_{kl}$. Thus, by following this process we adopted the reference basis for a single-qubit tomography process as $\bar{E}_1 = \1$, $\bar{E}_2 = \sigma_x$, $\bar{E}_3 = \sigma_y$, $\bar{E}_4 = \sigma_z$ in the case where the dynamics is simulated with $\gamma=2.5$~KHz, where we set the total evolution time as $0.24$~ms (here, the noise amplitude is 1.62 V and the modulation depth is 96.00 KHz/V). The resulting estimated process matrix is shown in Fig.~\ref{ApFigTomograph}. Then, the fidelity between the simulated and desired process can be obtained by evaluating the fidelity between theoretical and experimental process matrices defined as
\begin{eqnarray}
	\Fcalb({\chi _{\text{exp} }},{\chi _{\text{the}}}) = \left[\text{Tr}\sqrt{ \sqrt{\chi _{\text{exp} }} \text{ } \chi _{\text{the}} \sqrt{\chi _{\text{exp} }} }\right]^2 \text{ . }
\end{eqnarray}

For example, when the amplitude of the noise is set to $1.54$~V, the process fidelities are measured as  $\Fcalb_{t_{1}} = 99.27\%$, $\Fcalb_{t_{2}} = 99.50\%$, $\Fcalb_{t_{3}} =  99.72\%$, $\Fcalb_{t_{4}} = 99.86\%$ and $\Fcalb_{t_{5}} = 99.87\%$, at times $t_{1} = 0.08$~ms, $t_{2} =0.16$~ms, $t_{3} = 0.24$~ms, $t_{4} = 0.32$~ms and $t_{5} = 0.40$~ms, respectively. Thus, the dephasing channel can be precisely controlled as desired and it can support the scheme to implement the time-dependent dephasing in experiment. A point to be highlighted here is the case where $A_{0} = 0$, situation in which the system evolves under its natural dephasing rate $\gamma_{\text{nd}}$. From Eq.~\eqref{EqApSqrtGamma} one finds that $\gamma_{\text{nd}} \approx 1.74^2 \approx 3.03$~Hz. Thus, we can see that, if we change the parameter $A$, which we can done with high controllability, the quantity $\gamma$ can be efficiently controlled. On the other hand, if we need a time-dependent rate $\gamma(t)$, we just need to consider a way to vary $A_{0}$ as a function $A(t)$. To this end, we use a second channel (CH2) of the AWG to perform amplitude modulation (AM) of the Gaussian noise. The temporal dependence of $A(t)$ is achieved by programming the channel (CH2) to change the noise amplitude along the evolution.

\section{Nuclear Magnetic Resonance experimental setup} \label{ApNuclearMagExpSet}

In NMR experimental implementations for quantum information processing, we encode the quantum bit into degree of freedom associated with the nuclear spin of atoms in molecules. The NMR qubits used in this thesis are the carbon nuclei ($^{13}$C) and the hydrogen nuclei ($^{1}$H). However, independent on the nucleon used in our discussion, it is possible to derive the Hamiltonian used to drive the system and, consequently, implement quantum gates on the qubits. By applying a strong static magnetic field $\vec{B}_{0} = B_{0}\hat{\text{z}}$ along some direction (let's say the direction $Z$), we are choosing an quantization axes and different the spin states (spin up $\ketus$ and spin down $\ketds$) of the nuclei set a energy splitting of energy given by $\Delta E = \hbar \omega_{\text{L}}$, so that we can write the internal Hamiltonian as
\begin{equation}
	H_{\text{Zee}} = \frac{\hbar \omega_{\text{L}}}{2} \left( \ketus \braus - \ketds \brads \right) = \frac{\hbar \omega_{\text{L}}}{2} \sigma_{z} = \hbar \omega_{\text{L}} I_{z} \text{ , }
\end{equation}
called Zeeman Hamiltonian due to the Zeeman interaction occurring between the magnetic dipole moment of the nuclei and the local magnetic fields. The quantity $\omega_{\text{L}}$ is the Larmor frequency and it is a characteristic parameter of the nuclei used in experiment; it is associated with the \textit{gyromagnetic ratio} of the nucleus $\gamma_{0}$ as $\omega_{\text{L}} = -\gamma_{0} B_{0}$.

Therefore, the qubit is encoded in states $\ketus$ and $\ketds$ as $\ket{0} = \ketds$ and $\ket{1} = \ketus$. The qubit is driven by an transverse rotating field called \textit{radio-frequency} (RF) field $\vec{B}_{\text{rf}}(t)$ given by
\begin{equation}
	\vec{B}_{\text{rf}}(t) = B_{1} \left[ \cos(\omega t) \hat{\text{x}} + \sin(\omega t) \hat{\text{y}} \right] \text{ , }
\end{equation}
where $\omega$ is the rotating frequency of the field $\vec{B}_{\text{rf}}(t)$. Therefore, the single-qubit Hamiltonian is a combination of the Zeeman and RF Hamiltonian, this gives
\begin{equation}
	H(t) = \hbar\omega_{0} I_{z} + \hbar\omega_{1} \left[ \cos(\omega t) I_{y} + \sin(\omega t) I_{y} \right] \text{ , }
\end{equation}
with $I_{j} = (1/2) \sigma_{j}$ and $\omega_{1} = -\gamma_{0} B_{1}$. In general, since the field $|\vec{B}_{1}|$ is much less intense than $|\vec{B}_{0}|$, it is convenient to consider the qubit dynamics in rotating frame as done in Appendix~\ref{ApFrameChangeQM-NMR}. By doing that, we find the driven Hamiltonian in rotating frame as
\begin{equation}
	H = \hbar(\omega_{0}-\omega) I_{z} + \hbar \omega_{1} I_{x} \text{ . }
\end{equation}
where the term $H_{\text{Zee}}^{\text{eff}} = \hbar(\omega_{0}-\omega) I_{z}$ is the effective Zeeman Hamiltonian. Therefore, when we set $\omega_{0} = \omega$ we find the resulting Hamiltonian in rotating frame at resonance as
\begin{equation}
	H_{x} = \hbar \omega_{1} I_{x} \text{ . }
\end{equation}

Then, by letting the system evolve under this Hamiltonian we get the evolution operator given by
\begin{equation}
	U_{x}(\Delta t) = \exp \left( -\frac{i}{\hbar} H_{x} \Delta t \right) = \exp \left( -i \omega_{1} I_{x} \Delta t \right) = R_{x} (\varpi) \text{ , }
\end{equation}
where we defined the angle $\varpi = \omega_{1} \Delta t$, with $\Delta t$ being the pulse duration of the RF field. Therefore, we can adjust the time interval of each RF pulse in order to select an arbitrary value for $\varpi$. After such pulse, the system is again governed by the effective Zeeman Hamiltonian, through a ``free evolution". In this case, the evolution operator becomes
\begin{equation}
	U_{\text{f}}(\Delta t) = \exp \left( -\frac{i}{\hbar} H_{\text{Zee}}^{\text{eff}} \Delta t \right) = \exp \left( -i (\omega_{0}-\omega) I_{z} \Delta t \right) = R_{z} (\zeta) \text{ , }
\end{equation}
where we define the angle $\zeta = (\omega_{0}-\omega)\Delta t$. Therefore, by using pulses $R_{x} (\varpi)$ and free evolution $R_{z} (\zeta)$ we can implement any single-qubit rotation~\cite{Sarthour:Book}.

In addition, the Hamiltonian that describes a two-qubit system here is obtained by considering the nuclear Zeeman effect, chemical shifts (due to the orbital motion of nearby electrons) and $J$-coupling (interaction between the
nuclear magnetic dipole moments of neighbor nuclei), which are represented by the following secular Hamiltonian (for spin-$\frac{1}{2}$ nuclei)
\begin{equation}
	\Hcalb = \hbar \omega_{0}^{(1)} I_{z}^{(1)} + \hbar \omega_{0}^{(2)} I_{z}^{(2)}+ 2\pi \hbar J_{12} I_{z}^{(1)}I_{z}^{(2)}
\end{equation}
where $I_{z}^{(n)}= (1/2) \sigma_{z}^{(n)}$ is the nuclear spin operators along the quantization axes for the $n$-th nuclei. Therefore, since the $J$-coupling can be used to implement any controlled phase gate~\cite{Santos:18-a}, it is possible to implement universal quantum computation with these number of single and two-qubit gates discussed here.

\subsection{Pulses composition for adiabatic and TQD quantum gates} \label{ApExpPuls}

From a Trotter decomposition approach for a time-dependent quantum Hamiltonian (see, e.g., Ref.~\cite{Nielsen:Book}), we use the pulse sequence 
shown in Fig.~\ref{Gates}{\color{blue}a} to implement each Hamiltonian in Eqs.~\eqref{Hz0},~\eqref{HzTQD} and~\eqref{HzOptTQD}. 
The algorithm is composed by rotations $(\vartheta)_{\varphi_{n}}$ and free evolutions. The rotations are implemented of an angle $\vartheta$ around a direction $\hat{\varphi}_{n} = \cos \varphi_{n} \hat{\text{x}}+\sin \varphi_{n} \hat{\text{y}}$, whose the rotation operator reads~\cite{Nielsen:Book}
\begin{equation}
	(\vartheta)_{\varphi_{n}} = \exp \left[ -\frac{i \vartheta}{2} \hat{\varphi}_{n}\cdot \vec{\sigma} \right] \text{ , }
\end{equation}
where $\vec{\sigma} = \sigma_{x} \hat{\text{x}} + \sigma_{y} \hat{\text{y}} + \sigma_{z} \hat{\text{z}}$, with $\sigma_{n}$ being the Pauli matrices. Each direction $\varphi_{n}$ of the rotations in Fig.~\ref{Gates}{\color{blue}a}-I are obtained from
\begin{equation}
	\varphi_{1} = \pi + \chi_{n} \text{ , } \quad \varphi_{2} = \varphi_{4} = \chi_{n} \text{ , } \quad \varphi_{3} = \frac{\pi}{2} + \chi_{n} \text{ , }
\end{equation}
with
\begin{equation}
	\chi_{n} = \sum_{\ell = n+1}^{N} \left[ -\frac{4\pi\nu\tau}{N-1} \cos \left( \pi \frac{t_{k}}{\tau} \right) + \frac{\pi}{2}\right] \text{ . }
\end{equation}

The free evolution used in circuits of the Figs.~\ref{Gates}{\color{blue}a}-I and ~\ref{Gates}{\color{blue}a}-II appear as a consequence of the natural spin evolution 
of the nuclei in our chloroform molecule, so that the evolution operator reads as
\begin{equation}
	U_{\text{free}}(\Delta t) = e^{-\frac{i}{\hbar} H_{\text{free}} \Delta t } \text{ , }
\end{equation}
where 
\begin{align}
	H_{\text{free}} = \hbar \frac{J}{2} \sigma_{z}^{(\text{C})}\sigma_{z}^{(\text{H})} \text{ . }
\end{align}
In our experiment, the interaction between the spins is constant with strength $J = 215$~Hz, so that we need to manipulate some additional parameters in order to implement different values of $\tau$ of the interaction term in Eqs.~\eqref{EqHExpNMRQuantumGates}. To this end, we use a free evolution for coupled spins so that we can map the parameter $\tau$ from two parameter $J$ and the time intervals $\Delta t$ of each free evolutions. This relation is obtained as
\begin{align}
	\Delta t_{1,n} &= \frac{\tau}{2(N - 1)} \left[ \frac{4\nu}{J} \sin \left( \pi \frac{t_{k}}{\tau} \right) + 1 \right] \text{ , } \quad \nonumber \\
	\Delta t_{2,n} &= \frac{\tau}{N - 1} - \Delta t_{1,n} \text{ , } \quad \nonumber \\
	\Delta t_{3,n} &= \frac{\tau}{2(N - 1)} \left( \frac{1}{J\tau} +1 \right) \text{ , } \quad \nonumber \\
	\Delta t_{4,n} &= \frac{\tau}{N - 1} - \Delta t_{3,n} \text{ . }
\end{align}
For Fig.~\ref{Gates}{\color{blue}a}-III, the Hamiltonian is constant 
and the time interval is then given by
\begin{equation}
	\Delta t_{1} = \frac{J\tau + 1}{2J} \quad \text{ and } \quad 
	\Delta t_{2} = \tau - \Delta t_{1} \text{ . }
\end{equation}
Therefore, we can choose the value of transitionless total evolution time $\tau$ according with the parameter $\Delta t$, for a constant value of $J$.

\chapter{Adiabatic dynamics in open system} \label{ApUAd}

\section{Evolution operator for adiabatic dynamics in open system}
Here we study some properties of the evolution operator $\Ucalb (t,t_{0})$ as derived in Eq.~\eqref{EqUAdOS}. First, given the evolution operator $\Ucalb (t,t_{0})$, let us show a necessary and sufficient condition to find $\Ucalb^{-1} (t,t_{0})$ so that $\Ucalb (t,t_{0})\Ucalb^{-1} (t,t_{0})=\1$. To this, let us define
\begin{eqnarray}
	\Ucalb^{-1} (t,t_{0}) = \sum_{\alpha = 1}^{N} \Ucalb_{\alpha}^{-1} (t,t_{0}) \text{ , }
\end{eqnarray}
with each operator $\Ucalb_{\alpha}^{-1} (t,t_{0})$ obtained from
\begin{eqnarray}
	\Ucalb_{\alpha}^{-1} (t,t_{0}) = e^{-\int_{t_{0}}^{t} \lambda_{\alpha}(\xi)d\xi} \sum _{n_{\alpha} = 1}^{N_{\alpha}} \sum _{m_{\alpha} = 1}^{N_{\alpha}} \tilde{u}_{n_{\alpha}m_{\alpha}}(t)\dket{\Dcalb_{\alpha}^{m_{\alpha}}(t_{0})}\dbra{\Ecalb_{\alpha}^{n_{\alpha}}(t)} \text{ , }
\end{eqnarray}
with parameters $\tilde{u}_{n_{\alpha}m_{\alpha}}(t)$ to be determined. This definition is convenient because we can write 
\begin{eqnarray}
	\Ucalb_{\beta} (t,t_{0})\Ucalb_{\alpha}^{-1} (t,t_{0}) = \delta_{\alpha\beta} \Ucalb_{\beta} (t,t_{0}) \Ucalb_{\alpha}^{-1} (t,t_{0}) \text{ , }
	\label{ApEqUortho}
\end{eqnarray}
where we use the orthogonality relation between right- and left-hand side quasi-eigenvectors. Now, we write
\begin{align}
	\Acalb_{1} &= \Ucalb (t,t_{0})\Ucalb^{-1} (t,t_{0}) = \sum_{\alpha = 1}^{N} \sum_{\beta = 1}^{N} \Ucalb_{\alpha} (t,t_{0}) \Ucalb_{\beta}^{-1} (t,t_{0}) = \sum_{\alpha = 1}^{N} \Ucalb_{\alpha} (t,t_{0}) \Ucalb_{\alpha}^{-1} (t,t_{0}) \text{ , }
\end{align}
where we already used the Eq.~\eqref{ApEqUortho}. Thus
\begin{align}
	\Acalb_{1} &= \sum_{\alpha = 1}^{N} \left[ \left(\sum _{n_{\alpha} = 1}^{N_{\alpha}} \sum _{m_{\alpha} = 1}^{N_{\alpha}} u_{n_{\alpha}m_{\alpha}}(t)\dket{\Dcalb_{\alpha}^{n_{\alpha}}(t)}\dbra{\Ecalb_{\alpha}^{m_{\alpha}}(t_{0})}\right) 
	\left(\sum _{j_{\alpha} = 1}^{N_{\alpha}} \sum _{k_{\alpha} = 1}^{N_{\alpha}} \tilde{u}_{j_{\alpha}k_{\alpha}}(t)\dket{\Dcalb_{\alpha}^{j_{\alpha}}(t_{0})}\dbra{\Ecalb_{\alpha}^{k_{\alpha}}(t)}\right)\right] \nonumber \\
	&= \sum_{\alpha = 1}^{N} \left[ \sum _{n_{\alpha} = 1}^{N_{\alpha}} \sum _{m_{\alpha} = 1}^{N_{\alpha}}
	\sum _{j_{\alpha} = 1}^{N_{\alpha}} \sum _{k_{\alpha} = 1}^{N_{\alpha}}
	u_{n_{\alpha}m_{\alpha}}(t)\tilde{u}_{j_{\alpha}k_{\alpha}}(t)
	\dket{\Dcalb_{\alpha}^{n_{\alpha}}(t)}
	\left( \dinterpro{\Ecalb_{\alpha}^{m_{\alpha}}(t_{0})}{\Dcalb_{\alpha}^{j_{\alpha}}(t_{0})}\right) \dbra{\Ecalb_{\alpha}^{k_{\alpha}}(t)}\right] \nonumber \\
	&= \sum_{\alpha = 1}^{N} \left[ \sum _{n_{\alpha} = 1}^{N_{\alpha}} \sum _{j_{\alpha} = 1}^{N_{\alpha}} \sum _{k_{\alpha} = 1}^{N_{\alpha}}
	u_{n_{\alpha}j_{\alpha}}(t)\tilde{u}_{j_{\alpha}k_{\alpha}}(t)
	\dket{\Dcalb_{\alpha}^{n_{\alpha}}(t)}
	\dbra{\Ecalb_{\alpha}^{k_{\alpha}}(t)}\right] \text{ . }
\end{align}

Now, computing the matrix elements $\dbra{\Ecalb_{\eta}^{p_{\eta}}(t)}\Acalb_{1}\dket{\Dcalb_{\nu}^{m_{\nu}}(t)}$, we get (to simplify the notation, from now on we will omit the time-dependence of the coefficients $u$ and $\tilde{u}$)
\begin{align}
	\dbra{\Ecalb_{\eta}^{p_{\eta}}(t)}\Acalb\dket{\Dcalb_{\nu}^{m_{\nu}}(t)} &= \sum_{\alpha = 1}^{N} \left[ \sum _{n_{\alpha} = 1}^{N_{\alpha}} \sum _{j_{\alpha} = 1}^{N_{\alpha}} \sum _{k_{\alpha} = 1}^{N_{\alpha}}
	u_{n_{\alpha}j_{\alpha}}\tilde{u}_{j_{\alpha}k_{\alpha}}
	\dinterpro{\Ecalb_{\eta}^{p_{\eta}}(t)}{\Dcalb_{\alpha}^{n_{\alpha}}(t)}
	\dinterpro{\Ecalb_{\alpha}^{k_{\alpha}}(t)}{\Dcalb_{\nu}^{m_{\nu}}(t)}\right] \nonumber \\
	&= \sum_{\alpha = 1}^{N} \left[ \sum _{n_{\alpha} = 1}^{N_{\alpha}} \sum _{j_{\alpha} = 1}^{N_{\alpha}} \sum _{k_{\alpha} = 1}^{N_{\alpha}}
	u_{n_{\alpha}j_{\alpha}}\tilde{u}_{j_{\alpha}k_{\alpha}}
	\delta_{p_{\eta}n_{\alpha}} \delta_{k_{\alpha}m_{\nu}} \delta_{\eta\alpha}
	\delta_{\alpha\nu}
	\right] \nonumber \\
	&= \sum_{\alpha = 1}^{N} \left[  \sum _{j_{\alpha} = 1}^{N_{\alpha}} 
	u_{p_{\eta}j_{\alpha}}\tilde{u}_{j_{\alpha}m_{\nu}}
	\delta_{\eta\alpha} \delta_{\alpha\nu}
	\right] = \delta_{\eta\nu} \sum _{j_{\nu} = 1}^{N_{\nu}} 
	u_{p_{\eta}j_{\nu}}\tilde{u}_{j_{\nu}m_{\nu}} \text{ . }
\end{align}

Therefore, to obtain $\Acalb_{1} = \1$ the coefficients need to satisfy
\begin{eqnarray}
	\sum _{j_{\nu} = 1}^{N_{\nu}} 
	u_{p_{\nu}j_{\nu}}\tilde{u}_{j_{\nu}m_{\nu}} = \delta_{p_{\nu}m_{\nu}} \text{ . } \label{ApEqmu1}
\end{eqnarray}

A second important property of the operator $\Ucalb (t,t_{0})$ is related with the Jordan block form of the Lindbladian. To proof it, let us consider a operator $\Acalb_{2}$ given by $\Acalb_{2} = \Ucalb^{-1} (t,t_{0}) \Lmath(t) \Ucalb (t,t_{0})$, so we have
\begin{align}
	\Acalb_{2} &= \Ucalb^{-1} (t,t_{0}) \Lmath(t) \Ucalb (t,t_{0}) = \sum_{\alpha = 1}^{N}\sum_{\beta = 1}^{N} \underbrace{\Ucalb_{\alpha}^{-1} (t,t_{0}) \Lmath(t) \Ucalb_{\beta} (t,t_{0})}_{\Acalb_{2}^{\alpha\beta}} = \sum_{\alpha = 1}^{N}\sum_{\beta = 1}^{N} \Acalb_{2}^{\alpha\beta} \text{ . }
\end{align}

Now let us analyze the each matrix $\Acalb_{2}^{\alpha\beta}$
\begin{align}
	\Acalb_{2}^{\alpha\beta} &= \Ucalb_{\alpha}^{-1} (t,t_{0}) \Lmath(t) \Ucalb_{\beta} (t,t_{0}) 
	\nonumber \\
	&= 
	e^{\int_{t_{0}}^{t} \lambda_{\beta}(\xi)-\lambda_{\alpha}(\xi)d\xi}
	\sum _{j_{\alpha} = 1}^{N_{\alpha}} \sum _{k_{\alpha} = 1}^{N_{\alpha}}
	\sum _{n_{\beta} = 1}^{N_{\beta}} \sum _{p_{\beta} = 1}^{N_{\beta}}
	\tilde{u}_{j_{\alpha}k_{\alpha}} u_{n_{\beta}p_{\beta}}
	\dbra{\Ecalb_{\alpha}^{k_{\alpha}}(t)}\Lmath(t) \dket{\Dcalb_{\beta}^{n_{\beta}}(t)}\dket{\Dcalb_{\alpha}^{j_{\alpha}}(t_{0})}\dbra{\Ecalb_{\beta}^{p_{\beta}}(t_{0})} 
	\nonumber \\
	&=
	e^{\int_{t_{0}}^{t} \lambda_{\beta}(\xi)-\lambda_{\alpha}(\xi)d\xi}
	\sum _{j_{\alpha} = 1}^{N_{\alpha}} \sum _{k_{\alpha} = 1}^{N_{\alpha}}
	\sum _{n_{\beta} = 1}^{N_{\beta}} \sum _{p_{\beta} = 1}^{N_{\beta}}
	\tilde{u}_{j_{\alpha}k_{\alpha}} u_{n_{\beta}p_{\beta}}\lambda_{\alpha}(t)
	\dinterpro{\Ecalb_{\alpha}^{k_{\alpha}}(t)}{\Dcalb_{\beta}^{n_{\beta}}(t)}
	\dket{\Dcalb_{\alpha}^{j_{\alpha}}(t_{0})}\dbra{\Ecalb_{\beta}^{p_{\beta}}(t_{0})}
	\nonumber\\
	&+
	e^{\int_{t_{0}}^{t} \lambda_{\beta}(\xi)-\lambda_{\alpha}(\xi)d\xi}
	\sum _{j_{\alpha} = 1}^{N_{\alpha}} \sum _{k_{\alpha} = 1}^{N_{\alpha}}
	\sum _{n_{\beta} = 1}^{N_{\beta}} \sum _{p_{\beta} = 1}^{N_{\beta}}
	\tilde{u}_{j_{\alpha}k_{\alpha}} u_{n_{\beta}p_{\beta}}
	\dinterpro{\Ecalb_{\alpha}^{k_{\alpha}}(t)}{\Dcalb_{\beta}^{(n_{\beta}-1)}(t)}
	\dket{\Dcalb_{\alpha}^{j_{\alpha}}(t_{0})}\dbra{\Ecalb_{\beta}^{p_{\beta}}(t_{0})} \text{ , }
	\nonumber
\end{align}
where we already used the quasi-eigenvalue equation. Thus, due orthogonality of the basis we write
\begin{align}
	\Acalb_{2}^{\alpha\beta}
	&=
	e^{\int_{t_{0}}^{t} \lambda_{\beta}(\xi)-\lambda_{\alpha}(\xi)d\xi}
	\sum _{j_{\alpha} = 1}^{N_{\alpha}} \sum _{k_{\alpha} = 1}^{N_{\alpha}}
	\sum _{n_{\beta} = 1}^{N_{\beta}} \sum _{p_{\beta} = 1}^{N_{\beta}}
	\tilde{u}_{j_{\alpha}k_{\alpha}} u_{n_{\beta}p_{\beta}}\lambda_{\alpha}(t)
	\delta_{\alpha\beta}\delta_{k_{\alpha}n_{\beta}}
	\dket{\Dcalb_{\alpha}^{j_{\alpha}}(t_{0})}\dbra{\Ecalb_{\beta}^{p_{\beta}}(t_{0})}
	\nonumber\\
	&+
	e^{\int_{t_{0}}^{t} \lambda_{\beta}(\xi)-\lambda_{\alpha}(\xi)d\xi}
	\sum _{j_{\alpha} = 1}^{N_{\alpha}} \sum _{k_{\alpha} = 1}^{N_{\alpha}}
	\sum _{n_{\beta} = 1}^{N_{\beta}} \sum _{p_{\beta} = 1}^{N_{\beta}}
	\tilde{u}_{j_{\alpha}k_{\alpha}} u_{n_{\beta}p_{\beta}}
	\delta_{\alpha \beta} \delta_{k_{\alpha}(n_{\beta}-1)}
	\dket{\Dcalb_{\alpha}^{j_{\alpha}}(t_{0})}\dbra{\Ecalb_{\beta}^{p_{\beta}}(t_{0})} \text{ , }
	\nonumber
\end{align}
and due the $\delta_{\alpha \beta}$, we can write
\begin{align}
	\Acalb_{2} &= 
	\sum_{\alpha = 1}^{N}
	\sum _{j_{\alpha} = 1}^{N_{\alpha}} \sum _{k_{\alpha} = 1}^{N_{\alpha}}
	\sum _{n_{\alpha} = 1}^{N_{\alpha}} \sum _{p_{\alpha} = 1}^{N_{\alpha}}
	\tilde{u}_{j_{\alpha}k_{\alpha}} u_{n_{\alpha}p_{\alpha}}\lambda_{\alpha}(t)
	\delta_{k_{\alpha}n_{\alpha}}
	\dket{\Dcalb_{\alpha}^{j_{\alpha}}(t_{0})}\dbra{\Ecalb_{\alpha}^{p_{\alpha}}(t_{0})}
	\nonumber\\
	&+
	\sum_{\alpha = 1}^{N}
	\sum _{j_{\alpha} = 1}^{N_{\alpha}} \sum _{k_{\alpha} = 1}^{N_{\alpha}}
	\sum _{n_{\alpha} = 1}^{N_{\alpha}} \sum _{p_{\alpha} = 1}^{N_{\alpha}}
	\tilde{u}_{j_{\alpha}k_{\alpha}} u_{n_{\alpha}p_{\alpha}}
	\delta_{k_{\alpha}(n_{\alpha}-1)}
	\dket{\Dcalb_{\alpha}^{j_{\alpha}}(t_{0})}\dbra{\Ecalb_{\alpha}^{p_{\alpha}}(t_{0})}
	\nonumber\\
	&= 
	\sum_{\alpha = 1}^{N}
	\sum _{j_{\alpha} = 1}^{N_{\alpha}}
	\sum _{n_{\alpha} = 1}^{N_{\alpha}} \sum _{p_{\alpha} = 1}^{N_{\alpha}} \left( \tilde{u}_{j_{\alpha}n_{\alpha}} u_{n_{\alpha}p_{\alpha}}\lambda_{\alpha}(t)
	+ \tilde{u}_{j_{\alpha}(n_{\alpha}-1)} u_{n_{\alpha}p_{\alpha}}\right)
	\dket{\Dcalb_{\alpha}^{j_{\alpha}}(t_{0})}\dbra{\Ecalb_{\alpha}^{p_{\alpha}}(t_{0})} \text{ . }
\end{align}

By computing the matrix elements of $\Acalb_{2}$ in basis $\{ \dket{\Dcalb_{\alpha}^{j_{\alpha}}(t_{0})},\dbra{\Ecalb_{\alpha}^{p_{\alpha}}(t_{0})} \}$, we get
\begin{align}
	\dbra{\Ecalb_{\eta}^{g_{\eta}}(t_{0})}\Acalb_{2}\dket{\Dcalb_{\nu}^{l_{\nu}}(t_{0})} &= 
	\sum_{\alpha = 1}^{N}
	\sum _{j_{\alpha} = 1}^{N_{\alpha}}
	\sum _{n_{\alpha} = 1}^{N_{\alpha}} \sum _{p_{\alpha} = 1}^{N_{\alpha}} \left( \tilde{u}_{j_{\alpha}n_{\alpha}} u_{n_{\alpha}p_{\alpha}}\lambda_{\alpha}(t)
	+ \tilde{u}_{j_{\alpha}(n_{\alpha}-1)} u_{n_{\alpha}p_{\alpha}}\right)
	\delta_{\eta\alpha}\delta_{g_{\eta}j_{\alpha}}
	\delta_{\alpha\nu}\delta_{p_{\alpha}l_{\nu}} \nonumber \\
	&= 
	\sum _{n_{\eta} = 1}^{N_{\eta}} \left( \tilde{u}_{g_{\eta}n_{\eta}} u_{n_{\eta}l_{\nu}}\lambda_{\eta}(t)
	+ \tilde{u}_{g_{\eta}(n_{\eta}-1)} u_{n_{\eta}l_{\nu}}\right)
	\delta_{\eta\nu} \text{ . }
\end{align}

As a first result, we can see that $\dbra{\Ecalb_{\eta}^{g_{\eta}}(t_{0})}\Acalb_{2}\dket{\Dcalb_{\nu}^{l_{\nu}}(t_{0})} = 0$ for any element outside some block, it means that $\Acalb_{2}$ is block diagonal in this basis. Thus, the elements of a single block are
\begin{align}
	\dbra{\Ecalb_{\nu}^{g_{\nu}}(t_{0})}\Acalb_{2}\dket{\Dcalb_{\nu}^{l_{\nu}}(t_{0})}
	&= \lambda_{\nu}(t)
	\sum _{n_{\nu} = 1}^{N_{\nu}} \tilde{u}_{g_{\nu}n_{\nu}} u_{n_{\nu}l_{\nu}}
	+ \sum _{n_{\nu} = 1}^{N_{\nu}} \tilde{u}_{g_{\nu}(n_{\nu}-1)} u_{n_{\nu}l_{\nu}} \text{ . }
\end{align}

The above equation provides us some new conditions on the coefficients which allows us to identify the elements $\dbra{\Ecalb_{\nu}^{g_{\nu}}(t_{0})}\Acalb_{2}\dket{\Dcalb_{\nu}^{l_{\nu}}(t_{0})}$ as the elements of the Jordan-Block form for $\Lmath(t)$. In fact, if we impose that
\begin{eqnarray}
	\sum _{n_{\nu} = 1}^{N_{\nu}} \tilde{u}_{l_{\nu}(n_{\nu}-1)} u_{n_{\nu}l_{\nu}} = 0 \quad \text{ and } \quad \sum _{n_{\nu} = 1}^{N_{\nu}} \tilde{u}_{l_{\nu}n_{\nu}} u_{n_{\nu}l_{\nu}} = 1 \label{ApEqmu01} \text{ , }
\end{eqnarray}
we find the diagonal elements of a single block as
\begin{align}
	\dbra{\Ecalb_{\nu}^{l_{\nu}}(t_{0})}\Acalb_{2}\dket{\Dcalb_{\nu}^{l_{\nu}}(t_{0})}
	&= \lambda_{\nu}(t) \text{ . }
\end{align}

It is important to remark that the second condition in Eq.~\eqref{ApEqmu01} does not imply in condition in Eq.~\eqref{ApEqmu1}. In case we are interested in get a Jordan block decomposition, the first element after a diagonal element should be $1$, that is, a third condition should provides
\begin{eqnarray}
	\dbra{\Ecalb_{\nu}^{l_{\nu}}(t_{0})}\Acalb_{2}\dket{\Dcalb_{\nu}^{l_{\nu}+1}(t_{0})} = 1 \quad \text{ and } \quad \dbra{\Ecalb_{\nu}^{N_{\nu}}(t_{0})}\Acalb_{2}\dket{\Dcalb_{\nu}^{N_{\nu}+1}(t_{0})} = 0 \text{ , }
\end{eqnarray}
where the last condition is associated with last element in Jordan block. However, the last equation is automatically satisfied because $\dket{\Dcalb_{\nu}^{N_{\nu}+1}(t_{0})} = 0$, for construction. Therefore, we just need to have
\begin{eqnarray}
	\lambda_{\nu}(t)
	\sum _{n_{\nu} = 1}^{N_{\nu}} \tilde{u}_{l_{\nu}n_{\nu}} u_{n_{\nu}(l_{\nu}+1)}
	+ \sum _{n_{\nu} = 1}^{N_{\nu}} \tilde{u}_{l_{\nu}(n_{\nu}-1)} u_{n_{\nu}(l_{\nu}+1)} = 1 \text{ , }
\end{eqnarray}
whose a possible solution is
\begin{eqnarray}
	\sum _{n_{\nu} = 1}^{N_{\nu}} \tilde{u}_{l_{\nu}n_{\nu}} u_{n_{\nu}(l_{\nu}+1)} = 0 \quad \text{ and } \quad \sum _{n_{\nu} = 1}^{N_{\nu}} \tilde{u}_{l_{\nu}(n_{\nu}-1)} u_{n_{\nu}(l_{\nu}+1)} = 1 \text{ . } \label{ApEqmuND01}
\end{eqnarray}

\chapter{Complementary discussion on transitionless quantum driving} \label{ApTQD}

\section{Transitionless quantum driving in closed systems} \label{ApProofTheoOptTQD}

As a first task, let us proof the Theorem~\ref{TheoOptmEner}. To this end, consider a $D$-dimensional quantum system and adopt as a measure of energy cost the Hamiltonian Hilbert-Schmidt norm, which reads
\begin{align}
	\Sigma_{\text{tqd}} \left( \tau \right) =\frac{1}{\tau }\int_{0}^{\tau }\sqrt{\text{Tr}%
		\left[ {H_\text{tqd}}^{2}\left( t\right) \right] }\text{ }dt \text{ ,} \label{CostSI}
\end{align}%
Then, we obtain
\begin{align}
	H_{\text{tqd}}^{2}\left( t\right) = \hbar ^{2}\sum_{n=1}^{D} \left[ \frac{}{}|\dot{E}_{n}(t)\rangle \langle \dot{E}_{n}(t)|+\theta
	_{n}^{2}\left( t\right) |E_{n}(t)\rangle \langle E_{n}(t)|+ i\theta _{n}\left(
	t\right) \left( |E_{n}(t)\rangle \langle \dot{E}_{n}(t)|-|\dot{E}_{n}(t)\rangle
	\langle E_{n}(t)|\right) \right] \text{ .}
	\label{HSA2}
\end{align}
By taking the trace of $H_{\text{tqd}}^{2}\left( t\right) $ in Eq.~(\ref{HSA2}), we have%
\begin{align}
	\text{Tr} \left[ H_{\text{tqd}}^{2}\left( t\right) \right] =\sum_{m=1}^{D}\langle E_{m}(t)|H_{\text{tqd}}^{2}\left( t\right) |E_{m}(t)\rangle = \hbar ^{2}\sum_{n=1}^{D} \left[ \langle \dot{E}_{n}(t)|\dot{E}_{n}(t)\rangle +\theta _{n}^{2}\left( t\right) +2i\theta _{n}\left(
	t\right) \langle \dot{E}_{n}(t)|E_{n}(t)\rangle \right] \text{ .}  \label{TrH2}
\end{align}
Then
\begin{align}
	\Sigma_{ \text{\text{tqd}} } \left( \tau \right) =\frac{1}{\tau }\int_{0}^{\tau }\sqrt{\sum_{n=1}^{D} \left[\langle \dot{E}_{n}(t)|\dot{E}_{n}(t)\rangle + \Gamma_{n}(\theta_{n})\right] }\text{ }dt \text{ ,} \label{Supercost}
\end{align}
where we have $\Gamma_{n}(\theta_{n})=\theta_{n}^{2}\left( t\right) +2i\theta _{n}\left( t\right)  \langle \dot{E}_{n}(t)|E_{n}(t) \rangle$. We can now find out the functions $\theta _{n} \left( t\right)$
that minimize the energy cost in transitionless evolutions. For this
end, we minimize the quantity $\Sigma _{\text{tqd}}\left( \tau \right) $
for the Hamiltonian $H_{\text{tqd}}\left( t\right) $ with
respect to parameters $\theta _{n}\left( t\right) $, where we will adopt it
being independents. By evaluating $\partial _{\theta _{n}}\Sigma \left( \tau
\right) $, we obtain%
\begin{align}
	\partial _{\theta _{n}}\Sigma _{\text{tqd}}\left( \tau \right) =\frac{1}{%
		2\tau }\int_{0}^{\tau }\frac{\partial _{\theta _{n}}\{\text{Tr}[H_{\text{tqd}%
		}^{2}\left( t\right) ]\}}{\sqrt{\text{Tr}\left[ {H}_{\text{tqd}}^{2}\left( t\right) %
			\right] }}\text{ }dt \text{ .} 
\end{align}
We then impose $\partial _{\theta _{n}}\{$Tr$[H_{%
	\text{tqd}}^{2}\left( t\right) ]\}=0$ for all time $t\in \left[ 0,\tau \right]$, 
which ensures $\partial _{\theta _{n}}\Sigma _{\text{tqd}}\left( \tau \right)=0$. 
Thus, by using Eq. (\ref{TrH2}), we write%
\begin{align}
	\partial _{\theta _{n}}\{\text{Tr}[H_{\text{tqd}}^{2}\left( t\right) ]\}=2\theta
	_{n}\left( t\right) +2i\langle \dot{E}_{n}(t)|E_{n}(t)\rangle = 0
	\text{ .}
\end{align}
This implies
\begin{align}
	\theta _{n}\left( t\right) = \theta ^{\text{min}}_{n}\left( t\right) = -i\langle \dot{E}_{n}(t)|E_{n}(t)\rangle \text{ .}
	\label{OptTetaAp}
\end{align}
From the second derivative analysis, it follows that the choice for $\theta _{n}\left( t\right)$ as in Eq.~(\ref{OptTetaAp})  
necessarily minimizes the energy cost, namely, $\partial ^{2} _{\theta _{n}} \Sigma _{\text{tqd}}\left( \tau \right) \vert _{\theta _{n} = \theta ^{\text{min}}_{n}}>0$, which concludes the proof.

Now, to proof the Theorem~\ref{TheoTimeIndep} we need to take the time derivative of the Hamiltonian $H_{\text{SA}}(t)$, where we obtain
\begin{align}
	\dot{H}_{\text{SA}}\left( t\right) =i  \sum_{n=1}^{D}\frac{d}{dt}\left[ \frac{%
		{}}{{}}|\dot{E}_{n}(t)\rangle \langle E_{n}(t)|+i\theta _{n}\left( t\right)
	|E_{n}(t)\rangle \langle E_{n}(t)|\frac{{}}{{}}\right] \text{ .}
\end{align}
Then, the matrix elements of $\dot{H}_{\text{SA}}\left( t\right) $ 
in the eigenbasis $\left\{ |E_{m}(t)\rangle \right\} $ of the Hamiltonian $H_0(t)$ read
\begin{align}
	\langle E_{k}(t)|\dot{H}_{\text{SA}}\left( t\right) |E_{m}(t)\rangle  
	&=i  \langle E_{k}(t)|\ddot{E}_{m}(t)\rangle +i  \sum_{n=1}^{D}\langle E_{k}(t)|\dot{E}_{n}(t)\rangle
	\langle \dot{E}_{n}(t)|E_{m}(t)\rangle  \nonumber \\
	&-  \left[ \dot{\theta}	_{k}\left( t\right) \delta _{km} 
	+\theta _{m}\left( t\right) \langle E_{k}(t)|%
	\dot{E}_{m}(t)\rangle +\theta _{k}\left( t\right) \langle \dot{E}_{k}(t)|E_{m}(t)\rangle \right] .  
\end{align}

Now, in second term of the above equation we can use that $\langle E_{k}(t)|\dot{E}_{n}(t)\rangle =-\langle \dot{E}_{k}(t)|E_{n}(t)\rangle $, to write $\langle E_{k}(t)|\dot{E}_{n}(t)\rangle \langle \dot{E}_{n}(t)|E_{m}(t)\rangle =\langle \dot{E}_{k}(t)|E_{n}(t)\rangle \langle E_{n}(t)|\dot{E}_{m}(t)\rangle $ and thus%
\begin{align}
	\langle E_{k}(t)|\dot{H}_{\text{SA}}\left( t\right) |E_{m}(t)\rangle 
	&=i  \frac{d}{d{t}}\left[ \langle E_{k}(t)|\dot{E}_{m}(t)\rangle \right]  \nonumber \\
	&- 
	\left\{ \dot{\theta}_{k}\left( t\right) \delta _{km}+\left[ \theta
	_{m}\left( t\right) -\theta _{k}\left( t\right) \right] \langle E_{k}(t)|\dot{E}_{m}(t)\rangle \right\} \text{ .} \label{finaleEq}
\end{align}%
For $k=m$ in Eq.~(\ref{finaleEq}), 
we impose the vanishing of the diagonal elements of $\dot{H}_{\text{SA}}(t)$, namely, 
$\langle E_{k}(t)|\dot{H}_{\text{SA}}\left(t\right) |E_{k}(t)\rangle =0$. This yields
\begin{align}
	\dot{\theta}_{m}\left( t\right) =i\frac{d}{dt}\left[ \langle E_{m}(t)|\dot{E}_{m}(t)\rangle \right] \text{ ,} \label{DiagSI}
\end{align}%
On the other hand, for $k \neq m$ in Eq. (\ref{finaleEq}), we now impose the vanishing 
of the off-diagonal elements of $\dot{H}_{\text{SA}}(t)$, 
namely, $\langle E_{k}(t)|\dot{H}_{\text{SA}}\left(
t\right) |E_{m}(t)\rangle =0$ $(k \ne m)$. This yields 
\begin{align}
	i\frac{d}{dt}\left[ \langle E_{k}(t)|\dot{E}_{m}(t)\rangle \right] =\left[ \theta
	_{m}\left( t\right) -\theta _{k}\left( t\right) \right] \langle E_{k}(t)|\dot{E}_{m}(t)\rangle \ \ \text{ for } k \neq m \text{ .}  \label{NonDiagSI}
\end{align}%
By taking $\langle E_m (t) |\dot{E}_{m}(t)\rangle \equiv c_{mm}$ in Eq.~(\ref{DiagSI}), with $c_{mm}$ 
denoting by hypothesis complex constants, we get 
$\theta _{m}\left( t\right) =\theta _{m}\left(0\right) \equiv \theta_{m}$, namely, $\theta _{m}\left( t\right)$ is 
a constant function $\forall$ $m$. Moreover, by using that 
$\langle E_k (t) |\dot{E}_{m}(t)\rangle \equiv c_{km}$ in Eq.~(\ref{NonDiagSI}), with $c_{km}$ denoting 
{\it nonvanishing} complex constants, we obtain $\theta _{k} = \theta _{m}$, $\forall$ $k,m$. 
If $c_{km}=0$, then $\theta _{k}$ and $\theta _{m}$ are not necessarily equal, but Eq.~(\ref{NonDiagSI}) 
will also be satisfied by this choice. Therefore, it follows that $\theta_m(t)$ can be simply taken as
\begin{align}
	\theta_m(t) =\theta_m = \theta \,\,\, \forall m \, ,
\end{align} 
with $\theta$ a single real constant. This concludes the proof.

%

\backmatter

\begin{thebibliography}{179}%
	\makeatletter
	\providecommand \@ifxundefined [1]{%
		\@ifx{#1\undefined}
	}%
	\providecommand \@ifnum [1]{%
		\ifnum #1\expandafter \@firstoftwo
		\else \expandafter \@secondoftwo
		\fi
	}%
	\providecommand \@ifx [1]{%
		\ifx #1\expandafter \@firstoftwo
		\else \expandafter \@secondoftwo
		\fi
	}%
	\providecommand \natexlab [1]{#1}%
	\providecommand \enquote  [1]{``#1''}%
	\providecommand \bibnamefont  [1]{#1}%
	\providecommand \bibfnamefont [1]{#1}%
	\providecommand \citenamefont [1]{#1}%
	\providecommand \href@noop [0]{\@secondoftwo}%
	\providecommand \href [0]{\begingroup \@sanitize@url \@href}%
	\providecommand \@href[1]{\@@startlink{#1}\@@href}%
	\providecommand \@@href[1]{\endgroup#1\@@endlink}%
	\providecommand \@sanitize@url [0]{\catcode `\\12\catcode `\$12\catcode
		`\&12\catcode `\#12\catcode `\^12\catcode `\_12\catcode `\%12\relax}%
	\providecommand \@@startlink[1]{}%
	\providecommand \@@endlink[0]{}%
	\providecommand \url  [0]{\begingroup\@sanitize@url \@url }%
	\providecommand \@url [1]{\endgroup\@href {#1}{\urlprefix }}%
	\providecommand \urlprefix  [0]{URL }%
	\providecommand \Eprint [0]{\href }%
	\providecommand \doibase [0]{http://dx.doi.org/}%
	\providecommand \selectlanguage [0]{\@gobble}%
	\providecommand \bibinfo  [0]{\@secondoftwo}%
	\providecommand \bibfield  [0]{\@secondoftwo}%
	\providecommand \translation [1]{[#1]}%
	\providecommand \BibitemOpen [0]{}%
	\providecommand \bibitemStop [0]{}%
	\providecommand \bibitemNoStop [0]{.\EOS\space}%
	\providecommand \EOS [0]{\spacefactor3000\relax}%
	\providecommand \BibitemShut  [1]{\csname bibitem#1\endcsname}%
	\let\auto@bib@innerbib\@empty
	\bibitem [{\citenamefont {Kato}(1950)}]{Kato:50}%
	\BibitemOpen
	\bibfield  {author} {\bibinfo {author} {\bibfnamefont {T.}~\bibnamefont
			{Kato}},\ }\enquote{\bibinfo {title} {On the adiabatic theorem of quantum
			mechanics}},\ \href {\doibase 10.1143/JPSJ.5.435} {\bibfield  {journal}
		{\bibinfo  {journal} {Journal of the Physical Society of Japan}\ }\textbf
		{\bibinfo {volume} {5}},\ \bibinfo {pages} {435} (\bibinfo {year}
		{1950})}\BibitemShut {NoStop}%
	\bibitem [{\citenamefont {Born}\ and\ \citenamefont {Fock}(1928)}]{Born:28}%
	\BibitemOpen
	\bibfield  {author} {\bibinfo {author} {\bibfnamefont {M.}~\bibnamefont
			{Born}}\ and\ \bibinfo {author} {\bibfnamefont {V.}~\bibnamefont {Fock}},\
	}\enquote{\bibinfo {title} {Beweis des adiabatensatzes}},\ \href {\doibase
		10.1007/BF01343193} {\bibfield  {journal} {\bibinfo  {journal} {Zeitschrift
				f{\"u}r Physik A Hadrons and Nuclei}\ }\textbf {\bibinfo {volume} {51}},\
		\bibinfo {pages} {165} (\bibinfo {year} {1928})}\BibitemShut {NoStop}%
	\bibitem [{\citenamefont {Messiah}(1962)}]{Messiah:Book}%
	\BibitemOpen
	\bibfield  {author} {\bibinfo {author} {\bibfnamefont {A.}~\bibnamefont
			{Messiah}},\ }\href {https://archive.org/details/QuantumMechanicsVolumeI}
	{\emph {\bibinfo {title} {Quantum Mechanics}}},\ Quantum Mechanics\ (\bibinfo
	{publisher} {North-Holland Publishing Company},\ \bibinfo {year}
	{1962})\BibitemShut {NoStop}%
	\bibitem [{\citenamefont {Farhi}\ \emph {et~al.}(2001)\citenamefont {Farhi},
		\citenamefont {Goldstone}, \citenamefont {Gutmann}, \citenamefont {Lapan},
		\citenamefont {Lundgren},\ and\ \citenamefont {Preda}}]{Farhi:01}%
	\BibitemOpen
	\bibfield  {author} {\bibinfo {author} {\bibfnamefont {E.}~\bibnamefont
			{Farhi}}, \bibinfo {author} {\bibfnamefont {J.}~\bibnamefont {Goldstone}},
		\bibinfo {author} {\bibfnamefont {S.}~\bibnamefont {Gutmann}}, \bibinfo
		{author} {\bibfnamefont {J.}~\bibnamefont {Lapan}}, \bibinfo {author}
		{\bibfnamefont {A.}~\bibnamefont {Lundgren}}, \ and\ \bibinfo {author}
		{\bibfnamefont {D.}~\bibnamefont {Preda}},\ }\enquote{\bibinfo {title} {A
			quantum adiabatic evolution algorithm applied to random instances of an
			np-complete problem}},\ \href {\doibase 10.1126/science.1057726} {\bibfield
		{journal} {\bibinfo  {journal} {Science}\ }\textbf {\bibinfo {volume}
			{292}},\ \bibinfo {pages} {472} (\bibinfo {year} {2001})}\BibitemShut
	{NoStop}%
	\bibitem [{\citenamefont {Sarandy}\ and\ \citenamefont
		{Lidar}(2005{\natexlab{a}})}]{Sarandy:05-2}%
	\BibitemOpen
	\bibfield  {author} {\bibinfo {author} {\bibfnamefont {M.~S.}\ \bibnamefont
			{Sarandy}}\ and\ \bibinfo {author} {\bibfnamefont {D.~A.}\ \bibnamefont
			{Lidar}},\ }\enquote{\bibinfo {title} {Adiabatic quantum computation in open
			systems}},\ \href {\doibase 10.1103/PhysRevLett.95.250503} {\bibfield
		{journal} {\bibinfo  {journal} {Phys. Rev. Lett.}\ }\textbf {\bibinfo
			{volume} {95}},\ \bibinfo {pages} {250503} (\bibinfo {year}
		{2005}{\natexlab{a}})}\BibitemShut {NoStop}%
	\bibitem [{\citenamefont {{Aharonov}}\ \emph {et~al.}(2004)\citenamefont
		{{Aharonov}}, \citenamefont {{van Dam}}, \citenamefont {{Kempe}},
		\citenamefont {{Landau}}, \citenamefont {{Lloyd}},\ and\ \citenamefont
		{{Regev}}}]{Aharonov:04}%
	\BibitemOpen
	\bibfield  {author} {\bibinfo {author} {\bibfnamefont {D.}~\bibnamefont
			{{Aharonov}}}, \bibinfo {author} {\bibfnamefont {W.}~\bibnamefont {{van
					Dam}}}, \bibinfo {author} {\bibfnamefont {J.}~\bibnamefont {{Kempe}}},
		\bibinfo {author} {\bibfnamefont {Z.}~\bibnamefont {{Landau}}}, \bibinfo
		{author} {\bibfnamefont {S.}~\bibnamefont {{Lloyd}}}, \ and\ \bibinfo
		{author} {\bibfnamefont {O.}~\bibnamefont {{Regev}}},\ }in\ \href {\doibase
		10.1109/FOCS.2004.8} {\emph {\bibinfo {booktitle} {45th Annual IEEE Symposium
				on Foundations of Computer Science}}}\ (\bibinfo {year} {2004})\ pp.\
	\bibinfo {pages} {42--51}\BibitemShut {NoStop}%
	\bibitem [{\citenamefont {Steffen}\ \emph {et~al.}(2003)\citenamefont
		{Steffen}, \citenamefont {van Dam}, \citenamefont {Hogg}, \citenamefont
		{Breyta},\ and\ \citenamefont {Chuang}}]{Steffen:03}%
	\BibitemOpen
	\bibfield  {author} {\bibinfo {author} {\bibfnamefont {M.}~\bibnamefont
			{Steffen}}, \bibinfo {author} {\bibfnamefont {W.}~\bibnamefont {van Dam}},
		\bibinfo {author} {\bibfnamefont {T.}~\bibnamefont {Hogg}}, \bibinfo {author}
		{\bibfnamefont {G.}~\bibnamefont {Breyta}}, \ and\ \bibinfo {author}
		{\bibfnamefont {I.}~\bibnamefont {Chuang}},\ }\enquote{\bibinfo {title}
		{Experimental implementation of an adiabatic quantum optimization
			algorithm}},\ \href {\doibase 10.1103/PhysRevLett.90.067903} {\bibfield
		{journal} {\bibinfo  {journal} {Phys. Rev. Lett.}\ }\textbf {\bibinfo
			{volume} {90}},\ \bibinfo {pages} {067903} (\bibinfo {year}
		{2003})}\BibitemShut {NoStop}%
	\bibitem [{\citenamefont {Barends}\ \emph {et~al.}(2016)\citenamefont
		{Barends}, \citenamefont {Shabani}, \citenamefont {Lamata}, \citenamefont
		{Kelly}, \citenamefont {Mezzacapo}, \citenamefont {Heras}, \citenamefont
		{Babbush}, \citenamefont {Fowler}, \citenamefont {Campbell}, \citenamefont
		{Chen}, \citenamefont {Chen}, \citenamefont {Chiaro}, \citenamefont
		{Dunsworth}, \citenamefont {Jeffrey}, \citenamefont {Lucero}, \citenamefont
		{Megrant}, \citenamefont {Mutus}, \citenamefont {Neeley}, \citenamefont
		{Neill}, \citenamefont {O’Malley}, \citenamefont {Quintana}, \citenamefont
		{Roushan}, \citenamefont {Sank}, \citenamefont {Vainsencher}, \citenamefont
		{Wenner}, \citenamefont {White}, \citenamefont {Solano}, \citenamefont
		{Neven},\ and\ \citenamefont {Martinis}}]{Barends:16}%
	\BibitemOpen
	\bibfield  {author} {\bibinfo {author} {\bibfnamefont {R.}~\bibnamefont
			{Barends}}, \bibinfo {author} {\bibfnamefont {A.}~\bibnamefont {Shabani}},
		\bibinfo {author} {\bibfnamefont {L.}~\bibnamefont {Lamata}}, \bibinfo
		{author} {\bibfnamefont {J.}~\bibnamefont {Kelly}}, \bibinfo {author}
		{\bibfnamefont {A.}~\bibnamefont {Mezzacapo}}, \bibinfo {author}
		{\bibfnamefont {U.~L.}\ \bibnamefont {Heras}}, \bibinfo {author}
		{\bibfnamefont {R.}~\bibnamefont {Babbush}}, \bibinfo {author} {\bibfnamefont
			{A.~G.}\ \bibnamefont {Fowler}}, \bibinfo {author} {\bibfnamefont
			{B.}~\bibnamefont {Campbell}}, \bibinfo {author} {\bibfnamefont
			{Y.}~\bibnamefont {Chen}}, \bibinfo {author} {\bibfnamefont {Z.}~\bibnamefont
			{Chen}}, \bibinfo {author} {\bibfnamefont {B.}~\bibnamefont {Chiaro}},
		\bibinfo {author} {\bibfnamefont {A.}~\bibnamefont {Dunsworth}}, \bibinfo
		{author} {\bibfnamefont {E.}~\bibnamefont {Jeffrey}}, \bibinfo {author}
		{\bibfnamefont {E.}~\bibnamefont {Lucero}}, \bibinfo {author} {\bibfnamefont
			{A.}~\bibnamefont {Megrant}}, \bibinfo {author} {\bibfnamefont {J.~Y.}\
			\bibnamefont {Mutus}}, \bibinfo {author} {\bibfnamefont {M.}~\bibnamefont
			{Neeley}}, \bibinfo {author} {\bibfnamefont {C.}~\bibnamefont {Neill}},
		\bibinfo {author} {\bibfnamefont {P.~J.~J.}\ \bibnamefont {O’Malley}},
		\bibinfo {author} {\bibfnamefont {C.}~\bibnamefont {Quintana}}, \bibinfo
		{author} {\bibfnamefont {P.}~\bibnamefont {Roushan}}, \bibinfo {author}
		{\bibfnamefont {D.}~\bibnamefont {Sank}}, \bibinfo {author} {\bibfnamefont
			{A.}~\bibnamefont {Vainsencher}}, \bibinfo {author} {\bibfnamefont
			{J.}~\bibnamefont {Wenner}}, \bibinfo {author} {\bibfnamefont {T.~C.}\
			\bibnamefont {White}}, \bibinfo {author} {\bibfnamefont {E.}~\bibnamefont
			{Solano}}, \bibinfo {author} {\bibfnamefont {H.}~\bibnamefont {Neven}}, \
		and\ \bibinfo {author} {\bibfnamefont {J.~M.}\ \bibnamefont {Martinis}},\
	}\enquote{\bibinfo {title} {Digitized adiabatic quantum computing with a
			superconducting circuit}},\ \href {\doibase 10.1038/nature17658} {\bibfield
		{journal} {\bibinfo  {journal} {Nature}\ }\textbf {\bibinfo {volume} {534}},\
		\bibinfo {pages} {222} (\bibinfo {year} {2016})}\BibitemShut {NoStop}%
	\bibitem [{\citenamefont {Hen}(2015)}]{Hen:15}%
	\BibitemOpen
	\bibfield  {author} {\bibinfo {author} {\bibfnamefont {I.}~\bibnamefont
			{Hen}},\ }\enquote{\bibinfo {title} {Quantum gates with controlled adiabatic
			evolutions}},\ \href {\doibase 10.1103/PhysRevA.91.022309} {\bibfield
		{journal} {\bibinfo  {journal} {Phys. Rev. A}\ }\textbf {\bibinfo {volume}
			{91}},\ \bibinfo {pages} {022309} (\bibinfo {year} {2015})}\BibitemShut
	{NoStop}%
	\bibitem [{\citenamefont {Santos}\ and\ \citenamefont
		{Sarandy}(2015)}]{Santos:15}%
	\BibitemOpen
	\bibfield  {author} {\bibinfo {author} {\bibfnamefont {A.~C.}\ \bibnamefont
			{Santos}}\ and\ \bibinfo {author} {\bibfnamefont {M.~S.}\ \bibnamefont
			{Sarandy}},\ }\enquote{\bibinfo {title} {Superadiabatic controlled evolutions
			and universal quantum computation}},\ \href {\doibase 10.1038/srep15775}
	{\bibfield  {journal} {\bibinfo  {journal} {Sci. Rep.}\ }\textbf {\bibinfo
			{volume} {5}},\ \bibinfo {pages} {15775} (\bibinfo {year}
		{2015})}\BibitemShut {NoStop}%
	\bibitem [{\citenamefont {Amniat-Talab}\ \emph {et~al.}(2012)\citenamefont
		{Amniat-Talab}, \citenamefont {Saadati-Niari},\ and\ \citenamefont
		{Gu{\'e}rin}}]{Amniat:12}%
	\BibitemOpen
	\bibfield  {author} {\bibinfo {author} {\bibfnamefont {M.}~\bibnamefont
			{Amniat-Talab}}, \bibinfo {author} {\bibfnamefont {M.}~\bibnamefont
			{Saadati-Niari}}, \ and\ \bibinfo {author} {\bibfnamefont {S.}~\bibnamefont
			{Gu{\'e}rin}},\ }\enquote{\bibinfo {title} {Quantum state engineering in
			ion-traps via adiabatic passage}},\ \href {\doibase
		10.1140/epjd/e2012-30249-3} {\bibfield  {journal} {\bibinfo  {journal} {Eur.
				Phys. J. D}\ }\textbf {\bibinfo {volume} {66}},\ \bibinfo {pages} {216}
		(\bibinfo {year} {2012})}\BibitemShut {NoStop}%
	\bibitem [{\citenamefont {Masuda}\ and\ \citenamefont
		{Rice}(2015)}]{Masuda:15}%
	\BibitemOpen
	\bibfield  {author} {\bibinfo {author} {\bibfnamefont {S.}~\bibnamefont
			{Masuda}}\ and\ \bibinfo {author} {\bibfnamefont {S.~A.}\ \bibnamefont
			{Rice}},\ }\enquote{\bibinfo {title} {Selective vibrational population
			transfer using combined stimulated raman adiabatic passage and
			counter-diabatic fields}},\ \href {\doibase 10.1021/jp507923s} {\bibfield
		{journal} {\bibinfo  {journal} {J. Phys. Chem. C}\ }\textbf {\bibinfo
			{volume} {119}},\ \bibinfo {pages} {14513} (\bibinfo {year}
		{2015})}\BibitemShut {NoStop}%
	\bibitem [{\citenamefont {Vitanov}\ \emph {et~al.}(1999)\citenamefont
		{Vitanov}, \citenamefont {Suominen},\ and\ \citenamefont
		{Shore}}]{Vitanov:99}%
	\BibitemOpen
	\bibfield  {author} {\bibinfo {author} {\bibfnamefont {N.}~\bibnamefont
			{Vitanov}}, \bibinfo {author} {\bibfnamefont {K.}~\bibnamefont {Suominen}}, \
		and\ \bibinfo {author} {\bibfnamefont {B.}~\bibnamefont {Shore}},\
	}\enquote{\bibinfo {title} {Creation of coherent atomic superpositions by
			fractional stimulated raman adiabatic passage}},\ \href {\doibase
		10.1088/0953-4075/32/18/312} {\bibfield  {journal} {\bibinfo  {journal} {J.
				Phys. B: At. Mol. Opt. Phys.}\ }\textbf {\bibinfo {volume} {32}},\ \bibinfo
		{pages} {4535} (\bibinfo {year} {1999})}\BibitemShut {NoStop}%
	\bibitem [{\citenamefont {Unanyan}\ \emph {et~al.}(1998)\citenamefont
		{Unanyan}, \citenamefont {Fleischhauer}, \citenamefont {Shore},\ and\
		\citenamefont {Bergmann}}]{Unanyan:98}%
	\BibitemOpen
	\bibfield  {author} {\bibinfo {author} {\bibfnamefont {R.}~\bibnamefont
			{Unanyan}}, \bibinfo {author} {\bibfnamefont {M.}~\bibnamefont
			{Fleischhauer}}, \bibinfo {author} {\bibfnamefont {B.}~\bibnamefont {Shore}},
		\ and\ \bibinfo {author} {\bibfnamefont {K.}~\bibnamefont {Bergmann}},\
	}\enquote{\bibinfo {title} {Robust creation and phase-sensitive probing of
			superposition states via stimulated raman adiabatic passage (stirap) with
			degenerate dark states}},\ \href {\doibase
		https://doi.org/10.1016/S0030-4018(98)00358-7} {\bibfield  {journal}
		{\bibinfo  {journal} {Optics Communications}\ }\textbf {\bibinfo {volume}
			{155}},\ \bibinfo {pages} {144 } (\bibinfo {year} {1998})}\BibitemShut
	{NoStop}%
	\bibitem [{\citenamefont {Vitanov}\ \emph {et~al.}(2017)\citenamefont
		{Vitanov}, \citenamefont {Rangelov}, \citenamefont {Shore},\ and\
		\citenamefont {Bergmann}}]{Vitanov:17}%
	\BibitemOpen
	\bibfield  {author} {\bibinfo {author} {\bibfnamefont {N.~V.}\ \bibnamefont
			{Vitanov}}, \bibinfo {author} {\bibfnamefont {A.~A.}\ \bibnamefont
			{Rangelov}}, \bibinfo {author} {\bibfnamefont {B.~W.}\ \bibnamefont {Shore}},
		\ and\ \bibinfo {author} {\bibfnamefont {K.}~\bibnamefont {Bergmann}},\
	}\enquote{\bibinfo {title} {Stimulated raman adiabatic passage in physics,
			chemistry, and beyond}},\ \href {\doibase 10.1103/RevModPhys.89.015006}
	{\bibfield  {journal} {\bibinfo  {journal} {Rev. Mod. Phys.}\ }\textbf
		{\bibinfo {volume} {89}},\ \bibinfo {pages} {015006} (\bibinfo {year}
		{2017})}\BibitemShut {NoStop}%
	\bibitem [{\citenamefont {Albash}\ and\ \citenamefont
		{Lidar}(2018)}]{Tameem:18}%
	\BibitemOpen
	\bibfield  {author} {\bibinfo {author} {\bibfnamefont {T.}~\bibnamefont
			{Albash}}\ and\ \bibinfo {author} {\bibfnamefont {D.~A.}\ \bibnamefont
			{Lidar}},\ }\enquote{\bibinfo {title} {Adiabatic quantum computation}},\
	\href {\doibase 10.1103/RevModPhys.90.015002} {\bibfield  {journal} {\bibinfo
			{journal} {Rev. Mod. Phys.}\ }\textbf {\bibinfo {volume} {90}},\ \bibinfo
		{pages} {015002} (\bibinfo {year} {2018})}\BibitemShut {NoStop}%
	\bibitem [{\citenamefont {Johnson}\ \emph {et~al.}(2011)\citenamefont
		{Johnson}, \citenamefont {Amin}, \citenamefont {Gildert}, \citenamefont
		{Lanting}, \citenamefont {Hamze}, \citenamefont {Dickson}, \citenamefont
		{Harris}, \citenamefont {Berkley}, \citenamefont {Johansson}, \citenamefont
		{Bunyk} \emph {et~al.}}]{Johnson:11}%
	\BibitemOpen
	\bibfield  {author} {\bibinfo {author} {\bibfnamefont {M.~W.}\ \bibnamefont
			{Johnson}}, \bibinfo {author} {\bibfnamefont {M.~H.}\ \bibnamefont {Amin}},
		\bibinfo {author} {\bibfnamefont {S.}~\bibnamefont {Gildert}}, \bibinfo
		{author} {\bibfnamefont {T.}~\bibnamefont {Lanting}}, \bibinfo {author}
		{\bibfnamefont {F.}~\bibnamefont {Hamze}}, \bibinfo {author} {\bibfnamefont
			{N.}~\bibnamefont {Dickson}}, \bibinfo {author} {\bibfnamefont
			{R.}~\bibnamefont {Harris}}, \bibinfo {author} {\bibfnamefont {A.~J.}\
			\bibnamefont {Berkley}}, \bibinfo {author} {\bibfnamefont {J.}~\bibnamefont
			{Johansson}}, \bibinfo {author} {\bibfnamefont {P.}~\bibnamefont {Bunyk}},
		\emph {et~al.},\ }\enquote{\bibinfo {title} {Quantum annealing with
			manufactured spins}},\ \href {\doibase 10.1038/nature10012} {\bibfield
		{journal} {\bibinfo  {journal} {Nature}\ }\textbf {\bibinfo {volume} {473}},\
		\bibinfo {pages} {194} (\bibinfo {year} {2011})}\BibitemShut {NoStop}%
	\bibitem [{\citenamefont {Boixo}\ \emph {et~al.}(2014)\citenamefont {Boixo},
		\citenamefont {R{\o}nnow}, \citenamefont {Isakov}, \citenamefont {Wang},
		\citenamefont {Wecker}, \citenamefont {Lidar}, \citenamefont {Martinis},\
		and\ \citenamefont {Troyer}}]{Boixo:14}%
	\BibitemOpen
	\bibfield  {author} {\bibinfo {author} {\bibfnamefont {S.}~\bibnamefont
			{Boixo}}, \bibinfo {author} {\bibfnamefont {T.~F.}\ \bibnamefont
			{R{\o}nnow}}, \bibinfo {author} {\bibfnamefont {S.~V.}\ \bibnamefont
			{Isakov}}, \bibinfo {author} {\bibfnamefont {Z.}~\bibnamefont {Wang}},
		\bibinfo {author} {\bibfnamefont {D.}~\bibnamefont {Wecker}}, \bibinfo
		{author} {\bibfnamefont {D.~A.}\ \bibnamefont {Lidar}}, \bibinfo {author}
		{\bibfnamefont {J.~M.}\ \bibnamefont {Martinis}}, \ and\ \bibinfo {author}
		{\bibfnamefont {M.}~\bibnamefont {Troyer}},\ }\enquote{\bibinfo {title}
		{Evidence for quantum annealing with more than one hundred qubits}},\ \href
	{\doibase 10.1038/nphys2900} {\bibfield  {journal} {\bibinfo  {journal}
			{Nature Physics}\ }\textbf {\bibinfo {volume} {10}},\ \bibinfo {pages} {218}
		(\bibinfo {year} {2014})}\BibitemShut {NoStop}%
	\bibitem [{\citenamefont {Amin}(2009)}]{Amin:09}%
	\BibitemOpen
	\bibfield  {author} {\bibinfo {author} {\bibfnamefont {M.~H.~S.}\
			\bibnamefont {Amin}},\ }\enquote{\bibinfo {title} {Consistency of the
			adiabatic theorem}},\ \href {\doibase 10.1103/PhysRevLett.102.220401}
	{\bibfield  {journal} {\bibinfo  {journal} {Phys. Rev. Lett.}\ }\textbf
		{\bibinfo {volume} {102}},\ \bibinfo {pages} {220401} (\bibinfo {year}
		{2009})}\BibitemShut {NoStop}%
	\bibitem [{\citenamefont {Jansen}\ \emph {et~al.}(2007)\citenamefont {Jansen},
		\citenamefont {Ruskai},\ and\ \citenamefont {Seiler}}]{Jansen:07}%
	\BibitemOpen
	\bibfield  {author} {\bibinfo {author} {\bibfnamefont {S.}~\bibnamefont
			{Jansen}}, \bibinfo {author} {\bibfnamefont {M.-B.}\ \bibnamefont {Ruskai}},
		\ and\ \bibinfo {author} {\bibfnamefont {R.}~\bibnamefont {Seiler}},\
	}\enquote{\bibinfo {title} {Bounds for the adiabatic approximation with
			applications to quantum computation}},\ \href {\doibase 10.1063/1.2798382}
	{\bibfield  {journal} {\bibinfo  {journal} {J. Math. Phys.}\ }\textbf
		{\bibinfo {volume} {48}},\ \bibinfo {pages} {102111} (\bibinfo {year}
		{2007})}\BibitemShut {NoStop}%
	\bibitem [{\citenamefont {Sarandy}\ \emph {et~al.}(2004)\citenamefont
		{Sarandy}, \citenamefont {Wu},\ and\ \citenamefont {Lidar}}]{Sarandy:04}%
	\BibitemOpen
	\bibfield  {author} {\bibinfo {author} {\bibfnamefont {M.~S.}\ \bibnamefont
			{Sarandy}}, \bibinfo {author} {\bibfnamefont {L.-A.}\ \bibnamefont {Wu}}, \
		and\ \bibinfo {author} {\bibfnamefont {D.~A.}\ \bibnamefont {Lidar}},\
	}\enquote{\bibinfo {title} {Consistency of the adiabatic theorem}},\ \href
	{\doibase 10.1007/s11128-004-7712-7} {\bibfield  {journal} {\bibinfo
			{journal} {Quantum Information Processing}\ }\textbf {\bibinfo {volume}
			{3}},\ \bibinfo {pages} {331} (\bibinfo {year} {2004})}\BibitemShut {NoStop}%
	\bibitem [{\citenamefont {Albash}\ and\ \citenamefont
		{Lidar}(2015)}]{Albash:15}%
	\BibitemOpen
	\bibfield  {author} {\bibinfo {author} {\bibfnamefont {T.}~\bibnamefont
			{Albash}}\ and\ \bibinfo {author} {\bibfnamefont {D.~A.}\ \bibnamefont
			{Lidar}},\ }\enquote{\bibinfo {title} {Decoherence in adiabatic quantum
			computation}},\ \href {\doibase 10.1103/PhysRevA.91.062320} {\bibfield
		{journal} {\bibinfo  {journal} {Phys. Rev. A}\ }\textbf {\bibinfo {volume}
			{91}},\ \bibinfo {pages} {062320} (\bibinfo {year} {2015})}\BibitemShut
	{NoStop}%
	\bibitem [{\citenamefont {Breuer}\ and\ \citenamefont
		{Petruccione}(2007)}]{Petruccione:Book}%
	\BibitemOpen
	\bibfield  {author} {\bibinfo {author} {\bibfnamefont {H.-P.}\ \bibnamefont
			{Breuer}}\ and\ \bibinfo {author} {\bibfnamefont {F.}~\bibnamefont
			{Petruccione}},\ }\href {\doibase
		DOI:10.1093/acprof:oso/9780199213900.001.0001} {\emph {\bibinfo {title} {The
				Theory of Open Quantum Systems}}}\ (\bibinfo  {publisher} {Cambridge
		University Press},\ \bibinfo {address} {Oxford University Press, Oxford,
		UK},\ \bibinfo {year} {2007})\BibitemShut {NoStop}%
	\bibitem [{\citenamefont {Lindblad}(1976)}]{Lindblad:76}%
	\BibitemOpen
	\bibfield  {author} {\bibinfo {author} {\bibfnamefont {G.}~\bibnamefont
			{Lindblad}},\ }\enquote{\bibinfo {title} {On the generators of quantum
			dynamical semigroups}},\ \href {\doibase 10.1007/BF01608499} {\bibfield
		{journal} {\bibinfo  {journal} {Communications in Mathematical Physics}\
		}\textbf {\bibinfo {volume} {48}},\ \bibinfo {pages} {119} (\bibinfo {year}
		{1976})}\BibitemShut {NoStop}%
	\bibitem [{\citenamefont {Sarandy}\ and\ \citenamefont
		{Lidar}(2005{\natexlab{b}})}]{Sarandy:05-1}%
	\BibitemOpen
	\bibfield  {author} {\bibinfo {author} {\bibfnamefont {M.~S.}\ \bibnamefont
			{Sarandy}}\ and\ \bibinfo {author} {\bibfnamefont {D.~A.}\ \bibnamefont
			{Lidar}},\ }\enquote{\bibinfo {title} {Adiabatic approximation in open
			quantum systems}},\ \href {\doibase 10.1103/PhysRevA.71.012331} {\bibfield
		{journal} {\bibinfo  {journal} {Phys. Rev. A}\ }\textbf {\bibinfo {volume}
			{71}},\ \bibinfo {pages} {012331} (\bibinfo {year}
		{2005}{\natexlab{b}})}\BibitemShut {NoStop}%
	\bibitem [{\citenamefont {Demirplak}\ and\ \citenamefont
		{Rice}(2003)}]{Demirplak:03}%
	\BibitemOpen
	\bibfield  {author} {\bibinfo {author} {\bibfnamefont {M.}~\bibnamefont
			{Demirplak}}\ and\ \bibinfo {author} {\bibfnamefont {S.~A.}\ \bibnamefont
			{Rice}},\ }\enquote{\bibinfo {title} {Adiabatic population transfer with
			control fields}},\ \href {\doibase https://doi.org/10.1021/jp030708a}
	{\bibfield  {journal} {\bibinfo  {journal} {J. Phys. Chem. A}\ }\textbf
		{\bibinfo {volume} {107}},\ \bibinfo {pages} {9937} (\bibinfo {year}
		{2003})}\BibitemShut {NoStop}%
	\bibitem [{\citenamefont {Demirplak}\ and\ \citenamefont
		{Rice}(2005)}]{Demirplak:05}%
	\BibitemOpen
	\bibfield  {author} {\bibinfo {author} {\bibfnamefont {M.}~\bibnamefont
			{Demirplak}}\ and\ \bibinfo {author} {\bibfnamefont {S.~A.}\ \bibnamefont
			{Rice}},\ }\enquote{\bibinfo {title} {Assisted adiabatic passage
			revisited}},\ \href {\doibase https://doi.org/10.1021/jp040647w} {\bibfield
		{journal} {\bibinfo  {journal} {J. Phys. Chem. B}\ }\textbf {\bibinfo
			{volume} {109}},\ \bibinfo {pages} {6838} (\bibinfo {year}
		{2005})}\BibitemShut {NoStop}%
	\bibitem [{\citenamefont {Huang}\ \emph {et~al.}(2017)\citenamefont {Huang},
		\citenamefont {Kang}, \citenamefont {Chen}, \citenamefont {Wu}, \citenamefont
		{Song},\ and\ \citenamefont {Xia}}]{Huang:17}%
	\BibitemOpen
	\bibfield  {author} {\bibinfo {author} {\bibfnamefont {B.-H.}\ \bibnamefont
			{Huang}}, \bibinfo {author} {\bibfnamefont {Y.-H.}\ \bibnamefont {Kang}},
		\bibinfo {author} {\bibfnamefont {Y.-H.}\ \bibnamefont {Chen}}, \bibinfo
		{author} {\bibfnamefont {Q.-C.}\ \bibnamefont {Wu}}, \bibinfo {author}
		{\bibfnamefont {J.}~\bibnamefont {Song}}, \ and\ \bibinfo {author}
		{\bibfnamefont {Y.}~\bibnamefont {Xia}},\ }\enquote{\bibinfo {title} {Fast
			quantum state engineering via universal su(2) transformation}},\ \href
	{\doibase 10.1103/PhysRevA.96.022314} {\bibfield  {journal} {\bibinfo
			{journal} {Phys. Rev. A}\ }\textbf {\bibinfo {volume} {96}},\ \bibinfo
		{pages} {022314} (\bibinfo {year} {2017})}\BibitemShut {NoStop}%
	\bibitem [{\citenamefont {Saberi}\ \emph {et~al.}(2014)\citenamefont {Saberi},
		\citenamefont {Opatrn\'y}, \citenamefont {M\o{}lmer},\ and\ \citenamefont
		{del Campo}}]{Saberi:14}%
	\BibitemOpen
	\bibfield  {author} {\bibinfo {author} {\bibfnamefont {H.}~\bibnamefont
			{Saberi}}, \bibinfo {author} {\bibfnamefont {T.~c.~v.}\ \bibnamefont
			{Opatrn\'y}}, \bibinfo {author} {\bibfnamefont {K.}~\bibnamefont
			{M\o{}lmer}}, \ and\ \bibinfo {author} {\bibfnamefont {A.}~\bibnamefont {del
				Campo}},\ }\enquote{\bibinfo {title} {Adiabatic tracking of quantum many-body
			dynamics}},\ \href {\doibase 10.1103/PhysRevA.90.060301} {\bibfield
		{journal} {\bibinfo  {journal} {Phys. Rev. A}\ }\textbf {\bibinfo {volume}
			{90}},\ \bibinfo {pages} {060301} (\bibinfo {year} {2014})}\BibitemShut
	{NoStop}%
	\bibitem [{\citenamefont {Hatomura}(2017)}]{Hatomura:17}%
	\BibitemOpen
	\bibfield  {author} {\bibinfo {author} {\bibfnamefont {T.}~\bibnamefont
			{Hatomura}},\ }\enquote{\bibinfo {title} {Shortcuts to adiabaticity in the
			infinite-range ising model by mean-field counter-diabatic driving}},\ \href
	{\doibase 10.7566/JPSJ.86.094002} {\bibfield  {journal} {\bibinfo  {journal}
			{J. Phys. Soc. Jpn.}\ }\textbf {\bibinfo {volume} {86}},\ \bibinfo {pages}
		{094002} (\bibinfo {year} {2017})}\BibitemShut {NoStop}%
	\bibitem [{\citenamefont {Santos}(2018)}]{Santos:18-a}%
	\BibitemOpen
	\bibfield  {author} {\bibinfo {author} {\bibfnamefont {A.~C.}\ \bibnamefont
			{Santos}},\ }\enquote{\bibinfo {title} {Quantum gates by inverse engineering
			of a hamiltonian}},\ \href {\doibase 10.1088/1361-6455/aa987c} {\bibfield
		{journal} {\bibinfo  {journal} {J. Phys. B: At. Mol. Opt. Phys.}\ }\textbf
		{\bibinfo {volume} {51}},\ \bibinfo {pages} {015501} (\bibinfo {year}
		{2018})}\BibitemShut {NoStop}%
	\bibitem [{\citenamefont {Santos}\ \emph {et~al.}(2016)\citenamefont {Santos},
		\citenamefont {Silva},\ and\ \citenamefont {Sarandy}}]{Santos:16}%
	\BibitemOpen
	\bibfield  {author} {\bibinfo {author} {\bibfnamefont {A.~C.}\ \bibnamefont
			{Santos}}, \bibinfo {author} {\bibfnamefont {R.~D.}\ \bibnamefont {Silva}}, \
		and\ \bibinfo {author} {\bibfnamefont {M.~S.}\ \bibnamefont {Sarandy}},\
	}\enquote{\bibinfo {title} {Shortcut to adiabatic gate teleportation}},\
	\href {\doibase 10.1103/PhysRevA.93.012311} {\bibfield  {journal} {\bibinfo
			{journal} {Phys. Rev. A}\ }\textbf {\bibinfo {volume} {93}},\ \bibinfo
		{pages} {012311} (\bibinfo {year} {2016})}\BibitemShut {NoStop}%
	\bibitem [{\citenamefont {Coulamy}\ \emph {et~al.}(2016)\citenamefont
		{Coulamy}, \citenamefont {Santos}, \citenamefont {Hen},\ and\ \citenamefont
		{Sarandy}}]{Coulamy:16}%
	\BibitemOpen
	\bibfield  {author} {\bibinfo {author} {\bibfnamefont {I.~B.}\ \bibnamefont
			{Coulamy}}, \bibinfo {author} {\bibfnamefont {A.~C.}\ \bibnamefont {Santos}},
		\bibinfo {author} {\bibfnamefont {I.}~\bibnamefont {Hen}}, \ and\ \bibinfo
		{author} {\bibfnamefont {M.~S.}\ \bibnamefont {Sarandy}},\ }\enquote{\bibinfo
		{title} {Energetic cost of superadiabatic quantum computation}},\ \href
	{\doibase 10.3389/fict.2016.00019} {\bibfield  {journal} {\bibinfo  {journal}
			{Frontiers in ICT}\ }\textbf {\bibinfo {volume} {3}},\ \bibinfo {pages} {19}
		(\bibinfo {year} {2016})}\BibitemShut {NoStop}%
	\bibitem [{\citenamefont {Deng}\ \emph {et~al.}(2018)\citenamefont {Deng},
		\citenamefont {Chenu}, \citenamefont {Diao}, \citenamefont {Li},
		\citenamefont {Yu}, \citenamefont {Coulamy}, \citenamefont {del Campo},\ and\
		\citenamefont {Wu}}]{Deng:18}%
	\BibitemOpen
	\bibfield  {author} {\bibinfo {author} {\bibfnamefont {S.}~\bibnamefont
			{Deng}}, \bibinfo {author} {\bibfnamefont {A.}~\bibnamefont {Chenu}},
		\bibinfo {author} {\bibfnamefont {P.}~\bibnamefont {Diao}}, \bibinfo {author}
		{\bibfnamefont {F.}~\bibnamefont {Li}}, \bibinfo {author} {\bibfnamefont
			{S.}~\bibnamefont {Yu}}, \bibinfo {author} {\bibfnamefont {I.}~\bibnamefont
			{Coulamy}}, \bibinfo {author} {\bibfnamefont {A.}~\bibnamefont {del Campo}},
		\ and\ \bibinfo {author} {\bibfnamefont {H.}~\bibnamefont {Wu}},\
	}\enquote{\bibinfo {title} {Superadiabatic quantum friction suppression in
			finite-time thermodynamics}},\ \href {\doibase 10.1126/sciadv.aar5909}
	{\bibfield  {journal} {\bibinfo  {journal} {Science Advances}\ }\textbf
		{\bibinfo {volume} {4}},\ \bibinfo {pages} {eaar5909} (\bibinfo {year}
		{2018})}\BibitemShut {NoStop}%
	\bibitem [{\citenamefont {Ib\'a\~nez}\ \emph {et~al.}(2013)\citenamefont
		{Ib\'a\~nez}, \citenamefont {Chen},\ and\ \citenamefont {Muga}}]{Ibanez:13}%
	\BibitemOpen
	\bibfield  {author} {\bibinfo {author} {\bibfnamefont {S.}~\bibnamefont
			{Ib\'a\~nez}}, \bibinfo {author} {\bibfnamefont {X.}~\bibnamefont {Chen}}, \
		and\ \bibinfo {author} {\bibfnamefont {J.~G.}\ \bibnamefont {Muga}},\
	}\enquote{\bibinfo {title} {Improving shortcuts to adiabaticity by iterative
			interaction pictures}},\ \href {\doibase 10.1103/PhysRevA.87.043402}
	{\bibfield  {journal} {\bibinfo  {journal} {Phys. Rev. A}\ }\textbf {\bibinfo
			{volume} {87}},\ \bibinfo {pages} {043402} (\bibinfo {year}
		{2013})}\BibitemShut {NoStop}%
	\bibitem [{\citenamefont {Ruschhaupt}\ \emph {et~al.}(2012)\citenamefont
		{Ruschhaupt}, \citenamefont {Chen}, \citenamefont {Alonso},\ and\
		\citenamefont {Muga}}]{Ruschhaupt:12}%
	\BibitemOpen
	\bibfield  {author} {\bibinfo {author} {\bibfnamefont {A.}~\bibnamefont
			{Ruschhaupt}}, \bibinfo {author} {\bibfnamefont {X.}~\bibnamefont {Chen}},
		\bibinfo {author} {\bibfnamefont {D.}~\bibnamefont {Alonso}}, \ and\ \bibinfo
		{author} {\bibfnamefont {J.~G.}\ \bibnamefont {Muga}},\ }\enquote{\bibinfo
		{title} {Optimally robust shortcuts to population inversion in two-level
			quantum systems}},\ \href {\doibase 10.1088/1367-2630/14/9/093040} {\bibfield
		{journal} {\bibinfo  {journal} {New J. Phys.}\ }\textbf {\bibinfo {volume}
			{14}},\ \bibinfo {pages} {093040} (\bibinfo {year} {2012})}\BibitemShut
	{NoStop}%
	\bibitem [{\citenamefont {Mukherjee}\ \emph {et~al.}(2016)\citenamefont
		{Mukherjee}, \citenamefont {Montangero},\ and\ \citenamefont
		{Fazio}}]{Mukherjee:16}%
	\BibitemOpen
	\bibfield  {author} {\bibinfo {author} {\bibfnamefont {V.}~\bibnamefont
			{Mukherjee}}, \bibinfo {author} {\bibfnamefont {S.}~\bibnamefont
			{Montangero}}, \ and\ \bibinfo {author} {\bibfnamefont {R.}~\bibnamefont
			{Fazio}},\ }\enquote{\bibinfo {title} {Local shortcut to adiabaticity for
			quantum many-body systems}},\ \href {\doibase 10.1103/PhysRevA.93.062108}
	{\bibfield  {journal} {\bibinfo  {journal} {Phys. Rev. A}\ }\textbf {\bibinfo
			{volume} {93}},\ \bibinfo {pages} {062108} (\bibinfo {year}
		{2016})}\BibitemShut {NoStop}%
	\bibitem [{\citenamefont {Chen}\ \emph
		{et~al.}(2016{\natexlab{a}})\citenamefont {Chen}, \citenamefont {Xia},
		\citenamefont {Wu}, \citenamefont {Huang},\ and\ \citenamefont
		{Song}}]{Chen:16-2}%
	\BibitemOpen
	\bibfield  {author} {\bibinfo {author} {\bibfnamefont {Y.-H.}\ \bibnamefont
			{Chen}}, \bibinfo {author} {\bibfnamefont {Y.}~\bibnamefont {Xia}}, \bibinfo
		{author} {\bibfnamefont {Q.-C.}\ \bibnamefont {Wu}}, \bibinfo {author}
		{\bibfnamefont {B.-H.}\ \bibnamefont {Huang}}, \ and\ \bibinfo {author}
		{\bibfnamefont {J.}~\bibnamefont {Song}},\ }\enquote{\bibinfo {title} {Method
			for constructing shortcuts to adiabaticity by a substitute of counterdiabatic
			driving terms}},\ \href {\doibase 10.1103/PhysRevA.93.052109} {\bibfield
		{journal} {\bibinfo  {journal} {Phys. Rev. A}\ }\textbf {\bibinfo {volume}
			{93}},\ \bibinfo {pages} {052109} (\bibinfo {year}
		{2016}{\natexlab{a}})}\BibitemShut {NoStop}%
	\bibitem [{\citenamefont {Torrontegui}\ \emph {et~al.}(2013)\citenamefont
		{Torrontegui}, \citenamefont {Ibáñez}, \citenamefont {Martínez-Garaot},
		\citenamefont {Modugno}, \citenamefont {del Campo}, \citenamefont
		{Guéry-Odelin}, \citenamefont {Ruschhaupt}, \citenamefont {Chen},\ and\
		\citenamefont {Muga}}]{Torrontegui:13}%
	\BibitemOpen
	\bibfield  {author} {\bibinfo {author} {\bibfnamefont {E.}~\bibnamefont
			{Torrontegui}}, \bibinfo {author} {\bibfnamefont {S.}~\bibnamefont
			{Ibáñez}}, \bibinfo {author} {\bibfnamefont {S.}~\bibnamefont
			{Martínez-Garaot}}, \bibinfo {author} {\bibfnamefont {M.}~\bibnamefont
			{Modugno}}, \bibinfo {author} {\bibfnamefont {A.}~\bibnamefont {del Campo}},
		\bibinfo {author} {\bibfnamefont {D.}~\bibnamefont {Guéry-Odelin}}, \bibinfo
		{author} {\bibfnamefont {A.}~\bibnamefont {Ruschhaupt}}, \bibinfo {author}
		{\bibfnamefont {X.}~\bibnamefont {Chen}}, \ and\ \bibinfo {author}
		{\bibfnamefont {J.~G.}\ \bibnamefont {Muga}},\ }in\ \href {\doibase
		https://doi.org/10.1016/B978-0-12-408090-4.00002-5} {\emph {\bibinfo
			{booktitle} {Advances in Atomic, Molecular, and Optical Physics}}},\ \bibinfo
	{series} {Advances In Atomic, Molecular, and Optical Physics}, Vol.~\bibinfo
	{volume} {62},\ \bibinfo {editor} {edited by\ \bibinfo {editor}
		{\bibfnamefont {E.}~\bibnamefont {Arimondo}}, \bibinfo {editor}
		{\bibfnamefont {P.~R.}\ \bibnamefont {Berman}}, \ and\ \bibinfo {editor}
		{\bibfnamefont {C.~C.}\ \bibnamefont {Lin}}}\ (\bibinfo  {publisher}
	{Academic Press},\ \bibinfo {year} {2013})\ pp.\ \bibinfo {pages} {117 --
		169}\BibitemShut {NoStop}%
	\bibitem [{\citenamefont {Gu\'ery-Odelin}\ \emph {et~al.}(2019)\citenamefont
		{Gu\'ery-Odelin}, \citenamefont {Ruschhaupt}, \citenamefont {Kiely},
		\citenamefont {Torrontegui}, \citenamefont {Mart\'{\i}nez-Garaot},\ and\
		\citenamefont {Muga}}]{Odelin:19}%
	\BibitemOpen
	\bibfield  {author} {\bibinfo {author} {\bibfnamefont {D.}~\bibnamefont
			{Gu\'ery-Odelin}}, \bibinfo {author} {\bibfnamefont {A.}~\bibnamefont
			{Ruschhaupt}}, \bibinfo {author} {\bibfnamefont {A.}~\bibnamefont {Kiely}},
		\bibinfo {author} {\bibfnamefont {E.}~\bibnamefont {Torrontegui}}, \bibinfo
		{author} {\bibfnamefont {S.}~\bibnamefont {Mart\'{\i}nez-Garaot}}, \ and\
		\bibinfo {author} {\bibfnamefont {J.~G.}\ \bibnamefont {Muga}},\
	}\enquote{\bibinfo {title} {Shortcuts to adiabaticity: Concepts, methods, and
			applications}},\ \href {\doibase 10.1103/RevModPhys.91.045001} {\bibfield
		{journal} {\bibinfo  {journal} {Rev. Mod. Phys.}\ }\textbf {\bibinfo {volume}
			{91}},\ \bibinfo {pages} {045001} (\bibinfo {year} {2019})}\BibitemShut
	{NoStop}%
	\bibitem [{\citenamefont {Chen}\ \emph {et~al.}(2010)\citenamefont {Chen},
		\citenamefont {Lizuain}, \citenamefont {Ruschhaupt}, \citenamefont
		{Gu\'ery-Odelin},\ and\ \citenamefont {Muga}}]{Chen:10}%
	\BibitemOpen
	\bibfield  {author} {\bibinfo {author} {\bibfnamefont {X.}~\bibnamefont
			{Chen}}, \bibinfo {author} {\bibfnamefont {I.}~\bibnamefont {Lizuain}},
		\bibinfo {author} {\bibfnamefont {A.}~\bibnamefont {Ruschhaupt}}, \bibinfo
		{author} {\bibfnamefont {D.}~\bibnamefont {Gu\'ery-Odelin}}, \ and\ \bibinfo
		{author} {\bibfnamefont {J.~G.}\ \bibnamefont {Muga}},\ }\enquote{\bibinfo
		{title} {Shortcut to adiabatic passage in two- and three-level atoms}},\
	\href {\doibase 10.1103/PhysRevLett.105.123003} {\bibfield  {journal}
		{\bibinfo  {journal} {Phys. Rev. Lett.}\ }\textbf {\bibinfo {volume} {105}},\
		\bibinfo {pages} {123003} (\bibinfo {year} {2010})}\BibitemShut {NoStop}%
	\bibitem [{\citenamefont {Chen}\ \emph
		{et~al.}(2016{\natexlab{b}})\citenamefont {Chen}, \citenamefont {Chen},
		\citenamefont {Xia}, \citenamefont {Song},\ and\ \citenamefont
		{Huang}}]{Xia:16}%
	\BibitemOpen
	\bibfield  {author} {\bibinfo {author} {\bibfnamefont {Z.}~\bibnamefont
			{Chen}}, \bibinfo {author} {\bibfnamefont {Y.}~\bibnamefont {Chen}}, \bibinfo
		{author} {\bibfnamefont {Y.}~\bibnamefont {Xia}}, \bibinfo {author}
		{\bibfnamefont {J.}~\bibnamefont {Song}}, \ and\ \bibinfo {author}
		{\bibfnamefont {B.}~\bibnamefont {Huang}},\ }\enquote{\bibinfo {title} {Fast
			generation of three-atom singlet state by transitionless quantum driving}},\
	\href {\doibase 10.1038/srep22202} {\bibfield  {journal} {\bibinfo  {journal}
			{Sci. Rep.}\ }\textbf {\bibinfo {volume} {6}},\ \bibinfo {pages} {22202}
		(\bibinfo {year} {2016}{\natexlab{b}})}\BibitemShut {NoStop}%
	\bibitem [{\citenamefont {Hu}\ \emph {et~al.}(2018)\citenamefont {Hu},
		\citenamefont {Cui}, \citenamefont {Santos}, \citenamefont {Huang},
		\citenamefont {Sarandy}, \citenamefont {Li},\ and\ \citenamefont
		{Guo}}]{Hu:18}%
	\BibitemOpen
	\bibfield  {author} {\bibinfo {author} {\bibfnamefont {C.-K.}\ \bibnamefont
			{Hu}}, \bibinfo {author} {\bibfnamefont {J.-M.}\ \bibnamefont {Cui}},
		\bibinfo {author} {\bibfnamefont {A.~C.}\ \bibnamefont {Santos}}, \bibinfo
		{author} {\bibfnamefont {Y.-F.}\ \bibnamefont {Huang}}, \bibinfo {author}
		{\bibfnamefont {M.~S.}\ \bibnamefont {Sarandy}}, \bibinfo {author}
		{\bibfnamefont {C.-F.}\ \bibnamefont {Li}}, \ and\ \bibinfo {author}
		{\bibfnamefont {G.-C.}\ \bibnamefont {Guo}},\ }\enquote{\bibinfo {title}
		{Experimental implementation of generalized transitionless quantum
			driving}},\ \href {\doibase 10.1364/OL.43.003136} {\bibfield  {journal}
		{\bibinfo  {journal} {Opt. Lett.}\ }\textbf {\bibinfo {volume} {43}},\
		\bibinfo {pages} {3136} (\bibinfo {year} {2018})}\BibitemShut {NoStop}%
	\bibitem [{\citenamefont {Oh}\ and\ \citenamefont {Kais}(2014)}]{Oh:14}%
	\BibitemOpen
	\bibfield  {author} {\bibinfo {author} {\bibfnamefont {S.}~\bibnamefont
			{Oh}}\ and\ \bibinfo {author} {\bibfnamefont {S.}~\bibnamefont {Kais}},\
	}\enquote{\bibinfo {title} {Transitionless driving on adiabatic search
			algorithm}},\ \href {\doibase 10.1063/1.4903451} {\bibfield  {journal}
		{\bibinfo  {journal} {J. Chem. Phys.}\ }\textbf {\bibinfo {volume} {141}},\
		\bibinfo {pages} {224108} (\bibinfo {year} {2014})}\BibitemShut {NoStop}%
	\bibitem [{\citenamefont {Zheng}\ \emph {et~al.}(2016)\citenamefont {Zheng},
		\citenamefont {Campbell}, \citenamefont {De~Chiara},\ and\ \citenamefont
		{Poletti}}]{Zheng:16}%
	\BibitemOpen
	\bibfield  {author} {\bibinfo {author} {\bibfnamefont {Y.}~\bibnamefont
			{Zheng}}, \bibinfo {author} {\bibfnamefont {S.}~\bibnamefont {Campbell}},
		\bibinfo {author} {\bibfnamefont {G.}~\bibnamefont {De~Chiara}}, \ and\
		\bibinfo {author} {\bibfnamefont {D.}~\bibnamefont {Poletti}},\
	}\enquote{\bibinfo {title} {Cost of counterdiabatic driving and work
			output}},\ \href {\doibase 10.1103/PhysRevA.94.042132} {\bibfield  {journal}
		{\bibinfo  {journal} {Phys. Rev. A}\ }\textbf {\bibinfo {volume} {94}},\
		\bibinfo {pages} {042132} (\bibinfo {year} {2016})}\BibitemShut {NoStop}%
	\bibitem [{\citenamefont {Abah}\ and\ \citenamefont {Lutz}(2018)}]{Lutz:18}%
	\BibitemOpen
	\bibfield  {author} {\bibinfo {author} {\bibfnamefont {O.}~\bibnamefont
			{Abah}}\ and\ \bibinfo {author} {\bibfnamefont {E.}~\bibnamefont {Lutz}},\
	}\enquote{\bibinfo {title} {Performance of shortcut-to-adiabaticity quantum
			engines}},\ \href {\doibase 10.1103/PhysRevE.98.032121} {\bibfield  {journal}
		{\bibinfo  {journal} {Phys. Rev. E}\ }\textbf {\bibinfo {volume} {98}},\
		\bibinfo {pages} {032121} (\bibinfo {year} {2018})}\BibitemShut {NoStop}%
	\bibitem [{\citenamefont {Abah}\ and\ \citenamefont {Lutz}(2017)}]{Abah:17}%
	\BibitemOpen
	\bibfield  {author} {\bibinfo {author} {\bibfnamefont {O.}~\bibnamefont
			{Abah}}\ and\ \bibinfo {author} {\bibfnamefont {E.}~\bibnamefont {Lutz}},\
	}\enquote{\bibinfo {title} {Energy efficient quantum machines}},\ \href
	{\doibase 10.1209/0295-5075/118/40005} {\bibfield  {journal} {\bibinfo
			{journal} {Europhys. Lett. (EPL)}\ }\textbf {\bibinfo {volume} {118}},\
		\bibinfo {pages} {40005} (\bibinfo {year} {2017})}\BibitemShut {NoStop}%
	\bibitem [{\citenamefont {Funo}\ \emph {et~al.}(2017)\citenamefont {Funo},
		\citenamefont {Zhang}, \citenamefont {Chatou}, \citenamefont {Kim},
		\citenamefont {Ueda},\ and\ \citenamefont {del Campo}}]{Funo:17}%
	\BibitemOpen
	\bibfield  {author} {\bibinfo {author} {\bibfnamefont {K.}~\bibnamefont
			{Funo}}, \bibinfo {author} {\bibfnamefont {J.-N.}\ \bibnamefont {Zhang}},
		\bibinfo {author} {\bibfnamefont {C.}~\bibnamefont {Chatou}}, \bibinfo
		{author} {\bibfnamefont {K.}~\bibnamefont {Kim}}, \bibinfo {author}
		{\bibfnamefont {M.}~\bibnamefont {Ueda}}, \ and\ \bibinfo {author}
		{\bibfnamefont {A.}~\bibnamefont {del Campo}},\ }\enquote{\bibinfo {title}
		{Universal work fluctuations during shortcuts to adiabaticity by
			counterdiabatic driving}},\ \href {\doibase 10.1103/PhysRevLett.118.100602}
	{\bibfield  {journal} {\bibinfo  {journal} {Phys. Rev. Lett.}\ }\textbf
		{\bibinfo {volume} {118}},\ \bibinfo {pages} {100602} (\bibinfo {year}
		{2017})}\BibitemShut {NoStop}%
	\bibitem [{\citenamefont {Deffner}(2016)}]{Deffner:16}%
	\BibitemOpen
	\bibfield  {author} {\bibinfo {author} {\bibfnamefont {S.}~\bibnamefont
			{Deffner}},\ }\enquote{\bibinfo {title} {Shortcuts to adiabaticity:
			suppression of pair production in driven dirac dynamics}},\ \href {\doibase
		10.1088/1367-2630/18/1/012001} {\bibfield  {journal} {\bibinfo  {journal}
			{New J. Phys.}\ }\textbf {\bibinfo {volume} {18}},\ \bibinfo {pages} {012001}
		(\bibinfo {year} {2016})}\BibitemShut {NoStop}%
	\bibitem [{\citenamefont {Berry}(2009)}]{Berry:09}%
	\BibitemOpen
	\bibfield  {author} {\bibinfo {author} {\bibfnamefont {M.}~\bibnamefont
			{Berry}},\ }\enquote{\bibinfo {title} {Transitionless quantum driving}},\
	\href {\doibase 10.1088/1751-8113/42/36/365303} {\bibfield  {journal}
		{\bibinfo  {journal} {J. Phys. A: Math. Theor.}\ }\textbf {\bibinfo {volume}
			{42}},\ \bibinfo {pages} {365303} (\bibinfo {year} {2009})}\BibitemShut
	{NoStop}%
	\bibitem [{\citenamefont {Campbell}\ and\ \citenamefont
		{Deffner}(2017)}]{Campbell-Deffner:17}%
	\BibitemOpen
	\bibfield  {author} {\bibinfo {author} {\bibfnamefont {S.}~\bibnamefont
			{Campbell}}\ and\ \bibinfo {author} {\bibfnamefont {S.}~\bibnamefont
			{Deffner}},\ }\enquote{\bibinfo {title} {Trade-off between speed and cost in
			shortcuts to adiabaticity}},\ \href {\doibase 10.1103/PhysRevLett.118.100601}
	{\bibfield  {journal} {\bibinfo  {journal} {Phys. Rev. Lett.}\ }\textbf
		{\bibinfo {volume} {118}},\ \bibinfo {pages} {100601} (\bibinfo {year}
		{2017})}\BibitemShut {NoStop}%
	\bibitem [{\citenamefont {Du}\ \emph {et~al.}(2008)\citenamefont {Du},
		\citenamefont {Hu}, \citenamefont {Wang}, \citenamefont {Wu}, \citenamefont
		{Zhao},\ and\ \citenamefont {Suter}}]{Suter:08}%
	\BibitemOpen
	\bibfield  {author} {\bibinfo {author} {\bibfnamefont {J.}~\bibnamefont
			{Du}}, \bibinfo {author} {\bibfnamefont {L.}~\bibnamefont {Hu}}, \bibinfo
		{author} {\bibfnamefont {Y.}~\bibnamefont {Wang}}, \bibinfo {author}
		{\bibfnamefont {J.}~\bibnamefont {Wu}}, \bibinfo {author} {\bibfnamefont
			{M.}~\bibnamefont {Zhao}}, \ and\ \bibinfo {author} {\bibfnamefont
			{D.}~\bibnamefont {Suter}},\ }\enquote{\bibinfo {title} {Experimental study
			of the validity of quantitative conditions in the quantum adiabatic
			theorem}},\ \href@noop {} {\bibfield  {journal} {\bibinfo  {journal} {Phys.
				Rev. Lett.}\ }\textbf {\bibinfo {volume} {101}},\ \bibinfo {pages} {060403}
		(\bibinfo {year} {2008})}\BibitemShut {NoStop}%
	\bibitem [{\citenamefont {Teufel}(2003)}]{Teufel:03}%
	\BibitemOpen
	\bibfield  {author} {\bibinfo {author} {\bibfnamefont {S.}~\bibnamefont
			{Teufel}},\ }\href {\doibase 10.1007/b13355} {\emph {\bibinfo {title}
			{Adiabatic perturbation theory in quantum dynamics}}},\ Vol.\ \bibinfo
	{volume} {1821}\ (\bibinfo  {publisher} {Lecture notes in mathematics,
		Springer-Verlag Berlin Heidelberg},\ \bibinfo {year} {2003})\BibitemShut
	{NoStop}%
	\bibitem [{\citenamefont {{Ambainis}}\ and\ \citenamefont
		{{Regev}}(2004)}]{Ambainis:04}%
	\BibitemOpen
	\bibfield  {author} {\bibinfo {author} {\bibfnamefont {A.}~\bibnamefont
			{{Ambainis}}}\ and\ \bibinfo {author} {\bibfnamefont {O.}~\bibnamefont
			{{Regev}}},\ }\enquote{\bibinfo {title} {{An Elementary Proof of the Quantum
				Adiabatic Theorem}}},\ \href@noop {} {\bibfield  {journal} {\bibinfo
			{journal} {arXiv e-prints}\ ,\ \bibinfo {eid} {quant-ph/0411152}} (\bibinfo
		{year} {2004})},\ \Eprint {http://arxiv.org/abs/quant-ph/0411152}
	{arXiv:quant-ph/0411152 [quant-ph]} \BibitemShut {NoStop}%
	\bibitem [{\citenamefont {Tong}\ \emph {et~al.}(2005)\citenamefont {Tong},
		\citenamefont {Singh}, \citenamefont {Kwek},\ and\ \citenamefont
		{Oh}}]{Tong:05}%
	\BibitemOpen
	\bibfield  {author} {\bibinfo {author} {\bibfnamefont {D.~M.}\ \bibnamefont
			{Tong}}, \bibinfo {author} {\bibfnamefont {K.}~\bibnamefont {Singh}},
		\bibinfo {author} {\bibfnamefont {L.~C.}\ \bibnamefont {Kwek}}, \ and\
		\bibinfo {author} {\bibfnamefont {C.~H.}\ \bibnamefont {Oh}},\
	}\enquote{\bibinfo {title} {Quantitative conditions do not guarantee the
			validity of the adiabatic approximation}},\ \href {\doibase
		10.1103/PhysRevLett.95.110407} {\bibfield  {journal} {\bibinfo  {journal}
			{Phys. Rev. Lett.}\ }\textbf {\bibinfo {volume} {95}},\ \bibinfo {pages}
		{110407} (\bibinfo {year} {2005})}\BibitemShut {NoStop}%
	\bibitem [{\citenamefont {Tong}(2010)}]{Tong:10}%
	\BibitemOpen
	\bibfield  {author} {\bibinfo {author} {\bibfnamefont {D.~M.}\ \bibnamefont
			{Tong}},\ }\enquote{\bibinfo {title} {Quantitative condition is necessary in
			guaranteeing the validity of the adiabatic approximation}},\ \href {\doibase
		10.1103/PhysRevLett.104.120401} {\bibfield  {journal} {\bibinfo  {journal}
			{Phys. Rev. Lett.}\ }\textbf {\bibinfo {volume} {104}},\ \bibinfo {pages}
		{120401} (\bibinfo {year} {2010})}\BibitemShut {NoStop}%
	\bibitem [{\citenamefont {Cao}\ \emph {et~al.}(2013)\citenamefont {Cao},
		\citenamefont {Guo}, \citenamefont {Chen},\ and\ \citenamefont
		{Wang}}]{Cao:13}%
	\BibitemOpen
	\bibfield  {author} {\bibinfo {author} {\bibfnamefont {H.}~\bibnamefont
			{Cao}}, \bibinfo {author} {\bibfnamefont {Z.}~\bibnamefont {Guo}}, \bibinfo
		{author} {\bibfnamefont {Z.}~\bibnamefont {Chen}}, \ and\ \bibinfo {author}
		{\bibfnamefont {W.}~\bibnamefont {Wang}},\ }\enquote{\bibinfo {title}
		{Quantitative sufficient conditions for adiabatic approximation}},\ \href
	{\doibase 10.1007/s11433-013-5127-0} {\bibfield  {journal} {\bibinfo
			{journal} {Sci. China Phys. Mech. Astron.}\ }\textbf {\bibinfo {volume}
			{56}},\ \bibinfo {pages} {1401} (\bibinfo {year} {2013})}\BibitemShut
	{NoStop}%
	\bibitem [{\citenamefont {Tong}\ \emph {et~al.}(2007)\citenamefont {Tong},
		\citenamefont {Singh}, \citenamefont {Kwek},\ and\ \citenamefont
		{Oh}}]{Tong:07}%
	\BibitemOpen
	\bibfield  {author} {\bibinfo {author} {\bibfnamefont {D.~M.}\ \bibnamefont
			{Tong}}, \bibinfo {author} {\bibfnamefont {K.}~\bibnamefont {Singh}},
		\bibinfo {author} {\bibfnamefont {L.~C.}\ \bibnamefont {Kwek}}, \ and\
		\bibinfo {author} {\bibfnamefont {C.~H.}\ \bibnamefont {Oh}},\
	}\enquote{\bibinfo {title} {Sufficiency criterion for the validity of the
			adiabatic approximation}},\ \href {\doibase 10.1103/PhysRevLett.98.150402}
	{\bibfield  {journal} {\bibinfo  {journal} {Phys. Rev. Lett.}\ }\textbf
		{\bibinfo {volume} {98}},\ \bibinfo {pages} {150402} (\bibinfo {year}
		{2007})}\BibitemShut {NoStop}%
	\bibitem [{\citenamefont {Yu}\ \emph {et~al.}(2014)\citenamefont {Yu},
		\citenamefont {Cao}, \citenamefont {Guo},\ and\ \citenamefont
		{Wang}}]{Yu:14}%
	\BibitemOpen
	\bibfield  {author} {\bibinfo {author} {\bibfnamefont {B.}~\bibnamefont
			{Yu}}, \bibinfo {author} {\bibfnamefont {H.}~\bibnamefont {Cao}}, \bibinfo
		{author} {\bibfnamefont {Z.}~\bibnamefont {Guo}}, \ and\ \bibinfo {author}
		{\bibfnamefont {W.}~\bibnamefont {Wang}},\ }\enquote{\bibinfo {title}
		{Computable upper bounds for the adiabatic approximation errors}},\ \href
	{\doibase 10.1007/s11433-014-5504-3} {\bibfield  {journal} {\bibinfo
			{journal} {Science China Physics, Mechanics {\&} Astronomy}\ }\textbf
		{\bibinfo {volume} {57}},\ \bibinfo {pages} {2031} (\bibinfo {year}
		{2014})}\BibitemShut {NoStop}%
	\bibitem [{\citenamefont {Wang}\ \emph {et~al.}(2015)\citenamefont {Wang},
		\citenamefont {Cao}, \citenamefont {Lu},\ and\ \citenamefont {Yu}}]{Wang:15}%
	\BibitemOpen
	\bibfield  {author} {\bibinfo {author} {\bibfnamefont {W.}~\bibnamefont
			{Wang}}, \bibinfo {author} {\bibfnamefont {H.}~\bibnamefont {Cao}}, \bibinfo
		{author} {\bibfnamefont {L.}~\bibnamefont {Lu}}, \ and\ \bibinfo {author}
		{\bibfnamefont {B.}~\bibnamefont {Yu}},\ }\enquote{\bibinfo {title} {An upper
			bound for the generalized adiabatic approximation error with a superposition
			initial state}},\ \href {\doibase https://doi.org/10.1007/s11433-014-5604-0}
	{\bibfield  {journal} {\bibinfo  {journal} {Science China Physics, Mechanics
				\& Astronomy}\ }\textbf {\bibinfo {volume} {58}},\ \bibinfo {pages} {1}
		(\bibinfo {year} {2015})}\BibitemShut {NoStop}%
	\bibitem [{\citenamefont {{Hu}}\ \emph {et~al.}(2019)\citenamefont {{Hu}},
		\citenamefont {{Cui}}, \citenamefont {{Santos}}, \citenamefont {{Huang}},
		\citenamefont {{Li}}, \citenamefont {{Guo}}, \citenamefont {{Brito}},\ and\
		\citenamefont {{Sarandy}}}]{Hu:19-b}%
	\BibitemOpen
	\bibfield  {author} {\bibinfo {author} {\bibfnamefont {C.-K.}\ \bibnamefont
			{{Hu}}}, \bibinfo {author} {\bibfnamefont {J.-M.}\ \bibnamefont {{Cui}}},
		\bibinfo {author} {\bibfnamefont {A.~C.}\ \bibnamefont {{Santos}}}, \bibinfo
		{author} {\bibfnamefont {Y.-F.}\ \bibnamefont {{Huang}}}, \bibinfo {author}
		{\bibfnamefont {C.-F.}\ \bibnamefont {{Li}}}, \bibinfo {author}
		{\bibfnamefont {G.-C.}\ \bibnamefont {{Guo}}}, \bibinfo {author}
		{\bibfnamefont {F.}~\bibnamefont {{Brito}}}, \ and\ \bibinfo {author}
		{\bibfnamefont {M.~S.}\ \bibnamefont {{Sarandy}}},\ }\enquote{\bibinfo
		{title} {Validation of quantum adiabaticity through non-inertial frames and
			its trapped-ion realization}},\ \href {\doibase 10.1038/s41598-019-46754-z}
	{\bibfield  {journal} {\bibinfo  {journal} {Sci. Rep.}\ }\textbf {\bibinfo
			{volume} {9}},\ \bibinfo {pages} {10449} (\bibinfo {year}
		{2019})}\BibitemShut {NoStop}%
	\bibitem [{\citenamefont {Santos}\ \emph {et~al.}(2019)\citenamefont {Santos},
		\citenamefont {\c{C}akmak}, \citenamefont {Campbell},\ and\ \citenamefont
		{Zinner}}]{Santos:19-a}%
	\BibitemOpen
	\bibfield  {author} {\bibinfo {author} {\bibfnamefont {A.~C.}\ \bibnamefont
			{Santos}}, \bibinfo {author} {\bibfnamefont {B.}~\bibnamefont {\c{C}akmak}},
		\bibinfo {author} {\bibfnamefont {S.}~\bibnamefont {Campbell}}, \ and\
		\bibinfo {author} {\bibfnamefont {N.~T.}\ \bibnamefont {Zinner}},\
	}\enquote{\bibinfo {title} {Stable adiabatic quantum batteries}},\ \href
	{\doibase 10.1103/PhysRevE.100.032107} {\bibfield  {journal} {\bibinfo
			{journal} {Phys. Rev. E}\ }\textbf {\bibinfo {volume} {100}},\ \bibinfo
		{pages} {032107} (\bibinfo {year} {2019})}\BibitemShut {NoStop}%
	\bibitem [{\citenamefont {Santos}\ \emph
		{et~al.}(2020{\natexlab{a}})\citenamefont {Santos}, \citenamefont {Saguia},\
		and\ \citenamefont {Sarandy}}]{Santos:20c}%
	\BibitemOpen
	\bibfield  {author} {\bibinfo {author} {\bibfnamefont {A.~C.}\ \bibnamefont
			{Santos}}, \bibinfo {author} {\bibfnamefont {A.}~\bibnamefont {Saguia}}, \
		and\ \bibinfo {author} {\bibfnamefont {M.~S.}\ \bibnamefont {Sarandy}},\
	}\enquote{\bibinfo {title} {Stable and charge-switchable quantum
			batteries}},\ \href {\doibase 10.1103/PhysRevE.101.062114} {\bibfield
		{journal} {\bibinfo  {journal} {Phys. Rev. E}\ }\textbf {\bibinfo {volume}
			{101}},\ \bibinfo {pages} {062114} (\bibinfo {year}
		{2020}{\natexlab{a}})}\BibitemShut {NoStop}%
	\bibitem [{\citenamefont {Beau}\ \emph {et~al.}(2016)\citenamefont {Beau},
		\citenamefont {Jaramillo},\ and\ \citenamefont {del Campo}}]{Adolfo:16}%
	\BibitemOpen
	\bibfield  {author} {\bibinfo {author} {\bibfnamefont {M.}~\bibnamefont
			{Beau}}, \bibinfo {author} {\bibfnamefont {J.}~\bibnamefont {Jaramillo}}, \
		and\ \bibinfo {author} {\bibfnamefont {A.}~\bibnamefont {del Campo}},\
	}\enquote{\bibinfo {title} {Scaling-up quantum heat engines efficiently via
			shortcuts to adiabaticity}},\ \href {\doibase 10.3390/e18050168} {\bibfield
		{journal} {\bibinfo  {journal} {Entropy}\ }\textbf {\bibinfo {volume} {18}},\
		\bibinfo {pages} {168} (\bibinfo {year} {2016})}\BibitemShut {NoStop}%
	\bibitem [{\citenamefont {Stefanatos}(2014)}]{Stefanatos:14}%
	\BibitemOpen
	\bibfield  {author} {\bibinfo {author} {\bibfnamefont {D.}~\bibnamefont
			{Stefanatos}},\ }\enquote{\bibinfo {title} {Design of a photonic lattice
			using shortcuts to adiabaticity}},\ \href {\doibase
		10.1103/PhysRevA.90.023811} {\bibfield  {journal} {\bibinfo  {journal} {Phys.
				Rev. A}\ }\textbf {\bibinfo {volume} {90}},\ \bibinfo {pages} {023811}
		(\bibinfo {year} {2014})}\BibitemShut {NoStop}%
	\bibitem [{\citenamefont {Lu}\ \emph {et~al.}(2014)\citenamefont {Lu},
		\citenamefont {Xia}, \citenamefont {Shen}, \citenamefont {Song},\ and\
		\citenamefont {An}}]{Lu:14}%
	\BibitemOpen
	\bibfield  {author} {\bibinfo {author} {\bibfnamefont {M.}~\bibnamefont
			{Lu}}, \bibinfo {author} {\bibfnamefont {Y.}~\bibnamefont {Xia}}, \bibinfo
		{author} {\bibfnamefont {L.-T.}\ \bibnamefont {Shen}}, \bibinfo {author}
		{\bibfnamefont {J.}~\bibnamefont {Song}}, \ and\ \bibinfo {author}
		{\bibfnamefont {N.~B.}\ \bibnamefont {An}},\ }\enquote{\bibinfo {title}
		{Shortcuts to adiabatic passage for population transfer and maximum
			entanglement creation between two atoms in a cavity}},\ \href {\doibase
		10.1103/PhysRevA.89.012326} {\bibfield  {journal} {\bibinfo  {journal} {Phys.
				Rev. A}\ }\textbf {\bibinfo {volume} {89}},\ \bibinfo {pages} {012326}
		(\bibinfo {year} {2014})}\BibitemShut {NoStop}%
	\bibitem [{\citenamefont {Liang}\ \emph {et~al.}(2016)\citenamefont {Liang},
		\citenamefont {Yue}, \citenamefont {Lv}, \citenamefont {Du}, \citenamefont
		{Huang}, \citenamefont {Yan},\ and\ \citenamefont {Zhu}}]{Liang:16}%
	\BibitemOpen
	\bibfield  {author} {\bibinfo {author} {\bibfnamefont {Z.-T.}\ \bibnamefont
			{Liang}}, \bibinfo {author} {\bibfnamefont {X.}~\bibnamefont {Yue}}, \bibinfo
		{author} {\bibfnamefont {Q.}~\bibnamefont {Lv}}, \bibinfo {author}
		{\bibfnamefont {Y.-X.}\ \bibnamefont {Du}}, \bibinfo {author} {\bibfnamefont
			{W.}~\bibnamefont {Huang}}, \bibinfo {author} {\bibfnamefont
			{H.}~\bibnamefont {Yan}}, \ and\ \bibinfo {author} {\bibfnamefont {S.-L.}\
			\bibnamefont {Zhu}},\ }\enquote{\bibinfo {title} {Proposal for implementing
			universal superadiabatic geometric quantum gates in nitrogen-vacancy
			centers}},\ \href {\doibase 10.1103/PhysRevA.93.040305} {\bibfield  {journal}
		{\bibinfo  {journal} {Phys. Rev. A}\ }\textbf {\bibinfo {volume} {93}},\
		\bibinfo {pages} {040305} (\bibinfo {year} {2016})}\BibitemShut {NoStop}%
	\bibitem [{\citenamefont {An}\ \emph {et~al.}(2016)\citenamefont {An},
		\citenamefont {Lv}, \citenamefont {Del~Campo},\ and\ \citenamefont
		{Kim}}]{An:16}%
	\BibitemOpen
	\bibfield  {author} {\bibinfo {author} {\bibfnamefont {S.}~\bibnamefont
			{An}}, \bibinfo {author} {\bibfnamefont {D.}~\bibnamefont {Lv}}, \bibinfo
		{author} {\bibfnamefont {A.}~\bibnamefont {Del~Campo}}, \ and\ \bibinfo
		{author} {\bibfnamefont {K.}~\bibnamefont {Kim}},\ }\enquote{\bibinfo {title}
		{Shortcuts to adiabaticity by counterdiabatic driving for trapped-ion
			displacement in phase space}},\ \href {\doibase 10.1038/ncomms12999}
	{\bibfield  {journal} {\bibinfo  {journal} {Nat. Commun.}\ }\textbf {\bibinfo
			{volume} {7}},\ \bibinfo {pages} {12999} (\bibinfo {year}
		{2016})}\BibitemShut {NoStop}%
	\bibitem [{\citenamefont {{Zhang}}\ \emph {et~al.}(2016)\citenamefont
		{{Zhang}}, \citenamefont {{Song}}, \citenamefont {{Ai}}, \citenamefont
		{{Zhang}},\ and\ \citenamefont {{Deng}}}]{Zhang:16}%
	\BibitemOpen
	\bibfield  {author} {\bibinfo {author} {\bibfnamefont {H.}~\bibnamefont
			{{Zhang}}}, \bibinfo {author} {\bibfnamefont {X.-K.}\ \bibnamefont {{Song}}},
		\bibinfo {author} {\bibfnamefont {Q.}~\bibnamefont {{Ai}}}, \bibinfo {author}
		{\bibfnamefont {M.}~\bibnamefont {{Zhang}}}, \ and\ \bibinfo {author}
		{\bibfnamefont {F.-G.}\ \bibnamefont {{Deng}}},\ }\enquote{\bibinfo {title}
		{{Transitionless intra-cavity quantum state transfer in optomechanical
				systems}}},\ \href@noop {} {\bibfield  {journal} {\bibinfo  {journal} {arXiv
				e-prints}\ ,\ \bibinfo {eid} {arXiv:1610.09938}} (\bibinfo {year} {2016})},\
	\Eprint {http://arxiv.org/abs/1610.09938} {arXiv:1610.09938 [quant-ph]}
	\BibitemShut {NoStop}%
	\bibitem [{\citenamefont {Herrera}\ \emph {et~al.}(2014)\citenamefont
		{Herrera}, \citenamefont {Sarandy}, \citenamefont {Duzzioni},\ and\
		\citenamefont {Serra}}]{Marcela:14}%
	\BibitemOpen
	\bibfield  {author} {\bibinfo {author} {\bibfnamefont {M.}~\bibnamefont
			{Herrera}}, \bibinfo {author} {\bibfnamefont {M.~S.}\ \bibnamefont
			{Sarandy}}, \bibinfo {author} {\bibfnamefont {E.~I.}\ \bibnamefont
			{Duzzioni}}, \ and\ \bibinfo {author} {\bibfnamefont {R.~M.}\ \bibnamefont
			{Serra}},\ }\enquote{\bibinfo {title} {Nonadiabatic quantum state engineering
			driven by fast quench dynamics}},\ \href {\doibase
		10.1103/PhysRevA.89.022323} {\bibfield  {journal} {\bibinfo  {journal} {Phys.
				Rev. A}\ }\textbf {\bibinfo {volume} {89}},\ \bibinfo {pages} {022323}
		(\bibinfo {year} {2014})}\BibitemShut {NoStop}%
	\bibitem [{\citenamefont {Santos}\ and\ \citenamefont
		{Sarandy}(2018)}]{Santos:18-b}%
	\BibitemOpen
	\bibfield  {author} {\bibinfo {author} {\bibfnamefont {A.~C.}\ \bibnamefont
			{Santos}}\ and\ \bibinfo {author} {\bibfnamefont {M.~S.}\ \bibnamefont
			{Sarandy}},\ }\enquote{\bibinfo {title} {Generalized shortcuts to
			adiabaticity and enhanced robustness against decoherence}},\ \href {\doibase
		10.1088/1751-8121/aa96f1} {\bibfield  {journal} {\bibinfo  {journal} {J.
				Phys. A: Math. Theor.}\ }\textbf {\bibinfo {volume} {51}},\ \bibinfo {pages}
		{025301} (\bibinfo {year} {2018})}\BibitemShut {NoStop}%
	\bibitem [{\citenamefont {Santos}\ \emph
		{et~al.}(2020{\natexlab{b}})\citenamefont {Santos}, \citenamefont {Nicotina},
		\citenamefont {Souza}, \citenamefont {Sarthour}, \citenamefont {Oliveira},\
		and\ \citenamefont {Sarandy}}]{Santos:20b}%
	\BibitemOpen
	\bibfield  {author} {\bibinfo {author} {\bibfnamefont {A.~C.}\ \bibnamefont
			{Santos}}, \bibinfo {author} {\bibfnamefont {A.}~\bibnamefont {Nicotina}},
		\bibinfo {author} {\bibfnamefont {A.~M.}\ \bibnamefont {Souza}}, \bibinfo
		{author} {\bibfnamefont {R.~S.}\ \bibnamefont {Sarthour}}, \bibinfo {author}
		{\bibfnamefont {I.~S.}\ \bibnamefont {Oliveira}}, \ and\ \bibinfo {author}
		{\bibfnamefont {M.~S.}\ \bibnamefont {Sarandy}},\ }\enquote{\bibinfo {title}
		{Optimizing {NMR} quantum information processing via generalized
			transitionless quantum driving}},\ \href {\doibase
		10.1209/0295-5075/129/30008} {\bibfield  {journal} {\bibinfo  {journal}
			{{EPL} (Europhysics Letters)}\ }\textbf {\bibinfo {volume} {129}},\ \bibinfo
		{pages} {30008} (\bibinfo {year} {2020}{\natexlab{b}})}\BibitemShut {NoStop}%
	\bibitem [{\citenamefont {Hen}(2014)}]{Hen:14}%
	\BibitemOpen
	\bibfield  {author} {\bibinfo {author} {\bibfnamefont {I.}~\bibnamefont
			{Hen}},\ }\enquote{\bibinfo {title} {Fourier-transforming with quantum
			annealers}},\ \href {\doibase 10.3389/fphy.2014.00044} {\bibfield  {journal}
		{\bibinfo  {journal} {Front. Phys}\ }\textbf {\bibinfo {volume} {2}},\
		\bibinfo {pages} {44} (\bibinfo {year} {2014})}\BibitemShut {NoStop}%
	\bibitem [{\citenamefont {Hu}(2019)}]{Hu:Thesis}%
	\BibitemOpen
	\bibfield  {author} {\bibinfo {author} {\bibfnamefont {C.-K.}\ \bibnamefont
			{Hu}},\ }\emph {\bibinfo {title} {Experimental research on adiabatic quantum
			control based on trapped ion}},\ \href
	{http://kns.cnki.net/KCMS/detail/detail.aspx?dbcode=CDFD&filename=1019057928.nh}
	{Ph.D. thesis},\ \bibinfo  {school} {University of Science and Technology of
		China}, \bibinfo {address} {Hefei, People's Republic of China} (\bibinfo
	{year} {2019}),\ \bibinfo {note} {language: Chinese}\BibitemShut {NoStop}%
	\bibitem [{\citenamefont {Zettili}(2009)}]{Zettili:Book}%
	\BibitemOpen
	\bibfield  {author} {\bibinfo {author} {\bibfnamefont {N.}~\bibnamefont
			{Zettili}},\ }\href
	{https://www.wiley.com/en-us/Quantum+Mechanics%3A+Concepts+and+Applications%2C+2nd+Edition-p-978047003}
		{\emph {\bibinfo {title} {Quantum Mechanics: Concepts and Applications}}},\
		\bibinfo {edition} {2nd}\ ed.\ (\bibinfo  {publisher} {John Wiley \& Sons},\
		\bibinfo {address} {Chichester, UK},\ \bibinfo {year} {2009})\BibitemShut
		{NoStop}%
		\bibitem [{\citenamefont {Alicki}\ and\ \citenamefont
			{Lendi}(2007)}]{Alicki:Book07}%
		\BibitemOpen
		\bibfield  {author} {\bibinfo {author} {\bibfnamefont {R.}~\bibnamefont
				{Alicki}}\ and\ \bibinfo {author} {\bibfnamefont {K.}~\bibnamefont {Lendi}},\
		}\enquote{\bibinfo {title} {Recent developments}},\ in\ \href {\doibase
			10.1007/3-540-70861-8_3} {\emph {\bibinfo {booktitle} {Quantum Dynamical
					Semigroups and Applications}}}\ (\bibinfo  {publisher} {Springer Berlin
			Heidelberg},\ \bibinfo {address} {Berlin, Heidelberg},\ \bibinfo {year}
		{2007})\ pp.\ \bibinfo {pages} {109--121}\BibitemShut {NoStop}%
		\bibitem [{\citenamefont {Horn}\ and\ \citenamefont
			{Johnson}(2012)}]{Horn:Book}%
		\BibitemOpen
		\bibfield  {author} {\bibinfo {author} {\bibfnamefont {R.~A.}\ \bibnamefont
				{Horn}}\ and\ \bibinfo {author} {\bibfnamefont {C.~R.}\ \bibnamefont
				{Johnson}},\ }\href {\doibase 10.1017/9781139020411} {\emph {\bibinfo {title}
				{Matrix analysis}}},\ \bibinfo {edition} {2nd}\ ed.\ (\bibinfo  {publisher}
		{Cambridge University Press},\ \bibinfo {address} {Cambridge},\ \bibinfo
		{year} {2012})\BibitemShut {NoStop}%
		\bibitem [{\citenamefont {Ehrenfest}(1916)}]{Ehrenfest:916}%
		\BibitemOpen
		\bibfield  {author} {\bibinfo {author} {\bibfnamefont {P.}~\bibnamefont
				{Ehrenfest}},\ }\enquote{\bibinfo {title} {Adiabatische invarianten und
				quantentheorie}},\ \href {\doibase 10.1002/andp.19163561905} {\bibfield
			{journal} {\bibinfo  {journal} {Ann. Phys. (Leipzig)}\ }\textbf {\bibinfo
				{volume} {356}},\ \bibinfo {pages} {327} (\bibinfo {year}
			{1916})}\BibitemShut {NoStop}%
		\bibitem [{\citenamefont {Marzlin}\ and\ \citenamefont
			{Sanders}(2004)}]{Marzlin:04}%
		\BibitemOpen
		\bibfield  {author} {\bibinfo {author} {\bibfnamefont {K.-P.}\ \bibnamefont
				{Marzlin}}\ and\ \bibinfo {author} {\bibfnamefont {B.~C.}\ \bibnamefont
				{Sanders}},\ }\enquote{\bibinfo {title} {Inconsistency in the application of
				the adiabatic theorem}},\ \href {\doibase 10.1103/PhysRevLett.93.160408}
		{\bibfield  {journal} {\bibinfo  {journal} {Phys. Rev. Lett.}\ }\textbf
			{\bibinfo {volume} {93}},\ \bibinfo {pages} {160408} (\bibinfo {year}
			{2004})}\BibitemShut {NoStop}%
		\bibitem [{\citenamefont {Wu}\ \emph {et~al.}(2008)\citenamefont {Wu},
			\citenamefont {Zhao}, \citenamefont {Chen},\ and\ \citenamefont
			{Zhang}}]{Jianda-Wu:08}%
		\BibitemOpen
		\bibfield  {author} {\bibinfo {author} {\bibfnamefont {J.-d.}\ \bibnamefont
				{Wu}}, \bibinfo {author} {\bibfnamefont {M.-s.}\ \bibnamefont {Zhao}},
			\bibinfo {author} {\bibfnamefont {J.-l.}\ \bibnamefont {Chen}}, \ and\
			\bibinfo {author} {\bibfnamefont {Y.-d.}\ \bibnamefont {Zhang}},\
		}\enquote{\bibinfo {title} {Adiabatic condition and quantum geometric
				potential}},\ \href {\doibase 10.1103/PhysRevA.77.062114} {\bibfield
			{journal} {\bibinfo  {journal} {Phys. Rev. A}\ }\textbf {\bibinfo {volume}
				{77}},\ \bibinfo {pages} {062114} (\bibinfo {year} {2008})}\BibitemShut
		{NoStop}%
		\bibitem [{\citenamefont {Berry}(1984)}]{Berry:84}%
		\BibitemOpen
		\bibfield  {author} {\bibinfo {author} {\bibfnamefont {M.~V.}\ \bibnamefont
				{Berry}},\ }\enquote{\bibinfo {title} {Quantal phase factors accompanying
				adiabatic changes}},\ \href {\doibase 10.1098/rspa.1984.0023} {\bibfield
			{journal} {\bibinfo  {journal} {Proc. R. Soc. A}\ }\textbf {\bibinfo {volume}
				{392}},\ \bibinfo {pages} {45} (\bibinfo {year} {1984})}\BibitemShut
		{NoStop}%
		\bibitem [{\citenamefont {Klink}(1997)}]{Klink:97}%
		\BibitemOpen
		\bibfield  {author} {\bibinfo {author} {\bibfnamefont {W.}~\bibnamefont
				{Klink}},\ }\enquote{\bibinfo {title} {Quantum mechanics in nonintertial
				reference frames}},\ \href {\doibase https://doi.org/10.1006/aphy.1997.5720}
		{\bibfield  {journal} {\bibinfo  {journal} {Annals of Physics}\ }\textbf
			{\bibinfo {volume} {260}},\ \bibinfo {pages} {27 } (\bibinfo {year}
			{1997})}\BibitemShut {NoStop}%
		\bibitem [{\citenamefont {Olmschenk}\ \emph {et~al.}(2007)\citenamefont
			{Olmschenk}, \citenamefont {Younge}, \citenamefont {Moehring}, \citenamefont
			{Matsukevich}, \citenamefont {Maunz},\ and\ \citenamefont
			{Monroe}}]{Olmschenk:07}%
		\BibitemOpen
		\bibfield  {author} {\bibinfo {author} {\bibfnamefont {S.}~\bibnamefont
				{Olmschenk}}, \bibinfo {author} {\bibfnamefont {K.~C.}\ \bibnamefont
				{Younge}}, \bibinfo {author} {\bibfnamefont {D.~L.}\ \bibnamefont
				{Moehring}}, \bibinfo {author} {\bibfnamefont {D.~N.}\ \bibnamefont
				{Matsukevich}}, \bibinfo {author} {\bibfnamefont {P.}~\bibnamefont {Maunz}},
			\ and\ \bibinfo {author} {\bibfnamefont {C.}~\bibnamefont {Monroe}},\
		}\enquote{\bibinfo {title} {Manipulation and detection of a trapped
				${\text{yb}}^{+}$ hyperfine qubit}},\ \href {\doibase
			10.1103/PhysRevA.76.052314} {\bibfield  {journal} {\bibinfo  {journal} {Phys.
					Rev. A}\ }\textbf {\bibinfo {volume} {76}},\ \bibinfo {pages} {052314}
			(\bibinfo {year} {2007})}\BibitemShut {NoStop}%
		\bibitem [{\citenamefont {De~Morgan}(2014)}]{DeMorgan:Collection}%
		\BibitemOpen
		\bibfield  {author} {\bibinfo {author} {\bibfnamefont {A.}~\bibnamefont
				{De~Morgan}},\ }\href {\doibase 10.1017/CBO9781107280991} {\emph {\bibinfo
				{title} {Formal Logic: Or, The Calculus of Inference, Necessary and
					Probable}}},\ Cambridge Library Collection - Mathematics\ (\bibinfo
		{publisher} {Cambridge University Press},\ \bibinfo {address} {Cambridge},\
		\bibinfo {year} {2014})\BibitemShut {NoStop}%
		\bibitem [{\citenamefont {Hurley}\ and\ \citenamefont
			{Watson}(2018)}]{Hurley:Book}%
		\BibitemOpen
		\bibfield  {author} {\bibinfo {author} {\bibfnamefont {P.}~\bibnamefont
				{Hurley}}\ and\ \bibinfo {author} {\bibfnamefont {L.}~\bibnamefont
				{Watson}},\ }\href {https://www.cengage.co.uk/books/9781305958098/} {\emph
			{\bibinfo {title} {A Concise Introduction to Logic}}},\ \bibinfo {edition}
		{13th}\ ed.\ (\bibinfo  {publisher} {Cengage Learning},\ \bibinfo {address}
		{Boston, USA},\ \bibinfo {year} {2018})\BibitemShut {NoStop}%
		\bibitem [{\citenamefont {Alicki}\ and\ \citenamefont
			{Fannes}(2013)}]{Alicki:13}%
		\BibitemOpen
		\bibfield  {author} {\bibinfo {author} {\bibfnamefont {R.}~\bibnamefont
				{Alicki}}\ and\ \bibinfo {author} {\bibfnamefont {M.}~\bibnamefont
				{Fannes}},\ }\enquote{\bibinfo {title} {Entanglement boost for extractable
				work from ensembles of quantum batteries}},\ \href {\doibase
			10.1103/PhysRevE.87.042123} {\bibfield  {journal} {\bibinfo  {journal} {Phys.
					Rev. E}\ }\textbf {\bibinfo {volume} {87}},\ \bibinfo {pages} {042123}
			(\bibinfo {year} {2013})}\BibitemShut {NoStop}%
		\bibitem [{\citenamefont {Hovhannisyan}\ \emph {et~al.}(2013)\citenamefont
			{Hovhannisyan}, \citenamefont {Perarnau-Llobet}, \citenamefont {Huber},\ and\
			\citenamefont {Ac\'{\i}n}}]{PRL2013Huber}%
		\BibitemOpen
		\bibfield  {author} {\bibinfo {author} {\bibfnamefont {K.~V.}\ \bibnamefont
				{Hovhannisyan}}, \bibinfo {author} {\bibfnamefont {M.}~\bibnamefont
				{Perarnau-Llobet}}, \bibinfo {author} {\bibfnamefont {M.}~\bibnamefont
				{Huber}}, \ and\ \bibinfo {author} {\bibfnamefont {A.}~\bibnamefont
				{Ac\'{\i}n}},\ }\enquote{\bibinfo {title} {Entanglement generation is not
				necessary for optimal work extraction}},\ \href {\doibase
			10.1103/PhysRevLett.111.240401} {\bibfield  {journal} {\bibinfo  {journal}
				{Phys. Rev. Lett.}\ }\textbf {\bibinfo {volume} {111}},\ \bibinfo {pages}
			{240401} (\bibinfo {year} {2013})}\BibitemShut {NoStop}%
		\bibitem [{\citenamefont {Campaioli}\ \emph {et~al.}(2017)\citenamefont
			{Campaioli}, \citenamefont {Pollock}, \citenamefont {Binder}, \citenamefont
			{C\'eleri}, \citenamefont {Goold}, \citenamefont {Vinjanampathy},\ and\
			\citenamefont {Modi}}]{PRL2017Binder}%
		\BibitemOpen
		\bibfield  {author} {\bibinfo {author} {\bibfnamefont {F.}~\bibnamefont
				{Campaioli}}, \bibinfo {author} {\bibfnamefont {F.~A.}\ \bibnamefont
				{Pollock}}, \bibinfo {author} {\bibfnamefont {F.~C.}\ \bibnamefont {Binder}},
			\bibinfo {author} {\bibfnamefont {L.}~\bibnamefont {C\'eleri}}, \bibinfo
			{author} {\bibfnamefont {J.}~\bibnamefont {Goold}}, \bibinfo {author}
			{\bibfnamefont {S.}~\bibnamefont {Vinjanampathy}}, \ and\ \bibinfo {author}
			{\bibfnamefont {K.}~\bibnamefont {Modi}},\ }\enquote{\bibinfo {title}
			{Enhancing the charging power of quantum batteries}},\ \href {\doibase
			10.1103/PhysRevLett.118.150601} {\bibfield  {journal} {\bibinfo  {journal}
				{Phys. Rev. Lett.}\ }\textbf {\bibinfo {volume} {118}},\ \bibinfo {pages}
			{150601} (\bibinfo {year} {2017})}\BibitemShut {NoStop}%
		\bibitem [{\citenamefont {Ferraro}\ \emph {et~al.}(2018)\citenamefont
			{Ferraro}, \citenamefont {Campisi}, \citenamefont {Andolina}, \citenamefont
			{Pellegrini},\ and\ \citenamefont {Polini}}]{Ferraro:18}%
		\BibitemOpen
		\bibfield  {author} {\bibinfo {author} {\bibfnamefont {D.}~\bibnamefont
				{Ferraro}}, \bibinfo {author} {\bibfnamefont {M.}~\bibnamefont {Campisi}},
			\bibinfo {author} {\bibfnamefont {G.~M.}\ \bibnamefont {Andolina}}, \bibinfo
			{author} {\bibfnamefont {V.}~\bibnamefont {Pellegrini}}, \ and\ \bibinfo
			{author} {\bibfnamefont {M.}~\bibnamefont {Polini}},\ }\enquote{\bibinfo
			{title} {High-power collective charging of a solid-state quantum battery}},\
		\href {\doibase 10.1103/PhysRevLett.120.117702} {\bibfield  {journal}
			{\bibinfo  {journal} {Phys. Rev. Lett.}\ }\textbf {\bibinfo {volume} {120}},\
			\bibinfo {pages} {117702} (\bibinfo {year} {2018})}\BibitemShut {NoStop}%
		\bibitem [{\citenamefont {Andolina}\ \emph
			{et~al.}(2019{\natexlab{a}})\citenamefont {Andolina}, \citenamefont {Keck},
			\citenamefont {Mari}, \citenamefont {Campisi}, \citenamefont {Giovannetti},\
			and\ \citenamefont {Polini}}]{PRL_Andolina}%
		\BibitemOpen
		\bibfield  {author} {\bibinfo {author} {\bibfnamefont {G.~M.}\ \bibnamefont
				{Andolina}}, \bibinfo {author} {\bibfnamefont {M.}~\bibnamefont {Keck}},
			\bibinfo {author} {\bibfnamefont {A.}~\bibnamefont {Mari}}, \bibinfo {author}
			{\bibfnamefont {M.}~\bibnamefont {Campisi}}, \bibinfo {author} {\bibfnamefont
				{V.}~\bibnamefont {Giovannetti}}, \ and\ \bibinfo {author} {\bibfnamefont
				{M.}~\bibnamefont {Polini}},\ }\enquote{\bibinfo {title} {Extractable work,
				the role of correlations, and asymptotic freedom in quantum batteries}},\
		\href {\doibase 10.1103/PhysRevLett.122.047702} {\bibfield  {journal}
			{\bibinfo  {journal} {Phys. Rev. Lett.}\ }\textbf {\bibinfo {volume} {122}},\
			\bibinfo {pages} {047702} (\bibinfo {year} {2019}{\natexlab{a}})}\BibitemShut
		{NoStop}%
		\bibitem [{\citenamefont {Le}\ \emph {et~al.}(2018)\citenamefont {Le},
			\citenamefont {Levinsen}, \citenamefont {Modi}, \citenamefont {Parish},\ and\
			\citenamefont {Pollock}}]{Le:18}%
		\BibitemOpen
		\bibfield  {author} {\bibinfo {author} {\bibfnamefont {T.~P.}\ \bibnamefont
				{Le}}, \bibinfo {author} {\bibfnamefont {J.}~\bibnamefont {Levinsen}},
			\bibinfo {author} {\bibfnamefont {K.}~\bibnamefont {Modi}}, \bibinfo {author}
			{\bibfnamefont {M.~M.}\ \bibnamefont {Parish}}, \ and\ \bibinfo {author}
			{\bibfnamefont {F.~A.}\ \bibnamefont {Pollock}},\ }\enquote{\bibinfo {title}
			{Spin-chain model of a many-body quantum battery}},\ \href {\doibase
			10.1103/PhysRevA.97.022106} {\bibfield  {journal} {\bibinfo  {journal} {Phys.
					Rev. A}\ }\textbf {\bibinfo {volume} {97}},\ \bibinfo {pages} {022106}
			(\bibinfo {year} {2018})}\BibitemShut {NoStop}%
		\bibitem [{\citenamefont {Binder}\ \emph {et~al.}(2015)\citenamefont {Binder},
			\citenamefont {Vinjanampathy}, \citenamefont {Modi},\ and\ \citenamefont
			{Goold}}]{Binder:15}%
		\BibitemOpen
		\bibfield  {author} {\bibinfo {author} {\bibfnamefont {F.~C.}\ \bibnamefont
				{Binder}}, \bibinfo {author} {\bibfnamefont {S.}~\bibnamefont
				{Vinjanampathy}}, \bibinfo {author} {\bibfnamefont {K.}~\bibnamefont {Modi}},
			\ and\ \bibinfo {author} {\bibfnamefont {J.}~\bibnamefont {Goold}},\
		}\enquote{\bibinfo {title} {Quantacell: powerful charging of quantum
				batteries}},\ \href {\doibase 10.1088/1367-2630/17/7/075015} {\bibfield
			{journal} {\bibinfo  {journal} {New J. Phys.}\ }\textbf {\bibinfo {volume}
				{17}},\ \bibinfo {pages} {075015} (\bibinfo {year} {2015})}\BibitemShut
		{NoStop}%
		\bibitem [{\citenamefont {Fusco}\ \emph {et~al.}(2016)\citenamefont {Fusco},
			\citenamefont {Paternostro},\ and\ \citenamefont {De~Chiara}}]{Fusco:16}%
		\BibitemOpen
		\bibfield  {author} {\bibinfo {author} {\bibfnamefont {L.}~\bibnamefont
				{Fusco}}, \bibinfo {author} {\bibfnamefont {M.}~\bibnamefont {Paternostro}},
			\ and\ \bibinfo {author} {\bibfnamefont {G.}~\bibnamefont {De~Chiara}},\
		}\enquote{\bibinfo {title} {Work extraction and energy storage in the dicke
				model}},\ \href {\doibase 10.1103/PhysRevE.94.052122} {\bibfield  {journal}
			{\bibinfo  {journal} {Phys. Rev. E}\ }\textbf {\bibinfo {volume} {94}},\
			\bibinfo {pages} {052122} (\bibinfo {year} {2016})}\BibitemShut {NoStop}%
		\bibitem [{\citenamefont {Zhang}\ \emph {et~al.}(2019)\citenamefont {Zhang},
			\citenamefont {Yang}, \citenamefont {Fu},\ and\ \citenamefont
			{Wang}}]{Zhang:18}%
		\BibitemOpen
		\bibfield  {author} {\bibinfo {author} {\bibfnamefont {Y.-Y.}\ \bibnamefont
				{Zhang}}, \bibinfo {author} {\bibfnamefont {T.-R.}\ \bibnamefont {Yang}},
			\bibinfo {author} {\bibfnamefont {L.}~\bibnamefont {Fu}}, \ and\ \bibinfo
			{author} {\bibfnamefont {X.}~\bibnamefont {Wang}},\ }\enquote{\bibinfo
			{title} {Powerful harmonic charging in a quantum battery}},\ \href {\doibase
			10.1103/PhysRevE.99.052106} {\bibfield  {journal} {\bibinfo  {journal} {Phys.
					Rev. E}\ }\textbf {\bibinfo {volume} {99}},\ \bibinfo {pages} {052106}
			(\bibinfo {year} {2019})}\BibitemShut {NoStop}%
		\bibitem [{\citenamefont {Andolina}\ \emph {et~al.}(2018)\citenamefont
			{Andolina}, \citenamefont {Farina}, \citenamefont {Mari}, \citenamefont
			{Pellegrini}, \citenamefont {Giovannetti},\ and\ \citenamefont
			{Polini}}]{Andolina:18}%
		\BibitemOpen
		\bibfield  {author} {\bibinfo {author} {\bibfnamefont {G.~M.}\ \bibnamefont
				{Andolina}}, \bibinfo {author} {\bibfnamefont {D.}~\bibnamefont {Farina}},
			\bibinfo {author} {\bibfnamefont {A.}~\bibnamefont {Mari}}, \bibinfo {author}
			{\bibfnamefont {V.}~\bibnamefont {Pellegrini}}, \bibinfo {author}
			{\bibfnamefont {V.}~\bibnamefont {Giovannetti}}, \ and\ \bibinfo {author}
			{\bibfnamefont {M.}~\bibnamefont {Polini}},\ }\enquote{\bibinfo {title}
			{Charger-mediated energy transfer in exactly solvable models for quantum
				batteries}},\ \href {\doibase 10.1103/PhysRevB.98.205423} {\bibfield
			{journal} {\bibinfo  {journal} {Phys. Rev. B}\ }\textbf {\bibinfo {volume}
				{98}},\ \bibinfo {pages} {205423} (\bibinfo {year} {2018})}\BibitemShut
		{NoStop}%
		\bibitem [{\citenamefont {Andolina}\ \emph
			{et~al.}(2019{\natexlab{b}})\citenamefont {Andolina}, \citenamefont {Keck},
			\citenamefont {Mari}, \citenamefont {Giovannetti},\ and\ \citenamefont
			{Polini}}]{Andolina:19}%
		\BibitemOpen
		\bibfield  {author} {\bibinfo {author} {\bibfnamefont {G.~M.}\ \bibnamefont
				{Andolina}}, \bibinfo {author} {\bibfnamefont {M.}~\bibnamefont {Keck}},
			\bibinfo {author} {\bibfnamefont {A.}~\bibnamefont {Mari}}, \bibinfo {author}
			{\bibfnamefont {V.}~\bibnamefont {Giovannetti}}, \ and\ \bibinfo {author}
			{\bibfnamefont {M.}~\bibnamefont {Polini}},\ }\enquote{\bibinfo {title}
			{Quantum versus classical many-body batteries}},\ \href {\doibase
			10.1103/PhysRevB.99.205437} {\bibfield  {journal} {\bibinfo  {journal} {Phys.
					Rev. B}\ }\textbf {\bibinfo {volume} {99}},\ \bibinfo {pages} {205437}
			(\bibinfo {year} {2019}{\natexlab{b}})}\BibitemShut {NoStop}%
		\bibitem [{\citenamefont {Gherardini}\ \emph {et~al.}(2020)\citenamefont
			{Gherardini}, \citenamefont {Campaioli}, \citenamefont {Caruso},\ and\
			\citenamefont {Binder}}]{Gherardini:19}%
		\BibitemOpen
		\bibfield  {author} {\bibinfo {author} {\bibfnamefont {S.}~\bibnamefont
				{Gherardini}}, \bibinfo {author} {\bibfnamefont {F.}~\bibnamefont
				{Campaioli}}, \bibinfo {author} {\bibfnamefont {F.}~\bibnamefont {Caruso}}, \
			and\ \bibinfo {author} {\bibfnamefont {F.~C.}\ \bibnamefont {Binder}},\
		}\enquote{\bibinfo {title} {Stabilizing open quantum batteries by sequential
				measurements}},\ \href {\doibase 10.1103/PhysRevResearch.2.013095} {\bibfield
			{journal} {\bibinfo  {journal} {Phys. Rev. Research}\ }\textbf {\bibinfo
				{volume} {2}},\ \bibinfo {pages} {013095} (\bibinfo {year}
			{2020})}\BibitemShut {NoStop}%
		\bibitem [{\citenamefont {Kamian}\ \emph {et~al.}(2020)\citenamefont {Kamian},
			\citenamefont {Tabesh}, \citenamefont {Salimi}, \citenamefont {Kheirandish},\
			and\ \citenamefont {Santos}}]{Kamin:20-1}%
		\BibitemOpen
		\bibfield  {author} {\bibinfo {author} {\bibfnamefont {F.~H.}\ \bibnamefont
				{Kamian}}, \bibinfo {author} {\bibfnamefont {F.~T.}\ \bibnamefont {Tabesh}},
			\bibinfo {author} {\bibfnamefont {S.}~\bibnamefont {Salimi}}, \bibinfo
			{author} {\bibfnamefont {F.}~\bibnamefont {Kheirandish}}, \ and\ \bibinfo
			{author} {\bibfnamefont {A.~C.}\ \bibnamefont {Santos}},\ }\enquote{\bibinfo
			{title} {Non-markovian effects on charging and self-discharging processes of
				quantum batteries}},\ \href {\doibase 10.1088/1367-2630/ab9ee2} {\bibfield
			{journal} {\bibinfo  {journal} {New J. Phys.}\ }\textbf {\bibinfo {volume}
				{22}},\ \bibinfo {pages} {083007} (\bibinfo {year} {2020})}\BibitemShut
		{NoStop}%
		\bibitem [{\citenamefont {Allahverdyan}\ \emph {et~al.}(2004)\citenamefont
			{Allahverdyan}, \citenamefont {Balian},\ and\ \citenamefont
			{Nieuwenhuizen}}]{Allahverdyan:04}%
		\BibitemOpen
		\bibfield  {author} {\bibinfo {author} {\bibfnamefont {A.~E.}\ \bibnamefont
				{Allahverdyan}}, \bibinfo {author} {\bibfnamefont {R.}~\bibnamefont
				{Balian}}, \ and\ \bibinfo {author} {\bibfnamefont {T.~M.}\ \bibnamefont
				{Nieuwenhuizen}},\ }\enquote{\bibinfo {title} {Maximal work extraction from
				finite quantum systems}},\ \href {\doibase 10.1209/epl/i2004-10101-2}
		{\bibfield  {journal} {\bibinfo  {journal} {Europhys. Lett.}\ }\textbf
			{\bibinfo {volume} {67}},\ \bibinfo {pages} {565} (\bibinfo {year}
			{2004})}\BibitemShut {NoStop}%
		\bibitem [{\citenamefont {Pusz}\ and\ \citenamefont
			{Woronowicz}(1978)}]{Pusz:78}%
		\BibitemOpen
		\bibfield  {author} {\bibinfo {author} {\bibfnamefont {W.}~\bibnamefont
				{Pusz}}\ and\ \bibinfo {author} {\bibfnamefont {S.~L.}\ \bibnamefont
				{Woronowicz}},\ }\enquote{\bibinfo {title} {Passive states and kms states for
				general quantum systems}},\ \href {\doibase 10.1007/BF01614224} {\bibfield
			{journal} {\bibinfo  {journal} {Communications in Mathematical Physics}\
			}\textbf {\bibinfo {volume} {58}},\ \bibinfo {pages} {273} (\bibinfo {year}
			{1978})}\BibitemShut {NoStop}%
		\bibitem [{\citenamefont {Marangos}(1998)}]{Marangos:98}%
		\BibitemOpen
		\bibfield  {author} {\bibinfo {author} {\bibfnamefont {J.~P.}\ \bibnamefont
				{Marangos}},\ }\enquote{\bibinfo {title} {Electromagnetically induced
				transparency}},\ \href {\doibase 10.1080/09500349808231909} {\bibfield
			{journal} {\bibinfo  {journal} {Journal of Modern Optics}\ }\textbf {\bibinfo
				{volume} {45}},\ \bibinfo {pages} {471} (\bibinfo {year} {1998})}\BibitemShut
		{NoStop}%
		\bibitem [{\citenamefont {Fleischhauer}\ \emph {et~al.}(2005)\citenamefont
			{Fleischhauer}, \citenamefont {Imamoglu},\ and\ \citenamefont
			{Marangos}}]{Fleischhauer:05}%
		\BibitemOpen
		\bibfield  {author} {\bibinfo {author} {\bibfnamefont {M.}~\bibnamefont
				{Fleischhauer}}, \bibinfo {author} {\bibfnamefont {A.}~\bibnamefont
				{Imamoglu}}, \ and\ \bibinfo {author} {\bibfnamefont {J.~P.}\ \bibnamefont
				{Marangos}},\ }\enquote{\bibinfo {title} {Electromagnetically induced
				transparency: Optics in coherent media}},\ \href {\doibase
			10.1103/RevModPhys.77.633} {\bibfield  {journal} {\bibinfo  {journal} {Rev.
					Mod. Phys.}\ }\textbf {\bibinfo {volume} {77}},\ \bibinfo {pages} {633}
			(\bibinfo {year} {2005})}\BibitemShut {NoStop}%
		\bibitem [{\citenamefont {Bergmann}\ \emph {et~al.}(1998)\citenamefont
			{Bergmann}, \citenamefont {Theuer},\ and\ \citenamefont
			{Shore}}]{Bergmann:98}%
		\BibitemOpen
		\bibfield  {author} {\bibinfo {author} {\bibfnamefont {K.}~\bibnamefont
				{Bergmann}}, \bibinfo {author} {\bibfnamefont {H.}~\bibnamefont {Theuer}}, \
			and\ \bibinfo {author} {\bibfnamefont {B.~W.}\ \bibnamefont {Shore}},\
		}\enquote{\bibinfo {title} {Coherent population transfer among quantum states
				of atoms and molecules}},\ \href {\doibase 10.1103/RevModPhys.70.1003}
		{\bibfield  {journal} {\bibinfo  {journal} {Rev. Mod. Phys.}\ }\textbf
			{\bibinfo {volume} {70}},\ \bibinfo {pages} {1003} (\bibinfo {year}
			{1998})}\BibitemShut {NoStop}%
		\bibitem [{\citenamefont {Whaley}\ and\ \citenamefont
			{Light}(1984)}]{Whaley:84}%
		\BibitemOpen
		\bibfield  {author} {\bibinfo {author} {\bibfnamefont {K.~B.}\ \bibnamefont
				{Whaley}}\ and\ \bibinfo {author} {\bibfnamefont {J.~C.}\ \bibnamefont
				{Light}},\ }\enquote{\bibinfo {title} {Rotating-frame transformations: A new
				approximation for multiphoton absorption and dissociation in laser fields}},\
		\href {\doibase 10.1103/PhysRevA.29.1188} {\bibfield  {journal} {\bibinfo
				{journal} {Phys. Rev. A}\ }\textbf {\bibinfo {volume} {29}},\ \bibinfo
			{pages} {1188} (\bibinfo {year} {1984})}\BibitemShut {NoStop}%
		\bibitem [{\citenamefont {Giorgi}\ and\ \citenamefont
			{Campbell}(2015)}]{CampbellBatteries}%
		\BibitemOpen
		\bibfield  {author} {\bibinfo {author} {\bibfnamefont {G.~L.}\ \bibnamefont
				{Giorgi}}\ and\ \bibinfo {author} {\bibfnamefont {S.}~\bibnamefont
				{Campbell}},\ }\enquote{\bibinfo {title} {Correlation approach to work
				extraction from finite quantum systems}},\ \href {\doibase
			10.1088/0953-4075/48/3/035501} {\bibfield  {journal} {\bibinfo  {journal} {J.
					Phys. B: At. Mol. Opt. Phys.}\ }\textbf {\bibinfo {volume} {48}},\ \bibinfo
			{pages} {035501} (\bibinfo {year} {2015})}\BibitemShut {NoStop}%
		\bibitem [{\citenamefont {Devoret}\ and\ \citenamefont
			{Schoelkopf}(2013)}]{devoret2012}%
		\BibitemOpen
		\bibfield  {author} {\bibinfo {author} {\bibfnamefont {M.~H.}\ \bibnamefont
				{Devoret}}\ and\ \bibinfo {author} {\bibfnamefont {R.~J.}\ \bibnamefont
				{Schoelkopf}},\ }\enquote{\bibinfo {title} {Superconducting circuits for
				quantum information: an outlook}},\ \href
		{https://science.sciencemag.org/content/339/6124/1169} {\bibfield  {journal}
			{\bibinfo  {journal} {Science}\ }\textbf {\bibinfo {volume} {339}},\ \bibinfo
			{pages} {1169} (\bibinfo {year} {2013})}\BibitemShut {NoStop}%
		\bibitem [{\citenamefont {Wendin}(2017)}]{wendin2017}%
		\BibitemOpen
		\bibfield  {author} {\bibinfo {author} {\bibfnamefont {G.}~\bibnamefont
				{Wendin}},\ }\enquote{\bibinfo {title} {Quantum information processing with
				superconducting circuits: a review}},\ \href {\doibase
			10.1088/1361-6633/aa7e1a} {\bibfield  {journal} {\bibinfo  {journal} {Reports
					on Progress in Physics}\ }\textbf {\bibinfo {volume} {80}},\ \bibinfo {pages}
			{106001} (\bibinfo {year} {2017})}\BibitemShut {NoStop}%
		\bibitem [{\citenamefont {Gu}\ \emph {et~al.}(2017)\citenamefont {Gu},
			\citenamefont {Kockum}, \citenamefont {Miranowicz}, \citenamefont {xi~Liu},\
			and\ \citenamefont {Nori}}]{gu2017}%
		\BibitemOpen
		\bibfield  {author} {\bibinfo {author} {\bibfnamefont {X.}~\bibnamefont
				{Gu}}, \bibinfo {author} {\bibfnamefont {A.~F.}\ \bibnamefont {Kockum}},
			\bibinfo {author} {\bibfnamefont {A.}~\bibnamefont {Miranowicz}}, \bibinfo
			{author} {\bibfnamefont {Y.}~\bibnamefont {xi~Liu}}, \ and\ \bibinfo {author}
			{\bibfnamefont {F.}~\bibnamefont {Nori}},\ }\enquote{\bibinfo {title}
			{Microwave photonics with superconducting quantum circuits}},\ \href
		{\doibase https://doi.org/10.1016/j.physrep.2017.10.002} {\bibfield
			{journal} {\bibinfo  {journal} {Physics Reports}\ }\textbf {\bibinfo {volume}
				{718-719}},\ \bibinfo {pages} {1 } (\bibinfo {year} {2017})}\BibitemShut
		{NoStop}%
		\bibitem [{\citenamefont {Koch}\ \emph {et~al.}(2007)\citenamefont {Koch},
			\citenamefont {Yu}, \citenamefont {Gambetta}, \citenamefont {Houck},
			\citenamefont {Schuster}, \citenamefont {Majer}, \citenamefont {Blais},
			\citenamefont {Devoret}, \citenamefont {Girvin},\ and\ \citenamefont
			{Schoelkopf}}]{Koch:07}%
		\BibitemOpen
		\bibfield  {author} {\bibinfo {author} {\bibfnamefont {J.}~\bibnamefont
				{Koch}}, \bibinfo {author} {\bibfnamefont {T.~M.}\ \bibnamefont {Yu}},
			\bibinfo {author} {\bibfnamefont {J.}~\bibnamefont {Gambetta}}, \bibinfo
			{author} {\bibfnamefont {A.~A.}\ \bibnamefont {Houck}}, \bibinfo {author}
			{\bibfnamefont {D.~I.}\ \bibnamefont {Schuster}}, \bibinfo {author}
			{\bibfnamefont {J.}~\bibnamefont {Majer}}, \bibinfo {author} {\bibfnamefont
				{A.}~\bibnamefont {Blais}}, \bibinfo {author} {\bibfnamefont {M.~H.}\
				\bibnamefont {Devoret}}, \bibinfo {author} {\bibfnamefont {S.~M.}\
				\bibnamefont {Girvin}}, \ and\ \bibinfo {author} {\bibfnamefont {R.~J.}\
				\bibnamefont {Schoelkopf}},\ }\enquote{\bibinfo {title} {Charge-insensitive
				qubit design derived from the cooper pair box}},\ \href {\doibase
			10.1103/PhysRevA.76.042319} {\bibfield  {journal} {\bibinfo  {journal} {Phys.
					Rev. A}\ }\textbf {\bibinfo {volume} {76}},\ \bibinfo {pages} {042319}
			(\bibinfo {year} {2007})}\BibitemShut {NoStop}%
		\bibitem [{\citenamefont {You}\ \emph {et~al.}(2007)\citenamefont {You},
			\citenamefont {Hu}, \citenamefont {Ashhab},\ and\ \citenamefont
			{Nori}}]{you2007}%
		\BibitemOpen
		\bibfield  {author} {\bibinfo {author} {\bibfnamefont {J.~Q.}\ \bibnamefont
				{You}}, \bibinfo {author} {\bibfnamefont {X.}~\bibnamefont {Hu}}, \bibinfo
			{author} {\bibfnamefont {S.}~\bibnamefont {Ashhab}}, \ and\ \bibinfo {author}
			{\bibfnamefont {F.}~\bibnamefont {Nori}},\ }\enquote{\bibinfo {title}
			{Low-decoherence flux qubit}},\ \href {\doibase 10.1103/PhysRevB.75.140515}
		{\bibfield  {journal} {\bibinfo  {journal} {Phys. Rev. B}\ }\textbf {\bibinfo
				{volume} {75}},\ \bibinfo {pages} {140515} (\bibinfo {year}
			{2007})}\BibitemShut {NoStop}%
		\bibitem [{\citenamefont {Schreier}\ \emph {et~al.}(2008)\citenamefont
			{Schreier}, \citenamefont {Houck}, \citenamefont {Koch}, \citenamefont
			{Schuster}, \citenamefont {Johnson}, \citenamefont {Chow}, \citenamefont
			{Gambetta}, \citenamefont {Majer}, \citenamefont {Frunzio}, \citenamefont
			{Devoret}, \citenamefont {Girvin},\ and\ \citenamefont
			{Schoelkopf}}]{Schreier08}%
		\BibitemOpen
		\bibfield  {author} {\bibinfo {author} {\bibfnamefont {J.~A.}\ \bibnamefont
				{Schreier}}, \bibinfo {author} {\bibfnamefont {A.~A.}\ \bibnamefont {Houck}},
			\bibinfo {author} {\bibfnamefont {J.}~\bibnamefont {Koch}}, \bibinfo {author}
			{\bibfnamefont {D.~I.}\ \bibnamefont {Schuster}}, \bibinfo {author}
			{\bibfnamefont {B.~R.}\ \bibnamefont {Johnson}}, \bibinfo {author}
			{\bibfnamefont {J.~M.}\ \bibnamefont {Chow}}, \bibinfo {author}
			{\bibfnamefont {J.~M.}\ \bibnamefont {Gambetta}}, \bibinfo {author}
			{\bibfnamefont {J.}~\bibnamefont {Majer}}, \bibinfo {author} {\bibfnamefont
				{L.}~\bibnamefont {Frunzio}}, \bibinfo {author} {\bibfnamefont {M.~H.}\
				\bibnamefont {Devoret}}, \bibinfo {author} {\bibfnamefont {S.~M.}\
				\bibnamefont {Girvin}}, \ and\ \bibinfo {author} {\bibfnamefont {R.~J.}\
				\bibnamefont {Schoelkopf}},\ }\enquote{\bibinfo {title} {Suppressing charge
				noise decoherence in superconducting charge qubits}},\ \href {\doibase
			10.1103/PhysRevB.77.180502} {\bibfield  {journal} {\bibinfo  {journal} {Phys.
					Rev. B}\ }\textbf {\bibinfo {volume} {77}},\ \bibinfo {pages} {180502}
			(\bibinfo {year} {2008})}\BibitemShut {NoStop}%
		\bibitem [{\citenamefont {Barends}\ \emph {et~al.}(2013)\citenamefont
			{Barends}, \citenamefont {Kelly}, \citenamefont {Megrant}, \citenamefont
			{Sank}, \citenamefont {Jeffrey}, \citenamefont {Chen}, \citenamefont {Yin},
			\citenamefont {Chiaro}, \citenamefont {Mutus}, \citenamefont {Neill},
			\citenamefont {O'Malley}, \citenamefont {Roushan}, \citenamefont {Wenner},
			\citenamefont {White}, \citenamefont {Cleland},\ and\ \citenamefont
			{Martinis}}]{barends2013}%
		\BibitemOpen
		\bibfield  {author} {\bibinfo {author} {\bibfnamefont {R.}~\bibnamefont
				{Barends}}, \bibinfo {author} {\bibfnamefont {J.}~\bibnamefont {Kelly}},
			\bibinfo {author} {\bibfnamefont {A.}~\bibnamefont {Megrant}}, \bibinfo
			{author} {\bibfnamefont {D.}~\bibnamefont {Sank}}, \bibinfo {author}
			{\bibfnamefont {E.}~\bibnamefont {Jeffrey}}, \bibinfo {author} {\bibfnamefont
				{Y.}~\bibnamefont {Chen}}, \bibinfo {author} {\bibfnamefont {Y.}~\bibnamefont
				{Yin}}, \bibinfo {author} {\bibfnamefont {B.}~\bibnamefont {Chiaro}},
			\bibinfo {author} {\bibfnamefont {J.}~\bibnamefont {Mutus}}, \bibinfo
			{author} {\bibfnamefont {C.}~\bibnamefont {Neill}}, \bibinfo {author}
			{\bibfnamefont {P.}~\bibnamefont {O'Malley}}, \bibinfo {author}
			{\bibfnamefont {P.}~\bibnamefont {Roushan}}, \bibinfo {author} {\bibfnamefont
				{J.}~\bibnamefont {Wenner}}, \bibinfo {author} {\bibfnamefont {T.~C.}\
				\bibnamefont {White}}, \bibinfo {author} {\bibfnamefont {A.~N.}\ \bibnamefont
				{Cleland}}, \ and\ \bibinfo {author} {\bibfnamefont {J.~M.}\ \bibnamefont
				{Martinis}},\ }\enquote{\bibinfo {title} {Coherent josephson qubit suitable
				for scalable quantum integrated circuits}},\ \href {\doibase
			10.1103/PhysRevLett.111.080502} {\bibfield  {journal} {\bibinfo  {journal}
				{Phys. Rev. Lett.}\ }\textbf {\bibinfo {volume} {111}},\ \bibinfo {pages}
			{080502} (\bibinfo {year} {2013})}\BibitemShut {NoStop}%
		\bibitem [{\citenamefont {Ivanov}\ \emph {et~al.}(2004)\citenamefont {Ivanov},
			\citenamefont {Vitanov},\ and\ \citenamefont {Bergmann}}]{PRA_Ivanov}%
		\BibitemOpen
		\bibfield  {author} {\bibinfo {author} {\bibfnamefont {P.~A.}\ \bibnamefont
				{Ivanov}}, \bibinfo {author} {\bibfnamefont {N.~V.}\ \bibnamefont {Vitanov}},
			\ and\ \bibinfo {author} {\bibfnamefont {K.}~\bibnamefont {Bergmann}},\
		}\enquote{\bibinfo {title} {Effect of dephasing on stimulated raman adiabatic
				passage}},\ \href {\doibase 10.1103/PhysRevA.70.063409} {\bibfield  {journal}
			{\bibinfo  {journal} {Phys. Rev. A}\ }\textbf {\bibinfo {volume} {70}},\
			\bibinfo {pages} {063409} (\bibinfo {year} {2004})}\BibitemShut {NoStop}%
		\bibitem [{\citenamefont {Li}\ \emph {et~al.}(2011)\citenamefont {Li},
			\citenamefont {Paraoanu}, \citenamefont {Cicak}, \citenamefont {Altomare},
			\citenamefont {Park}, \citenamefont {Simmonds}, \citenamefont
			{Sillanp\"a\"a},\ and\ \citenamefont {Hakonen}}]{Li:11}%
		\BibitemOpen
		\bibfield  {author} {\bibinfo {author} {\bibfnamefont {J.}~\bibnamefont
				{Li}}, \bibinfo {author} {\bibfnamefont {G.~S.}\ \bibnamefont {Paraoanu}},
			\bibinfo {author} {\bibfnamefont {K.}~\bibnamefont {Cicak}}, \bibinfo
			{author} {\bibfnamefont {F.}~\bibnamefont {Altomare}}, \bibinfo {author}
			{\bibfnamefont {J.~I.}\ \bibnamefont {Park}}, \bibinfo {author}
			{\bibfnamefont {R.~W.}\ \bibnamefont {Simmonds}}, \bibinfo {author}
			{\bibfnamefont {M.~A.}\ \bibnamefont {Sillanp\"a\"a}}, \ and\ \bibinfo
			{author} {\bibfnamefont {P.~J.}\ \bibnamefont {Hakonen}},\ }\enquote{\bibinfo
			{title} {Decoherence, autler-townes effect, and dark states in two-tone
				driving of a three-level superconducting system}},\ \href {\doibase
			10.1103/PhysRevB.84.104527} {\bibfield  {journal} {\bibinfo  {journal} {Phys.
					Rev. B}\ }\textbf {\bibinfo {volume} {84}},\ \bibinfo {pages} {104527}
			(\bibinfo {year} {2011})}\BibitemShut {NoStop}%
		\bibitem [{\citenamefont {Martinis}\ \emph {et~al.}(2003)\citenamefont
			{Martinis}, \citenamefont {Nam}, \citenamefont {Aumentado}, \citenamefont
			{Lang},\ and\ \citenamefont {Urbina}}]{Martinis:03}%
		\BibitemOpen
		\bibfield  {author} {\bibinfo {author} {\bibfnamefont {J.~M.}\ \bibnamefont
				{Martinis}}, \bibinfo {author} {\bibfnamefont {S.}~\bibnamefont {Nam}},
			\bibinfo {author} {\bibfnamefont {J.}~\bibnamefont {Aumentado}}, \bibinfo
			{author} {\bibfnamefont {K.~M.}\ \bibnamefont {Lang}}, \ and\ \bibinfo
			{author} {\bibfnamefont {C.}~\bibnamefont {Urbina}},\ }\enquote{\bibinfo
			{title} {Decoherence of a superconducting qubit due to bias noise}},\ \href
		{\doibase 10.1103/PhysRevB.67.094510} {\bibfield  {journal} {\bibinfo
				{journal} {Phys. Rev. B}\ }\textbf {\bibinfo {volume} {67}},\ \bibinfo
			{pages} {094510} (\bibinfo {year} {2003})}\BibitemShut {NoStop}%
		\bibitem [{\citenamefont {Li}\ \emph {et~al.}(2012)\citenamefont {Li},
			\citenamefont {Sillanpää}, \citenamefont {Paraoanu},\ and\ \citenamefont
			{Hakonen}}]{JPCS_Li}%
		\BibitemOpen
		\bibfield  {author} {\bibinfo {author} {\bibfnamefont {J.}~\bibnamefont
				{Li}}, \bibinfo {author} {\bibfnamefont {M.~A.}\ \bibnamefont {Sillanpää}},
			\bibinfo {author} {\bibfnamefont {G.~S.}\ \bibnamefont {Paraoanu}}, \ and\
			\bibinfo {author} {\bibfnamefont {P.~J.}\ \bibnamefont {Hakonen}},\
		}\enquote{\bibinfo {title} {Pure dephasing in a superconducting three-level
				system}},\ \href {\doibase 10.1088/1742-6596/400/4/042039} {\bibfield
			{journal} {\bibinfo  {journal} {Journal of Physics: Conference Series}\
			}\textbf {\bibinfo {volume} {400}},\ \bibinfo {pages} {042039} (\bibinfo
			{year} {2012})}\BibitemShut {NoStop}%
		\bibitem [{\citenamefont {Peterer}\ \emph {et~al.}(2015)\citenamefont
			{Peterer}, \citenamefont {Bader}, \citenamefont {Jin}, \citenamefont {Yan},
			\citenamefont {Kamal}, \citenamefont {Gudmundsen}, \citenamefont {Leek},
			\citenamefont {Orlando}, \citenamefont {Oliver},\ and\ \citenamefont
			{Gustavsson}}]{Peterer:15}%
		\BibitemOpen
		\bibfield  {author} {\bibinfo {author} {\bibfnamefont {M.~J.}\ \bibnamefont
				{Peterer}}, \bibinfo {author} {\bibfnamefont {S.~J.}\ \bibnamefont {Bader}},
			\bibinfo {author} {\bibfnamefont {X.}~\bibnamefont {Jin}}, \bibinfo {author}
			{\bibfnamefont {F.}~\bibnamefont {Yan}}, \bibinfo {author} {\bibfnamefont
				{A.}~\bibnamefont {Kamal}}, \bibinfo {author} {\bibfnamefont {T.~J.}\
				\bibnamefont {Gudmundsen}}, \bibinfo {author} {\bibfnamefont {P.~J.}\
				\bibnamefont {Leek}}, \bibinfo {author} {\bibfnamefont {T.~P.}\ \bibnamefont
				{Orlando}}, \bibinfo {author} {\bibfnamefont {W.~D.}\ \bibnamefont {Oliver}},
			\ and\ \bibinfo {author} {\bibfnamefont {S.}~\bibnamefont {Gustavsson}},\
		}\enquote{\bibinfo {title} {Coherence and decay of higher energy levels of a
				superconducting transmon qubit}},\ \href {\doibase
			10.1103/PhysRevLett.114.010501} {\bibfield  {journal} {\bibinfo  {journal}
				{Phys. Rev. Lett.}\ }\textbf {\bibinfo {volume} {114}},\ \bibinfo {pages}
			{010501} (\bibinfo {year} {2015})}\BibitemShut {NoStop}%
		\bibitem [{\citenamefont {{Campaioli}}\ \emph {et~al.}(2018)\citenamefont
			{{Campaioli}}, \citenamefont {{Pollock}},\ and\ \citenamefont
			{{Vinjanampathy}}}]{Campaioli:18}%
		\BibitemOpen
		\bibfield  {author} {\bibinfo {author} {\bibfnamefont {F.}~\bibnamefont
				{{Campaioli}}}, \bibinfo {author} {\bibfnamefont {F.~A.}\ \bibnamefont
				{{Pollock}}}, \ and\ \bibinfo {author} {\bibfnamefont {S.}~\bibnamefont
				{{Vinjanampathy}}},\ }\enquote{\bibinfo {title} {{Quantum Batteries - Review
					Chapter}}},\ \href {https://arxiv.org/abs/1805.05507} {\bibfield  {journal}
			{\bibinfo  {journal} {arXiv:1805.05507}\ } (\bibinfo {year}
			{2018})}\BibitemShut {NoStop}%
		\bibitem [{\citenamefont {Deffner}\ and\ \citenamefont
			{Campbell}(2017)}]{DeffnerReview}%
		\BibitemOpen
		\bibfield  {author} {\bibinfo {author} {\bibfnamefont {S.}~\bibnamefont
				{Deffner}}\ and\ \bibinfo {author} {\bibfnamefont {S.}~\bibnamefont
				{Campbell}},\ }\enquote{\bibinfo {title} {Quantum speed limits: from
				heisenberg's uncertainty principle to optimal quantum control}},\ \href
		{\doibase 10.1088/1751-8121/aa86c6} {\bibfield  {journal} {\bibinfo
				{journal} {J. Phys. A: Math. Theor.}\ }\textbf {\bibinfo {volume} {50}},\
			\bibinfo {pages} {453001} (\bibinfo {year} {2017})}\BibitemShut {NoStop}%
		\bibitem [{\citenamefont {Venuti}\ \emph {et~al.}(2016)\citenamefont {Venuti},
			\citenamefont {Albash}, \citenamefont {Lidar},\ and\ \citenamefont
			{Zanardi}}]{Venuti:16}%
		\BibitemOpen
		\bibfield  {author} {\bibinfo {author} {\bibfnamefont {L.~C.}\ \bibnamefont
				{Venuti}}, \bibinfo {author} {\bibfnamefont {T.}~\bibnamefont {Albash}},
			\bibinfo {author} {\bibfnamefont {D.~A.}\ \bibnamefont {Lidar}}, \ and\
			\bibinfo {author} {\bibfnamefont {P.}~\bibnamefont {Zanardi}},\
		}\enquote{\bibinfo {title} {Adiabaticity in open quantum systems}},\ \href
		{\doibase 10.1103/PhysRevA.93.032118} {\bibfield  {journal} {\bibinfo
				{journal} {Phys. Rev. A}\ }\textbf {\bibinfo {volume} {93}},\ \bibinfo
			{pages} {032118} (\bibinfo {year} {2016})}\BibitemShut {NoStop}%
		\bibitem [{\citenamefont {Yi}\ \emph {et~al.}(2007)\citenamefont {Yi},
			\citenamefont {Tong}, \citenamefont {Kwek},\ and\ \citenamefont
			{Oh}}]{Yi:07}%
		\BibitemOpen
		\bibfield  {author} {\bibinfo {author} {\bibfnamefont {X.~X.}\ \bibnamefont
				{Yi}}, \bibinfo {author} {\bibfnamefont {D.~M.}\ \bibnamefont {Tong}},
			\bibinfo {author} {\bibfnamefont {L.~C.}\ \bibnamefont {Kwek}}, \ and\
			\bibinfo {author} {\bibfnamefont {C.~H.}\ \bibnamefont {Oh}},\
		}\enquote{\bibinfo {title} {Adiabatic approximation in open systems: an
				alternative approach}},\ \href {\doibase 10.1088/0953-4075/40/2/004}
		{\bibfield  {journal} {\bibinfo  {journal} {J. Phys. B: At. Mol. Opt. Phys.}\
			}\textbf {\bibinfo {volume} {40}},\ \bibinfo {pages} {281} (\bibinfo {year}
			{2007})}\BibitemShut {NoStop}%
		\bibitem [{\citenamefont {Deutsch}(1985)}]{Deutsch:85}%
		\BibitemOpen
		\bibfield  {author} {\bibinfo {author} {\bibfnamefont {D.}~\bibnamefont
				{Deutsch}},\ }\enquote{\bibinfo {title} {Quantum theory, the church--turing
				principle and the universal quantum computer}},\ \href {\doibase
			10.1098/rspa.1985.0070} {\bibfield  {journal} {\bibinfo  {journal} {Proc. R.
					Soc. Lond. A}\ }\textbf {\bibinfo {volume} {400}},\ \bibinfo {pages} {97}
			(\bibinfo {year} {1985})}\BibitemShut {NoStop}%
		\bibitem [{\citenamefont {Hu}\ \emph {et~al.}(2019)\citenamefont {Hu},
			\citenamefont {Santos}, \citenamefont {Cui}, \citenamefont {Huang},
			\citenamefont {Sarandy}, \citenamefont {Li},\ and\ \citenamefont
			{Guo}}]{Hu:19-a}%
		\BibitemOpen
		\bibfield  {author} {\bibinfo {author} {\bibfnamefont {C.-K.}\ \bibnamefont
				{Hu}}, \bibinfo {author} {\bibfnamefont {A.~C.}\ \bibnamefont {Santos}},
			\bibinfo {author} {\bibfnamefont {J.-M.}\ \bibnamefont {Cui}}, \bibinfo
			{author} {\bibfnamefont {Y.-F.}\ \bibnamefont {Huang}}, \bibinfo {author}
			{\bibfnamefont {M.~S.}\ \bibnamefont {Sarandy}}, \bibinfo {author}
			{\bibfnamefont {C.-F.}\ \bibnamefont {Li}}, \ and\ \bibinfo {author}
			{\bibfnamefont {G.-C.}\ \bibnamefont {Guo}},\ }\enquote{\bibinfo {title}
			{Adiabatic quantum dynamics under decoherence in a controllable trapped-ion
				setup}},\ \href {\doibase 10.1103/PhysRevA.99.062320} {\bibfield  {journal}
			{\bibinfo  {journal} {Phys. Rev. A}\ }\textbf {\bibinfo {volume} {99}},\
			\bibinfo {pages} {062320} (\bibinfo {year} {2019})}\BibitemShut {NoStop}%
		\bibitem [{\citenamefont {Leibfried}\ \emph {et~al.}(2003)\citenamefont
			{Leibfried}, \citenamefont {Blatt}, \citenamefont {Monroe},\ and\
			\citenamefont {Wineland}}]{Leibfried:03}%
		\BibitemOpen
		\bibfield  {author} {\bibinfo {author} {\bibfnamefont {D.}~\bibnamefont
				{Leibfried}}, \bibinfo {author} {\bibfnamefont {R.}~\bibnamefont {Blatt}},
			\bibinfo {author} {\bibfnamefont {C.}~\bibnamefont {Monroe}}, \ and\ \bibinfo
			{author} {\bibfnamefont {D.}~\bibnamefont {Wineland}},\ }\enquote{\bibinfo
			{title} {Quantum dynamics of single trapped ions}},\ \href {\doibase
			10.1103/RevModPhys.75.281} {\bibfield  {journal} {\bibinfo  {journal} {Rev.
					Mod. Phys.}\ }\textbf {\bibinfo {volume} {75}},\ \bibinfo {pages} {281}
			(\bibinfo {year} {2003})}\BibitemShut {NoStop}%
		\bibitem [{\citenamefont {James}\ \emph {et~al.}(2001)\citenamefont {James},
			\citenamefont {Kwiat}, \citenamefont {Munro},\ and\ \citenamefont
			{White}}]{James:01}%
		\BibitemOpen
		\bibfield  {author} {\bibinfo {author} {\bibfnamefont {D.~F.~V.}\
				\bibnamefont {James}}, \bibinfo {author} {\bibfnamefont {P.~G.}\ \bibnamefont
				{Kwiat}}, \bibinfo {author} {\bibfnamefont {W.~J.}\ \bibnamefont {Munro}}, \
			and\ \bibinfo {author} {\bibfnamefont {A.~G.}\ \bibnamefont {White}},\
		}\enquote{\bibinfo {title} {Measurement of qubits}},\ \href {\doibase
			10.1103/PhysRevA.64.052312} {\bibfield  {journal} {\bibinfo  {journal} {Phys.
					Rev. A}\ }\textbf {\bibinfo {volume} {64}},\ \bibinfo {pages} {052312}
			(\bibinfo {year} {2001})}\BibitemShut {NoStop}%
		\bibitem [{\citenamefont {Scovil}\ and\ \citenamefont
			{Schulz-DuBois}(1959)}]{Scovil:59}%
		\BibitemOpen
		\bibfield  {author} {\bibinfo {author} {\bibfnamefont {H.~E.~D.}\
				\bibnamefont {Scovil}}\ and\ \bibinfo {author} {\bibfnamefont {E.~O.}\
				\bibnamefont {Schulz-DuBois}},\ }\enquote{\bibinfo {title} {Three-level
				masers as heat engines}},\ \href {\doibase 10.1103/PhysRevLett.2.262}
		{\bibfield  {journal} {\bibinfo  {journal} {Phys. Rev. Lett.}\ }\textbf
			{\bibinfo {volume} {2}},\ \bibinfo {pages} {262} (\bibinfo {year}
			{1959})}\BibitemShut {NoStop}%
		\bibitem [{\citenamefont {Klatzow}\ \emph {et~al.}(2019)\citenamefont
			{Klatzow}, \citenamefont {Becker}, \citenamefont {Ledingham}, \citenamefont
			{Weinzetl}, \citenamefont {Kaczmarek}, \citenamefont {Saunders},
			\citenamefont {Nunn}, \citenamefont {Walmsley}, \citenamefont {Uzdin},\ and\
			\citenamefont {Poem}}]{Klatzow:19}%
		\BibitemOpen
		\bibfield  {author} {\bibinfo {author} {\bibfnamefont {J.}~\bibnamefont
				{Klatzow}}, \bibinfo {author} {\bibfnamefont {J.~N.}\ \bibnamefont {Becker}},
			\bibinfo {author} {\bibfnamefont {P.~M.}\ \bibnamefont {Ledingham}}, \bibinfo
			{author} {\bibfnamefont {C.}~\bibnamefont {Weinzetl}}, \bibinfo {author}
			{\bibfnamefont {K.~T.}\ \bibnamefont {Kaczmarek}}, \bibinfo {author}
			{\bibfnamefont {D.~J.}\ \bibnamefont {Saunders}}, \bibinfo {author}
			{\bibfnamefont {J.}~\bibnamefont {Nunn}}, \bibinfo {author} {\bibfnamefont
				{I.~A.}\ \bibnamefont {Walmsley}}, \bibinfo {author} {\bibfnamefont
				{R.}~\bibnamefont {Uzdin}}, \ and\ \bibinfo {author} {\bibfnamefont
				{E.}~\bibnamefont {Poem}},\ }\enquote{\bibinfo {title} {Experimental
				demonstration of quantum effects in the operation of microscopic heat
				engines}},\ \href {\doibase 10.1103/PhysRevLett.122.110601} {\bibfield
			{journal} {\bibinfo  {journal} {Phys. Rev. Lett.}\ }\textbf {\bibinfo
				{volume} {122}},\ \bibinfo {pages} {110601} (\bibinfo {year}
			{2019})}\BibitemShut {NoStop}%
		\bibitem [{\citenamefont {Deffner}\ and\ \citenamefont
			{Campbell}(2019)}]{Deffner-Campbel:Book}%
		\BibitemOpen
		\bibfield  {author} {\bibinfo {author} {\bibfnamefont {S.}~\bibnamefont
				{Deffner}}\ and\ \bibinfo {author} {\bibfnamefont {S.}~\bibnamefont
				{Campbell}},\ }\href {\doibase 10.1088/2053-2571/ab21c6} {\emph {\bibinfo
				{title} {Quantum Thermodynamics}}}\ (\bibinfo  {publisher} {Morgan \&
			Claypool Publishers},\ \bibinfo {year} {2019})\BibitemShut {NoStop}%
		\bibitem [{\citenamefont {Alicki}\ and\ \citenamefont
			{Kosloff}(2018)}]{Alicki:18}%
		\BibitemOpen
		\bibfield  {author} {\bibinfo {author} {\bibfnamefont {R.}~\bibnamefont
				{Alicki}}\ and\ \bibinfo {author} {\bibfnamefont {R.}~\bibnamefont
				{Kosloff}},\ }\enquote{\bibinfo {title} {Introduction to quantum
				thermodynamics: History and prospects}},\ in\ \href {\doibase
			10.1007/978-3-319-99046-0_1} {\emph {\bibinfo {booktitle} {Thermodynamics in
					the Quantum Regime: Fundamental Aspects and New Directions}}},\ \bibinfo
		{editor} {edited by\ \bibinfo {editor} {\bibfnamefont {F.}~\bibnamefont
				{Binder}}, \bibinfo {editor} {\bibfnamefont {L.~A.}\ \bibnamefont {Correa}},
			\bibinfo {editor} {\bibfnamefont {C.}~\bibnamefont {Gogolin}}, \bibinfo
			{editor} {\bibfnamefont {J.}~\bibnamefont {Anders}}, \ and\ \bibinfo {editor}
			{\bibfnamefont {G.}~\bibnamefont {Adesso}}}\ (\bibinfo  {publisher} {Springer
			International Publishing},\ \bibinfo {address} {Cham},\ \bibinfo {year}
		{2018})\ pp.\ \bibinfo {pages} {1--33}\BibitemShut {NoStop}%
		\bibitem [{\citenamefont {Vinjanampathy}\ and\ \citenamefont
			{Anders}(2016)}]{CP_Vinjanampathy}%
		\BibitemOpen
		\bibfield  {author} {\bibinfo {author} {\bibfnamefont {S.}~\bibnamefont
				{Vinjanampathy}}\ and\ \bibinfo {author} {\bibfnamefont {J.}~\bibnamefont
				{Anders}},\ }\enquote{\bibinfo {title} {Quantum thermodynamics}},\ \href
		{\doibase 10.1080/00107514.2016.1201896} {\bibfield  {journal} {\bibinfo
				{journal} {Contemporary Physics}\ }\textbf {\bibinfo {volume} {57}},\
			\bibinfo {pages} {545} (\bibinfo {year} {2016})}\BibitemShut {NoStop}%
		\bibitem [{\citenamefont {Anders}\ and\ \citenamefont
			{Giovannetti}(2013)}]{Anders:13}%
		\BibitemOpen
		\bibfield  {author} {\bibinfo {author} {\bibfnamefont {J.}~\bibnamefont
				{Anders}}\ and\ \bibinfo {author} {\bibfnamefont {V.}~\bibnamefont
				{Giovannetti}},\ }\enquote{\bibinfo {title} {Thermodynamics of discrete
				quantum processes}},\ \href {\doibase 10.1088/1367-2630/15/3/033022}
		{\bibfield  {journal} {\bibinfo  {journal} {New J. Phys.}\ }\textbf {\bibinfo
				{volume} {15}},\ \bibinfo {pages} {033022} (\bibinfo {year}
			{2013})}\BibitemShut {NoStop}%
		\bibitem [{\citenamefont {Gemmer}\ \emph {et~al.}(2009)\citenamefont {Gemmer},
			\citenamefont {Michel},\ and\ \citenamefont {Mahler}}]{Gemmer:Book}%
		\BibitemOpen
		\bibfield  {author} {\bibinfo {author} {\bibfnamefont {J.}~\bibnamefont
				{Gemmer}}, \bibinfo {author} {\bibfnamefont {M.}~\bibnamefont {Michel}}, \
			and\ \bibinfo {author} {\bibfnamefont {G.}~\bibnamefont {Mahler}},\ }\href
		{\doibase 10.1007/978-3-540-70510-9} {\emph {\bibinfo {title} {Quantum
					Thermodynamics}}},\ Lecture Notes in Physics\ (\bibinfo  {publisher}
		{Springer, Berlin Heidelberg},\ \bibinfo {year} {2009})\BibitemShut {NoStop}%
		\bibitem [{\citenamefont {Talkner}\ \emph {et~al.}(2007)\citenamefont
			{Talkner}, \citenamefont {Lutz},\ and\ \citenamefont
			{H\"anggi}}]{Talkner:07}%
		\BibitemOpen
		\bibfield  {author} {\bibinfo {author} {\bibfnamefont {P.}~\bibnamefont
				{Talkner}}, \bibinfo {author} {\bibfnamefont {E.}~\bibnamefont {Lutz}}, \
			and\ \bibinfo {author} {\bibfnamefont {P.}~\bibnamefont {H\"anggi}},\
		}\enquote{\bibinfo {title} {Fluctuation theorems: Work is not an
				observable}},\ \href {\doibase 10.1103/PhysRevE.75.050102} {\bibfield
			{journal} {\bibinfo  {journal} {Phys. Rev. E}\ }\textbf {\bibinfo {volume}
				{75}},\ \bibinfo {pages} {050102} (\bibinfo {year} {2007})}\BibitemShut
		{NoStop}%
		\bibitem [{\citenamefont {Alicki}(1979)}]{Alicki:79}%
		\BibitemOpen
		\bibfield  {author} {\bibinfo {author} {\bibfnamefont {R.}~\bibnamefont
				{Alicki}},\ }\enquote{\bibinfo {title} {The quantum open system as a model of
				the heat engine}},\ \href {\doibase 10.1088/0305-4470/12/5/007} {\bibfield
			{journal} {\bibinfo  {journal} {J. Phys. A: Math. Gen.}\ }\textbf {\bibinfo
				{volume} {12}},\ \bibinfo {pages} {L103} (\bibinfo {year}
			{1979})}\BibitemShut {NoStop}%
		\bibitem [{\citenamefont {Kieu}(2004)}]{Kieu:04}%
		\BibitemOpen
		\bibfield  {author} {\bibinfo {author} {\bibfnamefont {T.~D.}\ \bibnamefont
				{Kieu}},\ }\enquote{\bibinfo {title} {The second law, maxwell's demon, and
				work derivable from quantum heat engines}},\ \href {\doibase
			10.1103/PhysRevLett.93.140403} {\bibfield  {journal} {\bibinfo  {journal}
				{Phys. Rev. Lett.}\ }\textbf {\bibinfo {volume} {93}},\ \bibinfo {pages}
			{140403} (\bibinfo {year} {2004})}\BibitemShut {NoStop}%
		\bibitem [{\citenamefont {Passos}\ \emph {et~al.}(2019)\citenamefont {Passos},
			\citenamefont {Santos}, \citenamefont {Sarandy},\ and\ \citenamefont
			{Huguenin}}]{Passos:19}%
		\BibitemOpen
		\bibfield  {author} {\bibinfo {author} {\bibfnamefont {M.~H.~M.}\
				\bibnamefont {Passos}}, \bibinfo {author} {\bibfnamefont {A.~C.}\
				\bibnamefont {Santos}}, \bibinfo {author} {\bibfnamefont {M.~S.}\
				\bibnamefont {Sarandy}}, \ and\ \bibinfo {author} {\bibfnamefont {J.~A.~O.}\
				\bibnamefont {Huguenin}},\ }\enquote{\bibinfo {title} {Optical simulation of
				a quantum thermal machine}},\ \href {\doibase 10.1103/PhysRevA.100.022113}
		{\bibfield  {journal} {\bibinfo  {journal} {Phys. Rev. A}\ }\textbf {\bibinfo
				{volume} {100}},\ \bibinfo {pages} {022113} (\bibinfo {year}
			{2019})}\BibitemShut {NoStop}%
		\bibitem [{\citenamefont {Geva}\ and\ \citenamefont {Kosloff}(1992)}]{Geva:92}%
		\BibitemOpen
		\bibfield  {author} {\bibinfo {author} {\bibfnamefont {E.}~\bibnamefont
				{Geva}}\ and\ \bibinfo {author} {\bibfnamefont {R.}~\bibnamefont {Kosloff}},\
		}\enquote{\bibinfo {title} {A quantum-mechanical heat engine operating in
				finite time. a model consisting of spin-1/2 systems as the working fluid}},\
		\href {\doibase 10.1063/1.461951} {\bibfield  {journal} {\bibinfo  {journal}
				{J. Chem. Phys.}\ }\textbf {\bibinfo {volume} {96}},\ \bibinfo {pages} {3054}
			(\bibinfo {year} {1992})}\BibitemShut {NoStop}%
		\bibitem [{\citenamefont {Henrich}\ \emph {et~al.}(2007)\citenamefont
			{Henrich}, \citenamefont {Rempp},\ and\ \citenamefont {Mahler}}]{Henrich:07}%
		\BibitemOpen
		\bibfield  {author} {\bibinfo {author} {\bibfnamefont {M.~J.}\ \bibnamefont
				{Henrich}}, \bibinfo {author} {\bibfnamefont {F.}~\bibnamefont {Rempp}}, \
			and\ \bibinfo {author} {\bibfnamefont {G.}~\bibnamefont {Mahler}},\
		}\enquote{\bibinfo {title} {Quantum thermodynamic otto machines: A
				spin-system approach}},\ \href {\doibase 10.1140/epjst/e2007-00371-8}
		{\bibfield  {journal} {\bibinfo  {journal} {Eur. Phys. J. Special Topics}\
			}\textbf {\bibinfo {volume} {151}},\ \bibinfo {pages} {157} (\bibinfo {year}
			{2007})}\BibitemShut {NoStop}%
		\bibitem [{\citenamefont {Ro{\ss}nagel}\ \emph {et~al.}(2016)\citenamefont
			{Ro{\ss}nagel}, \citenamefont {Dawkins}, \citenamefont {Tolazzi},
			\citenamefont {Abah}, \citenamefont {Lutz}, \citenamefont {Schmidt-Kaler},\
			and\ \citenamefont {Singer}}]{Lutz:16-Science}%
		\BibitemOpen
		\bibfield  {author} {\bibinfo {author} {\bibfnamefont {J.}~\bibnamefont
				{Ro{\ss}nagel}}, \bibinfo {author} {\bibfnamefont {S.~T.}\ \bibnamefont
				{Dawkins}}, \bibinfo {author} {\bibfnamefont {K.~N.}\ \bibnamefont
				{Tolazzi}}, \bibinfo {author} {\bibfnamefont {O.}~\bibnamefont {Abah}},
			\bibinfo {author} {\bibfnamefont {E.}~\bibnamefont {Lutz}}, \bibinfo {author}
			{\bibfnamefont {F.}~\bibnamefont {Schmidt-Kaler}}, \ and\ \bibinfo {author}
			{\bibfnamefont {K.}~\bibnamefont {Singer}},\ }\enquote{\bibinfo {title} {A
				single-atom heat engine}},\ \href {\doibase 10.1126/science.aad6320}
		{\bibfield  {journal} {\bibinfo  {journal} {Science}\ }\textbf {\bibinfo
				{volume} {352}},\ \bibinfo {pages} {325} (\bibinfo {year}
			{2016})}\BibitemShut {NoStop}%
		\bibitem [{\citenamefont {Keck}\ \emph {et~al.}(2017)\citenamefont {Keck},
			\citenamefont {Montangero}, \citenamefont {Santoro}, \citenamefont {Fazio},\
			and\ \citenamefont {Rossini}}]{Keck:17}%
		\BibitemOpen
		\bibfield  {author} {\bibinfo {author} {\bibfnamefont {M.}~\bibnamefont
				{Keck}}, \bibinfo {author} {\bibfnamefont {S.}~\bibnamefont {Montangero}},
			\bibinfo {author} {\bibfnamefont {G.~E.}\ \bibnamefont {Santoro}}, \bibinfo
			{author} {\bibfnamefont {R.}~\bibnamefont {Fazio}}, \ and\ \bibinfo {author}
			{\bibfnamefont {D.}~\bibnamefont {Rossini}},\ }\enquote{\bibinfo {title}
			{Dissipation in adiabatic quantum computers: Lessons from an exactly solvable
				model}},\ \href {\doibase 10.1088/1367-2630/aa8cef} {\bibfield  {journal}
			{\bibinfo  {journal} {New J. Phys.}\ }\textbf {\bibinfo {volume} {19}},\
			\bibinfo {pages} {113029} (\bibinfo {year} {2017})}\BibitemShut {NoStop}%
		\bibitem [{\citenamefont {del Campo}\ \emph {et~al.}(2014)\citenamefont {del
				Campo}, \citenamefont {Goold},\ and\ \citenamefont
			{Paternostro}}]{Adolfo:14}%
		\BibitemOpen
		\bibfield  {author} {\bibinfo {author} {\bibfnamefont {A.}~\bibnamefont {del
					Campo}}, \bibinfo {author} {\bibfnamefont {J.}~\bibnamefont {Goold}}, \ and\
			\bibinfo {author} {\bibfnamefont {M.}~\bibnamefont {Paternostro}},\
		}\enquote{\bibinfo {title} {More bang for your buck: Super-adiabatic quantum
				engines}},\ \href {\doibase 10.1038/srep06208} {\bibfield  {journal}
			{\bibinfo  {journal} {Sci. Rep.}\ }\textbf {\bibinfo {volume} {4}},\ \bibinfo
			{pages} {6208} (\bibinfo {year} {2014})}\BibitemShut {NoStop}%
		\bibitem [{\citenamefont {Kang}\ \emph {et~al.}(2016)\citenamefont {Kang},
			\citenamefont {Chen}, \citenamefont {Wu}, \citenamefont {Huang},
			\citenamefont {Xia},\ and\ \citenamefont {Song}}]{Kang:16}%
		\BibitemOpen
		\bibfield  {author} {\bibinfo {author} {\bibfnamefont {Y.-H.}\ \bibnamefont
				{Kang}}, \bibinfo {author} {\bibfnamefont {Y.-H.}\ \bibnamefont {Chen}},
			\bibinfo {author} {\bibfnamefont {Q.-C.}\ \bibnamefont {Wu}}, \bibinfo
			{author} {\bibfnamefont {B.-H.}\ \bibnamefont {Huang}}, \bibinfo {author}
			{\bibfnamefont {Y.}~\bibnamefont {Xia}}, \ and\ \bibinfo {author}
			{\bibfnamefont {J.}~\bibnamefont {Song}},\ }\enquote{\bibinfo {title}
			{Reverse engineering of a hamiltonian by designing the evolution
				operators}},\ \href {\doibase 10.1038/srep30151} {\bibfield  {journal}
			{\bibinfo  {journal} {Sci. Rep.}\ }\textbf {\bibinfo {volume} {6}},\ \bibinfo
			{pages} {30151} (\bibinfo {year} {2016})}\BibitemShut {NoStop}%
		\bibitem [{\citenamefont {Kieferov{\'a}}\ and\ \citenamefont
			{Wiebe}(2014)}]{Nathan:14}%
		\BibitemOpen
		\bibfield  {author} {\bibinfo {author} {\bibfnamefont {M.}~\bibnamefont
				{Kieferov{\'a}}}\ and\ \bibinfo {author} {\bibfnamefont {N.}~\bibnamefont
				{Wiebe}},\ }\enquote{\bibinfo {title} {On the power of coherently controlled
				quantum adiabatic evolutions}},\ \href {\doibase
			10.1088/1367-2630/16/12/123034} {\bibfield  {journal} {\bibinfo  {journal}
				{New J. Phys.}\ }\textbf {\bibinfo {volume} {16}},\ \bibinfo {pages} {123034}
			(\bibinfo {year} {2014})}\BibitemShut {NoStop}%
		\bibitem [{\citenamefont {Deffner}\ and\ \citenamefont
			{Lutz}(2013)}]{Deffner:13}%
		\BibitemOpen
		\bibfield  {author} {\bibinfo {author} {\bibfnamefont {S.}~\bibnamefont
				{Deffner}}\ and\ \bibinfo {author} {\bibfnamefont {E.}~\bibnamefont {Lutz}},\
		}\enquote{\bibinfo {title} {Energy–time uncertainty relation for driven
				quantum systems}},\ \href {\doibase 10.1088/1751-8113/46/33/335302}
		{\bibfield  {journal} {\bibinfo  {journal} {J. Phys. A: Math. Theor.}\
			}\textbf {\bibinfo {volume} {46}},\ \bibinfo {pages} {335302} (\bibinfo
			{year} {2013})}\BibitemShut {NoStop}%
		\bibitem [{\citenamefont {Zener}(1932)}]{Zener:32}%
		\BibitemOpen
		\bibfield  {author} {\bibinfo {author} {\bibfnamefont {C.}~\bibnamefont
				{Zener}},\ }\enquote{\bibinfo {title} {Non-adiabatic crossing of energy
				levels}},\ \href {\doibase 10.1098/rspa.1932.0165} {\bibfield  {journal}
			{\bibinfo  {journal} {Proc. R. Soc. A}\ }\textbf {\bibinfo {volume} {137}},\
			\bibinfo {pages} {696} (\bibinfo {year} {1932})}\BibitemShut {NoStop}%
		\bibitem [{\citenamefont {Lee}\ and\ \citenamefont {George}(1979)}]{Lee:79}%
		\BibitemOpen
		\bibfield  {author} {\bibinfo {author} {\bibfnamefont {H.}~\bibnamefont
				{Lee}}\ and\ \bibinfo {author} {\bibfnamefont {T.~F.}\ \bibnamefont
				{George}},\ }\enquote{\bibinfo {title} {Molecular collision processes in the
				presence of picosecond laser pulses}},\ \href {\doibase 10.1021/j100471a009}
		{\bibfield  {journal} {\bibinfo  {journal} {J. Phys. Chem.}\ }\textbf
			{\bibinfo {volume} {83}},\ \bibinfo {pages} {928} (\bibinfo {year}
			{1979})}\BibitemShut {NoStop}%
		\bibitem [{\citenamefont {Knight}\ and\ \citenamefont
			{Milonni}(1980)}]{Knight:80}%
		\BibitemOpen
		\bibfield  {author} {\bibinfo {author} {\bibfnamefont {P.~L.}\ \bibnamefont
				{Knight}}\ and\ \bibinfo {author} {\bibfnamefont {P.~W.}\ \bibnamefont
				{Milonni}},\ }\enquote{\bibinfo {title} {The rabi frequency in optical
				spectra}},\ \href {\doibase 10.1016/0370-1573(80)90119-2} {\bibfield
			{journal} {\bibinfo  {journal} {Phys. Rep.}\ }\textbf {\bibinfo {volume}
				{66}},\ \bibinfo {pages} {21} (\bibinfo {year} {1980})}\BibitemShut {NoStop}%
		\bibitem [{\citenamefont {Mogilevtsev}\ \emph {et~al.}(2008)\citenamefont
			{Mogilevtsev}, \citenamefont {Nisovtsev}, \citenamefont {Kilin},
			\citenamefont {Cavalcanti}, \citenamefont {Brandi},\ and\ \citenamefont
			{Oliveira}}]{Mogilevtsev:08}%
		\BibitemOpen
		\bibfield  {author} {\bibinfo {author} {\bibfnamefont {D.}~\bibnamefont
				{Mogilevtsev}}, \bibinfo {author} {\bibfnamefont {A.~P.}\ \bibnamefont
				{Nisovtsev}}, \bibinfo {author} {\bibfnamefont {S.}~\bibnamefont {Kilin}},
			\bibinfo {author} {\bibfnamefont {S.~B.}\ \bibnamefont {Cavalcanti}},
			\bibinfo {author} {\bibfnamefont {H.~S.}\ \bibnamefont {Brandi}}, \ and\
			\bibinfo {author} {\bibfnamefont {L.~E.}\ \bibnamefont {Oliveira}},\
		}\enquote{\bibinfo {title} {Driving-dependent damping of rabi oscillations in
				two-level semiconductor systems}},\ \href {\doibase
			10.1103/PhysRevLett.100.017401} {\bibfield  {journal} {\bibinfo  {journal}
				{Phys. Rev. Lett.}\ }\textbf {\bibinfo {volume} {100}},\ \bibinfo {pages}
			{017401} (\bibinfo {year} {2008})}\BibitemShut {NoStop}%
		\bibitem [{\citenamefont {Brunel}\ \emph {et~al.}(1998)\citenamefont {Brunel},
			\citenamefont {Lounis}, \citenamefont {Tamarat},\ and\ \citenamefont
			{Orrit}}]{Brunel:98}%
		\BibitemOpen
		\bibfield  {author} {\bibinfo {author} {\bibfnamefont {C.}~\bibnamefont
				{Brunel}}, \bibinfo {author} {\bibfnamefont {B.}~\bibnamefont {Lounis}},
			\bibinfo {author} {\bibfnamefont {P.}~\bibnamefont {Tamarat}}, \ and\
			\bibinfo {author} {\bibfnamefont {M.}~\bibnamefont {Orrit}},\
		}\enquote{\bibinfo {title} {Rabi resonances of a single molecule driven by rf
				and laser fields}},\ \href {\doibase 10.1103/PhysRevLett.81.2679} {\bibfield
			{journal} {\bibinfo  {journal} {Phys. Rev. Lett.}\ }\textbf {\bibinfo
				{volume} {81}},\ \bibinfo {pages} {2679} (\bibinfo {year}
			{1998})}\BibitemShut {NoStop}%
		\bibitem [{\citenamefont {Nielsen}\ and\ \citenamefont
			{Chuang}(2011)}]{Nielsen:Book}%
		\BibitemOpen
		\bibfield  {author} {\bibinfo {author} {\bibfnamefont {M.~A.}\ \bibnamefont
				{Nielsen}}\ and\ \bibinfo {author} {\bibfnamefont {I.~L.}\ \bibnamefont
				{Chuang}},\ }\href {\doibase 10.1017/CBO9780511976667} {\emph {\bibinfo
				{title} {Quantum Computation and Quantum Information: 10th Anniversary
					Edition}}},\ \bibinfo {edition} {10th}\ ed.\ (\bibinfo  {publisher}
		{Cambridge University Press},\ \bibinfo {address} {New York, NY, USA},\
		\bibinfo {year} {2011})\BibitemShut {NoStop}%
		\bibitem [{\citenamefont {Garwood}\ and\ \citenamefont
			{DelaBarre}(2001)}]{Garwood:01}%
		\BibitemOpen
		\bibfield  {author} {\bibinfo {author} {\bibfnamefont {M.}~\bibnamefont
				{Garwood}}\ and\ \bibinfo {author} {\bibfnamefont {L.}~\bibnamefont
				{DelaBarre}},\ }\enquote{\bibinfo {title} {The return of the frequency sweep:
				designing adiabatic pulses for contemporary nmr}},\ \href {\doibase
			10.1006/jmre.2001.2340} {\bibfield  {journal} {\bibinfo  {journal} {Journal
					of magnetic resonance}\ }\textbf {\bibinfo {volume} {153}},\ \bibinfo {pages}
			{155} (\bibinfo {year} {2001})}\BibitemShut {NoStop}%
		\bibitem [{\citenamefont {Oliveira}\ \emph {et~al.}(2011)\citenamefont
			{Oliveira}, \citenamefont {Sarthour~Jr}, \citenamefont {Bonagamba},
			\citenamefont {Azevedo},\ and\ \citenamefont {Freitas}}]{Sarthour:Book}%
		\BibitemOpen
		\bibfield  {author} {\bibinfo {author} {\bibfnamefont {I.}~\bibnamefont
				{Oliveira}}, \bibinfo {author} {\bibfnamefont {R.}~\bibnamefont
				{Sarthour~Jr}}, \bibinfo {author} {\bibfnamefont {T.}~\bibnamefont
				{Bonagamba}}, \bibinfo {author} {\bibfnamefont {E.}~\bibnamefont {Azevedo}},
			\ and\ \bibinfo {author} {\bibfnamefont {J.~C.}\ \bibnamefont {Freitas}},\
		}\href {\doibase 10.1016/B978-0-444-52782-0.X5000-3} {\emph {\bibinfo {title}
				{NMR quantum information processing}}}\ (\bibinfo  {publisher} {Elsevier},\
		\bibinfo {address} {Oxford, UK},\ \bibinfo {year} {2011})\BibitemShut
		{NoStop}%
		\bibitem [{\citenamefont {Leskowitz}\ and\ \citenamefont
			{Mueller}(2004)}]{Leskowitz:04}%
		\BibitemOpen
		\bibfield  {author} {\bibinfo {author} {\bibfnamefont {G.~M.}\ \bibnamefont
				{Leskowitz}}\ and\ \bibinfo {author} {\bibfnamefont {L.~J.}\ \bibnamefont
				{Mueller}},\ }\enquote{\bibinfo {title} {State interrogation in nuclear
				magnetic resonance quantum-information processing}},\ \href {\doibase
			10.1103/PhysRevA.69.052302} {\bibfield  {journal} {\bibinfo  {journal} {Phys.
					Rev. A}\ }\textbf {\bibinfo {volume} {69}},\ \bibinfo {pages} {052302}
			(\bibinfo {year} {2004})}\BibitemShut {NoStop}%
		\bibitem [{\citenamefont {Audenaert}(2014)}]{Audenaert:14}%
		\BibitemOpen
		\bibfield  {author} {\bibinfo {author} {\bibfnamefont {K.~M.}\ \bibnamefont
				{Audenaert}},\ }\enquote{\bibinfo {title} {Comparisons between quantum state
				distinguishability measures}},\ \href {\doibase 10.26421/QIC14.1-2}
		{\bibfield  {journal} {\bibinfo  {journal} {Quantum Inf. Process.}\ }\textbf
			{\bibinfo {volume} {14}},\ \bibinfo {pages} {31} (\bibinfo {year}
			{2014})}\BibitemShut {NoStop}%
		\bibitem [{\citenamefont {del Campo}\ \emph {et~al.}(2013)\citenamefont {del
				Campo}, \citenamefont {Egusquiza}, \citenamefont {Plenio},\ and\
			\citenamefont {Huelga}}]{Adolfo:13-QSL}%
		\BibitemOpen
		\bibfield  {author} {\bibinfo {author} {\bibfnamefont {A.}~\bibnamefont {del
					Campo}}, \bibinfo {author} {\bibfnamefont {I.~L.}\ \bibnamefont {Egusquiza}},
			\bibinfo {author} {\bibfnamefont {M.~B.}\ \bibnamefont {Plenio}}, \ and\
			\bibinfo {author} {\bibfnamefont {S.~F.}\ \bibnamefont {Huelga}},\
		}\enquote{\bibinfo {title} {Quantum speed limits in open system dynamics}},\
		\href {\doibase 10.1103/PhysRevLett.110.050403} {\bibfield  {journal}
			{\bibinfo  {journal} {Phys. Rev. Lett.}\ }\textbf {\bibinfo {volume} {110}},\
			\bibinfo {pages} {050403} (\bibinfo {year} {2013})}\BibitemShut {NoStop}%
		\bibitem [{\citenamefont {Meng}\ \emph {et~al.}(2015)\citenamefont {Meng},
			\citenamefont {Wu},\ and\ \citenamefont {Guo}}]{Meng:15}%
		\BibitemOpen
		\bibfield  {author} {\bibinfo {author} {\bibfnamefont {X.}~\bibnamefont
				{Meng}}, \bibinfo {author} {\bibfnamefont {C.}~\bibnamefont {Wu}}, \ and\
			\bibinfo {author} {\bibfnamefont {H.}~\bibnamefont {Guo}},\
		}\enquote{\bibinfo {title} {Minimal evolution time and quantum speed limit of
				non-markovian open systems}},\ \href {\doibase 10.1038/srep16357} {\bibfield
			{journal} {\bibinfo  {journal} {Sci. Rep.}\ }\textbf {\bibinfo {volume}
				{5}},\ \bibinfo {pages} {16357} (\bibinfo {year} {2015})}\BibitemShut
		{NoStop}%
		\bibitem [{\citenamefont {Martinis}\ and\ \citenamefont
			{Geller}(2014)}]{Martinis:14}%
		\BibitemOpen
		\bibfield  {author} {\bibinfo {author} {\bibfnamefont {J.~M.}\ \bibnamefont
				{Martinis}}\ and\ \bibinfo {author} {\bibfnamefont {M.~R.}\ \bibnamefont
				{Geller}},\ }\enquote{\bibinfo {title} {Fast adiabatic qubit gates using only
				${\ensuremath{\sigma}}_{z}$ control}},\ \href {\doibase
			10.1103/PhysRevA.90.022307} {\bibfield  {journal} {\bibinfo  {journal} {Phys.
					Rev. A}\ }\textbf {\bibinfo {volume} {90}},\ \bibinfo {pages} {022307}
			(\bibinfo {year} {2014})}\BibitemShut {NoStop}%
		\bibitem [{\citenamefont {Yu}\ \emph {et~al.}(2018)\citenamefont {Yu},
			\citenamefont {Zhang}, \citenamefont {Ban},\ and\ \citenamefont
			{Chen}}]{Yu:18}%
		\BibitemOpen
		\bibfield  {author} {\bibinfo {author} {\bibfnamefont {X.-T.}\ \bibnamefont
				{Yu}}, \bibinfo {author} {\bibfnamefont {Q.}~\bibnamefont {Zhang}}, \bibinfo
			{author} {\bibfnamefont {Y.}~\bibnamefont {Ban}}, \ and\ \bibinfo {author}
			{\bibfnamefont {X.}~\bibnamefont {Chen}},\ }\enquote{\bibinfo {title} {Fast
				and robust control of two interacting spins}},\ \href {\doibase
			10.1103/PhysRevA.97.062317} {\bibfield  {journal} {\bibinfo  {journal} {Phys.
					Rev. A}\ }\textbf {\bibinfo {volume} {97}},\ \bibinfo {pages} {062317}
			(\bibinfo {year} {2018})}\BibitemShut {NoStop}%
		\bibitem [{\citenamefont {Brown}\ \emph {et~al.}(2004)\citenamefont {Brown},
			\citenamefont {Harrow},\ and\ \citenamefont {Chuang}}]{Brown:04}%
		\BibitemOpen
		\bibfield  {author} {\bibinfo {author} {\bibfnamefont {K.~R.}\ \bibnamefont
				{Brown}}, \bibinfo {author} {\bibfnamefont {A.~W.}\ \bibnamefont {Harrow}}, \
			and\ \bibinfo {author} {\bibfnamefont {I.~L.}\ \bibnamefont {Chuang}},\
		}\enquote{\bibinfo {title} {Arbitrarily accurate composite pulse
				sequences}},\ \href {\doibase 10.1103/PhysRevA.70.052318} {\bibfield
			{journal} {\bibinfo  {journal} {Phys. Rev. A}\ }\textbf {\bibinfo {volume}
				{70}},\ \bibinfo {pages} {052318} (\bibinfo {year} {2004})}\BibitemShut
		{NoStop}%
		\bibitem [{\citenamefont {Xu}\ \emph {et~al.}(2012)\citenamefont {Xu},
			\citenamefont {Zhang}, \citenamefont {Tong}, \citenamefont {Sj\"oqvist},\
			and\ \citenamefont {Kwek}}]{Xu:12}%
		\BibitemOpen
		\bibfield  {author} {\bibinfo {author} {\bibfnamefont {G.~F.}\ \bibnamefont
				{Xu}}, \bibinfo {author} {\bibfnamefont {J.}~\bibnamefont {Zhang}}, \bibinfo
			{author} {\bibfnamefont {D.~M.}\ \bibnamefont {Tong}}, \bibinfo {author}
			{\bibfnamefont {E.}~\bibnamefont {Sj\"oqvist}}, \ and\ \bibinfo {author}
			{\bibfnamefont {L.~C.}\ \bibnamefont {Kwek}},\ }\enquote{\bibinfo {title}
			{Nonadiabatic holonomic quantum computation in decoherence-free subspaces}},\
		\href {\doibase 10.1103/PhysRevLett.109.170501} {\bibfield  {journal}
			{\bibinfo  {journal} {Phys. Rev. Lett.}\ }\textbf {\bibinfo {volume} {109}},\
			\bibinfo {pages} {170501} (\bibinfo {year} {2012})}\BibitemShut {NoStop}%
		\bibitem [{\citenamefont {\c{C}akmak}\ and\ \citenamefont {M\"ustecapl\ifmmode
				\imath \else \i \fi{}o\ifmmode~\breve{g}\else
				\u{g}\fi{}lu}(2019)}]{Baris:19}%
		\BibitemOpen
		\bibfield  {author} {\bibinfo {author} {\bibfnamefont {B.}~\bibnamefont
				{\c{C}akmak}}\ and\ \bibinfo {author} {\bibfnamefont {O.~E.}\ \bibnamefont
				{M\"ustecapl\ifmmode \imath \else \i \fi{}o\ifmmode~\breve{g}\else
					\u{g}\fi{}lu}},\ }\enquote{\bibinfo {title} {Spin quantum heat engines with
				shortcuts to adiabaticity}},\ \href {\doibase 10.1103/PhysRevE.99.032108}
		{\bibfield  {journal} {\bibinfo  {journal} {Phys. Rev. E}\ }\textbf {\bibinfo
				{volume} {99}},\ \bibinfo {pages} {032108} (\bibinfo {year}
			{2019})}\BibitemShut {NoStop}%
		\bibitem [{\citenamefont {Rabitz}\ \emph {et~al.}(2004)\citenamefont {Rabitz},
			\citenamefont {Hsieh},\ and\ \citenamefont {Rosenthal}}]{Rabitz:98}%
		\BibitemOpen
		\bibfield  {author} {\bibinfo {author} {\bibfnamefont {H.~A.}\ \bibnamefont
				{Rabitz}}, \bibinfo {author} {\bibfnamefont {M.~M.}\ \bibnamefont {Hsieh}}, \
			and\ \bibinfo {author} {\bibfnamefont {C.~M.}\ \bibnamefont {Rosenthal}},\
		}\enquote{\bibinfo {title} {Quantum optimally controlled transition
				landscapes}},\ \href {\doibase 10.1126/science.1093649} {\bibfield  {journal}
			{\bibinfo  {journal} {Science}\ }\textbf {\bibinfo {volume} {303}},\ \bibinfo
			{pages} {1998} (\bibinfo {year} {2004})}\BibitemShut {NoStop}%
		\bibitem [{\citenamefont {Werschnik}\ and\ \citenamefont
			{Gross}(2007)}]{Werschnik:07}%
		\BibitemOpen
		\bibfield  {author} {\bibinfo {author} {\bibfnamefont {J.}~\bibnamefont
				{Werschnik}}\ and\ \bibinfo {author} {\bibfnamefont {E.~K.~U.}\ \bibnamefont
				{Gross}},\ }\enquote{\bibinfo {title} {Quantum optimal control theory}},\
		\href {\doibase 10.1088/0953-4075/40/18/r01} {\bibfield  {journal} {\bibinfo
				{journal} {J. Phys. B: At. Mol. Opt. Phys.}\ }\textbf {\bibinfo {volume}
				{40}},\ \bibinfo {pages} {R175} (\bibinfo {year} {2007})}\BibitemShut
		{NoStop}%
		\bibitem [{\citenamefont {{Abah}}\ \emph {et~al.}(2019)\citenamefont {{Abah}},
			\citenamefont {{Paternostro}},\ and\ \citenamefont {{Lutz}}}]{Abah:19}%
		\BibitemOpen
		\bibfield  {author} {\bibinfo {author} {\bibfnamefont {O.}~\bibnamefont
				{{Abah}}}, \bibinfo {author} {\bibfnamefont {M.}~\bibnamefont
				{{Paternostro}}}, \ and\ \bibinfo {author} {\bibfnamefont {E.}~\bibnamefont
				{{Lutz}}},\ }\enquote{\bibinfo {title} {{Shortcut-to-adiabaticity quantum
					Otto refrigerator}}},\ \href@noop {} {\bibfield  {journal} {\bibinfo
				{journal} {arXiv e-prints}\ ,\ \bibinfo {eid} {arXiv:1911.00373}} (\bibinfo
			{year} {2019})},\ \Eprint {http://arxiv.org/abs/1911.00373} {arXiv:1911.00373
			[quant-ph]} \BibitemShut {NoStop}%
		\bibitem [{\citenamefont {{\c{C}}akmak}\ \emph {et~al.}(2016)\citenamefont
			{{\c{C}}akmak}, \citenamefont {Altintas},\ and\ \citenamefont
			{M{\"u}stecapl{\i}o{\u{g}}lu}}]{Cakmak:16}%
		\BibitemOpen
		\bibfield  {author} {\bibinfo {author} {\bibfnamefont {S.}~\bibnamefont
				{{\c{C}}akmak}}, \bibinfo {author} {\bibfnamefont {F.}~\bibnamefont
				{Altintas}}, \ and\ \bibinfo {author} {\bibfnamefont {{\"O}.~E.}\
				\bibnamefont {M{\"u}stecapl{\i}o{\u{g}}lu}},\ }\enquote{\bibinfo {title}
			{Irreversibility in a unitary finite-rate protocol: the concept of internal
				friction}},\ \href {\doibase 10.1088/0031-8949/91/7/075101} {\bibfield
			{journal} {\bibinfo  {journal} {Phys. Scr.}\ }\textbf {\bibinfo {volume}
				{91}},\ \bibinfo {pages} {075101} (\bibinfo {year} {2016})}\BibitemShut
		{NoStop}%
		\bibitem [{\citenamefont {Plastina}\ \emph {et~al.}(2014)\citenamefont
			{Plastina}, \citenamefont {Alecce}, \citenamefont {Apollaro}, \citenamefont
			{Falcone}, \citenamefont {Francica}, \citenamefont {Galve}, \citenamefont
			{Lo~Gullo},\ and\ \citenamefont {Zambrini}}]{Plastina:14}%
		\BibitemOpen
		\bibfield  {author} {\bibinfo {author} {\bibfnamefont {F.}~\bibnamefont
				{Plastina}}, \bibinfo {author} {\bibfnamefont {A.}~\bibnamefont {Alecce}},
			\bibinfo {author} {\bibfnamefont {T.~J.~G.}\ \bibnamefont {Apollaro}},
			\bibinfo {author} {\bibfnamefont {G.}~\bibnamefont {Falcone}}, \bibinfo
			{author} {\bibfnamefont {G.}~\bibnamefont {Francica}}, \bibinfo {author}
			{\bibfnamefont {F.}~\bibnamefont {Galve}}, \bibinfo {author} {\bibfnamefont
				{N.}~\bibnamefont {Lo~Gullo}}, \ and\ \bibinfo {author} {\bibfnamefont
				{R.}~\bibnamefont {Zambrini}},\ }\enquote{\bibinfo {title} {Irreversible work
				and inner friction in quantum thermodynamic processes}},\ \href {\doibase
			10.1103/PhysRevLett.113.260601} {\bibfield  {journal} {\bibinfo  {journal}
				{Phys. Rev. Lett.}\ }\textbf {\bibinfo {volume} {113}},\ \bibinfo {pages}
			{260601} (\bibinfo {year} {2014})}\BibitemShut {NoStop}%
		\bibitem [{\citenamefont {Vacanti}\ \emph {et~al.}(2014)\citenamefont
			{Vacanti}, \citenamefont {Fazio}, \citenamefont {Montangero}, \citenamefont
			{Palma}, \citenamefont {Paternostro},\ and\ \citenamefont
			{Vedral}}]{Vacanti:14}%
		\BibitemOpen
		\bibfield  {author} {\bibinfo {author} {\bibfnamefont {G.}~\bibnamefont
				{Vacanti}}, \bibinfo {author} {\bibfnamefont {R.}~\bibnamefont {Fazio}},
			\bibinfo {author} {\bibfnamefont {S.}~\bibnamefont {Montangero}}, \bibinfo
			{author} {\bibfnamefont {G.}~\bibnamefont {Palma}}, \bibinfo {author}
			{\bibfnamefont {M.}~\bibnamefont {Paternostro}}, \ and\ \bibinfo {author}
			{\bibfnamefont {V.}~\bibnamefont {Vedral}},\ }\enquote{\bibinfo {title}
			{Transitionless quantum driving in open quantum systems}},\ \href {\doibase
			10.1088/1367-2630/16/5/053017} {\bibfield  {journal} {\bibinfo  {journal}
				{New J. Phys.}\ }\textbf {\bibinfo {volume} {16}},\ \bibinfo {pages} {053017}
			(\bibinfo {year} {2014})}\BibitemShut {NoStop}%
		\bibitem [{\citenamefont {Earnshaw}(1842)}]{Earnshaw:42}%
		\BibitemOpen
		\bibfield  {author} {\bibinfo {author} {\bibfnamefont {S.}~\bibnamefont
				{Earnshaw}},\ }\enquote{\bibinfo {title} {On the nature of the molecular
				forces which regulate the constitution of the luminferous ether}},\ \href
		{https://ci.nii.ac.jp/naid/10026980166/en/} {\bibfield  {journal} {\bibinfo
				{journal} {Trans. Camb. Phil. Soc.}\ }\textbf {\bibinfo {volume} {7}},\
			\bibinfo {pages} {97} (\bibinfo {year} {1842})}\BibitemShut {NoStop}%
		\bibitem [{\citenamefont {Dehmelt}(1975)}]{Dehmelt:75}%
		\BibitemOpen
		\bibfield  {author} {\bibinfo {author} {\bibfnamefont {H.~G.}\ \bibnamefont
				{Dehmelt}},\ }\enquote{\bibinfo {title} {Proposed $10^{14}$ v/$\delta\nu$
				laser fluorescence spectroscopy on t1$^{+}$ monoion oscillator ii}},\ \href
		{https://ci.nii.ac.jp/naid/10011202626/en/} {\bibfield  {journal} {\bibinfo
				{journal} {Bull. Am. Phys. Soc.}\ }\textbf {\bibinfo {volume} {20}},\
			\bibinfo {pages} {60} (\bibinfo {year} {1975})}\BibitemShut {NoStop}%
		\bibitem [{\citenamefont {Hänsch}\ and\ \citenamefont
			{Schawlow}(1975)}]{Hansch:75}%
		\BibitemOpen
		\bibfield  {author} {\bibinfo {author} {\bibfnamefont {T.}~\bibnamefont
				{Hänsch}}\ and\ \bibinfo {author} {\bibfnamefont {A.}~\bibnamefont
				{Schawlow}},\ }\enquote{\bibinfo {title} {Cooling of gases by laser
				radiation}},\ \href {\doibase https://doi.org/10.1016/0030-4018(75)90159-5}
		{\bibfield  {journal} {\bibinfo  {journal} {Optics Communications}\ }\textbf
			{\bibinfo {volume} {13}},\ \bibinfo {pages} {68 } (\bibinfo {year}
			{1975})}\BibitemShut {NoStop}%
		\bibitem [{\citenamefont {Migdalek}(1980)}]{Migdalek:80}%
		\BibitemOpen
		\bibfield  {author} {\bibinfo {author} {\bibfnamefont {J.}~\bibnamefont
				{Migdalek}},\ }\enquote{\bibinfo {title} {Influence of core polarisation on
				relativistic oscillator strengths for lowest s-p transitions in yb {II}-hf
				{IV} spectra}},\ \href {\doibase 10.1088/0022-3700/13/6/001} {\bibfield
			{journal} {\bibinfo  {journal} {J. Phys. B: At. Mol. Phys.}\ }\textbf
			{\bibinfo {volume} {13}},\ \bibinfo {pages} {L169} (\bibinfo {year}
			{1980})}\BibitemShut {NoStop}%
		\bibitem [{\citenamefont {Yu}\ and\ \citenamefont {Maleki}(2000)}]{Yu:00}%
		\BibitemOpen
		\bibfield  {author} {\bibinfo {author} {\bibfnamefont {N.}~\bibnamefont
				{Yu}}\ and\ \bibinfo {author} {\bibfnamefont {L.}~\bibnamefont {Maleki}},\
		}\enquote{\bibinfo {title} {Lifetime measurements of the ${4f}^{14}5d$
				metastable states in single ytterbium ions}},\ \href {\doibase
			10.1103/PhysRevA.61.022507} {\bibfield  {journal} {\bibinfo  {journal} {Phys.
					Rev. A}\ }\textbf {\bibinfo {volume} {61}},\ \bibinfo {pages} {022507}
			(\bibinfo {year} {2000})}\BibitemShut {NoStop}%
		\bibitem [{\citenamefont {Letokhov}\ \emph {et~al.}(1977)\citenamefont
			{Letokhov}, \citenamefont {Minogin},\ and\ \citenamefont
			{Pavlik}}]{Letokhov:77}%
		\BibitemOpen
		\bibfield  {author} {\bibinfo {author} {\bibfnamefont {V.}~\bibnamefont
				{Letokhov}}, \bibinfo {author} {\bibfnamefont {V.}~\bibnamefont {Minogin}}, \
			and\ \bibinfo {author} {\bibfnamefont {B.}~\bibnamefont {Pavlik}},\
		}\enquote{\bibinfo {title} {Cooling and capture of atoms and molecules by a
				resonant light field}},\ \href
		{http://www.jetp.ac.ru/cgi-bin/e/index/e/45/4/p698?a=list} {\bibfield
			{journal} {\bibinfo  {journal} {Sov. Phys. JETP}\ }\textbf {\bibinfo {volume}
				{45}},\ \bibinfo {pages} {698} (\bibinfo {year} {1977})}\BibitemShut
		{NoStop}%
		\bibitem [{\citenamefont {Wineland}\ and\ \citenamefont
			{Itano}(1979)}]{Wineland:79}%
		\BibitemOpen
		\bibfield  {author} {\bibinfo {author} {\bibfnamefont {D.~J.}\ \bibnamefont
				{Wineland}}\ and\ \bibinfo {author} {\bibfnamefont {W.~M.}\ \bibnamefont
				{Itano}},\ }\enquote{\bibinfo {title} {Laser cooling of atoms}},\ \href
		{\doibase 10.1103/PhysRevA.20.1521} {\bibfield  {journal} {\bibinfo
				{journal} {Phys. Rev. A}\ }\textbf {\bibinfo {volume} {20}},\ \bibinfo
			{pages} {1521} (\bibinfo {year} {1979})}\BibitemShut {NoStop}%
		\bibitem [{\citenamefont {Chu}(1998)}]{Chu:98}%
		\BibitemOpen
		\bibfield  {author} {\bibinfo {author} {\bibfnamefont {S.}~\bibnamefont
				{Chu}},\ }\enquote{\bibinfo {title} {Nobel lecture: The manipulation of
				neutral particles}},\ \href {\doibase 10.1103/RevModPhys.70.685} {\bibfield
			{journal} {\bibinfo  {journal} {Rev. Mod. Phys.}\ }\textbf {\bibinfo {volume}
				{70}},\ \bibinfo {pages} {685} (\bibinfo {year} {1998})}\BibitemShut
		{NoStop}%
		\bibitem [{\citenamefont {Cohen-Tannoudji}(1998)}]{Cohen:98}%
		\BibitemOpen
		\bibfield  {author} {\bibinfo {author} {\bibfnamefont {C.~N.}\ \bibnamefont
				{Cohen-Tannoudji}},\ }\enquote{\bibinfo {title} {Nobel lecture: Manipulating
				atoms with photons}},\ \href {\doibase 10.1103/RevModPhys.70.707} {\bibfield
			{journal} {\bibinfo  {journal} {Rev. Mod. Phys.}\ }\textbf {\bibinfo {volume}
				{70}},\ \bibinfo {pages} {707} (\bibinfo {year} {1998})}\BibitemShut
		{NoStop}%
		\bibitem [{\citenamefont {Phillips}(1998)}]{Phillips:98}%
		\BibitemOpen
		\bibfield  {author} {\bibinfo {author} {\bibfnamefont {W.~D.}\ \bibnamefont
				{Phillips}},\ }\enquote{\bibinfo {title} {Nobel lecture: Laser cooling and
				trapping of neutral atoms}},\ \href {\doibase 10.1103/RevModPhys.70.721}
		{\bibfield  {journal} {\bibinfo  {journal} {Rev. Mod. Phys.}\ }\textbf
			{\bibinfo {volume} {70}},\ \bibinfo {pages} {721} (\bibinfo {year}
			{1998})}\BibitemShut {NoStop}%
		\bibitem [{\citenamefont {Wineland}\ \emph {et~al.}(1998)\citenamefont
			{Wineland}, \citenamefont {Monroe}, \citenamefont {Itano}, \citenamefont
			{Leibfried}, \citenamefont {King},\ and\ \citenamefont
			{Meekhof}}]{Wineland:98}%
		\BibitemOpen
		\bibfield  {author} {\bibinfo {author} {\bibfnamefont {D.~J.}\ \bibnamefont
				{Wineland}}, \bibinfo {author} {\bibfnamefont {C.}~\bibnamefont {Monroe}},
			\bibinfo {author} {\bibfnamefont {W.~M.}\ \bibnamefont {Itano}}, \bibinfo
			{author} {\bibfnamefont {D.}~\bibnamefont {Leibfried}}, \bibinfo {author}
			{\bibfnamefont {B.~E.}\ \bibnamefont {King}}, \ and\ \bibinfo {author}
			{\bibfnamefont {D.~M.}\ \bibnamefont {Meekhof}},\ }\enquote{\bibinfo {title}
			{Experimental issues in coherent quantum-state manipulation of trapped atomic
				ions}},\ \href {\doibase 10.6028/jres.103.019} {\bibfield  {journal}
			{\bibinfo  {journal} {J. Res. Natl. Inst. Stan.}\ }\textbf {\bibinfo {volume}
				{103}},\ \bibinfo {pages} {259} (\bibinfo {year} {1998})}\BibitemShut
		{NoStop}%
		\bibitem [{\citenamefont {Berkeland}\ \emph {et~al.}(1998)\citenamefont
			{Berkeland}, \citenamefont {Miller}, \citenamefont {Bergquist}, \citenamefont
			{Itano},\ and\ \citenamefont {Wineland}}]{Berkeland:98}%
		\BibitemOpen
		\bibfield  {author} {\bibinfo {author} {\bibfnamefont {D.~J.}\ \bibnamefont
				{Berkeland}}, \bibinfo {author} {\bibfnamefont {J.~D.}\ \bibnamefont
				{Miller}}, \bibinfo {author} {\bibfnamefont {J.~C.}\ \bibnamefont
				{Bergquist}}, \bibinfo {author} {\bibfnamefont {W.~M.}\ \bibnamefont
				{Itano}}, \ and\ \bibinfo {author} {\bibfnamefont {D.~J.}\ \bibnamefont
				{Wineland}},\ }\enquote{\bibinfo {title} {Minimization of ion micromotion in
				a paul trap}},\ \href {\doibase 10.1063/1.367318} {\bibfield  {journal}
			{\bibinfo  {journal} {Journal of Applied Physics}\ }\textbf {\bibinfo
				{volume} {83}},\ \bibinfo {pages} {5025} (\bibinfo {year}
			{1998})}\BibitemShut {NoStop}%
	\end{thebibliography}

%

%
%
\end{document}